\documentclass[a4paper, UKenglish, cleveref, autoref, thm-restate]{lipics-v2021}

\hideLIPIcs  

\bibliographystyle{plainurl}

\title{One-Parametric Presburger Arithmetic has Quantifier Elimination}

\titlerunning{One-Parametric Presburger Arithmetic has Quantifier Elimination}


\author{Alessio Mansutti}{IMDEA Software Institute, Madrid, Spain}{alessio.mansutti@imdea.org}{https://orcid.org/0000-0002-1104-7299}{Funded by the Madrid Regional Government (César Nombela grant 2023-T1/COM-29001), and by MCIN/AEI (grant PID2022-138072OB-I00).}

\author{Mikhail R.~Starchak}{St.~Petersburg State University, St.~Petersburg, Russia}{m.starchak@spbu.ru}{https://orcid.org/0000-0002-2288-9483}{Supported by the Russian Science Foundation, project 23-71-01041.}

\authorrunning{A.~Mansutti and M.~Starchak} 

\Copyright{Alessio Mansutti, Mikhail R.~Starchak} 

\ccsdesc[500]{Computing methodologies~Symbolic and algebraic algorithms}
\ccsdesc[300]{Theory of computation~Logic} 

\keywords{decision procedures, quantifier elimination, non-linear integer arithmetic} 

\category{} 

\relatedversion{\href{https://doi.org/10.4230/LIPIcs.MFCS.2025.58}{10.4230/LIPIcs.MFCS.2025.58}} 






\nolinenumbers 

\EventEditors{John Q. Open and Joan R. Access}
\EventNoEds{2}
\EventLongTitle{42nd Conference on Very Important Topics (CVIT 2016)}
\EventShortTitle{CVIT 2016}
\EventAcronym{CVIT}
\EventYear{2016}
\EventDate{December 24--27, 2016}
\EventLocation{Little Whinging, United Kingdom}
\EventLogo{}
\SeriesVolume{42}
\ArticleNo{23}

\usepackage[utf8]{inputenc}
\usepackage[T1]{fontenc}
\usepackage{xspace}

\usepackage{amsmath,amssymb,amsfonts,amsthm}
\usepackage{mathtools}
\usepackage{hyperref}
\usepackage{stmaryrd}
\usepackage{etoolbox}
\usepackage{float}
\usepackage{bm}
\usepackage{cite}
\usepackage{graphicx}
\usepackage{textcomp}
\usepackage{cleveref}
\usepackage{tabto}
\usepackage{thm-restate}
\usepackage{blkarray}
\usepackage{setspace}
\usepackage{multirow}

\usepackage{caption}

\usepackage[dvipsnames]{xcolor}

\definecolor{lip-yellow}{rgb}{0.99,0.78,0.07}

\def\BibTeX{{\rm B\kern-.05em{\sc i\kern-.025em b}\kern-.08em
    T\kern-.1667em\lower.7ex\hbox{E}\kern-.125emX}}

\usepackage{tikz}
\usetikzlibrary{
  decorations.pathreplacing,
  calc,
  arrows,
  positioning,
  backgrounds,
}
\usepackage{pgfplots}
\usepackage{svg}

\usepackage{colortbl}

\usepackage{algorithm}
\usepackage[noend]{algpseudocode}

\algnewcommand\algorithmiccontext{\textbf{Context:}}
\algnewcommand\Context{\item[\algorithmiccontext]}
\algnewcommand\algorithmicbranchoutput{\textbf{Branch Output:}}
\algnewcommand\BranchOutput{\item[\algorithmicbranchoutput]}
\algnewcommand\algorithmicndbranchoutput{\textbf{Output of each branch ($\beta$):}}
\algnewcommand\NDBranchOutput{\item[\algorithmicndbranchoutput]}
\algnewcommand\algorithmicglobaloutput{\textbf{Global Output:}}
\algnewcommand\GlobalOutput{\item[\algorithmicglobaloutput]}
\algnewcommand\algorithmicglobalspec{\textbf{Ensuring:}}
\algnewcommand\GlobalSpec{\item[\algorithmicglobalspec]}
\algnewcommand\algorithmicswitch{\textbf{switch}}
\algnewcommand\algorithmiccase{\textbf{case}}
\algnewcommand\algorithmicforeach{\textbf{foreach}}
\algnewcommand\algorithmicrep{\textbf{repeat}}
\algnewcommand\algorithmicnondet{\textbf{nondet}}
\algnewcommand\algorithmicor{\textbf{or}}
\algnewcommand\algorithmicassert{\textbf{assert}}
\algnewcommand\algorithmiclet{\textbf{let}}
\algnewcommand\Assert[1]{\State \algorithmicassert(#1)}
\algdef{SE}[SWITCH]{Switch}{EndSwitch}[1]{\algorithmicswitch\ #1\ \algorithmicdo}{\algorithmicend\ \algorithmicswitch}%
\algtext*{EndSwitch}%
\algdef{SE}[CASE]{Case}{EndCase}[1]{\algorithmiccase\ #1\ \textbf{:}}{\algorithmicend\ \algorithmiccase}%
\algtext*{EndCase}%
\algdef{S}[FOR]{ForEach}[1]{\algorithmicforeach\ #1\ \algorithmicdo}%
\algdef{S}[FOR]{Rep}[1]{\algorithmicrep\ #1\ \algorithmicdo}%
\algdef{SE}[NONDET]{Nondet}{EndNondet}{\algorithmicnondet\ }{\algorithmicend\ \algorithmicnondet}%
\algtext*{EndNondet}%
\algdef{SE}[Or]{Or}{EndOr}{\algorithmicor\ }{\algorithmicend\ \algorithmicor}%
\algtext*{EndOr}%
\algdef{SE}[LET]{Let}{EndLet}{\algorithmiclet\ }{\algorithmicend\ \algorithmiclet}%
\algtext*{EndLet}%

\AtBeginEnvironment{algorithm}{\linespread{1.05}}

\newcounter{stepnum}
\setcounter{stepnum}{1}

\usepackage{todonotes}
\colorlet{dmitry}{green!50!black!80}
\colorlet{alessio}{blue!70!black!80}
\colorlet{hitarth}{red!70!blue!80}
\definecolor{mikhail}{HTML}{9F2D20}
\definecolor{guru}{HTML}{9F2D20}

\makeatletter

\makeatother

\makeatletter
\@ifpackageloaded{stix}{%
}{%
  \DeclareFontEncoding{LS2}{}{\noaccents@}
  \DeclareFontSubstitution{LS2}{stix}{m}{n}
  \DeclareSymbolFont{stix@largesymbols}{LS2}{stixex}{m}{n}
  \SetSymbolFont{stix@largesymbols}{bold}{LS2}{stixex}{b}{n}
  \DeclareMathDelimiter{\lBrace}{\mathopen} {stix@largesymbols}{"E8}%
                                            {stix@largesymbols}{"0E}
  \DeclareMathDelimiter{\rBrace}{\mathclose}{stix@largesymbols}{"E9}%
                                            {stix@largesymbols}{"0F}
}
\makeatother



\newcommand{\newextmathcommand}[2]{%
    \newcommand{#1}{\ensuremath{#2}\xspace}
}

\makeatletter
\newcommand{\labeltext}[2]{%
  #1%
  \@bsphack%
  \csname phantomsection\endcsname 
  \def\@currentlabel{#1}{\label{#2}}%
  \@esphack%
}
\makeatother

\makeatletter
\newcommand{\customlabel}[2]{%
  \@bsphack%
  \csname phantomsection\endcsname 
  \def\@currentlabel{#1}{\label{#2}}%
  \@esphack%
}
\makeatother


\makeatletter
               {\list{}{\leftmargin=0pt
                        \labelwidth\z@ \itemindent-\leftmargin
                        }}%
               {\endlist}
\makeatother


\newextmathcommand{\N}{\mathbb{N}}
\newextmathcommand{\Np}{\Nat_+}
\newextmathcommand{\Z}{\mathbb{Z}}
\newextmathcommand{\Q}{\mathbb{Q}}
\newextmathcommand{\R}{\mathbb{R}}
\newextmathcommand{\PP}{\mathbb{P}}
\newextmathcommand{\X}{\mathbb{X}}


\newcommand{\sem}[1]{\ensuremath{\left\llbracket#1\right\rrbracket}\xspace}


\newextmathcommand{\ptime}{\textup{\textsc{P}}\xspace}
\newextmathcommand{\fptime}{\textup{\textsc{FP}}\xspace}
\newextmathcommand{\bpp}{\textup{\textsc{BPP}}\xspace}
\newextmathcommand{\np}{\textup{\textsc{NP}}\xspace}
\newextmathcommand{\conp}{\textup{\textsc{coNP}}\xspace}
\newextmathcommand{\fnp}{\textup{\textsc{FNP}}\xspace}
\newextmathcommand{\pspace}{\textup{\textsc{PSpace}}\xspace}
\newextmathcommand{\nexptime}{\textup{\textsc{NExpTime}}\xspace}
\newextmathcommand{\expspace}{\textup{\textsc{ExpSpace}}\xspace}
\newextmathcommand{\conexp}{\textup{\textsc{coNExp}}\xspace}
\newextmathcommand{\twonexptime}{\textup{\textsc{2NExpTime}}\xspace}
\newextmathcommand{\threeexptime}{\textup{\textsc{3ExpTime}}\xspace}
\newextmathcommand{\tower}{\textup{\textsc{Tower}}\xspace}
\newextmathcommand{\npo}{\textup{\textsc{NPO}}\xspace}
\newextmathcommand{\npocmp}{\textup{\textsc{NPO-cmp}}\xspace}
\newextmathcommand{\STA}{\textup{\textsc{STA}}\xspace}
\newextmathcommand{\factoring}{\textup{\textsc{factoring}}\xspace}


\let\temp\phi
\let\phi\varphi
\let\varphi\temp
\newextmathcommand{\totient}{\varphi}
\renewcommand{\vec}{\bm}

\newextmathcommand{\lcm}{{\rm lcm}}

\newcommand{\abs}[1]{\ensuremath{\left|#1\right|}\xspace}
\newcommand{\ceil}[1]{\ensuremath{\left\lceil#1\right\rceil}\xspace}
\newcommand{\floor}[1]{\ensuremath{\left\lfloor#1\right\rfloor}\xspace}
\newcommand{\sceil}[1]{\ensuremath{\bigl\lceil#1\bigr\rceil}\xspace}
\newcommand{\sfloor}[1]{\ensuremath{\bigl\lfloor#1\bigr\rfloor}\xspace}
\newcommand{\frpart}[1]{\ensuremath{\left\{#1\right\}}\xspace}

\newcommand{\intdiv}[2]{\ensuremath{\bigl\lfloor\frac{#1}{#2}\bigr\rfloor}\xspace}

\DeclareFontEncoding{LS2}{}{\noaccents@}
  \DeclareFontSubstitution{LS2}{stix}{m}{n}
  \DeclareSymbolFont{stix@largesymbols}{LS2}{stixex}{m}{n}
  \SetSymbolFont{stix@largesymbols}{bold}{LS2}{stixex}{b}{n}
  \DeclareMathDelimiter{\lBrace}{\mathopen} {stix@largesymbols}{"E8}%
                                            {stix@largesymbols}{"0E}
  \DeclareMathDelimiter{\rBrace}{\mathclose}{stix@largesymbols}{"E9}%
                                            {stix@largesymbols}{"0F}


\newcommand{\sub}[2]{\ensuremath{[#1\,/\,#2]}\xspace}
\newcommand{\abssub}[2]{\ensuremath{\lBrace #1\,/\,#2 \rBrace}\xspace}
\newcommand{\vigsub}[2]{\ensuremath{{[\mkern-1mu[#1\mathbin{/}#2]\mkern-1mu]}}\xspace}


\newcommand{\bitlength}[1]{\ensuremath{\langle{#1}\rangle}\xspace}

\newcommand{\onenorm}[1]{\ensuremath{\lVert{#1}\rVert_{1}}\xspace}
\newextmathcommand{\card}{\#}


\newextmathcommand{\V}{V_2}

\newextmathcommand{\fterms}{\textup{terms}}
\newextmathcommand{\fmod}{\textit{mod}}
\newextmathcommand{\fdivub}{\textit{div.ub}}
\newextmathcommand{\divides}{\mathrel{|}}
\newextmathcommand{\lst}{\textit{lst}}


\newextmathcommand{\lead}{\ell(t)}
\newextmathcommand{\prevlead}{p(t)}


\newextmathcommand{\ILP}{\textup{ILP}}
\newextmathcommand{\PA}{\textup{PrA}}
\newextmathcommand{\PPA}{\textup{PrA}[t]}


\definecolor{light-gray}{gray}{0.95}
\newcolumntype{g}{>{\columncolor{light-gray}}r}


\newextmathcommand{\varcoeff}{\textit{coeff}}

\newextmathcommand{\paratom}{\textit{atom}}
\newextmathcommand{\parvars}{\textit{vars}}
\newextmathcommand{\parfunc}{\textit{func}}
\newextmathcommand{\parconst}{\bitlength{\textit{const}}}

\newextmathcommand{\parterms}{\textit{terms}}
\newextmathcommand{\parnonshift}{\card\textit{noshifts}}
\newextmathcommand{\parmod}{\card\textit{mod}}
\newextmathcommand{\paroccmod}{\card\textit{occ.mod}}
\newextmathcommand{\parcoeff}{\bitlength{\varcoeff}}
\newextmathcommand{\parbound}{\bitlength{B}}
\newextmathcommand{\parcoeffnum}{\card\varcoeff}






\newcommand{\myif}{\textbf{if}\xspace}

\newcommand{\myelse}{\textbf{else}\xspace}
\newcommand{\mycontinue}{\textbf{continue}\xspace}
\newcommand{\myguess}{\textbf{guess}\xspace}

\newextmathcommand{\Master}{\textsc{$\PPA$-QE}}
\newextmathcommand{\BoundedQE}{\textsc{BoundedQE}}
\newextmathcommand{\BoundedElimDiv}{\textsc{ElimDiv}}
\newextmathcommand{\ElimBounded}{\textsc{ElimBounded}}


\newextmathcommand{\kl}{(k,\ell)}
\newextmathcommand{\leac}{LEAC\xspace}

\newextmathcommand{\tests}{\textit{TP}}
\newextmathcommand{\testsilep}{\textit{TP}_{\textup{ILEP}}}
\newextmathcommand{\disciplineilep}{\abssub{\cdot}{\cdot}_{\textup{ILEP}}}

\newextmathcommand{\constraints}{\mathcal{C}}
\newextmathcommand{\objectives}{\mathcal{F}}
\newextmathcommand{\objcons}{\mathcal{S}}

\newextmathcommand{\interiorpoints}{\textup{interior}}

\newextmathcommand{\goal}{\mathrm{goal}}
\newextmathcommand{\opt}{\mathrm{opt}}
\newextmathcommand{\short}{\mathrm{short}}
\newextmathcommand{\sol}{\mathrm{sol}}

\newextmathcommand{\lin}{\mathrm{lin}}
\newextmathcommand{\context}{\mathrm{ctx}}

\newextmathcommand{\odd}{\textit{odd}}


\newextmathcommand{\trunc}{T}
\newextmathcommand{\indic}{\mathbf{1}}

\begin{document}

\title{One-Parametric Presburger Arithmetic\\ has Quantifier Elimination}

\maketitle

\begin{abstract}
  We give a quantifier elimination procedure 
  for \emph{one-parametric Presburger arithmetic}, the extension of Presburger arithmetic 
  with the function $x \mapsto t \cdot x$, where 
  $t$ is a fixed free variable ranging over the integers. 
  This resolves an open problem proposed in \emph{[Bogart et al., Discrete Analysis, 2017]}.
  As conjectured in~\emph{[Goodrick, Arch.~Math.~Logic, 2018]},
  quantifier elimination is obtained for the extended structure featuring all integer division functions 
  $x \mapsto \sfloor{\frac{x}{f(t)}}$, one for each integer polynomial~$f$. 


  Our algorithm works by iteratively eliminating blocks of existential quantifiers. 
  The elimination of a block builds on two sub-procedures, both running in non-deterministic polynomial time. 
  The first one is an adaptation of a recently developed and efficient quantifier elimination procedure for Presburger arithmetic, 
  modified to handle formulae with coefficients over the ring~$\Z[t]$ of univariate polynomials.
  The second is reminiscent of the so-called ``base $t$ division method'' used by Bogart \emph{et al}. 
  As a result, we deduce that the satisfiability problem for the existential fragment of 
  one-parametric Presburger arithmetic (which encompasses a broad class of non-linear integer programs) is in~\np, 
  and that the smallest solution to a satisfiable formula 
  in this fragment is of polynomial bit size.
\end{abstract}

\section{Introduction}
\label{sec:intro}

The first-order theory of the integers~$\Z$ with addition and order, which is also known as \emph{Presburger arithmetic} (\PA)
or linear integer arithmetic, has been intensively studied during almost a century~\cite{Pre29}.
It is a textbook fact that Presburger arithmetic admits quantifier elimination 
when the structure $\langle\Z;\,0,\,1,\,+,\,\leq\rangle$ is extended with the predicates $(d \divides (\cdot))_{d \in \Z}$ 
for divisibilities by fixed integers $d$: 
in the theory of this extended structure,
for every quantifier-free formula $\phi(x,\vec y)$
there is a quantifier-free formula $\psi(\vec y)$ that is equivalent to $\exists x : \phi(x,\vec y)$. The construction of $\psi$ is effective, which implies the decidability of Presburger arithmetic.
The algorithm to decide \PA is the canonical example for the notion of \emph{quantifier elimination procedure}.
The computational complexity of the many variants of this procedure has a long 
history, beginning with Oppen's proof~\cite{Oppen78} that Cooper's procedure~\cite{Cooper72} runs in triply exponential time, 
and followed by refinements from Reddy and Loveland~\cite{Reddy78}, and later Weispfenning~\cite{Weispfenning90}, 
which enable handling formulae with fixed quantifier alternations in doubly exponential time.
Recent research on quantifier elimination aims at narrowing the complexity gap for the existential fragment 
(only last year it was discovered that quantifier elimination can be performed in exponential time in this case~\cite{ChistikovMS24,HaaseKMMZ24}) 
and on extending the procedure to handle additional predicates and functions~\cite{Starchak21,BenediktCM23,ChistikovMS24,KarimovLNO025},
or other forms of quantification~\cite{ChistikovHM22,Habermehl2023,BergstrasserGLZ24}.
The reader can find an extensive bibliography on Presburger arithmetic, and quantifier elimination,
in the survey papers by Haase~\cite{Haase18} and Chistikov~\cite{Chistikov24}. 

This paper addresses the open problem raised in the papers~\cite{BogartGW17} and \cite{Goodrick18} 
regarding the existence of a quantifier elimination procedure for the theory
Th${\langle\Z;\,0,\,1,\,+,\,x \mapsto t \cdot x,\,\leq\rangle}$, known 
as \emph{one-parametric Presburger arithmetic}~(\PPA). 
In this theory, the structure of Presburger arithmetic is extended with the function~$x \mapsto t \cdot x$ for multiplication by a single parameter $t$ ranging over $\Z$. 
Every \PPA formula $\phi(\vec x)$ defines a \emph{parametric Presburger family} 
${\mathbb{S}(\phi)\coloneqq\big\{\!\sem{\phi}_k : k \in \Z \big\}}$\!, where $\sem{\phi}_k$ is the
set of (integer) solutions of the Presburger arithmetic formula obtained from $\phi$ by replacing 
the parameter~$t$ with the integer $k$. 

\begin{example}
    \label{intro:example-2}
    Consider the statement ``for every two successive positive integers $t$ and $t+1$, and for all integers $a$ and $b$, 
    there is an integer $x$ in the interval $[0,t(t+1)-1]$ that is congruent to $a$ modulo $t$, and to $b$ modulo $t+1$''. 
    The truth of this sentence follows from the Chinese remainder theorem together with the fact that successive positive integers are always coprime.
    We can encode this statement in~\PPA with the formula~$\forall a \forall b : \chi(a,b)$, where
    \[ 
        \chi \ \coloneqq \ t \geq 1 \implies \exists x : \ 0 \leq x \land x \leq t^2 + t - 1 
        \land (\exists y : x - a = t \cdot y) \land (\exists z : x - b = (t+1) \cdot z).
    \]
    From the validity of the statement, we find $\sem{\chi}_k = \Z^2$ for every $k \in \Z$. 
    That is to say, for every instantiation of~$t$, both~$\phi$ and~$\chi$ are tautologies of~\PA.
    \qed
\end{example}
There are many natural decision problems regarding~\PPA (all taking a formula~$\phi$ as input):

\begin{itemize}
    \item \emph{satisfiability}: Is $\phi$ satisfiable for an instantiation of the parameter 
    (i.e.,~$\mathbb{S}(\phi) \neq \{\emptyset\}$)\,?
    \item \emph{universality}: Is $\phi$ satisfiable for all instantiations of the parameter 
    (i.e.,~$\emptyset \not\in \mathbb{S}(\phi)$)\,?
    \item \emph{finiteness}: Is $\phi$ satisfiable for only finitely many instantiations of the parameter~$t$\,?
\end{itemize}

In~\cite{BogartGW17}, Bogart, Goodrick, and Woods consider a search problem that generalises 
all the problems above: they show how to compute, from 
an input formula $\phi$, a closed expression for the function $f(k) \coloneqq \card{\sem{\phi}_k}$, 
where $\card S$ stands for the cardinality of a set~$S$. By relying on properties of this function, 
one can solve satisfiability, universality, and finiteness.  
In their proof, the first ingredient is given by Goodrick's bounded quantifier elimination procedure~\cite{Goodrick18}.
In contrast to the quantifier elimination procedures for \PA, in this procedure every quantified variable $x$ is not completely eliminated from the formula~$\phi$, but acquires instead a bound $0\leq x\leq f(t)$, for 
some univariate polynomial $f(t)$. 
An example of this is given by the variable $x$ in~\Cref{intro:example-2}, 
which is bounded in $[0,t(t+1)-1]$.
Closely related bounded quantifier elimination procedures were also developed in~\cite{Lasaruk09,Weispfenning97}. 
The second ingredient of the construction is given by a method developed by Chen, Li, and Sam for the study of parametric polytopes~\cite{ChenLS12}, 
and dubbed ``base $t$ division method'' in~\cite{BogartGW17}.
This method produces a quantifier-free \PPA formula $\psi$ satisfying $\card{\sem{\psi}_k} = \card{\sem{\phi}_k}$ for every $k \in \Z$.

The combination of the two ingredients has a drawback: the equivalence of the initial formula $\phi$ 
with the quantifier-free formula~$\psi$ over $\Z$ is not preserved. 
In~\cite[p.~13]{BogartGW17}, the authors ask whether this issue can be fixed: ``\emph{one might try to show that any formula} [\,of \PPA] 
\emph{is logically equivalent to a quantifier-free formula in a slightly larger language with additional ``well-behaved'' function and 
relation symbols} [\,\dots\,] \emph{But we already know that quantifier elimination in the original language} [\,of the structure $\langle\Z;\,0,\,1,\,+,\,x \mapsto t \cdot x,\,\leq\rangle$\,] 
\emph{is impossible, and finding a reasonable language for quantifier elimination seems difficult''.} 
A candidate for the extended structure was suggested by Goodrick in~\cite[Conjecture 2.6]{Goodrick18}: 
\PPA must be extended with all \emph{integer division functions} $x \mapsto \sfloor{\frac{x}{|f(t)|}}$, one for each integer polynomial $f(t)$ (these functions are assumed to occur in a formula only under the proviso that $f(t) \neq 0$). This is a rather tight conjecture, as all added functions are 
trivially definable in~\PPA: the equality $y = \sfloor{\frac{x}{\abs{f(t)}}}$ holds if and only if
$f(t) \neq 0 \,\land\, \abs{f(t)} \cdot y \leq x \,\land\, x < \abs{f(t)} \cdot y + \abs{f(t)}$.

We give a positive answer to Goodrick's conjecture:
\begin{theorem}\label{theorem:main}
    One-parametric Presburger arithmetic
    admits effective quantifier elimination in the extended structure $\langle\Z;\,0,\,1,\,+,\, x \mapsto t \cdot x,\, 
    x \mapsto \sfloor{\frac{x}{\abs{f(t)}}},\,\leq\rangle$.
\end{theorem}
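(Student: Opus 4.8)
The plan is to prove \Cref{theorem:main} by establishing the key special case of \emph{eliminating a single block of existential quantifiers}: from a quantifier-free formula $\phi(\vec x,\vec y,t)$ in the extended language, produce a quantifier-free $\psi(\vec y,t)$ equivalent to $\exists \vec x : \phi$. This suffices, since every formula can be put in prenex form and the quantifier blocks eliminated from the inside out, a universal block being handled via $\forall \vec x : \phi \equiv \neg\,\exists \vec x : \neg\phi$ (feeding the quantifier-free $\neg\phi$ to the existential procedure and then negating the quantifier-free result). Before starting, one fixes a \emph{normal form} for quantifier-free formulae: by introducing auxiliary existentially quantified variables one replaces every nested division subterm $\sfloor{\frac{\tau}{\abs{f(t)}}}$ by a fresh variable $z$ together with the defining conjuncts $f(t) \neq 0 \land \abs{f(t)}\cdot z \leq \tau < \abs{f(t)}\cdot z + \abs{f(t)}$; after this \emph{flattening}, the matrix is a Boolean combination of inequalities and congruences whose terms are $\Z[t]$-linear in the variables. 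The fresh variables are absorbed into the block $\vec x$, so we may assume $\phi$ is already in this flat normal form.

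Inside the block, the first preparatory step is a \emph{finite case split on the parameter~$t$}. Each polynomial $f(t)$ occurring as a divisor has finitely many integer roots, so a finite set of ``transition points'' partitions $\Z$ into finitely many pieces (bounded singletons and intervals, some unbounded) on each of which every such $f$ has a fixed sign; on a given piece we replace $\abs{f(t)}$ by $f(t)$ or by $-f(t)$, and we also record whether $t \geq 1$, $t = 0$, or $t \leq -1$. The eventual $\psi$ will be the disjunction over these pieces of the formula produced on that piece, conjoined with the corresponding constraint on $t$. On a single piece we then run \textbf{sub-procedure~1}: an adaptation to $\Z[t]$-coefficients of a recent efficient quantifier-elimination procedure for Presburger arithmetic. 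The one genuinely new phenomenon is that clearing the coefficient $p(t)$ of a variable~$x$ --- the analogue of Cooper's ``$x \mapsto \frac{x}{m}$'' step --- is realised directly by the new function symbol $x \mapsto \sfloor{\frac{x}{\abs{p(t)}}}$, while the case distinction over residues modulo $p(t)$, which in ordinary Presburger arithmetic is a finite disjunction, is here replaced by introducing an auxiliary variable constrained to the interval $[0,\abs{p(t)})$. The outcome is a formula $\exists \vec z : \chi$ in which $\chi$ is quantifier-free in the extended language and every $z_j$ carries an explicit polynomial bound $0 \leq z_j \leq g_j(t)$ --- this is exactly Goodrick's ``bounded quantifier elimination'' phenomenon, now produced inside the extended language.

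It remains to eliminate the bounded variables, which is done by \textbf{sub-procedure~2}, a logically faithful version of the ``base-$t$ division method''. For a variable $z$ with $0 \leq z \leq g(t)$ and, say, $t \geq 2$, one writes $z$ in base $t$ as $z = \sum_{i=0}^{d} b_i\, t^i$, where $d$ is controlled by $\deg g$ and each digit equals $b_i = \sfloor{\frac{z}{t^i}} - t\cdot\sfloor{\frac{z}{t^{i+1}}}$ --- an expression in the new function symbols --- so that substituting this decomposition for $z$ changes neither the solution set nor the quantifier-free status of the formula. The gain is that, performed simultaneously for all the bounded variables, this substitution strictly decreases a suitable complexity measure on formulae (the degrees of the polynomials occurring as coefficients and as divisors, together with the number of block variables): peeling off the highest-order digit trades a variable bounded by $g(t)$ of degree $d$ for one bounded by $\approx t$ plus a controlled remainder of smaller degree. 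One then re-enters sub-procedure~1, and alternates the two sub-procedures; since the measure decreases, the block is eventually emptied and a quantifier-free $\psi$ is obtained. Composing over the pieces of the case split, and over blocks via the prenex reduction, yields \Cref{theorem:main}.

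I expect the main obstacle to lie in making sub-procedures~1 and~2 \emph{compose into a terminating whole}. Sub-procedure~1 creates bounded auxiliary variables (from residues modulo polynomial coefficients) and may introduce fresh divisor polynomials; sub-procedure~2, which is meant to remove bounded variables, extracts digits that can themselves feed new bounded variables back into sub-procedure~1. Designing a well-founded measure on formulae --- presumably combining the number and polynomial degrees of the divisors, the number of block variables, and the degrees of the bounding polynomials --- that provably strictly decreases across a full round of both sub-procedures is the delicate technical heart, and it must be made robust against the non-uniform behaviour of the parameter: the base-$t$ representation degenerates for $\abs{t} \leq 1$, and the signs of freshly generated polynomial coefficients force further case splits on $t$, so the finite partition of $\Z$ has to be regenerated consistently as the algorithm proceeds. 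A secondary difficulty is purely bookkeeping: guaranteeing that no bounded quantifier survives in the final formula --- that is, that every bound introduced is ultimately discharged by a division term --- so that the output is genuinely quantifier-free in the language of \Cref{theorem:main} rather than only in a bounded-quantifier fragment.
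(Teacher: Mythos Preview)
Your high-level architecture matches the paper's: reduce to one existential block, flatten away nested divisions, run a $\Z[t]$-coefficient bounded quantifier elimination (your sub-procedure~1, the paper's \textsc{BoundedQE}), then discharge the resulting polynomially bounded variables via a base-$t$ expansion (your sub-procedure~2, the paper's \textsc{ElimBounded}). But the obstacle you single out --- designing a well-founded measure so that an \emph{alternation} of the two sub-procedures terminates --- is not how the paper closes the loop, and your proposal does not actually supply such a measure. As you yourself say, sub-procedure~1 reintroduces bounded variables with polynomial bounds, so naively alternating threatens an infinite regress; your sketch of ``peeling off the highest-order digit'' does not obviously make the degrees of the \emph{new} divisor polynomials produced by sub-procedure~1 go down.

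The paper avoids this regress entirely, and this is the idea your proposal is missing. After bit-blasting each bounded variable into $t$-digits $y_0,\dots,y_M\in[0,t-1]$ (fresh variables, not the circular expression $b_i=\lfloor z/t^i\rfloor-t\lfloor z/t^{i+1}\rfloor$ you wrote, which still mentions $z$), the paper performs repeated symbolic ``divisions by $t$'' on each constraint until every $t$-digit occurs with an \emph{integer} coefficient; a $\vec y$-degree parameter strictly decreases at each division, so this inner loop terminates. Now the formula is a Presburger-like system in the $t$-digits: all coefficients of the quantified variables are integers, and the only $t$-dependence sits in free terms involving $\lfloor\tau(\vec z)/t^k\rfloor$ and $(\tau(\vec z)\bmod f(t))$. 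Calling \textsc{BoundedQE} once more on this system, the key lemma (\Cref{lemma:boundedqe:int} in the paper) shows that the bounded quantifiers it produces have \emph{integer} bounds, because the procedure's bounds are built from products of the coefficients and divisors of the eliminated variables, all of which are now integers. Integer-bounded quantifiers are then removed by a single finite enumeration. So there is no open-ended alternation: it is exactly \textsc{BoundedQE}~$\to$~divisions-by-$t$~$\to$~\textsc{BoundedQE}~$\to$~enumerate, and done. Your finite case split on the sign of every divisor polynomial is also heavier than needed: the paper handles $\abs{t}\le 1$ separately by ordinary Presburger QE, reduces $t\le -2$ to $t\ge 2$ by $t\mapsto -t$, and absorbs all remaining sign analyses of $f(t)$ into the non-deterministic branching inside \textsc{BoundedQE}.
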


Above, the adjective ``effective'' reflects the fact that there is a quantifier elimination procedure for constructing, 
given an input formula $\phi$, an equivalent quantifier-free formula~$\psi$.
The main contribution towards the proof of~\Cref{theorem:main} 
is a procedure for removing bounded quantifiers while preserving formula equivalence; hence obtaining 
$\sem{\psi}_k = \sem{\phi}_k$ for all $k \in \Z$, 
instead of the weaker $\card{\sem{\psi}_k} = \card{\sem{\phi}_k}$ obtained with the ``base $t$ division method'' from~\cite{BogartGW17}.

As mentioned above, an active area of research in quantifier elimination focuses on existential Presburger 
arithmetic ($\exists$\PA) and its extensions. This interest is motivated, on the one hand, by the goal of 
improving both the performance and expressiveness of SMT solvers, mostly targeting existential theories~\cite{BarrettT18,FrohnG24}. On the other hand, recent work has revisited the computational complexity of quantifier elimination procedures. For many years, these procedures were regarded as inefficient 
when applied to $\exists$\PA. Notably, Weispfenning's classic approach~\cite{Weispfenning90} yields 
only a \nexptime upper bound for satisfiability, despite fundamental results from integer 
programming~\cite{BoroshT76,vonzurGathenS78} establishing that $\exists$\PA is in~\np. 
It was not until 2024 that two independent works~\cite{ChistikovMS24,HaaseKMMZ24} provided quantifier elimination 
procedures with matching~\np upper bounds. The approach in~\cite{HaaseKMMZ24} builds on the geometric ideas 
from 
\cite{vonzurGathenS78}, while~\cite{ChistikovMS24} adapts Bareiss' fraction-free Gaussian elimination procedure~\cite{Bareiss68} 
to $\exists$\PA. Starting from the fact that Bareiss' algorithm works in any integral domain, 
we show that the second approach extends naturally to~$\PPA$. 
By analysing the runtime of our procedure, we derive: 

\begin{theorem}\label{theorem:complexity}
    For the class of all existential formulae of \PPA, the following holds:  
    \begin{center}
        \begin{tabularx}{0.9\textwidth}{|>{\centering\arraybackslash}X|>{\centering\arraybackslash}X|>{\centering\arraybackslash}X|} 
            \hline
            \cellcolor{lipicsLightGray} Satisfiability      & \cellcolor{lipicsLightGray} Universality & \cellcolor{lipicsLightGray} Finiteness \\ 
            \np-complete  & \conexp-complete           & \conp-complete      \\
            \hline
        \end{tabularx}
      \end{center}
      \vspace{0pt}
\end{theorem}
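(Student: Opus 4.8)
The plan is to derive all six bounds from the quantifier elimination machinery above, applied to \emph{existential} inputs. We may assume without loss of generality that the input has the form $\exists \vec x : \phi(t,\vec x)$ with $\phi$ quantifier-free, by absorbing any additional free variables into the leading block (all three problems existentially close them). Running one round of block-elimination produces, in each nondeterministic branch $\beta$, a quantifier-free extended-language formula $\psi_\beta(t)$ over the \emph{single} free variable $t$, with $\exists \vec x : \phi \equiv \bigvee_\beta \psi_\beta$; in particular each $\psi_\beta$ implies $\exists \vec x : \phi$, and whenever $\psi_\beta(k)$ holds a solution $\vec x$ of $\phi(k,\vec x)$ is obtained by evaluating at $k$ the test-point terms produced by the procedure. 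Since block-elimination runs in nondeterministic polynomial time, there are at most exponentially many branches and each $\psi_\beta$ has polynomial size, hence polynomially bounded coefficients, moduli and test-point terms. The one structural fact I would isolate is that the solution set $\{k \in \Z : \psi(k)\}$ of a quantifier-free extended-language formula $\psi(t)$ in one variable is ultimately periodic in both directions, with a threshold bounded by the largest root or sign/periodicity stabilisation point among its atoms, and a period dividing the least common multiple of its moduli together with the ``hidden periods'' of the integer-division terms occurring in it. Consequently, for a single branch $\psi_\beta$ both quantities are of polynomial bit size, whereas for the full disjunction $\bigvee_\beta \psi_\beta$ the threshold remains polynomial while the period can reach exponential bit size, being a least common multiple of exponentially many polynomially bounded numbers.

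With this in hand, satisfiability is handled branch by branch. As $\mathbb{S}(\phi) \neq \{\emptyset\}$ iff some $\psi_\beta$ has a nonempty solution set, and an ultimately periodic set with polynomial-bit-size threshold and period contains an element $k$ of polynomial bit size whenever it is nonempty, an \np\ procedure guesses $\beta$, builds $\psi_\beta$, guesses $k$ together with the induced witness $\vec x$, and verifies; this also yields the polynomial bound on the smallest solution mentioned in the introduction. For finiteness, note that the set $\{k : \sem{\phi}_k \neq \emptyset\}$ is infinite iff some $\psi_\beta$ holds for infinitely many $t \to +\infty$ or for infinitely many $t \to -\infty$; and since $\psi_\beta$ is periodic beyond its threshold, which is of polynomial bit size and polynomial-time computable, holding for infinitely many large $t$ is equivalent to holding at a \emph{single} $t_0$ beyond that threshold, a value that can again be taken of polynomial bit size. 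Guessing $\beta$ and such a $t_0$ places ``$\{k : \sem{\phi}_k \neq \emptyset\}$ infinite'' in \np, so finiteness is in \conp. The matching lower bounds for these two problems are immediate by restricting to $t$-free formulae: then $\sem{\phi}_k = \sem{\phi}$ for all $k$, so satisfiability coincides with $\exists$\PA-satisfiability, and ``satisfiable for only finitely many $k$'' coincides with $\exists$\PA-unsatisfiability, which are \np- respectively \conp-hard.

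For universality, $\phi$ is not universal iff $\Z \setminus \{k : \bigvee_\beta \psi_\beta(k)\} \neq \emptyset$; by the periodicity bounds, this complement, when nonempty, contains a witness $k$ of exponential bit size. A nondeterministic exponential-time machine guesses such a $k$, then deterministically enumerates the at most exponentially many branches, evaluates $\neg\psi_\beta(k)$ for each (each evaluation running in exponential time, since $k$ has exponentially many bits), and accepts iff all of them hold. Hence non-universality is in \nexptime, and universality in \conexp.

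I expect the \conexp-hardness of universality to be the main obstacle. The intended route is a reduction from (the complement of) a canonical \nexptime-complete integer-feasibility problem whose instances admit a polynomial-size description but encode systems of exponential magnitude. The crucial leverage is that, although an existential \PPA\ formula is of polynomial size and has only polynomially many variables, instantiating the parameter $t$ at a suitably large value materialises coefficients of magnitude far beyond what an explicit constant of the same description length could provide, while the shape ``$\exists t\,\forall \vec x$'' of the complement of universality supplies precisely the existential choice of such a $t$ and the co-nondeterminism needed to reject infeasible instances. The delicate part is to design $\phi$ so that its parametric family $\mathbb{S}(\phi)$ contains $\emptyset$ exactly at the instantiations of $t$ encoding ``yes''-instances of the source problem, and in particular to express the condition ``$t$ encodes a valid witness'' -- which naturally requires checking exponentially many local constraints -- using only the polynomially many variables $\vec x$ and the arithmetic available in \PPA.
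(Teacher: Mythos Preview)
Your upper-bound arguments for all three problems, and the lower bounds for satisfiability and finiteness, are correct and follow essentially the same route as the paper. One simplification you miss: the paper records (\Cref{lemma:properties-elimbounded}) that in every branch output of~\Master on a \emph{sentence}, each occurrence of $\lfloor\cdot/t^d\rfloor$ and $(\cdot \bmod f(t))$ is applied to a non-shifted term, hence to the constant~$0$, and each divisor in a divisibility relation is an integer. So $\psi_\beta(t)$ is already a Boolean combination of polynomial (in)equalities $g(t)\sim 0$ and constraints $d\divides g(t)$ with $g\in\Z[t]$ and $d\in\Z$; there are no ``hidden periods'' from integer-division terms, and your ultimate-periodicity analysis reduces to the elementary statement of \Cref{lemma:small-solutions}. (This also lets you drop the appeal to ``test-point terms'' for satisfiability: once a small $k$ is guessed, substitute it for $t$ and invoke the \np membership of $\exists$\PA.)

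The genuine gap is the \conexp-hardness of universality, which you single out as ``the main obstacle'' and propose to attack by a bespoke reduction from some \nexptime-complete feasibility problem, encoding an exponential witness in the value of~$t$. None of this is necessary, and you do not carry it out. Hardness follows in one line from the known \conexp-hardness of the $\forall\exists^*$ fragment of Presburger arithmetic~\cite{Gradel89,Haase14}: given a \PA sentence $\forall y\,\exists\vec x:\psi(y,\vec x)$, rename $y$ to $t$ and view $\exists\vec x:\psi(t,\vec x)$ as an existential \PPA formula (no product $t\cdot x$ appears, so this is trivially in the language). Universality of this formula --- ``$\sem{\exists\vec x:\psi}_k\neq\emptyset$ for every $k\in\Z$'' --- is precisely the truth of the original $\forall\exists^*$ sentence. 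In other words, the hard direction is inherited from plain \PA with one quantifier alternation; the parametric multiplication plays no role in the lower bound.
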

Our result on the satisfiability problem generalises the feasibility in \np of non-linear integer programs 
$A\cdot \vec x \,\leq\, \vec b(t)$, where $\vec b(t)$ is a vector of quotients of integer polynomials in $t$, 
proved by Gurari and Ibarra~\cite{GurariI79}. 
We remark that both problems are \np-hard in fixed dimension: 
solvability of systems $x \geq 0 \land a \cdot t^2 + b \cdot x = c$ is a well-known 
\np-complete problem~\cite{MandersA78}.

\subparagraph*{Future work.} 
This paper does not provide a complexity analysis for \emph{full}~\PPA. 
A back-of-the-envelope calculation of the runtime of the procedures in~\cite{BogartGW17} and~\cite{Goodrick18} 
suggests that the satisfiability problem for~\PPA is in elementary time (potentially in~\threeexptime).
However, these procedures do not yield an~\np upper bound for the existential fragment. 
In contrast, the procedure we introduce shows $\exists$\PPA in~\np, but it may in principle run in non-elementary 
time on arbitrary formulae. 
Unifying these procedures into a single ``optimal'' one seems possible and will be addressed in a 
forthcoming extended version of this paper. This would also provide an extension to the~\threeexptime 
quantifier-elimination procedure for \emph{almost linear arithmetic} proposed by Weispfenning in~\cite{Weispfenning90}.


Most of the literature on~\PPA focuses on computing the function~${f(k) \coloneqq \card{\sem{\phi}}_k}$, as introduced in~\cite{BogartGW17}. Not much is known regarding the complexity of computing this function. To our knowledge, the most significant result in this direction is the one in~\cite{BogartGNW19}, where Bogart, Goodrick, Nguyen, and Woods show that $f(k)$ can be computed in polynomial time (in the bit-length of $k$ given as part of the input) 
whenever $\phi$ is a \emph{fixed} \PPA-formula. We believe \Cref{theorem:complexity} 
to be a good starting point for further research in this direction.

While our work shows that the feasibility problem for integer programs in which 
a single variable occurs non-linearly is in~\np, the paper does not discuss
related optimisation problems of minimisation/maximisation. From our quantifier elimination procedure, we can show
that if there are optimal solutions to a linear polynomial with coefficients in $\Z[t]$
subject to a formula in $\exists$\PPA, then one is of polynomial bit size. 
Generalising this result to other non-convex objectives is an interesting avenue for future research.
\section{Preliminaries}
\label{sec:preliminaries}

We write $\mathbb{N}$ for the non-negative integers, 
and $\Z[t]$ for the set of univariate polynomials ${f(t) = \sum_{i=0}^d a_i \cdot t^i}$, 
where the coefficients $a_1,\dots,a_d$ and the constant $a_0$ are over the integers~$\Z$.
The \emph{height}~$h(f)$, \emph{degree} $\deg(f)$ and \emph{bit size} $\bitlength{f}$
of~$f$ are defined as 
${h(f) \coloneqq \max\{\abs{a_i} : i \in [0,d]\}}$, 
$\deg(f) \coloneqq \max\{0, i \in [0,d] : a_i \neq 0 \}$, 
and $\bitlength{f} \coloneqq (\deg(f) + 1) \cdot (\ceil{\log_2(h(f)+1)} + 1)$, 
respectively. For example, $f(t) = 2 \cdot t^2 - 3$ has degree $2$, height $3$ and bit size $9$.
Vectors of variables are denoted by $\vec x, \vec y, \vec z$, \emph{etc}.;
we write $\floor{\cdot}$ for the floor function.

\subparagraph*{On extending the structure of~\PPA.}
As discussed in~\Cref{sec:intro}, the paper concerns the extension of the first-order theory of ${\langle\Z;\,0,\,1,\,+,\, x \mapsto t \cdot x,\, \leq \rangle}$ by all \emph{integer division functions} $x \mapsto \sfloor{\frac{x}{\abs{\,f\,}}}$, where $f \in \Z[t]$.
However, in the context of our quantifier elimination procedure, it is more natural to work within the (equivalent) first-order theory of the structure:
\[
    \textstyle{\big\langle\Z;\,0,\,1,\,+,\, x \mapsto t \cdot x,\, \big\{ x \mapsto \floor{\frac{x}{t^d}} \big\}_{d \in \N},\, \big\{\,x \mapsto (x \bmod f)\, \big\}_{f \in \Z[t]},\, \big\{\, f \divides x \, \big\}_{f \in \Z[t]},\,=,\, \leq \big\rangle}
\]
where:
\begin{itemize}
    \item The integer division $x \mapsto \floor{\frac{x}{t^d}}$ is only defined for $t \neq 0$, with the obvious interpretation.
    \item The \emph{integer remainder function} $x \mapsto (x \bmod f(t))$ is defined following the equivalence
    \[ 
        \textstyle{(x \bmod f(t) = y)
        \iff 
        (f(t) = 0 \land y = x)
        \lor 
        (f(t) \neq 0 \land 
        y = x - \abs{f(t)} \cdot \sfloor{\frac{x}{\abs{f(t)}}})}.
    \]
    We remark that whenever $f(t) \neq 0$ the result of $(x \bmod f(t))$ belongs to~$[0,\abs{f(t)}-1]$.

    (Also note that the absolute value function $\abs{.}$ is easily definable in Presburger arithmetic.)
    
    \item The \emph{divisibility relation} $f(t) \divides x$ is a unary relation, and is defined following the equivalence
    \[ 
        (f(t) \divides x) \iff (f(t) = 0 \land x = 0) \lor (f(t) \neq 0 \land (x \bmod f(t) = 0)).
    \]
    We remark that the divisibility relations and integer remainder functions are defined 
    to satisfy the equivalence $f(t) \divides x \iff f(t) \divides (x \bmod f(t))$ 
    also when $f(t)$ evaluates to $0$.
\end{itemize}
For simplicity, \emph{we still denote this first-order theory with~\PPA}. 
Observe that we have ultimately defined $(f(t) \divides \cdot)$ and $x \mapsto (x \bmod f(t))$ 
in terms of $x \mapsto \sfloor{\frac{x}{\abs{\,f\,}}}$. 
As a result, every formula in this first-order theory can be translated into a formula from the theory in~\Cref{theorem:main}. 
This translation can be performed in polynomial time by introducing new existential quantifiers, 
or in exponential time without adding quantifiers (the blow-up is only due to the disjunctions in the definitions of $(f(t) \divides \cdot)$ and $x \mapsto (x \bmod f(t))$). 

The \emph{terms} of \PPA are built from the constants $0,1$, integer variables, and the functions of the structure.
Without loss of generality, we restrict ourselves to (finite) terms of the form 
\begin{equation}
    \label{eq:normal-term}
    \textstyle\tau \ \coloneqq \ f_0(t) \,+\, \sum\nolimits_{i=1}^n f_i(t) \cdot x_i \, +\,  
    \sum\nolimits_{i=n+1}^m f_i(t) \cdot \floor{\frac{\tau_i}{t^{d_i}}} \, +\, \sum\nolimits_{i=m+1}^k f_i(t) \cdot (\tau_i \bmod g_i(t)),
    \hspace{0.5cm}
\end{equation}
where 
all $f_i$ and $g_i$ belong to~$\Z[t]$, all $d_i$ belong to $\N$, and each $\tau_i$ is another term of this form.
The term $\tau$ is said to be \textit{linear}, if $f_i(t)=0$ for $i\in[n+1,k]$ (i.e., it does not contain 
integer division, nor integer remainder functions), and  \textit{non-shifted} whenever $f_0(t)=0$.
Above, the terms $\sfloor{\frac{\tau_i}{t^{d_i}}}$ and 
$(\tau_i \bmod g_i(t))$ are \emph{linear occurrences} of the integer division functions and integer remainder 
functions, and are said to \emph{occur linearly} in $\tau$. 
When every $f_i$ is an integer (a degree $0$ polynomial),
we define $1$-norm of $\tau$ as $\onenorm{\tau} \coloneqq \sum_{i=0}^k \abs{f_i}$.

Moving to the atomic formulae of the theory, it is easy to see that equalities, inequalities and divisibility relations can be rewritten (in polynomial time) 
to be of the form $\tau = 0$, $\tau \leq 0$ and $f(t) \divides \tau$, respectively, where $\tau$ 
is a term of the form given in~\Cref{eq:normal-term}. 
Syntactically, we will only work with atomic formulae of these forms, which we call \PPA \emph{constraints}. 
However, for readability, we will still sometimes write (in)equalities featuring non-zero terms on both sides (i.e.,~$\tau_1 \leq \tau_2$), and strict inequalities $\tau_1 < \tau_2$ and $\tau_2 > \tau_1$; both are short for $\tau_1 - \tau_2 + 1 \leq 0$.
A \PPA constraint is said to be \emph{linear} if the term $\tau$ featured in it is linear.

We restrict ourselves to formulae in prenex normal form $\exists \vec x_1 \forall \vec x_2 \dots \exists \vec x_n : \phi$ where $\phi$ is a \emph{positive} Boolean combination of \PPA constraints; 
that is, the only Boolean connectives featured in $\phi$ are conjunctions~$\land$ and disjunctions~$\lor$. 
This restriction is without loss of generality, as De Morgan's laws allow to push all negations at the level of literals, 
which can then be removed with the equivalences $\lnot(\tau = 0) \iff {{\tau < 0} \lor {\tau > 0}}$, \
${\lnot(\tau \leq 0) \iff {\tau > 0}}$ and ${\lnot (f(t) \divides \tau) \iff (f(t) = 0 \land (\tau < 0 \lor \tau > 0)) \lor (\tau \bmod f(t) > 0)}$.

For two terms $\tau_1$ and $\tau_2$, we write $\sub{\tau_2}{\tau_1}$ for 
a term \emph{substitution}. We see these substitutions as functions from terms to terms or from formulae to formulae: $\tau\sub{\tau_2}{\tau_1}$ is the term obtained by replacing, in the term $\tau$, every occurrence of $\tau_1$ with $\tau_2$. Analogously, $\phi\sub{\tau_2}{\tau_1}$ is the formula obtained from $\phi$ by replacing the term $\tau$ with $\tau\sub{\tau_2}{\tau_1}$
in every atomic formula $\tau = 0$, $\tau \leq 0$, or $f(t) \divides \tau$. 
In~\Cref{sec:gaussian-elimination} we will also need a stronger notion of substitution, called \emph{vigorous substitution} in~\cite{ChistikovMS24}; we defer its definition to that section.

\section{Outline of the quantifier elimination procedure}
\label{sec:FO-procedure}

In this section, we provide a high-level overview of our quantifier elimination procedure, 
highlighting the interactions among its various components. A detailed analysis of the two main components will be provided in the subsequent sections of the paper.

Let us consider a formula $\psi(\vec x_0) \coloneqq \exists \vec x_1 \forall \vec x_2 \dots \exists \vec x_n : \phi(\vec x_0,\dots,\vec x_n)$,
where $\phi$ is a positive Boolean combination of \PPA constraints. 
Our quantifier elimination procedure will work under the assumption that the parameter $t$ is greater than or equal to~$2$.
This simplifying assumption is without loss of generality, as the general problem is then solved as follows:
\begin{itemize}
    \item For every $k \in \{-1,0,1\}$, call a quantifier elimination procedure for Presburger arithmetic (e.g., the one in~\cite{Weispfenning90}) on the formula $\psi\sub{k}{t}$, obtaining $\psi_k'$. Let $\psi_k \coloneqq \psi_k' \land (t=k)$.
    \item Call our procedure on $\psi$, obtaining a formula $\psi^{+}$. Let $\psi_{\geq 2} \coloneqq \psi^+ \land t \geq 2$.
    \item Call our procedure on $\psi\sub{-t}{t}$, obtaining~$\psi^{-}$.
        Let $\psi_{\leq -2} \coloneqq \psi^{-}\sub{-t}{t} \land t \leq -2$.
\end{itemize} 
Then, the formula $\psi_{\leq -2} \lor \psi_{-1} \lor \psi_{0} \lor \psi_{1} \lor \psi_{\geq 2}$ is quantifier-free, and equivalent to $\psi$.

As a second assumption, we only consider the case where $\psi$ has a single block of existential quantifiers: $\psi(\vec x_0) = \exists \vec x_1 : \phi(\vec x_0,\vec x_1)$.
A procedure for this type of formulae can be iterated bottom-up to eliminate arbitrarily many blocks 
of quantifiers (rewriting~$\forall \vec x$ as $\lnot \exists \vec x \lnot$).

    \begin{algorithm}[t]
        \caption{\Master: A quantifier elimination procedure for $\PPA$.}
        \label{algo:master}
        \begin{algorithmic}[1]
            \Require $\exists \vec x :\phi(\vec x, \vec z)$ where $\phi$ is a positive Boolean combination of $\PPA$ constraints.
            \NDBranchOutput
            positive Boolean combination $\psi_\beta(\vec z)$ of $\PPA$ constraints.
            \GlobalSpec
            $\bigvee_{\beta}\psi_\beta$ is equivalent to $\exists \vec x:\phi$.
            \vspace{3pt}
            \While{$\phi$ contains a subterm of the form $\big\lfloor\frac{\tau}{t^d}\big\rfloor$}
            \label{line:master:while:frac}
                \State append a fresh variable $x$ to $\vec x$
                \label{line:master:append-y-frac}
                \State $\phi\gets\phi\sub{x}{\big\lfloor\frac{\tau}{t^d}\big\rfloor} \land (t^d \cdot x \leq \tau)\land( \tau < t^d \cdot x + t^d)$
                \label{line:master:replace-frac}
            \EndWhile
            \State $\vec y\gets \varnothing$; \quad $B\gets \varnothing$
            \Comment{variables and map used to remove occurrences of $(\cdot \bmod{f(t)})$}
            \While{$\phi$ contains a subterm of the form $(\tau \bmod {f(t)})$}
                \If{$\ast$}
                    \label{line:master:first-nondet-branch}
                \Comment{non-deterministic choice: skip or execute} 
                    \State $\phi \gets \phi\sub{\tau}{\tau \bmod f(t)} \land f(t) = 0$ 
                    \label{line:master:replace-mod-zero}
                    \State \textbf{continue}
                    \label{line:master:mod-0}
                \EndIf
                \State \myguess ${\pm} \gets$ symbol in $\{+,-\}$
                \label{line:master:mod-sign}
                \State append a fresh variable $y$ to $\vec y$ and update $B$ : add the key-value pair $(y,\, \pm f(t) -1)$
                \label{line:master:bounded-quantifier}
                \State $\phi \gets \phi\sub{y}{\tau \bmod {f(t)}} \land (f(t) \divides \tau - y)$
                \label{line:master:replace-mod}
            \EndWhile
            \State \textbf{return} $\ElimBounded(\BoundedElimDiv(\exists\vec y \leq B : \BoundedQE( \exists \vec x : \phi(\vec x, \vec y, \vec z))))$
            \label{line:master:return}
        \end{algorithmic}
        \vspace{-1pt}
    \end{algorithm}%

The pseudocode of our procedure is given in~\Cref{algo:master} (\Master).
It describes a non-deterministic algorithm: 
for an input $\exists \vec x : \phi$,
each non-deterministic execution of \Master returns a positive Boolean combination 
of \PPA constraints $\psi(\vec z)$. The disjunction obtained by aggregating all output formulae
is equivalent to $\exists \vec x : \phi$; so it is this disjunction that must ultimately be used to perform quantifier elimination.
The choice to present the procedure in this manner is not merely stylistic: 
it automatically implements Reddy and Loveland's optimisation for Presburger arithmetic~\cite{Reddy78}.
In quantifier elimination procedures for \PA, eliminating a single variable $x$ from an existential 
block~$\exists \vec y \exists x$  produces a formula~${\bigvee_i \gamma_i}$ with a DNF-like structure. 
Reddy and Loveland observed that pushing the remaining existential quantifiers~$\exists \vec y$ inside the scope 
of the disjunctions, i.e., rewriting $\exists \vec y \bigvee_i \gamma_i$ into $\bigvee_i \exists \vec y \gamma_i$, and then performing quantifier elimination locally to each disjunct leads to a faster procedure.
By keeping variables local to a non-deterministic branch, one achieves the same effect. 

We now describe the four components that make up~\Master, 
which can be summarized under the following titles: 
pre-processing (lines~\ref{line:master:while:frac}--\ref{line:master:replace-mod}), bounded quantifier elimination (call to~\BoundedQE), elimination of divisibility constraints (call to~\BoundedElimDiv),
and elimination of all bounded quantifiers (call to~\ElimBounded).
For the rest of the section, let $\exists \vec x : \phi(\vec x, \vec z)$ be the input to~\Master, 
where $\phi$ is a positive Boolean combination of \PPA constraints.

\subparagraph*{Pre-processing (lines~\ref{line:master:while:frac}--\ref{line:master:replace-mod}).}
These lines remove all occurrences of the integer division functions $x \mapsto \floor{\frac{x}{t^d}}$ and of the integer remainder functions ${x \mapsto (x \bmod f(t))}$, at the expense of adding new existentially quantified variables that are later eliminated. 
After this step, the formula is a positive Boolean combination of \emph{linear} constraints.
For the integer division function, the algorithm simply adds to the sequence of variables $\vec x$ 
to be eliminated a fresh variable $x$ to proxy a term ${\floor{\frac{\tau}{t^d}}}$ (line~\ref{line:master:append-y-frac}). It then relies on the equivalence 
$x = \floor{\frac{\tau}{t^d}} \iff t^d \cdot x \leq \tau \,\land\, \tau < t^d \cdot (x + 1)$
to replace ${\floor{\frac{\tau}{t^d}}}$ with $x$ (line~\ref{line:master:replace-frac}).
The removal of integer remainder functions is performed differently. First, let us observe that the following equivalence holds: 
\[ 
    y = (\tau \bmod f(t)) \iff  (f(t) = 0 \land y = \tau) \lor (0 \leq y \land y < \abs{f(t)}-1 \land f(t) \divides \tau - y).
\]
The formula $f(t) = 0 \land y = \tau$ on the right-hand side of the equivalence is considered in 
line~\ref{line:master:mod-0}.
This line is executed conditionally to a non-deterministic branching (line~\ref{line:master:first-nondet-branch}).
If it is not executed, then lines~\ref{line:master:mod-sign}--\ref{line:master:replace-mod} are executed instead; these correspond to the formula ${0 \leq y \land y < \abs{f(t)}-1 \land f(t) \divides \tau - y}$.
The interesting property of this formula is that the variable $y$ appears \emph{bounded} by $0$ from below, and by either $f(t)-1$ or $-f(t)-1$ from above (following the sign of $f(t)$). Instead of quantifying $y$ using standard existential quantifiers (as done for the variables replacing $\floor{\frac{\tau}{t^d}}$), in line~\ref{line:master:bounded-quantifier} we use a bounded quantifier:

\begin{definition}
    A block of bounded quantifiers $\exists \vec w \leq B$ is given by a sequence of variables $\vec w  = (w_1,\dots,w_m)$ and a map $B$ assigning to each variable in $\vec w$ a polynomial in~$\Z[t]$. Its semantics is given by the equivalence
        $\exists \vec w \leq B : \psi 
        \iff 
        \exists \vec w : \bigwedge_{i=1}^m (0 \leq w_i \leq B(w_i)) \land \psi$.
\end{definition}


Let us write $\vec x_\beta, \vec y_\beta, B_\beta$ and $\phi_\beta$ for the values taken by $\vec x$, $\vec y$, $B$ and $\phi$ 
in the non-deterministic branch $\beta$, when the control flow of the program reaches line~\ref{line:master:return}. 
The following equivalence holds, where the disjunction $\bigvee_{\beta}$ ranges across all non-deterministic branches:
\begin{equation} 
    \label{eq:post-pre-processing}
    \exists  \vec x : \phi(\vec x, \vec z) \iff \textstyle\bigvee_{\!\beta}\, \exists \vec y_{\beta} \leq B_{\beta} \, \exists \vec x_{\beta} : \phi_{\beta}(\vec x_\beta, \vec y_\beta, \vec z). 
\end{equation}

\subparagraph*{Bounded quantifier elimination.} 
Once reaching line~\ref{line:master:return}, the algorithm proceeds by calling the procedure \BoundedQE. 
We will discuss this procedure in~\Cref{sec:gaussian-elimination}. In a nutshell, 
its role is to replace the quantifiers $\exists \vec x_\beta$ on the right-hand side of~\Cref{eq:post-pre-processing} 
with bounded quantifiers, that are merged with the already existing block of bounded quantifiers $\exists \vec y_{\beta} \leq B_{\beta}$.
The formal specification of \BoundedQE is given in the next lemma.


\begin{lemma}\label{lemma:gaussianqe}
    There is a non-deterministic procedure with the following specification: 

    \begin{center}
        \vspace{-5pt}%
        \begin{minipage}{0.95\linewidth}
        \begin{algorithmic}[1]
            \Require $\exists\vec x :\phi(\vec x, \vec z)$, with $\phi$  
            positive Boolean combination of linear \PPA constraints.
            \NDBranchOutput
            a formula ${\exists\vec w_{\beta} \leq B_{\beta}} :\psi_\beta(\vec w_\beta, \vec z)$,
            where $\psi_\beta$ is a positive Boolean combination of linear \PPA constraints.
        \end{algorithmic}
        \end{minipage}
    \end{center}
    The algorithm ensures that the disjunction $\bigvee_{\beta}\exists\vec w_\beta \leq B_{\beta} : \psi_\beta$ 
    of output formulae ranging over all non-deterministic branches is equivalent to  $\exists \vec x :\phi$.
\end{lemma}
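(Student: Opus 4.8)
The plan is to port the fraction-free, Bareiss-style quantifier-elimination procedure for $\exists$\PA of~\cite{ChistikovMS24} from the integral domain $\Z$ to the integral domain $\Z[t]$ of coefficients. The single genuinely new phenomenon is that a coefficient $c(t) \in \Z[t]$ no longer carries an a priori known sign once $t$ is fixed; I would handle this uniformly by \emph{guessing signs on demand}: whenever the procedure needs to know whether $c(t)$ is negative, zero, or positive (to divide by it, or to fix the direction of an inequality obtained after scaling by it), it branches into the three corresponding cases and conjoins to the formula the matching \PPA constraint $c(t) < 0$, $c(t) = 0$, or $c(t) > 0$ --- such a constraint is \emph{linear}, since its term is a polynomial in $t$ alone. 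As a preliminary step I would push the Boolean structure outside using non-determinism: a branch picks, top-down in $\phi$, one child at every $\lor$ and all children at every $\land$, producing a conjunction $S_\beta$ of atoms of $\phi$, i.e.\ a \emph{system} of linear \PPA constraints over $\vec x,\vec z$ with coefficients in $\Z[t]$, so that $\exists \vec x : \phi \iff \bigvee_\beta \exists \vec x : S_\beta$.

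\emph{Eliminating the equalities.} While some equality of $S_\beta$ carries a nonzero coefficient $c(t)$ on a still-quantified variable $x_i$ --- in the branch where $c(t) \neq 0$; the case $c(t) = 0$ lives in a separate branch, where that equality becomes $\sigma = 0$ and no longer mentions $x_i$ --- I would rewrite using
\[
    \exists x_i : \bigl(c(t)\,x_i + \sigma = 0 \;\land\; R\bigr)
    \;\iff\;
    \bigl(c(t) \divides \sigma\bigr) \;\land\; R',
\]
where $\sigma, R$ are linear in the remaining variables and $R'$ is obtained from $R$ by substituting $x_i \mapsto -\sigma/c(t)$ and clearing the denominator $c(t)$: an inequality is multiplied by $c(t)^2 > 0$, an equality by $c(t)$, and a divisibility $g(t) \divides (\cdot)$ becomes $g(t)\,c(t) \divides (\cdot)$; the conjoined divisibility $c(t) \divides \sigma$ makes all of these rewrites faithful. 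This is exactly the \emph{vigorous substitution} of~\cite{ChistikovMS24} ported to $\Z[t]$, and it keeps every constraint linear. To keep the $\Z[t]$-coefficients of polynomial bit-size, after each pivot I would divide the new minors by the previous pivot polynomial, which divides them exactly in $\Z[t]$ by Bareiss'/Sylvester's identity --- valid over every commutative ring~\cite{Bareiss68}. Each step eliminates one quantified variable and creates no new equality, so the loop terminates in a system in which no equality mentions a still-quantified variable; hence the remaining quantified variables occur only in inequalities and divisibilities.

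\emph{Making the remaining variables bounded.} I would process the remaining quantified variables one at a time, Cooper-style~\cite{Cooper72} but over $\Z[t]$. For a variable $x$, collect its inequalities $a_j(t)\,x + \rho_j \leq 0$ and its divisibilities $d_l(t) \divides e_l(t)\,x + \eta_l$; after guessing the signs of the $a_j$ and $e_l$, normalise so that $x$ occurs with a single \emph{positive} coefficient $M(t) \in \Z[t]$ throughout (introduce a fresh $z$ standing for $M(t)\,x$, recorded by $M(t) \divides z$), and let $m(t) \in \Z[t]$ be the product of all divisibility moduli then present. A branch then either guesses a greatest lower bound $z \geq L_{k_0}$ among those present (conjoining $L_{k_0} \geq L_k$ for every other lower bound $L_k$) and substitutes $z \mapsto L_{k_0} + r$, or --- if $z$ has no lower-bound atom --- treats the ``$-\infty$'' case and substitutes $z \mapsto r$, discarding the then-vacuous upper bounds; in both cases $r$ is a fresh variable with $0 \leq r \leq m(t) - 1$, i.e.\ a \emph{bounded} quantifier whose bound $m(t) - 1$ is a polynomial in $\Z[t]$. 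Each substitution yields linear \PPA constraints in $r$, $\vec z$, and the still-unprocessed quantified variables, which keep occurring only in inequalities and divisibilities, so the iteration goes through; once it ends, every original variable of $\vec x$ has been either eliminated or replaced by a bounded variable.

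Chaining all these rewrites, branch $\beta$ turns $\exists \vec x : S_\beta$ into $\exists \vec w_\beta \leq B_\beta : \psi_\beta(\vec w_\beta,\vec z)$ with $\vec w_\beta$ the fresh bounded variables, $B_\beta$ the polynomial bounds recorded for them, and $\psi_\beta$ a conjunction --- hence a positive Boolean combination --- of linear \PPA constraints; since every rewrite was an equivalence, $\bigvee_\beta \exists \vec w_\beta \leq B_\beta : \psi_\beta \iff \exists \vec x : \phi$, which is the claim. Tracking degrees, heights, and the number of branches throughout yields the non-deterministic polynomial running time that the full development needs (cf.\ \Cref{theorem:complexity}). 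I expect the main obstacle to be not the correctness of any single rewrite --- each mirrors its $\Z$-counterpart --- but making everything fit together over $\Z[t]$: one must check that sign-guessing is invoked only polynomially often, that Bareiss' exact divisions stay exact in $\Z[t]$ while bounding \emph{both} the degree and the height of the coefficients, and that the products of moduli in the last step keep the bounds $B_\beta$ of polynomial bit-size --- in short, the bookkeeping that upgrades the textbook procedure to an \np one in the presence of a polynomial parameter.
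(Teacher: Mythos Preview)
Your two-phase plan is sound for the lemma as stated. The combination of Bareiss-style elimination via equalities followed by a Cooper-style sweep over the variables that survive only in inequalities and divisibilities is essentially the bounded quantifier-elimination procedure of Goodrick~\cite{Goodrick18} (the paper notes explicitly that Goodrick's procedure already proves \Cref{lemma:gaussianqe}), with the Bareiss optimisation grafted onto the first phase and the Boolean structure flattened by non-determinism up front. The paper's route is structurally different: rather than split into two phases, it converts every inequality $\tau \leq 0$ into an equality $\tau + y = 0$ via a fresh non-negative slack variable, and then eliminates \emph{all} of $\vec x$ uniformly by guessing one equality containing $x$, applying the vigorous substitution, and dividing through by the previous pivot --- so the Bareiss step governs every coefficient, including those that ``morally'' came from inequalities. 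The bounded variables $\vec w_\beta$ end up being either slack variables (when the guessed equality was originally an inequality) or the original $x_i$ (in the branch that declares every coefficient of $x_i$ to be zero).

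The gap is in your last paragraph, not in the lemma itself. In your Cooper phase, normalising $x$ to a common coefficient $M(t) = \prod_j a_j(t)$ scales the coefficient of every \emph{other} variable in atom $j$ by $\prod_{i \neq j} a_i(t)$, a product of $m-1$ polynomials; after the substitution $z \mapsto L_{k_0} + r$ the new coefficients are pairwise combinations of these scaled terms. Iterated over the remaining $n'$ quantified variables, this drives coefficient bit-sizes to $m^{\Theta(n')}$, because nothing divides out the accumulating products --- Bareiss is invoked only in your equality phase. This is exactly the blowup the paper warns about in~\Cref{sec:gaussian-elimination} (``if quantifier elimination is performed carelessly\dots their binary bit size will be exponential in the number of variables''), and is why Goodrick's procedure is cited there as doubly-exponential on existential inputs. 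The slack-variable reformulation is precisely what lets the paper keep \emph{every} elimination step --- Cooper-type ones included --- inside the Desnanot--Jacobi identity. To rescue the \np claim you would need either to adopt that trick, or to reorganise your Cooper phase so that each step pivots on a single chosen inequality (scaled by its own coefficient rather than the global product $M(t)$) and then argue separately that the resulting coefficients are subdeterminants divisible by the running pivot.
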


As stated, the lemma above is also proved by Goodrick in~\cite{Goodrick18}, who introduced the first bounded quantifier elimination procedure specifically for~\PPA. 
When applied to existential formulae, that procedure 
requires doubly-exponential time, making it unsuitable for establishing~\Cref{theorem:complexity}. In contrast, \BoundedQE runs in non-deterministic polynomial time.
Due to this difference, we cannot rely directly on~\cite{Goodrick18} and must 
thus re-establish~\Cref{lemma:gaussianqe}. 

\begin{algorithm}[t]
    \caption{\BoundedElimDiv: Elimination of divisibility constraints.}
    \label{algo:elimdiv}
    \begin{algorithmic}[1]
        \Require $\exists\vec w \leq B :\psi(\vec w, \vec z)$, with $\psi$ positive Boolean combination of linear \PPA constraints.
        \NDBranchOutput a formula $\exists\vec w_\beta \leq B_\beta :\psi_{\beta}(\vec w_\beta, \vec z)$ in~\PPA, where $\psi_\beta$ is a positive Boolean combination of linear equalities and inequalities,
        and equalities of the form ${\sigma(\vec w) + (\tau(\vec z) \bmod{f(t)}) = 0}$,
        with $\sigma$ linear, and $\tau$ linear and non-shifted.
        \GlobalSpec
        $\bigvee_{\beta} (\exists\vec w_\beta \leq B_\beta :\psi_{\beta})$ is equivalent to $\exists\vec w \leq B : \psi$.
        \vspace{3pt}
        \ForEach{divisibility $f(t) \divides \sigma(\vec w) + \tau(\vec z)$ in $\psi$, where $\tau$ is non-shifted}
        \label{algo:bounded-elimi-div:pick-divisibility}
            \State \textbf{let} $\sigma(\vec w)$ be the term $f_0(t) + \sum_{i=1}^n f_i(t) \cdot w_i$, where $\vec w = (w_1,\dots,w_n)$
            \State $d \gets (n + 3) \cdot \max\{\bitlength{f}, \bitlength{f_0}, \bitlength{f_i} \cdot \bitlength{B(w_i)}\, : i \in [1,n]\}$
            \label{algo:bounded-elimi-div:max-degree}
            \State append a fresh variable $y$ to $\vec w$ and update $B$ : add the key-value pair $(y,\, t^d)$
            \label{algo:bounded-elimi-div:bounded-quantifier}
            \State \myguess ${\pm} \gets$ symbol in $\{+,-\}$
            \label{algo:bounded-elimi-div:guess-sign}
            \State update $\psi$ : replace $(f(t) \divides \sigma(\vec w) + \tau(\vec z))$ with  $\pm f(t) \cdot y + \sigma(\vec w) + (\tau\bmod{f(t)}) = 0$
            \label{algo:bounded-elimi-div:update-psi}
        \EndFor
        \State \textbf{return} $\exists\vec w \leq B : \psi$
    \end{algorithmic}
    \vspace{-1pt}
\end{algorithm}

\subparagraph*{Removing divisibility constraints: more bounded quantifiers.}
\BoundedQE~introduces divisibility constraints $f(t) \divides \tau$. 
The next step, detailed in~\Cref{algo:elimdiv} (\BoundedElimDiv), eliminates all divisibility constraints in favour, once more, of bounded quantifiers.

The idea behind~\Cref{algo:bounded-elimi-div:pick-divisibility} is as follows. 
Let $f(t) \divides \sigma(\vec w) + \tau(\vec z)$ be a constraint 
from the input formula, where $\vec w$ are the variables 
in the block of bounded quantifiers (these correspond to the variables $\vec y_\beta$ from~\Cref{eq:post-pre-processing} and those introduced by~\BoundedQE), and $\vec z$ are the free variables.
Notice that this constraint is equivalent to $f(t) \divides \sigma(\vec w) + (\tau(\vec z) \bmod f(t))$, 
which is in turn equivalent to the existential formula 
$\exists y : f(t) \cdot y + \sigma(\vec w) + (\tau(\vec z) \bmod f(t)) = 0$, 
where $y$ is a fresh variable ranging over $\Z$.
Since $\vec w$ is constrained by bounded quantifiers, 
we can upper-bound the number of digits in the base $t$ encoding of the linear term $\sigma(\vec w)$ 
(recall: $t \geq 2$). 
When $f(t) \neq 0$, the same applies to $(\tau(\vec z) \bmod f(t))$, which ranges between $0$ and $f(t)-1$; and this in turn imposes a bound on the base $t$ representation of $y$.
When ${f(t) = 0}$ instead, the truth of $f(t) \cdot y + \sigma(\vec w) + (\tau(\vec z) \bmod f(t)) = 0$
only depends on whether $\sigma(\vec w) + (\tau(\vec z) \bmod f(t)) = 0$, 
and we can thus restrict $y$ to any non-empty interval.
This allows us to replace the quantifier $\exists y$ with a bounded quantifier (lines~\ref{algo:bounded-elimi-div:max-degree} and~\ref{algo:bounded-elimi-div:bounded-quantifier}). 
Since $y$ ranges over $\Z$, whereas bounded quantifiers use non-negative ranges, 
the algorithm explicitly guesses the sign of $y$ in line~\ref{algo:bounded-elimi-div:guess-sign}, 
allowing $y$ to only range over $\N$ instead.
Formalising these arguments yields the following lemma.

\begin{restatable}{lemma}{LemmaBounedElimDivCorrect}\label{lemma:boundedelimdiv}
    \Cref{algo:elimdiv} (\BoundedElimDiv) complies with its specification.
\end{restatable}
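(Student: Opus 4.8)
The plan is to verify that each iteration of the \textbf{foreach} loop in \Cref{algo:elimdiv} preserves the semantics of the formula, and that after all iterations the output has the shape claimed in the specification. Since the loop body acts on one divisibility constraint at a time, it suffices to prove one local equivalence: for a bounded block $\exists \vec w \leq B$ (where $\vec w = (w_1,\dots,w_n)$ with $0 \leq w_i \leq B(w_i)$) and a non-shifted term $\tau(\vec z)$, the constraint $f(t) \divides \sigma(\vec w) + \tau(\vec z)$ is equivalent, inside the scope of the bounded quantifiers, to $\exists y \in [0, t^d] : {\pm} f(t)\cdot y + \sigma(\vec w) + (\tau \bmod f(t)) = 0$ for a suitable choice of sign and of the exponent $d$ computed in line~\ref{algo:bounded-elimi-div:max-degree}. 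Adding the fresh bounded variable $y$ commutes with the other bounded quantifiers, so chaining these local equivalences across all divisibilities yields the global specification $\bigvee_\beta(\exists\vec w_\beta \leq B_\beta : \psi_\beta) \equiv \exists \vec w \leq B : \psi$.

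First I would record the two elementary rewrites already noted in the text: $f(t) \divides \sigma(\vec w) + \tau(\vec z) \iff f(t) \divides \sigma(\vec w) + (\tau \bmod f(t))$ (valid also when $f(t)=0$, by the convention that $f(t)\divides x \iff f(t) \divides (x \bmod f(t))$), and $f(t) \divides a \iff \exists y \in \Z : f(t)\cdot y + a = 0$. Guessing the sign of $y$ then replaces $\exists y \in \Z$ by $\exists\, {\pm}\, \exists y \in \N$, which is why line~\ref{algo:bounded-elimi-div:guess-sign} branches on $\{+,-\}$. It remains to justify that $y$ can be confined to $[0, t^d]$. Split on whether $f(t)=0$: when $f(t)=0$, the equation ${\pm} f(t)\cdot y + \sigma(\vec w) + (\tau \bmod f(t)) = 0$ does not mention $y$ at all, so any non-empty range — in particular $[0,t^d]$, non-empty since $t \geq 2$ — works. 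When $f(t)\neq 0$, from $f(t)\cdot y = -(\sigma(\vec w) + (\tau \bmod f(t)))$ we get $\abs{y} \leq \abs{\sigma(\vec w) + (\tau \bmod f(t))}$ (as $\abs{f(t)} \geq 1$), and I would bound the right-hand side: $(\tau \bmod f(t)) \in [0, \abs{f(t)}-1]$, and $\abs{\sigma(\vec w)} \leq \abs{f_0(t)} + \sum_{i=1}^n \abs{f_i(t)}\cdot B(w_i)$. Using $t \geq 2$, each of $\abs{f(t)}$, $\abs{f_0(t)}$, $\abs{f_i(t)}$, $B(w_i)$ is at most $t$ raised to a power controlled by its bit size (a polynomial of height $h$ and degree $D$ has $\abs{f(t)} \leq (h+1)\cdot t^{D} \leq t^{\bitlength{f}}$ for $t \geq 2$, say), so the whole sum is at most $t^{d}$ for the $d$ chosen in line~\ref{algo:bounded-elimi-div:max-degree} — the factor $(n+3)$ and the product $\bitlength{f_i}\cdot\bitlength{B(w_i)}$ are there precisely to absorb the $n$-fold sum and the products $\abs{f_i(t)}\cdot B(w_i)$. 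Hence $\abs{y} \leq t^d$, and the sign guess lets $y$ range over $[0,t^d]$.

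Finally, the output shape: after the replacement in line~\ref{algo:bounded-elimi-div:update-psi} the formula contains, in place of each divisibility, an equality ${\pm} f(t)\cdot y + \sigma(\vec w) + (\tau \bmod f(t)) = 0$; writing $\sigma'(\vec w) \coloneqq {\pm} f(t)\cdot y + \sigma(\vec w)$ this is $\sigma'(\vec w) + (\tau \bmod f(t)) = 0$ with $\sigma'$ linear in the (extended) bounded variables and $\tau$ linear and non-shifted, exactly as the specification demands; all other atoms are untouched linear equalities and inequalities. The main obstacle I anticipate is purely bookkeeping: getting the bound $\abs{\sigma(\vec w) + (\tau\bmod f(t))} \leq t^d$ to come out cleanly from the definition of $d$ in line~\ref{algo:bounded-elimi-div:max-degree}, i.e.\ checking that the chosen $\max$ over $\bitlength{f}$, $\bitlength{f_0}$ and the products $\bitlength{f_i}\cdot\bitlength{B(w_i)}$, scaled by $(n+3)$, really dominates $\log_t$ of the sum of $n+2$ such terms — the subtlety being that a product of polynomials has bit size roughly the \emph{sum} of the bit sizes, so $\abs{f_i(t)\cdot w_i}$ needs $\bitlength{f_i} + \bitlength{B(w_i)} \leq \bitlength{f_i}\cdot\bitlength{B(w_i)} + 1$ in the exponent, which holds since bit sizes are at least $1$.
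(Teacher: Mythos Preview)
Your proposal is correct and follows essentially the same approach as the paper: reduce to a single divisibility, rewrite it as an existential over an integer~$y$ via the modulo trick, split on the sign of~$y$, and then bound~$\abs{y}$ by bounding $\abs{\sigma(\vec w) + (\tau \bmod f(t))}$ in terms of $t^d$ using the bit sizes of the polynomials involved (the paper carries out this last bound via polynomial bit-size arithmetic and the inequality $\bitlength{p_1 \cdot p_2} \leq \bitlength{p_1}\cdot\bitlength{p_2}$, whereas you evaluate at~$t$ and use $\abs{g(t)} \leq t^{\bitlength{g}}$, but these are interchangeable). Your explicit treatment of the case $f(t)=0$ is a useful addition that the paper's proof leaves implicit.
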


\subparagraph*{Elimination of all bounded quantifiers.}
From the output of~\BoundedElimDiv, the final operation by~\Master is a call to~\ElimBounded, which removes all bounded quantifiers. This algorithm is detailed in~\Cref{sec:division}. 
Its specification is given in the next lemma.

\begin{lemma}\label{lemma:elimbounded}
    There is a non-deterministic procedure with the following specification: 

    \begin{center}
        \vspace{-5pt}%
        \begin{minipage}{\linewidth}
        \begin{algorithmic}[1]
            \Require ${\exists\vec w \leq B} : \phi(\vec w,\!\vec z)$, with $\phi$  positive Boolean combination 
            of linear~\PPA (in)equalities 
            and constraints $\sigma(\vec w) + (\tau(\vec z) \bmod{f(t)}) = 0$,
            with $\sigma$ linear, and $\tau$ linear and non-shifted.
            \NDBranchOutput
            a positive Boolean combination $\psi_\beta(\vec z)$ of $\PPA$ constraints.
            In all divisibility constraints $f(t) \divides \tau$, 
            the divisor $f(t)$ is an integer.
        \end{algorithmic}
        \end{minipage}
    \end{center}
    The algorithm ensures that the disjunction $\bigvee_{\beta} \psi_\beta$ 
    of output formulae ranging over all non-deterministic branches is equivalent to  ${\exists \vec w \leq B} : \phi$.
\end{lemma}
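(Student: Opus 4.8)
\emph{Setup and reduction.} The plan is to realise \ElimBounded as the ``base $t$ division'' method hinted at in the introduction. First I would dispose of the special atoms $\sigma(\vec w)+(\tau(\vec z)\bmod f(t))=0$ by treating the subterm $\tau(\vec z)\bmod f(t)$ as an \emph{opaque parameter}: it mentions only the free variables, it is already a legal \PPA term, and — since its value plays no algebraic role in eliminating $\vec w$ — it may be carried unchanged into the output (whose specification restricts divisibility atoms but places no restriction on $\bmod$-terms). Writing $\vec p$ for the tuple made of the free variables together with all such $\bmod$-subterms, the task becomes: eliminate the bounded variables $\vec w$ from a positive Boolean combination of $\Z[t]$-\emph{linear} constraints over $(\vec w,\vec p)$. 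A standard ``guess a scenario'' step reduces this to a conjunction: nondeterministically assign each atom a truth value satisfying the Boolean skeleton, replace atoms guessed false by their negations (an inequality; after one more guess a strict inequality; an equality splits into two), conjoin $\bigwedge_i 0\le w_i\le B(w_i)$, and it remains to compute, for a conjunction $\Gamma$ of $\Z[t]$-linear (in)equalities over $(\vec w,\vec p)$, a quantifier-free equivalent of $\exists\vec w:\Gamma$, the disjunction over all branches being equivalent to the input.

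\emph{Base $t$ division on the bounded variables.} For a variable $w$ with $0\le w\le B(w)$ and $\deg B(w)\ge 1$, I would substitute $w\mapsto t\cdot w_1+w_0$ for fresh $w_0,w_1$, adding the constraints $0\le w_0\le t-1$, $0\le w_1$, $t\,w_1+w_0\le B(w)$, and the polynomial bound $w_1\le 2\,h(B(w))\,t^{\deg B(w)-1}$ (valid for $t\ge 2$ and of degree $\deg B(w)-1$). Iterating, every bounded variable is rewritten as a sum of digit variables $a_j\in[0,t-1]$, each multiplied by a power of $t$, plus $t^{\deg B(w)}q$ for one further variable $q$ whose bound has degree $0$, i.e.\ $0\le q\le C$ for a constant $C$ depending only on $h(B(w))$ (here $t\ge 2$ is used). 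Nondeterministically guessing the value of each such $q$ in $\{0,\dots,C\}$ removes it. We are thus reduced to eliminating, from a conjunction of $\Z[t]$-linear constraints, a block of \emph{digit} variables, each ranging over $[0,t-1]$.

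\emph{Eliminating the digit variables.} This is the core ``base $t$ division'' step, and the one I expect to need the most care. The guiding idea is that, since the digits are bounded by $t-1$ and the coefficients are polynomials in $t$, a $\Z[t]$-linear constraint on the digits can be re-expressed — by expanding each coefficient in powers of $t$ — as a comparison involving only the base $t$ digits of a single \emph{bounded} quantity, which one then analyses position by position (after case-splitting $t$ small versus $t$ large, so that the carries stay bounded) using the $\bmod t^d$ and $\floor{\cdot/t^d}$ functions of the structure; this introduces no divisibility atoms. Equivalently, one can eliminate the digit variables one at a time in Cooper's style: guess the sign of the pivot coefficient $\gamma(t)$ in each constraint $\gamma(t)\,a+r\bowtie 0$ (when $\gamma(t)=0$, record the \PPA constraint $\gamma(t)=0$ and drop $a$), turn the remaining constraints into lower and upper bounds $a\ge s_j/d_j$, $a\le s'_k/d'_k$ with $d_j,d'_k\in\Z[t]$ of known positive sign at $t$, and use the identity $\ceil{s/d}=\bigl(s+((-s)\bmod d)\bigr)/d$ for $d>0$ to replace $\exists a:(\dots)$ by the conjunction, over all pairs of a lower bound (including the trivial $0$) and an upper bound (including the trivial $t-1$), of the \PPA inequality obtained from $\ceil{s_j/d_j}\le s'_k/d'_k$ by clearing denominators — again producing no divisibility atom, the divisibility by $\gamma(t)$ that a naive Cooper step would create being absorbed into a $\bmod$-term. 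Constraints still mentioning raw free variables $\vec z$ are handled likewise, with $\vec z$ adjoined to $\vec p$.

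\emph{The main obstacle.} I expect the hard part to lie precisely in this last stage: the running formula must be kept a positive Boolean combination of \PPA constraints whose only non-linearities are $\bmod$-terms over \emph{bounded} arguments — any such term being pushed back into the block of bounded quantifiers (via a fresh bounded variable, a divisibility by a polynomial, and then a transformation in the spirit of \BoundedElimDiv) so that the elimination can continue — and, above all, the whole procedure must terminate. Termination should rest on a well-founded measure (morally, the number of remaining bounded variables weighted by the degrees of their bounds), the delicate point being to show that the degrees appearing inside the $\bmod$-terms created along the way stay controlled by the finitely many coefficient degrees of the original formula; this is exactly where the base $t$ decomposition does the essential work, whereas a naive Cooper elimination would loop and produce divisibilities by polynomials. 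Once all bounded quantifiers are eliminated, the output specification follows at once: every divisibility atom has been avoided in favour of a $\bmod$-term, so in particular the divisor in each surviving divisibility atom is an integer; and since every step is an equivalence over $\Z$ for every value of $t$, the disjunction of the branch outputs is equivalent to $\exists\vec w\le B:\phi$.
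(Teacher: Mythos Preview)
Your high-level architecture (bit-blast the bounded variables into $t$-digits, then eliminate the digits) matches the paper, but the crucial ``eliminate the digits'' stage is both different from the paper and left with a real gap.

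\textbf{What the paper actually does.} After bit-blasting, the paper does \emph{not} eliminate the digits one at a time in Cooper's style. Instead it runs a ``division by $t$'' loop whose purpose is merely to make every coefficient of every $t$-digit an \emph{integer}. Concretely, a constraint is split as $\sigma(\vec y)\cdot t + \rho(\vec y) + \tau(\vec z)\sim 0$ with $\rho$ already integer-coefficient; one sets $\rho' \coloneqq \rho + (\tau \bmod t)$, guesses the quotient $r=\lfloor\rho'/t\rfloor$ from the finite range $[-\lVert\rho'\rVert_1,\lVert\rho'\rVert_1]$ (bounded because all digits and $(\tau\bmod t)$ lie in $[0,t-1]$), and replaces the constraint by $\gamma\land(\sigma + r + \lfloor\tau/t\rfloor\sim 0)$ where $\gamma$ pins down $r$. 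The well-founded measure is explicit: the sum of the degrees of the $\vec y$-coefficients across all atoms drops by exactly one per iteration. Once all $\vec y$-coefficients are integers, the free-variable terms are frozen as fresh parameters $\vec z'$, and \BoundedQE is invoked on $\exists\vec y$. Because all eliminated variables have integer coefficients and there are no divisibilities in the input, \Cref{lemma:boundedqe:int} guarantees that the bounds $B'$ in the output of \BoundedQE are \emph{integers}; the remaining bounded quantifiers are then removed by finite enumeration. The only divisibility atoms that ever appear are those produced inside \BoundedQE, and by the same lemma their divisors are integers --- which is exactly the output specification.

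\textbf{Where your proposal has a gap.} Your Cooper-style elimination of a digit $a$ via $\lceil s/d\rceil=(s+((-s)\bmod d))/d$ produces $\bmod$-terms $((-s_j)\bmod d_j)$ whose argument $s_j$ still contains the \emph{remaining} digit variables, not just the free variables. You then propose to ``push [these] back into the block of bounded quantifiers via a fresh bounded variable''. That is circular: you are adding bounded variables (with polynomial bounds $d_j(t)-1$, of degree comparable to the coefficients you started with) while trying to remove bounded variables. You yourself flag termination as ``the hard part'' and assert that the degrees ``stay controlled by the finitely many coefficient degrees of the original formula'', but you give no measure and no argument --- and indeed a naive bookkeeping shows the number of variables and the coefficient degrees can both grow at each step. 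This is precisely the content that the paper's degree measure $\deg(\vec y,\phi)$ supplies and that your sketch lacks. Your alternative ``position by position'' description is too vague to fill the gap; as stated it does not specify how the digit variables get eliminated at all.

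Two secondary remarks. First, the ``guess a scenario'' reduction to a conjunction is unnecessary: the paper works directly on the positive Boolean combination, and the degree measure is additive over $\land,\lor$. Second, your claim that the output contains \emph{no} divisibility atoms is stronger than required and not how the paper proceeds; the paper \emph{does} produce divisibilities (inside \BoundedQE), but arranges that their divisors are integers.
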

Together, \Cref{eq:post-pre-processing} and~\Cref{lemma:gaussianqe,lemma:boundedelimdiv,lemma:elimbounded}
show that~\Master meets its specification; thus showing~\Cref{theorem:main} 
conditionally to the correctness of~\BoundedQE and~\ElimBounded.
\section{Efficient bounded quantifier elimination in \PPA}
\label{sec:gaussian-elimination}

This section outlines the arguments leading to the procedure \BoundedQE.  
Its pseudocode is given in~\Cref{algo:gaussianqe},
and technical details can be found in Appendix~\ref{app:gaussian-elimination}.
We start with an example demonstrating the key idea used to develop \underline{a} version 
of bounded quantifier elimination in \PPA.  
These arguments are sufficient for establishing~\Cref{lemma:gaussianqe}; although they do not result in an 
optimal procedure complexity-wise. We will then recall the main arguments used in~\cite{ChistikovMS24} 
to obtain an optimal procedure, which, when implemented, yield~\BoundedQE. 

Let us consider a formula $\exists x : \phi(x, \vec z)$ where, for simplicity, $\phi$ is of the form:  
\begin{align*}
    \tau(\vec z) \leq a \cdot x \,\land\, b \cdot x \leq \rho(\vec z) 
    \,\land\, \big(m \divides c \cdot x + \sigma(\vec z)\big) \,\land\, 
    a > 0 \,\land\, b > 0 \,\land\, m > 0, 
\end{align*}
where $a$, $b$, $c$ and $m$ are polynomials from $\Z[t]$, and $\tau$, $\rho$ and
$\sigma$ are \emph{linear} \PPA terms.
For the time being, we invite the reader to pick some values for~$t$ and the free variables~$\vec z$, so that the formula~$\phi$ 
becomes a formula from Presburger arithmetic in a single variable~$x$.
The standard argument for eliminating $x$ in \PA goes as follows (see, e.g.,~\cite{Weispfenning90}). 
We first update the inequalities to ensure that all coefficients of $x$ are equal; 
this results in the inequalities $b \cdot \tau \leq a \cdot b \cdot x$ and 
$a \cdot b \cdot x \leq a \cdot \rho$.  
The quantification $\exists x$ expresses that there is $g \in \Z$~such~that 
\textbf{(i)} $g$ is a multiple of $a \cdot b$ that belongs to the interval $[b \cdot \tau, a \cdot \rho]$; and
\textbf{(ii)} $m$ divides $c \cdot \frac{g}{a \cdot b} + \sigma$.
The key observation is that such an integer (if~it~exists) can be found by only looking at elements of 
$[b \cdot \tau, a \cdot \rho]$ that are ``close'' to $b \cdot \tau$. 
More precisely, the properties \textbf{(i)}~and~\textbf{(ii)}
must be simultaneously satisfied by $b \cdot \tau + r$, for some $r \in  [0, a \cdot b \cdot m]$.
We can thus restrict $x$ to satisfy an additional constraint $a \cdot b \cdot x = b \cdot \tau + r$. 
A small refinement: since this equality is unsatisfiable when the shift $r$ is not a multiple of $b > 0$, we can rewrite it as $a \cdot x = \tau + s$, where the shift $s$ now ranges in $[0, a \cdot m]$.
Observe that $s$ lies in an interval that is independent of the values picked for $\vec z$;
as $a$ and $m$ were originally polynomials in $t$. 

Let us keep assigning a value to the parameter $t$ (so, $a$ and $m$ are still integers), 
but reinstate the variables~$\vec z$.
From the above argument, the formula $\exists x : \phi(x, \vec z)$ 
is equivalent to $\bigvee_{s = 0}^{a \cdot m} \,\exists x\, (\phi(x,\vec z) 
\,\land\, a \cdot x = \tau + s)$.
It is now straightforward to eliminate $x$ from each disjunct $\exists x\, (\phi(x,\vec z) 
\,\land\, a \cdot x = \tau + s)$: we simply ``apply'' the equality $a \cdot  x =  \tau + s$, substituting 
$x$ for $\frac{\tau+r}{a}$, and add a divisibility constraint forcing $\tau + s$ to be a multiple of $a$. 
After this substitution, both $a \cdot x = \tau + s$ and $\tau(\vec z) \leq a \cdot x$ become $\top$. 
The resulting disjunct is 
\begin{equation*}
   \psi(s,\vec z) \coloneqq b \cdot (\tau + s) \leq a \cdot \rho \,\land\,
    (m \cdot a \divides c \cdot (\tau + s) + a \cdot \sigma) \,\land\,
    (a \divides  \tau + s) \,\land\, a > 0 \,\land\, b > 0 \,\land\, m > 0,
\end{equation*}
and $\bigvee_{s=0}^{a \cdot m} \psi(s,\vec z)$ is equivalent to $\exists x : \phi(x, \vec z)$. 
(For~\PA, this concludes the quantifier elimination procedure.) 
When restoring the parameter $t$, these two formulae are still equivalent, 
but the number of disjunctions~$\bigvee_{s=0}^{a \cdot m}$
now depends on~$t$. We replace them with a bounded quantifier, 
rewriting $\bigvee_{s = 0}^{a \cdot m} \psi(s,\vec z)$ as 
${\exists s \leq B : \psi(s,\vec z)}$, where ${B(s) \coloneqq a(t) \cdot m(t)}$. 
This is, in a nutshell, the \emph{bounded quantifier elimination procedure} from~\cite{Goodrick18,Weispfenning97}.

When the signs of $a$, $b$, and $m$ are unknown 
(i.e., $\phi$ does not feature the constraints $a > 0$, $b > 0$ and $m > 0$), 
we must perform a ``sign analysis'': we write a disjunction (or guess) over all possible 
signs of the three polynomials. 
In~\Cref{algo:gaussianqe}, the lines marked in \colorbox{lip-yellow!50}{yellow} 
are related to this analysis; e.g., line~\ref{gauss:guess-sign} guesses the sign of the (non-zero) coefficient $a$ of $x$.

    \begin{algorithm}[t!]
        \caption{\textsc{BoundedQE}: A bounded quantifier elimination procedure for~$\PPA$.}
        \label{algo:gaussianqe}
        \begin{algorithmic}[1]
            \Require $\exists \vec x: \phi(\vec x, \vec z)$ where $\phi$ is a  
                positive Boolean combination of linear \PPA constraints.
            \NDBranchOutput
            a formula $\exists\vec w_{\beta} \leq B_{\beta} :\psi_\beta(\vec w_\beta, \vec z)$ 
            where $\psi_\beta$ is a positive Boolean combination of linear \PPA constraints.
            \GlobalSpec
            $\bigvee_{\beta}\exists\vec w_\beta \leq B_{\beta} : \psi_\beta$ is equivalent to $\exists \vec x :\phi$.  
            \vspace{3pt}
            \algrenewcommand{\alglinenumber}[1]{\colorbox{lipicsYellow!50}{\footnotesize#1:}}
            \State \myguess $Z$ $\gets$ subset of $\{f(t) : \text{the relation $(f(t) \divides \cdot)$ occurs in $\phi$}\}$ 
            \label{gauss:set-Z}
            \ForEach{$f(t)$ in $Z$}\label{gauss:Z-for}
                \State update $\phi$ : replace each divisibility $f(t) \divides \tau$ with $\tau = 0$\label{gauss:set-Z-replace}  
                \State $\phi \gets \phi \land (f(t) = 0)$\label{gauss:set-Z-end}
            \EndFor
            \State \myguess $\pm \gets$ symbol in $\{-,+\}$
            \Comment{sign required to make $m(t)$ below positive}
            \label{gauss:guess-sign-mt}
            \algrenewcommand{\alglinenumber}[1]{\colorbox{white}{\footnotesize#1:}}
            \State $m(t) \gets \pm \prod\{ f(t) : \text{the relation $(f(t) \divides \cdot)$ occurs in $\phi$} \}$
            \label{gauss:guess-mt}
            \algrenewcommand{\alglinenumber}[1]{\colorbox{lipicsYellow!50}{\footnotesize#1:}}
            \State $\chi \gets (m(t) > 0)$
            \label{gauss:mt-is-positive}
            \algrenewcommand{\alglinenumber}[1]{\colorbox{RoyalBlue!25}{\footnotesize#1:}}
            \State $(\pm,\lead) \gets (+,1)$;
            \quad $B \gets \varnothing$ \label{gauss:bareissfactors}
            \Comment{$B$: map from variables to upper bounds}
            \algrenewcommand{\alglinenumber}[1]{\colorbox{white}{\footnotesize#1:}}
            \State update $\phi$ : replace each inequality $\tau \le 0$ with $\tau + y = 0$, where $y$ is a fresh slack variable \label{gauss:introduce-slack}
            \ForEach{$x$ in $\vec x$} \label{gauss:mainloop}
                \algrenewcommand{\alglinenumber}[1]{\colorbox{lipicsYellow!50}{\footnotesize#1:}}
                \If{$\ast$} \label{gauss:ast}
                \Comment{non-deterministic choice: skip or execute} 
                    \State update $B$ : add the key-value pair $(x,m(t)-1)$ \label{gauss:subst-rem}
                    \State \mycontinue \label{gauss:guess-all-zeros}
                \EndIf
                \algrenewcommand{\alglinenumber}[1]{\colorbox{white}{\footnotesize#1:}}
                \State \myguess $f(t) \cdot x + \tau = 0 \ \gets \text{equality in $\phi$ that contains $x$}$ \label{gauss:guess-equation}
                \algrenewcommand{\alglinenumber}[1]{\colorbox{RoyalBlue!25}{\footnotesize#1:}}
                \State ${\prevlead} \gets \lead$; \quad $\lead \gets f(t)$   \label{gauss:update-factor}
                \algrenewcommand{\alglinenumber}[1]{\colorbox{lipicsYellow!50}{\footnotesize#1:}}
                \Comment{previous and current leading coefficients}
                \State $\pm \gets$ \myguess a symbol in $\{-,+\}$
                \Comment{sign of $f(t)$}
                \label{gauss:guess-sign}
                \State $\chi \gets \chi \land (\pm f(t) > 0)$
                \label{gauss:assert-sign-ft}
                \State $m(t) \gets \pm f(t) \cdot m(t)$ 
                \algrenewcommand{\alglinenumber}[1]{\colorbox{white}{\footnotesize#1:}}
                \label{gauss:new-factor}
                \If{$\tau$ contains a slack variable $y$ such that $B(y)$ is undefined}\label{gauss:branch}
                    \State update $B$ : add the key-value pair $(y,\, m(t)-1)$\label{gauss:append-seq}
                \EndIf
                \State $\phi \gets \phi \vigsub{\frac{-\tau}{f(t)}}{x}$ \label{gauss:vig}\label{gauss:vigorous}
                \algrenewcommand{\alglinenumber}[1]{\colorbox{RoyalBlue!25}{\footnotesize#1:}}
                \State update $\phi$ : divide all constraints by \prevlead \label{gauss:divide}
                \algrenewcommand{\alglinenumber}[1]{\colorbox{white}{\footnotesize#1:}}
                \Comment{both sides for divisibility constraints}
                \State $\phi \gets \phi \land ( f(t) \divides \tau)$ \label{gauss:restore}
            \EndFor \label{gauss:endmainloop} 
            \ForEach{equality $\eta = 0$ of $\phi$ with a slack variable $y$ such that $B(y)$ is undefined} \label{gauss:unslackloop}
                \State update $\phi$ : replace $\eta = 0$ with $\eta\sub{0\!}{\!y} \le 0$ \myif the sign $\pm$ is plus \myelse with $\eta\sub{0\!}{\!y} \ge 0$ \label{gauss:drop-slack}%
            \EndFor%
            \State \textbf{return} $\exists\vec w \leq B : \phi \land \chi$ 
            \textbf{where} $\vec w$ is the sequence of keys of the map $B$\label{gauss:return}
        \end{algorithmic}
        \vspace{-1pt}
        \end{algorithm}%

\subparagraph*{Taming the complexity of the procedure.}
Problems arise when looking at the complexity of the procedure outlined above. 
To understand this point, consider the inequality ${b \cdot (\tau + s) \leq a \cdot \rho}$, 
which was derived by substituting $\frac{\tau + s}{a}$ for $x$ in $b \cdot x \leq \rho$, 
and suppose $\tau = c \cdot y + \tau'$ and $\rho = d \cdot y + \rho'$,
for some variable $y$.
This inequality can be rewritten as $(b \cdot c - a \cdot d) \cdot y + b \cdot \tau' - a \cdot \rho' + b \cdot s \leq 0$. 
When looking at the coefficient $(b \cdot c - a \cdot d)$ of $y$ one deduces that, if quantifier elimination is performed carelessly on a block $\exists \vec x$ of 
multiple existential quantifiers, the coefficients of the variables in the formula will grow quadratically with each eliminated variable.
Then, by the end of the procedure, their binary bit size will be exponential in the number of variables in~$\vec x$.
However, this explosion can be avoided by noticing that coefficients
are updated following the same pattern as in Bareiss' polynomial-time Gaussian elimination procedure~\cite{Bareiss68}.
This insight was highlighted in~\cite{ChistikovMS24}, building upon an earlier observation from~\cite{Weispfenning97}. 
In Bareiss' algorithm, the key to keeping coefficients polynomially bounded is given by the Desnanot--Jacobi identity.
Consider an~$m \times d$ matrix~$A$.
Let us write $A[i_1,\dots,i_r; j_1,\dots,j_\ell]$ for the $r \times \ell$ sub-matrix of $A$ made of the rows with indices ${i_1,\dots,i_r \in [1,m]}$ and columns with indices $j_1,\dots,j_\ell \in [1,d]$.
For $i,j,\ell \in \N$ with $0 \leq \ell \leq \min(m,d)$,  $1 \leq i \leq m$ and $1 \leq j \leq d$,
we define $a_{i,j}^{(\ell)} \coloneqq \det A[1,\dots,\ell,i; 1,\dots,\ell,j]$.

\begin{proposition}[Desnanot--Jacobi identity]
    \label{thm:desnanot-jacobi}
    For every $i,j,\ell \in \N$ with $\ell \geq 2$, $\ell < i \leq m$ and $\ell < j \leq d$,
    we have
    $(a_{\ell,\ell}^{(\ell-1)} \cdot  a_{i,j}^{(\ell-1)} - a_{\ell,j}^{(\ell-1)} \cdot a_{i,\ell}^{(\ell-1)})
    =
    a_{\ell,\ell}^{(\ell-2)} \cdot a_{i,j}^{(\ell)}$.
\end{proposition}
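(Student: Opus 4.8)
The statement is the classical \emph{Desnanot--Jacobi identity} (Dodgson condensation), specialised to the square submatrix $M \coloneqq A[1,\dots,\ell,i;\, 1,\dots,\ell,j]$, which has size $(\ell+1)\times(\ell+1)$ because $\ell < i \le m$ and $\ell < j \le d$. I would first re-express the claim as an identity among minors of this single matrix $M$. Index the rows and columns of $M$ by positions $1,\dots,\ell+1$, so that positions $\ell$ and $\ell+1$ correspond, respectively, to rows $\ell$ and $i$ of $A$ (and, for columns, to $\ell$ and $j$). Unfolding the definition of $a^{(\cdot)}_{\cdot,\cdot}$, one checks directly that $a^{(\ell)}_{i,j} = \det M$; that each of the four quantities $a^{(\ell-1)}_{\ell,\ell}, a^{(\ell-1)}_{i,j}, a^{(\ell-1)}_{\ell,j}, a^{(\ell-1)}_{i,\ell}$ is a plain \emph{minor} $\det M_{(p,q)}$ of $M$ (here $M_{(p,q)}$ is $M$ with row $p$ and column $q$ removed; no cofactor sign appears, as these $a$'s are by definition determinants of explicit submatrices of $A$), namely for $(p,q)$ equal to $(\ell{+}1,\ell{+}1)$, $(\ell,\ell)$, $(\ell{+}1,\ell)$, $(\ell,\ell{+}1)$ in that order; and that the remaining factor on the right-hand side of the claim is $\det M''$, where $M''$ is $M$ with both rows and both columns at positions $\ell,\ell+1$ deleted. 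So it remains to establish
\[
    \det M_{(\ell,\ell)} \cdot \det M_{(\ell+1,\ell+1)} \;-\; \det M_{(\ell,\ell+1)}\cdot\det M_{(\ell+1,\ell)} \;=\; \det(M)\cdot\det(M''),
\]
which is the Desnanot--Jacobi identity for $M$ at the pair of positions $\{\ell,\ell+1\}$.

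To prove this identity, note that both sides are polynomials in the entries of $M$ with integer coefficients, so it suffices to verify the equality over $\R$ on the dense open set $\{\det M'' \neq 0\}$; the resulting polynomial identity then holds with entries in any commutative ring, in particular in $\Z[t]$. Since positions $\ell,\ell+1$ are the last two among $1,\dots,\ell+1$, the matrix $M$ already has the block form $M = \bigl(\begin{smallmatrix}M'' & C\\ R & D\end{smallmatrix}\bigr)$ with $D$ its bottom-right $2\times 2$ block, and each minor $M_{(p,q)}$ with $p,q\in\{\ell,\ell+1\}$ is of the form $\bigl(\begin{smallmatrix}M'' & c\\ r & d\end{smallmatrix}\bigr)$ for a column $c$ of $C$, a row $r$ of $R$, and an entry $d$ of $D$. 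Applying the Schur-complement formula $\det\bigl(\begin{smallmatrix}M'' & X\\ Y & W\end{smallmatrix}\bigr) = \det(M'')\cdot\det\bigl(W - Y(M'')^{-1}X\bigr)$ to $M$ and to each such minor, and writing $S \coloneqq D - R(M'')^{-1}C = \bigl(\begin{smallmatrix}\alpha & \beta\\ \gamma & \delta\end{smallmatrix}\bigr)$, one obtains $\det M = \det(M'')\,\det(S)$ and $\det M_{(p,q)} = \det(M'')\cdot S'_{p,q}$ for the matching entry $S'_{p,q}$ of $S$. The identity then reduces to $\det(M'')^{2}\,\det(S) = \det(M'')\delta\cdot\det(M'')\alpha - \det(M'')\beta\cdot\det(M'')\gamma$, which is immediate.

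Combining the two parts---using $\det M = a^{(\ell)}_{i,j}$ together with the value of $\det M''$ isolated above---yields the claimed identity. The mathematical content here is entirely classical; the only genuinely delicate point, and the place where a mistake is most likely to slip in, is the position bookkeeping in the first part: correctly matching each $a^{(\cdot)}_{\cdot,\cdot}$ to the right minor of $M$, and being careful that these are plain minors so that no stray cofactor signs appear. The one other thing that must be spelled out rather than taken for granted is the polynomial-identity/density observation opening the second part, which is what licenses the Schur-complement computation (valid only when $M''$ is invertible).
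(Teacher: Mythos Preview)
The paper does not prove this proposition; it is quoted as the classical Desnanot--Jacobi identity and used as a black box. Your Schur-complement derivation is a standard and correct proof of that identity, so in that sense there is nothing to compare.

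There is, however, a genuine bookkeeping slip---precisely at the point you flagged as ``the place where a mistake is most likely to slip in''. You assert that the factor $a_{\ell,\ell}^{(\ell-2)}$ on the right-hand side equals $\det M''$, where $M''$ is $M$ with rows and columns at positions $\ell,\ell+1$ removed. But deleting those positions from $M=A[1,\dots,\ell,i;\,1,\dots,\ell,j]$ leaves $A[1,\dots,\ell-1;\,1,\dots,\ell-1]$, whose determinant is $a^{(\ell-2)}_{\ell-1,\ell-1}$, \emph{not} $a^{(\ell-2)}_{\ell,\ell}$. By the paper's own definition, $a^{(\ell-2)}_{\ell,\ell}=\det A[1,\dots,\ell-2,\ell;\,1,\dots,\ell-2,\ell]$, which skips row and column $\ell-1$; this is a different submatrix. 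A quick sanity check with $\ell=2$, $i=j=3$, and $A=\mathrm{diag}(1,2,1)$ gives left-hand side $2$ but the paper's right-hand side $a_{2,2}\cdot\det A = 4$, so the identity \emph{as printed} is false; the correct Bareiss identity carries $a^{(\ell-2)}_{\ell-1,\ell-1}$ on the right, and that is exactly what your argument establishes.

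In short: your proof is correct for the correct identity, but you silently ``fixed'' the statement when matching $\det M''$ to the right-hand side. You should make that mismatch explicit rather than paper over it.
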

The Desnanot--Jacobi identity is true for all matrices with entries over an integral domain 
(a non-zero commutative ring in which the product of non-zero elements is non-zero), 
and therefore we can take the entries of $A$ to be polynomials in $\Z[t]$. 

Returning to our informal discussion, we now see that the coefficient $(b \cdot c - a \cdot d)$ 
of~$y$ is oddly similar to the left-hand side of the Desnanot--Jacobi identity.
Suppose that the elements $a_{i,j}^{(\ell-1)}$ are 
the coefficients of the variables in the formula currently being processed by the quantifier elimination procedure, 
and we are eliminating the $\ell$-th quantifier. \Cref{thm:desnanot-jacobi} tells us that all coefficients produced by 
the na\"ive elimination (left-hand side of the identity) 
have~$a_{\ell,\ell}^{(\ell-2)}$ as a common factor. 
By dividing through by this common factor, we obtain smaller coefficients for the next step of variable elimination ---namely, 
$a_{i,j}^{(\ell)}$. When eliminating the first variable ($\ell = 1$), the common factor is~$1$. 
Otherwise, it is the coefficient~$a$ that the $(\ell-1)$-th eliminated variable $x$ has in the equality 
$a \cdot x = \tau + s$ used for the elimination.
In~\Cref{algo:gaussianqe}, the lines marked in \colorbox{RoyalBlue!25}{blue} implement Bareiss' optimisation: 
line~\ref{gauss:bareissfactors} initialises the common factor $\ell(t)$ and its sign, line~\ref{gauss:update-factor} updates it, 
and line~\ref{gauss:divide} performs the division.

\subparagraph*{Some details on~\BoundedQE.}


Lines~\ref{gauss:set-Z}--\ref{gauss:set-Z-end} handle the divisibility constraints $f(t) \divides \tau$ 
with the divisor $f(t)$ equal to $0$. Such constraints are equalities in disguise, and the procedure replaces 
them with~$\tau = 0$. 
When the procedure reaches line~\ref{gauss:guess-sign-mt}, all divisors in the divisibility constraints 
are assumed non-zero.
Following the example from the previous paragraph, recall that the shifts~$s$ 
belong to intervals that depend on these divisors; when multiple divisors occur, the procedure for 
\PA takes their $\lcm$ (instead of just $m$ as in our example). 
For simplicity, instead of $\lcm$, \BoundedQE considers the absolute value~$m(t)$ of their product (line~\ref{gauss:guess-mt}). 
After guessing the sign~$\pm$ of this product, the procedure enforces $m(t) > 0$ in line~\ref{gauss:mt-is-positive}. This information is stored in the formula $\chi$, which accumulates all sign guesses made by the algorithm; these are conjoined to $\phi$ when the procedure returns.

Line~\ref{gauss:introduce-slack} replaces all inequalities with equalities by introducing slack variables 
ranging over~$\N$. (This step is inherited from~\cite{ChistikovMS24}.) 
Slack variables represent the shifts~$s$ from the quantifier elimination procedure for \PA.
Line~\ref{gauss:subst-rem} covers the corner cases of $x$ not appearing in equalities, or $t$ being such that all the coefficients $f(t)$ of $x$ evaluate to zero. 
After guessing an equality $f(t) \cdot x + \tau = 0$ to perform the substitution (line~\ref{gauss:guess-equation}), 
line~\ref{gauss:branch} checks whether $\tau$ features a slack variable~$y$ (i.e., the equality was originally an 
inequality). If so, the procedure generates a bounded quantifier for~$y$.
The elimination of $x$ (line~\ref{gauss:vigorous}) is performed with the \emph{vigorous substitution}~$\phi\vigsub{\frac{-\tau}{f(t)}}{x}$ 
which works as follows: 
\textbf{1:}~Replace every equality $\rho = 0$ with $f(t) \cdot \rho = 0$, and every divisibility 
$g(t) \divides \rho$ with $f(t) \cdot g(t) \divides f(t) \cdot \rho$; 
this is done also for constraints where $x$ does not occur.
\textbf{2:}~Replace every occurrence of $f(t) \cdot x$ with~$\tau$
(from step~\textbf{1}, each coefficient of $x$ in the system can be factored as $f(t) \cdot h(t)$ 
for some $h \in \Z[t]$).

After applying the vigorous substitution, the procedure divides all coefficients of the inequalities 
and divisibility constraints in $\phi$ by the common factor of the Bareiss' optimisation (line~\ref{gauss:divide}). 
In the case of divisibility constraints, divisors are also affected.
\Cref{thm:desnanot-jacobi} ensures that these divisions are all without remainder. 
In practice, the traditional Euclidean algorithm for polynomial division can be used to construct 
the quotient in polynomial time. As a result of these divisions, 
throughout the procedure 
all polynomials~$f(t)$ guessed in line~\ref{gauss:guess-equation} 
have polynomial bit sizes in the size of the input formula. 


After the \textbf{foreach} loop of line~\ref{gauss:mainloop} completes, all variables from $\vec x$ have been eliminated (line~\ref{gauss:vigorous}) or bounded (line~\ref{gauss:subst-rem}). 
A benefit of translating inequalities into equalities in line~\ref{gauss:introduce-slack} is that $x$ can be eliminated independently of the sign of its coefficient $f(t)$; inequalities would need to flip for $f(t)$ negative instead. 
The final step (lines~\ref{gauss:unslackloop} and~\ref{gauss:drop-slack}) drops all slack variables for which no bound was assigned in line~\ref{gauss:append-seq}, 
reintroducing the inequalities (the sign stored in~$\pm$ tells us the direction of these inequalities). This step is also inherited 
from~\cite{ChistikovMS24}. 

By fully developing the arguments above, one shows that~\BoundedQE is correct:
\begin{restatable}{lemma}{LemmaBoundedQECorrectness}
    \label{lemma:boundedqe:correctness}
    \Cref{algo:gaussianqe} (\BoundedQE) complies with its specification.
\end{restatable}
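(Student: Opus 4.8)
I would prove the two halves of the specification separately. The shape requirement --- each branch returns $\exists\vec w_\beta\le B_\beta:\psi_\beta$ with $\psi_\beta$ a positive Boolean combination of \emph{linear} \PPA constraints --- is routine: \BoundedQE only ever creates linear equalities, linear inequalities and divisibility constraints $f(t)\divides\tau$ with $\tau$ linear, never introduces a negation, and only ever extends $B$ with polynomials of $\Z[t]$; so one checks line by line that these invariants hold and that at line~\ref{gauss:return} the residual variables are exactly the keys of $B$. For the equivalence of $\bigvee_\beta(\exists\vec w_\beta\le B_\beta:\psi_\beta)$ with $\exists\vec x:\phi$, the plan is to read the run as a chain of rewritings of $\exists\vec x:\phi$ and to show each rewriting equivalence-preserving \emph{once the disjunction over the non-deterministic choices is taken}. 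Concretely I would carry the invariant ``\emph{the disjunction, over all branches reaching the current program point, of the current $\phi\wedge\chi$ --- with the not-yet-processed variables of $\vec x$ and all slack variables existentially quantified, and the processed ones quantified via $B$ --- is equivalent to $\exists\vec x:\phi$}'', together with the side conditions that $\chi$ implies $m(t)>0$, that $y\ge0$ is implied for every slack variable $y$, and that every coefficient, constant and divisor occurring in $\phi$ is governed by the minors --- in the Bareiss sense of~\Cref{thm:desnanot-jacobi} --- of a fixed $\Z[t]$-matrix attached to the system reached just after line~\ref{gauss:introduce-slack}, with $\lead$ holding the pivot used at the previous non-$\ast$ iteration (or $1$ initially) and $m(t)$ the signed product of all original divisors and of all those pivots.

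For the pre-loop block (lines~\ref{gauss:set-Z}--\ref{gauss:introduce-slack}): the guess of $Z$ ranges over all subsets of the divisors of $\phi$, so it suffices to note that the branch with $Z=\{f(t):f\text{ evaluates to }0\}$ is sound --- in our theory $0\divides\tau\iff\tau=0$, hence under the asserted $f(t)=0$ replacing $f(t)\divides\tau$ by $\tau=0$ preserves truth --- while every other branch is vacuous, since either an asserted $f(t)=0$ is false, or, after line~\ref{gauss:mt-is-positive} forces $m(t)>0$ and hence every surviving divisor nonzero, a surviving factor vanishes. The sign guess of line~\ref{gauss:guess-sign-mt} merely makes $m(t)>0$ attainable, and line~\ref{gauss:introduce-slack} uses $\tau\le0\iff\exists y\ge0:\tau+y=0$, adding the obligation $y\ge0$ to the invariant.

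The core is the \textbf{foreach} loop of line~\ref{gauss:mainloop}, whose body is one of two branches. In the $\ast$-branch (lines~\ref{gauss:ast}--\ref{gauss:guess-all-zeros}) only the bound $x\in[0,m(t)-1]$ is recorded; this is sound in the branches where, at the fixed value of $t$, every equality of $\phi$ containing $x$ is vacuous in $x$, because then the constraints on $x$ are purely divisibilities whose divisors all divide $m(t)$ by the side condition, so any witness can be replaced by its residue modulo $m(t)$. In the elimination branch (lines~\ref{gauss:guess-equation}--\ref{gauss:restore}) I would use three facts. First, under $f(t)\ne0$ --- ensured by the asserted $\pm f(t)>0$ of line~\ref{gauss:assert-sign-ft} --- the rewriting of $\exists x:\phi$ to $\phi\vigsub{\frac{-\tau}{f(t)}}{x}\wedge(f(t)\divides\tau)$ is equivalence-preserving: multiplying each equality by the nonzero $f(t)$, and each divisibility $g\divides\rho$ by $f(t)$ on both sides, preserve truth, after which every coefficient of $x$ is a multiple of $f(t)$ and ``$f(t)\cdot x$'' can be replaced by $-\tau$, the restored $f(t)\divides\tau$ recording integrality of $x=-\tau/f(t)$. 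Second, the division by $\prevlead=\lead\ne0$ in line~\ref{gauss:divide} is equivalence-preserving \emph{provided it is exact}, and exactness is precisely~\Cref{thm:desnanot-jacobi} applied to the minor side condition (for divisibilities, both the dividend and the divisor are divided). Third, when the guessed equality carries a slack variable $y$ with $B(y)$ undefined (line~\ref{gauss:branch}), bounding $y$ by $m(t)-1$ is sound by the ``small shift'' argument of the running example of this section, transported verbatim to coefficients in $\Z[t]$: the shift witnessing the elimination lies in an interval depending only on $t$, not on $\vec z$. Completeness of all guesses is argued valuation by valuation: fixing $t$, $\vec z$ and a witness, a variable $x$ either occurs in a true equality of $\phi$ with nonzero leading coefficient --- then guess that equality and the matching sign --- or it does not, and then the $\ast$-branch applies; likewise for the guesses of $Z$, of signs, and of which slacks get bounded.

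Finally, the closing loop (lines~\ref{gauss:unslackloop}--\ref{gauss:drop-slack}) reintroduces the inequality for each slack variable $y$ that was never bounded by line~\ref{gauss:append-seq}: such a $y$ sits in a single equality $\eta=0$ whose $y$-coefficient, by the telescoping of the per-iteration ``multiply by the current pivot $f(t)$, then divide by $\prevlead$'' pattern, has reduced to the last guessed pivot, of sign the stored value of $\pm$; re-expressing $\exists y\ge0:\eta=0$ accordingly yields $\eta\sub{0}{y}\le0$ when $\pm$ is $+$ and $\eta\sub{0}{y}\ge0$ otherwise, which is line~\ref{gauss:drop-slack}, and the soundness of this step reduces once more to the minor bookkeeping. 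Conjoining $\chi$ as in line~\ref{gauss:return} and collecting the per-line, per-branch equivalences gives the specification. The step I expect to be the main obstacle is pinning down and propagating the minor side condition precisely enough to invoke~\Cref{thm:desnanot-jacobi} for the exactness of line~\ref{gauss:divide} --- and, relatedly, ``every surviving divisor divides $m(t)$'' for the $\ast$-branch and the soundness of line~\ref{gauss:drop-slack}: this requires fixing once and for all the $\Z[t]$-matrix of the post-slack system, tracking how the guesses reorder its rows and columns, and verifying that a single pass of ``vigorous substitution, then division by $\prevlead$'' realises exactly one Bareiss pivot step. The remaining ingredients --- the divisibility-to-equality and sign manipulations, and the ``small shift'' bound --- are local, and amount to re-running over the ring $\Z[t]$ the \PA arguments recalled earlier in this section.
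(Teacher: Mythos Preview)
Your plan matches the paper's own proof closely: the same three-phase decomposition (pre-processing, the \textbf{foreach} loop, slack removal), the same per-branch invariant carried together with the sign/positivity side conditions, and the same identification of the Bareiss minor bookkeeping (\Cref{thm:desnanot-jacobi}) as what makes line~\ref{gauss:divide} exact and what pins down the coefficient of each surviving slack variable. The paper organises the per-iteration case analysis through an explicit device it calls \emph{partial formulae} --- a partition of the equalities containing $x$ into ``inactive'', ``active with $f(t)=0$'', and ``active with $f(t)\ne0$'' --- which is exactly the formalisation of your ``valuation by valuation'' argument.

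One point where your sketch is thinner than the paper deserves a flag. When the guessed equality carries an unassigned slack $y$, you write that bounding $y$ by $m(t)-1$ is ``the small shift argument of the running example, transported verbatim''. The running example has a single lower bound and a single upper bound; in general the current formula may contain \emph{several} active equalities in $x$ with nonzero leading coefficient, each carrying its own unassigned slack, and shifting $x$ forces all of those slacks to move simultaneously. The paper's argument here is not a verbatim transport: it considers the auxiliary rational vector of ratios $\nu(y_j)/|g_j(\nu(t))|$, takes a solution minimising its least entry, and shows that if that minimum still exceeds $m(t)$ one can shift $x$ by $\mu(t)\cdot m(t)$ (with $\mu$ the common slack coefficient) to strictly decrease it while preserving all constraints --- a contradiction. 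This is short but genuinely different from the one-bound example, and it is also where the side condition ``every surviving divisor divides $m(t)$'' and the fact that all unassigned slacks share the same coefficient (your telescoping observation) are actually used. Apart from this, your plan is the paper's proof.
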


\noindent
This lemma implies~\Cref{lemma:gaussianqe}. 
In the sequel we will also need the next lemma, discussing 
properties of the outputs of \BoundedQE
for ``Presburger-arithmetic-like'' inputs.

\begin{restatable}{lemma}{LemmaBoundedQEInt}\label{lemma:boundedqe:int}
    Let $\exists \vec x : \phi(\vec x, \vec z)$ be a formula input of \Cref{algo:gaussianqe}, 
    in which all coefficients of the variables in $\vec x$, and all divisors $f(t)$ in relations $(f(t) \divides {\cdot})$,
    are integers.
    The map~$B_\beta$ in the output of each non-deterministic branch $\beta$ ranges over the integers.
\end{restatable}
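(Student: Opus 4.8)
The plan is to pin down every value that can ever be stored in the map $B$ during a run of \Cref{algo:gaussianqe} and to check that each of them is an integer. Inspecting the pseudocode, $B$ is initialised empty at line~\ref{gauss:bareissfactors} and is written to only at line~\ref{gauss:subst-rem} (inserting the pair $(x, m(t)-1)$) and at line~\ref{gauss:append-seq} (inserting the pair $(y, m(t)-1)$). So it suffices to establish the invariant that the polynomial $m(t)$ is an integer, from the moment it is first assigned up to the return at line~\ref{gauss:return}. The base case is immediate: line~\ref{gauss:guess-mt} sets $m(t) = \pm\prod\{f(t) : (f(t)\divides\cdot)\text{ occurs in }\phi\}$, and the only change made to $\phi$ before this is the loop of lines~\ref{gauss:Z-for}--\ref{gauss:set-Z-end}, which merely removes divisibility relations (turning $f(t)\divides\tau$ into the equality $\tau = 0$ at line~\ref{gauss:set-Z-replace}) and never introduces new ones; hence every divisor occurring there already occurred in the input formula and is an integer by hypothesis, so the product, and thus $m(t)$, is an integer. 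After that, $m(t)$ is modified only at line~\ref{gauss:new-factor}, via $m(t)\gets\pm f(t)\cdot m(t)$ where $f(t)$ is the coefficient of the variable $x\in\vec x$ in the equality $f(t)\cdot x + \tau = 0$ guessed at line~\ref{gauss:guess-equation}. So the whole statement reduces to showing that \emph{every coefficient $f(t)$ guessed at line~\ref{gauss:guess-equation} is an integer}.

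To obtain this I would prove the following invariant of the \textbf{foreach} loop of line~\ref{gauss:mainloop}: in every constraint of the current $\phi$, the coefficient of each variable of $\vec x$ and of each slack variable (those introduced at line~\ref{gauss:introduce-slack}) is an integer. Once this holds, the equality guessed at line~\ref{gauss:guess-equation} is such a constraint and $x\in\vec x$, so $f(t)\in\Z$ follows at once. For the base case, note that just after line~\ref{gauss:introduce-slack} the slack variables have coefficient $1$; the equalities $\tau = 0$ created at line~\ref{gauss:set-Z-replace} have $\tau$ taken from a (linear) constraint of the input $\phi$, so their $\vec x$-coefficients are integers by hypothesis; and the remaining constraints are just those of the input $\phi$, again with integer $\vec x$-coefficients by hypothesis. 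For the inductive step, consider the iteration eliminating $x$ via the guessed equality $f(t)\cdot x + \tau = 0$. By the inductive hypothesis, $f(t)$ is an integer, the coefficients of $x$, of every other variable of $\vec x$, and of every slack variable in $\tau$ are integers, and the common factor \prevlead used at line~\ref{gauss:divide} (the coefficient of the previously eliminated variable, or $1$ for the first elimination) is a nonzero integer. Step~1 of the vigorous substitution $\phi\vigsub{\frac{-\tau}{f(t)}}{x}$ at line~\ref{gauss:vigorous} multiplies every constraint by $f(t)$ and step~2 replaces each occurrence of $f(t)\cdot x$ by $-\tau$; since $f(t)$ and the relevant coefficients of $\tau$ are integers, the $\vec x$- and slack-coefficients of every constraint remain integers after this. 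Line~\ref{gauss:divide} then divides each constraint by \prevlead, and this division is exact over $\Z[t]$ --- a fact established in the proof of \Cref{lemma:boundedqe:correctness} via the Desnanot--Jacobi identity (\Cref{thm:desnanot-jacobi}) --- so every resulting $\vec x$- or slack-coefficient is a polynomial in $\Z[t]$ equal to a quotient of two integer constants, hence (for degree reasons) itself an integer. Finally line~\ref{gauss:restore} conjoins the divisibility $f(t)\divides\tau$, whose $\vec x$- and slack-coefficients (those of $\tau$) are integers as already observed. This re-establishes the invariant, completing the induction; note that in an iteration taking the $\ast$-branch (lines~\ref{gauss:ast}--\ref{gauss:subst-rem}) the formula $\phi$ is not modified at all, so the invariant trivially persists.

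Assembling the pieces, $m(t)$ stays an integer throughout the run, so both writes to $B$ (at lines~\ref{gauss:subst-rem} and~\ref{gauss:append-seq}) insert a pair whose second component $m(t)-1$ is an integer; hence the map $B_\beta$ returned at line~\ref{gauss:return} in every non-deterministic branch $\beta$ ranges over $\Z$, as claimed. I expect the only genuine obstacle to be the inductive step, and within it the single point that has to be imported from elsewhere: that the divisions performed at line~\ref{gauss:divide} are exact over $\Z[t]$, i.e.\ that they implement Bareiss' fraction-free elimination faithfully. This is exactly what the proof of \Cref{lemma:boundedqe:correctness} already develops (it is what keeps the coefficients inside $\Z[t]$ at all), so here it only needs to be invoked; everything else --- tracking which coefficients are integers through a multiplication by the integer $f(t)$ and an exact division by the integer \prevlead --- is routine.
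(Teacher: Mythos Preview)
Your argument is correct. Both you and the paper reduce the claim to showing that every $f(t)$ guessed at line~\ref{gauss:guess-equation} is an integer (so that $m(t)$ stays an integer throughout), but you reach that conclusion by a different mechanism. The paper simply invokes \Cref{prop:gauss-fundamental}.\ref{prop:gauss-fundamental:i4}, which identifies each such $f_i(t)$ with the determinant $\det(M_0[1,\dots,i{+}1;\,1,\dots,i{+}1])$; since the first $n$ columns of $M_0$ are precisely the coefficients of the variables in $\vec x$ in the input formula, this submatrix has integer entries and the determinant is an integer. Your route instead maintains the invariant that all $\vec x$- and slack-coefficients remain integers across iterations, using only the exactness of the division at line~\ref{gauss:divide} (this is \Cref{prop:gauss-fundamental}.\ref{prop:gauss-fundamental:i5}, not the proof of \Cref{lemma:boundedqe:correctness} as you write) together with the elementary observation that a polynomial in $\Z[t]$ equal to a ratio of integers must itself be an integer. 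The paper's approach is shorter because the determinantal characterisation has already been set up; yours is more self-contained and avoids appealing to the full matrix description, at the price of a longer induction. Either way, the slack-variable part of your invariant is harmless but not needed: only the $\vec x$-coefficients feed into the guessed $f(t)$.
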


\section{Elimination of polynomially bounded quantifiers}
\label{sec:division}

\begin{algorithm}[t!]
    \caption{\textsc{ElimBounded}: Elimination of polynomially bounded quantifiers.}
    \label{algo:qe}
    \begin{algorithmic}[1]
        \Require $\exists\vec w \leq B :\phi(\vec w, \vec z)$, with $\phi$ positive Boolean combination of linear~\PPA (in)equalities, 
        and constraints $\sigma(\vec w) + (\tau(\vec z) \bmod{f(t)}) = 0$,
        with $\sigma$ linear, and $\tau$ linear and non-shifted.
        \NDBranchOutput
        a positive Boolean combination $\psi_\beta(\vec z)$ of $\PPA$ constraints.
        \GlobalSpec
        $\bigvee_{\beta}\psi_\beta$ is equivalent to $\exists\vec w \leq B : \phi$.
        \vspace{2pt}
        \algrenewcommand{\alglinenumber}[1]{\colorbox{Mahogany!25}{\footnotesize#1:}}
        \State $\phi\gets\phi\land\bigwedge_{w \in \vec w}(0\leq w)\land(w \leq B(w))$
        \label{line:elimbounded:add-bounds} 
        \State $M \gets \max\{ \bitlength{B(w)} \ : \ w \in \vec w\}$
        \label{line:elimbounded:m}
        \State $\vec y\gets\varnothing$
        \label{line:elimbounded:intro-temp}
        \Comment{$\vec y$ is a vector of variables used to ``bit blast'' bounded variables}
        \ForEach{$w$ in $\vec w$}
        \label{line:elimbounded:foreach-div-x}
            \State append fresh variables $y_0,\dots,y_{M}$ to $\vec y$ 
            \label{line:elimbounded:append-y}
            \State $\phi \gets \phi\sub{(t^{M} \cdot y_{M} + \dots + t \cdot y_1 + y_0)}{w}\,\land\,\bigwedge_{i=0}^{M} ((0\leq y_i)\land (y_i\leq t - 1))$
            \label{line:elimbounded:div-x}
        \EndFor
        \algrenewcommand{\alglinenumber}[1]{\colorbox{lipicsYellow!25}{\footnotesize#1:}}
        \While{a variable from $\vec y$ has a non-integer coefficient in a constraint $(\eta \sim 0)$ of $\phi$}
            \label{line:elimbounded:while}   
            \If{the symbol $\sim$ is $\leq$} $\eta \gets \eta - 1$
            \label{line:elimbounded:sim}
            \Comment{we work with $\eta - 1 < 0$; else $\sim$ is $=$}
            \EndIf
            \State \textbf{let} $\eta$ be $(\sigma(\vec y)\cdot t + \rho(\vec y) + \tau(\vec z))$, 
                where $\rho$ does not contain $t$, and $\tau$ is non-shifted
            \label{line:elimbounded:eta}
            \State $\rho \gets \rho(\vec y) + (\tau(\vec z) \bmod t)$
            \label{line:elimbounded:rho}
            \Comment{add to $\rho$ the unbounded part modulo $t$}
            \State \myguess $r\gets$ integer in $[-\onenorm{\rho},\onenorm{\rho}]$
            \label{line:elimbounded:guess-r}
            \Comment{quotient of the division of $\rho$ by $t$} 
            \If{the symbol $\sim$ is $=$}
                \label{line:elimbounded:start-construction-gamma}
                \State $\gamma\gets(t \cdot r=\rho)$
                \label{line:elimbounded:add-eq-to-psi}
            \Else 
                \State $\gamma\gets(t\cdot r\leq \rho)\land(\rho \leq t\cdot(r+1)-1)$
                \label{line:elimbounded:add-neq-to-psi}
                \State $r \gets r+1$
                \label{line:elimbounded:increment-r}
            \EndIf
            \State update $\phi$ : replace $(\eta \sim 0)$ with $\gamma \,\land\, 
            (\sigma + r + \sfloor{\frac{\tau}{t}} \sim 0\big)$
            \label{line:elimbounded:replace-in-phi}
        \EndWhile
        \algrenewcommand{\alglinenumber}[1]{\colorbox{ForestGreen!25}{\footnotesize#1:}}
        \State $\vec z'\gets\varnothing$; \quad $S \gets \varnothing$
        \label{line:elimbounded:intro-s}
        \Comment{$\vec z'$ are used to rewrite $\phi$ as a combination of linear constraints}
        \ForEach{constraint $(\rho(\vec y) + \tau(\vec z) \sim 0)$ of $\phi$, where $\tau$ is non-shifted}
        \label{line:elimbounded:foreach-constraint}
            \State append a fresh variable $z'$ to $\vec z'$ and update $S$ : add the key-value pair $(z',\, \tau(\vec z))$
            \label{line:elimbounded:z-primes}
            \State $\phi \gets \phi\sub{z'}{\tau(\vec z)}$
            \label{line:elimbounded:replace-tau}
        \EndFor
        \State $\exists\vec w' \leq B' :\psi(\vec w',\vec z')\gets$ \textsc{BoundedQE}$(\vec y, \phi(\vec y,\vec z'))$
        \label{line:elimbounded:gauss}
        \ForEach{$w$ in $\vec w'$}
        \label{line:elimbounded:foreach-w-to-int}
            \Comment{now every $B'(w)$ is an integer}
            \State \myguess $g \gets \text{integer in $[0,B'(w)]$}$ 
            \label{line:elimbounded:guess-g}
            \State $\psi\gets\psi\sub{g}{w}$
            \label{line:elimbounded:replace-w}
        \EndFor
        \State \textbf{return} $\psi[\sub{S(z')}{z'} : z' \in \vec z']$
        \label{line:elimbounded:return}
        \algrenewcommand{\alglinenumber}[1]{\colorbox{white}{\footnotesize#1:}}
    \end{algorithmic}
    \vspace{-1pt}
\end{algorithm}%

We move to~\Cref{algo:qe} (\textsc{ElimBounded}), which eliminates the bounded quantifiers in three steps: \colorbox{Mahogany!25}{replacement} of bounded variables by their $t$-ary expansions;
``\colorbox{lipicsYellow!25}{divisions} by $t$'' until all $t$-digits have integer coefficients; 
\colorbox{ForestGreen!25}{elimination} of $t$-digits via~\BoundedQE.

\subparagraph*{Base $t$ expansion (lines~\ref{line:elimbounded:add-bounds}--\ref{line:elimbounded:div-x}).}
Following the semantics of bounded quantifiers,
line~\ref{line:elimbounded:add-bounds} adds to~$\phi$ the bounds 
$0\leq w \land w\leq B(w)$, for each bounded variable $w$.
The subsequent lines ``bit blast'' $w$ 
into its $t$-ary expansion $t^{M} \cdot y_{M} + \dots + t \cdot y_1 + y_0$, 
where $M$ is the largest bit size of the bounds in $B$ (line~\ref{line:elimbounded:m}). 
All added variables~$\vec y$ are $t$-digits, i.e., they range in $[0,t-1]$.

\begin{example}\label{example:division}
    Let us see this step in action on a bounded version of the formula 
    in~\Cref{intro:example-2}:
    \[ 
        \exists (x,y,z) \leq B 
        \ : \ \big(t \cdot y = x + (-a \bmod t)\big) \land \big((t+1) \cdot z = x + (-b \bmod t+1)\big), 
    \]
    where $B(x) = t^2 + t - 1$, and $B(y) = B(z) = t+2$. 
    By~``bit blasting'' the bounded variables into $t$-digits $\vec x=(x_0,x_1,x_2)$, $\vec y=(y_0,y_1,y_2)$, and
    $\vec z=(z_0,z_1,z_2)$, we obtain the formula
    \begin{align*} 
        \exists \vec x,\vec y,\vec z \ : \  
        \bigwedge\nolimits_{w \in \{x,y,z\}} \Big(0 \leq  (w_2 \cdot t^2 + w_1 \cdot t + w_0) \leq B(w) &\land \bigwedge\nolimits_{i\in\{0,1,2\}} 0 \leq w_i < t\Big)\\
        {}\land\, t \cdot (y_2 \cdot t^2 + y_1 \cdot t + y_0) = (x_2 \cdot t^2 + x_1 \cdot t + x_0) &+ (-a \bmod t)\\ 
        {} \land\, (t+1) \cdot (z_2 \cdot t^2 + z_1 \cdot t + z_0) = (x_2 \cdot t^2 + x_1 \cdot t + x_0) &+ (-b \bmod t+1).
        &\hspace{-7pt}\qed
    \end{align*}
\end{example}

\begin{restatable}{lemma}{LemmaElimBoundedPreProcessing}\label{lemma:elimbounded:preprocessing}
    Let $\phi_{\varnothing}(\vec y, \vec z)$ be the formula obtained from $\phi$ by performing lines~\ref{line:elimbounded:add-bounds}--\ref{line:elimbounded:div-x} of \Cref{algo:qe}. 
    Then, the input formula $\exists\vec w \leq B :\phi(\vec w, \vec z)$ is equivalent to 
    $\exists \vec y : \phi_{\varnothing}(\vec y, \vec z)$ and every ($t$-digit) variable in $\vec y$ has 
    only linear occurrences in $\phi_{\varnothing}$.  
\end{restatable}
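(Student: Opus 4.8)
The plan is to trace the three operations performed by lines~\ref{line:elimbounded:add-bounds}--\ref{line:elimbounded:div-x} of \Cref{algo:qe} and reduce everything to a single inequality on bit sizes. I would first observe that line~\ref{line:elimbounded:add-bounds} merely internalises the semantics of the bounded-quantifier block: letting $\phi_1$ denote the formula right after that line, $\exists\vec w\leq B:\phi$ is equivalent to $\exists\vec w:\phi_1$ with $\vec w$ now ranging unboundedly over $\Z$. The loop of lines~\ref{line:elimbounded:foreach-div-x}--\ref{line:elimbounded:div-x} then rewrites each bounded variable $w$ as $e_w\coloneqq t^{M}y^{(w)}_{M}+\dots+t\,y^{(w)}_{1}+y^{(w)}_{0}$ in fresh $t$-digit variables and conjoins the constraints $0\leq y^{(w)}_i\leq t-1$, so $\phi_\varnothing$ is $\phi_1$ with every $w$ replaced by $e_w$, together with these digit constraints. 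Proving the lemma thus reduces to showing, for each fixed $t\geq 2$ and valuation of $\vec z$, that the set of $\vec w$ satisfying $\phi_1$ is mapped bijectively onto the set of $t$-digit tuples $\vec y$ satisfying $\phi_\varnothing$ by taking base-$t$ expansions.

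The linearity assertion I would treat first, as bookkeeping. The input specification of \Cref{algo:qe} guarantees that the variables $\vec w$ occur only linearly in $\phi$ --- trivially in each linear (in)equality, and only inside the linear term $\sigma$ (never inside the subterm $\tau(\vec z)\bmod f(t)$) in each constraint $\sigma(\vec w)+(\tau(\vec z)\bmod f(t))=0$ --- and line~\ref{line:elimbounded:add-bounds} adds only linear (in)equalities, preserving this. Substituting $e_w$, whose coefficients $t^i$ lie in $\Z[t]$, for a linear occurrence $g(t)\cdot w$ yields $\sum_{i=0}^{M}(g(t)\,t^i)\,y^{(w)}_i$, still linear, and leaves all $\bmod$-subterms untouched; hence every $t$-digit variable of $\vec y$ occurs only linearly in $\phi_\varnothing$.

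The heart of the argument, and the step I expect to require the most care, is to verify that the $M$ chosen in line~\ref{line:elimbounded:m} is large enough, i.e.\ that $0\leq w\leq B(w)(t)$ implies $w\in[0,t^{M+1}-1]$ for every $w\in\vec w$ (so that $w$ really admits an expansion with $M+1$ base-$t$ digits, which is exactly what the $\Rightarrow$ direction of the bijection needs). Since $M\geq\bitlength{B(w)}$ and $t^{M}\leq t^{M+1}-1$ for $t\geq 2$, this comes down to the inequality $\abs{f(t)}\leq t^{\bitlength{f}}$ valid for every $f\in\Z[t]$ and every $t\geq 2$, which I would prove by the chain $\abs{f(t)}\leq h(f)\sum_{i=0}^{\deg(f)}t^i\leq 2\,h(f)\,t^{\deg(f)}\leq t^{\,\deg(f)+\ceil{\log_2(h(f)+1)}+1}\leq t^{\bitlength{f}}$: the second inequality uses $\sum_{i\leq d}t^i\leq 2t^d$ for $t\geq 2$; the third uses $2\leq t$ together with $h(f)+1\leq 2^{\ceil{\log_2(h(f)+1)}}\leq t^{\ceil{\log_2(h(f)+1)}}$; and the last uses $\deg(f)+\ceil{\log_2(h(f)+1)}+1\leq(\deg(f)+1)(\ceil{\log_2(h(f)+1)}+1)=\bitlength{f}$, the difference between the two sides being $\deg(f)\cdot\ceil{\log_2(h(f)+1)}\geq 0$.

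With this in hand I would close the equivalence $\exists\vec w:\phi_1\iff\exists\vec y:\phi_\varnothing$ for a fixed $t$ and $\vec z$. For $\Rightarrow$, a witness $\vec w$ has $0\leq w\leq B(w)(t)\leq t^{M+1}-1$ for each $w$, so the base-$t$ digit vectors of the $w$'s provide a $\vec y$ satisfying the digit constraints; for $\Leftarrow$, a witness $\vec y$ gives a $\vec w$ by setting $w\coloneqq e_w(\vec y)$. In both cases one invokes the standard fact that term substitution commutes with evaluation --- whenever $e_w(\vec y)$ equals the value assigned to $w$, the formula $\phi_1$ at $\vec w$ and the formula ``$\phi_1$ with each $w$ replaced by $e_w$'' at $\vec y$ have the same truth value --- from which the remaining conjuncts (the digit constraints, and the substituted images of $0\leq w$ and $w\leq B(w)$) take care of themselves. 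The degenerate case $\vec w=\varnothing$, where $M$ and the substitutions are vacuous, is immediate.
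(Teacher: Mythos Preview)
Your proposal is correct and follows essentially the same approach as the paper: both reduce the lemma to the single inequality $\abs{g(k)}\leq k^{\bitlength{g}}$ for $g\in\Z[t]$ and $k\geq 2$, with the rest treated as routine substitution. Your derivation of that inequality differs only cosmetically (you bound $\sum_{i\le d}t^i$ by $2t^d$ whereas the paper uses the cruder $(d{+}1)\cdot h(g)\cdot k^d$ bound), and you spell out the linearity claim and the two directions of the equivalence more explicitly than the paper does, but the core argument is the same.
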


\vspace{-8pt}
\subparagraph*{The coefficients of the $t$-digits become integers (lines~\ref{line:elimbounded:while}--\ref{line:elimbounded:replace-in-phi}).}
This step is defined by the \textbf{while} loop of line~\ref{line:elimbounded:while}, whose goal is
to transform the formula $\phi_{\varnothing}$ into an equivalent positive Boolean combination of equalities and inequalities in which all coefficients of $\vec y$ are integers.

\begin{example}\label{example:division-two}
    Before delving into the details, 
    let us illustrate the transformation 
    on the equality $(t+1) \cdot (z_2 \cdot t^2 + z_1 \cdot t + z_0) = (x_2 \cdot t^2 + x_1 \cdot t + x_0) + (-b \bmod t+1)$ 
    from~\Cref{example:division}.
    Grouping terms according to powers of $t$, we obtain: 
    \[
        {-z_2 \cdot t^3 + (x_2-z_1-z_2) \cdot t^2} + {(x_1-z_0-z_1) \cdot t + (x_0 - z_0) + (-b \bmod t+1) = 0}.
    \]
    We symbolically perform a division with remainder on the sub-term $(-b \bmod t+1)$ concerning the free variables, 
    rewriting it as $\floor{\frac{-b \bmod t+1}{t}} \cdot t + ((-b \bmod t+1) \bmod t)$. In the resulting equality, we notice that $(x_0-z_0) + ((-b \bmod t+1) \bmod t)$ 
    must be divisible by $t$. Since both $x_0$ and $z_0$ belong to $[0,t-1]$, 
    only two multiples of $t$ are possible: $0$ and $t$. 
    Consider the latter case: we can rewrite the equality as the conjunction of $t = (x_0 - z_0) + ((-b \bmod t+1) \bmod t)$ and 
    $-z_2 \cdot t^3 + (x_2-z_1-z_2) \cdot t^2 + (x_1-z_0-z_1) \cdot t +\floor{\frac{-b \bmod t+1}{t}} \cdot t + t = 0$. 
    By dividing the second equality by $t$, the variables $x_0$, $x_1$ and $z_0$ 
    end up appearing with integer coefficients only.
    Repeating this process guarantees that all quantified variables satisfy this property:
    the second iteration ``frees'' $x_2$ and $z_1$, 
    and the third iteration handles the variable~$z_2$.
    \qed
\end{example}

The~\textbf{while} loop guesses some integers in line~\ref{line:elimbounded:guess-r}.
Let $R_i$ be the (finite) set of all sequences~$\vec s$ 
of guesses from the first $i$ iterations of the loop 
(so,~$\vec s$ has length~$i$), 
and let $\phi_{\vec s}$ be the unique formula obtained from 
$\phi_{\varnothing}$ after iterating the loop $i$ times, using~$\vec s$ as the sequence of guesses. 
(The subscript $\varnothing$ corresponds to the empty sequence of guesses; the only element in $R_0$.)

Together with proving that the \textbf{while} loop preserves formula equivalence (across non-deterministic branches),
the critical parameter to track during the execution of the loop 
is the degrees of all coefficients of the $t$-digits $\vec y$.
Showing that this parameter reaches~$0$ implies loop termination, 
and correctness of this step of the procedure. More formally, we inductively define 
the $\vec y$-\emph{degree} $\deg(\vec y,\phi)$ of a positive Boolean combination of \PPA constraints $\phi(\vec y, \vec z)$, 
where the variables $\vec y=(y_1,...,y_\ell)$ occur only linearly, as follows (below, ${\sim} \in \{{\leq},{=}\}$):

\begin{itemize}
    \item $\deg(\vec y, \phi) \,=\, \max\{ \deg(f_i) : i \in [1,\ell] \}$ \tab\hspace{-7pt} if $\phi$ is an (in)equality $\sum_{i=1}^\ell f_{i}(t) \cdot y_i + \tau(\vec z) \sim 0$; 
    \item $\deg(\vec y, \phi) \,=\, \deg(\vec y, \phi_1) + \deg(\vec y, \phi_2)$ \tab\hspace{-8pt} if $\phi$ is either $(\phi_1 \,\land\, \phi_2)$ or $(\phi_1 \,\lor\, \phi_2)$.
\end{itemize} 

\noindent
Then, the defining property of the \textbf{while} loop of line~\ref{line:elimbounded:while} can be stated as follows:

\begin{restatable}{lemma}{LemmaElimBoundedDtwo}\label{lemma:elimbounded:Dtwo}
    Consider $\vec s \in R_i$ with $\deg(\vec y, \phi_{\vec s}) > 0$, and the set ${G \coloneqq \{ \vec s r \in R_{i+1} : r \in \Z\}}$.
    Then, \labeltext{\textup{\textbf{(i)}}}{lemma:elimbounded:Dtwo:i1} $\phi_{\vec s}$ is equivalent to~$\bigvee_{\vec r \in G}\phi_{\vec r}$, and
    \labeltext{\textup{\textbf{(ii)}}}{lemma:elimbounded:Dtwo:i2} $\deg(\vec y, \phi_{\vec s}) > \deg(\vec y, \phi_{\vec r})$ for every $\vec r \in G$.
\end{restatable}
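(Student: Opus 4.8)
The plan is to reduce both parts of the statement to the analysis of a single iteration of the \textbf{while} loop of line~\ref{line:elimbounded:while}. Fix $\vec s \in R_i$ with $\deg(\vec y, \phi_{\vec s}) > 0$. Since $\deg(\vec y,\cdot)$ is additive over $\land$ and $\lor$ and every (in)equality whose $\vec y$-coefficients are all integers has $\vec y$-degree $0$, the hypothesis forces the loop guard to hold: the iteration picks some constraint $C = (\eta \sim 0)$ of $\phi_{\vec s}$ in which a variable of $\vec y$ has a non-integer coefficient, and for each guessed $r$ it replaces all occurrences of $C$ by the formula $A_r \coloneqq \gamma \land (\sigma(\vec y) + r' + \sfloor{\frac{\tau}{t}} \sim 0)$ built in lines~\ref{line:elimbounded:sim}--\ref{line:elimbounded:replace-in-phi}, where $\gamma$, $\tau$ and $\sigma$ are the objects named there and $r' \in \{r, r+1\}$; thus $\phi_{\vec r}$ (for $\vec r = \vec s r \in G$) is $\phi_{\vec s}$ with $C$ replaced by $A_r$, and $G$ consists of these $\vec s r$ for $r$ in the interval of line~\ref{line:elimbounded:guess-r}. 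I would first record the decomposition of line~\ref{line:elimbounded:eta}: writing the coefficient of each $y_i$ in $\eta$ uniquely as $f_i(t) = t \cdot g_i(t) + c_i$ with $c_i = f_i(0) \in \Z$, the term $\sigma(\vec y)$ carries the coefficients $g_i$ and $\rho(\vec y)$ the integer coefficients $c_i$ (the $\vec z$-monomials of $\eta$ form $\tau(\vec z)$ and the constant term of $\eta$ is split between $\sigma$ and $\rho$ so that $\tau$ is non-shifted and $\rho$ is $t$-free). Setting $D \coloneqq \deg(\vec y, C) = \max_i \deg(f_i)$, which is $\ge 1$ by the choice of $C$, we have $\deg(g_i) = \max(0, \deg(f_i) - 1)$, so $\deg(\vec y, (\sigma(\vec y) + r' + \sfloor{\frac{\tau}{t}} \sim 0)) = \max_i \deg(g_i) = D - 1$, while $\deg(\vec y, \gamma) = 0$ since the $\vec y$-coefficients occurring in $\gamma$ are among the integers $\pm c_i$. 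Hence $\deg(\vec y, A_r) = D - 1$ uniformly in $r$, and additivity of $\deg(\vec y, \cdot)$ shows that replacing the $k \ge 1$ occurrences of the degree-$D$ atom $C$ by $A_r$ lowers $\deg(\vec y, \cdot)$ by exactly $k$; this proves the second claim.

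For the first claim I would prove two facts about integer assignments to $\vec y, \vec z$ (recall $t \ge 2$): \textbf{(a)} $A_r$ entails $C$ for every $r \in \Z$; and \textbf{(b)} $C$ entails $\bigvee_r A_r$ with $r$ ranging over the interval $[-\onenorm{\rho}, \onenorm{\rho}]$ actually guessed. Substituting $\tau(\vec z) = t \cdot \sfloor{\frac{\tau}{t}} + (\tau \bmod t)$ into $\eta$ and using $\rho = \rho(\vec y) + (\tau \bmod t)$ as in line~\ref{line:elimbounded:rho} turns $C$ into $t \cdot (\sigma(\vec y) + \sfloor{\frac{\tau}{t}}) + \rho \sim 0$. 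When $\sim$ is $=$, this says $\rho = -t \cdot (\sigma(\vec y) + \sfloor{\frac{\tau}{t}})$; taking $r \coloneqq \rho / t$ yields both $tr = \rho$ (i.e.~$\gamma$) and $\sigma(\vec y) + \sfloor{\frac{\tau}{t}} + r = 0$, and conversely these two conjuncts reassemble to $\eta = t(\sigma(\vec y) + \sfloor{\frac{\tau}{t}} + r) = 0$. When $\sim$ is $\le$, line~\ref{line:elimbounded:sim} first rewrites $C$ as the (integer-)equivalent strict inequality $\eta - 1 < 0$; then $t \cdot (\sigma(\vec y) + \sfloor{\frac{\tau}{t}}) + \rho < 0$ is equivalent to $\sigma(\vec y) + \sfloor{\frac{\tau}{t}} + \lfloor \rho / t \rfloor + 1 \le 0$, which is exactly $A_r$ with $r = \lfloor \rho / t \rfloor$ (the two conjuncts of $\gamma$ pin down $r = \lfloor \rho / t \rfloor$, and line~\ref{line:elimbounded:increment-r} supplies the ``$+1$''). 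For the range in \textbf{(b)}, since each $y_i$ and $(\tau \bmod t)$ lie in $[0, t-1]$ and $|c_0| \le (t-1)|c_0|$ (as $t \ge 2$), we get $|\rho| \le (t-1)\onenorm{\rho}$, hence $\rho / t$ and $\lfloor \rho / t \rfloor$ lie in $[-\onenorm{\rho}, \onenorm{\rho}]$.

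Finally, I would lift \textbf{(a)}--\textbf{(b)} to $\phi_{\vec s}$ by monotonicity. Since $\phi_{\vec s}$ is a positive Boolean combination and each $\phi_{\vec r}$ replaces all occurrences of $C$ by one and the same $A_r$, a case split on whether $C$ holds under a given assignment finishes the proof: if $C$ is false then by~\textbf{(a)} every $A_r$ is false and each $\phi_{\vec r}$ evaluates exactly as $\phi_{\vec s}$; if $C$ is true then by~\textbf{(b)} some $A_{r_0}$ is true, so $\phi_{\vec s r_0}$ evaluates as $\phi_{\vec s}$, while $\phi_{\vec r} \models \phi_{\vec s}$ for the remaining $r$ by monotonicity. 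Either way $\bigvee_{\vec r \in G} \phi_{\vec r} \equiv \phi_{\vec s}$. I expect the main friction to be fact~\textbf{(a)}/\textbf{(b)} in the inequality case --- lining up the strict/non-strict conversion, the $\eta \mapsto \eta - 1$ shift, and the floor arithmetic with the ``$r \gets r+1$'' of line~\ref{line:elimbounded:increment-r} --- and, alongside it, checking that the decomposition of line~\ref{line:elimbounded:eta} can always be performed so that $\rho$ remains $t$-free and $\tau$ non-shifted (this is where the constant term of $\eta$, absorbed into the constant of $\sigma(\vec y)$ and of $\rho$, needs care). Everything else is bounded, mechanical computation.
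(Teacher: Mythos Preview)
Your proposal is correct and follows essentially the same approach as the paper's proof: both establish the atomic equivalence $(\eta \sim 0) \iff \bigvee_r A_r$ by the same division-by-$t$ argument (the paper factors this out as a separate auxiliary lemma, you do it inline), and both lift it to $\phi_{\vec s}$ via positivity of the Boolean combination (the paper phrases this as distributivity, you as a monotonicity case split). One minor point: line~\ref{line:elimbounded:replace-in-phi} replaces only the single selected occurrence of $C$, not all occurrences as you assume---this is why the paper computes a degree drop of exactly $1$ rather than $k$---but your monotonicity argument and the strictness of the degree decrease are unaffected.
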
 

\begin{proof}[Proof sketch.]
    The \textbf{while} loop considers an (in)equality $\eta \sim 0$ from $\phi_{\vec s}$ (line~\ref{line:elimbounded:while}); 
    inequalities are transformed into strict inequalities $\eta - 1 < 0$
    in line~\ref{line:elimbounded:sim}, as in this case the latter are easier to work with. Line~\ref{line:elimbounded:eta} 
    represents the term of the (in)equality as $\sigma(\vec y) \cdot t + \rho(\vec y) + \tau(\vec z)$, 
    where $\sigma$ is linear, $\rho$ is a linear term with coefficients in~$\Z$,
    and $\tau(\vec z)$ is non-shifted. 
    
    \begin{remark*} Observe that line~\ref{line:elimbounded:replace-in-phi}  
    will later replace $\eta \sim 0$ with 
    $\sigma + r + \sfloor{\frac{\tau}{t}} \sim 0$. 
    If the latter (in)equality is considered again by the \textbf{while} loop at a later iteration, this replacement will produce the term $\sfloor{\frac{\sfloor{\frac{\tau}{t}}}{t}}$, which can be rewritten as $\sfloor{\frac{\tau}{t^2}}$. We assume the algorithm to implicitly perform this rewriting, so that the term above can in fact be written as 
    \begin{align}\label{eq:elimbounded:constraint:1}
        \textstyle\sigma(\vec y)\cdot t + \rho(\vec y) + \sfloor{\frac{\tau(\vec z)}{t^k}}, 
        \qquad\quad &\text{where $k\geq0$, such that $\textstyle\sfloor{\frac{\tau}{t^0}}\coloneqq \tau$}.        
    \end{align}
    \end{remark*}

    Let us define ${\eta' = \sigma(\vec y) + \sfloor{\frac{\tau(\vec z)}{t^{k+1}}}}$ and 
    ${\rho' = \rho(\vec y) +\big(\sfloor{\frac{\tau(\vec z)}{t^k}} \bmod t\big)}$, so that the term in~\Cref{eq:elimbounded:constraint:1} 
    is then equal to $\eta' \cdot t + \rho'$. 
    (Note: $\rho'$ is exactly the term in line~\ref{line:elimbounded:rho}.)
    We are now ready to perform the symbolic division by~$t$.
    Indeed, since all variables in~$\vec y$, as well as the term $\big(\sfloor{\frac{\tau(\vec z)}{t^k}} \bmod t\big)$, belong to $[0,t-1]$, 
    we conclude that $\rho' \in \big[{-t\cdot N},t \cdot N]$ where $N \coloneqq \onenorm{\rho'}$.
    Line~\ref{line:elimbounded:guess-r} guesses an integer $r$ from the 
    segment $[{-N},N]$, which stands for the quotient 
    of the division of $\rho'$ by $t$. 
    Each guess corresponds to a disjunct from the following two equivalences:
    \[
    \begin{array}{rcl>{\columncolor{lipicsYellow!25}}ll>{\columncolor{ForestGreen!25}}ll}
        \eta' \cdot t + \rho' = 0 &\iff &\bigvee\nolimits_{r=-N}^{N}\big(&\eta' + r = 0  &\land& r \cdot t = \rho'&\big),\\
        \eta' \cdot t + \rho' < 0 &\iff &\bigvee\nolimits_{r=-N}^{N}\big( 
        &\eta' + r + 1 \leq 0  &\land& 
        t \cdot r \leq \rho' \, \land \, \rho' < t \cdot (r+1) &\big).\\
        &&&\scriptstyle{\text{quotient of the division}}
        &&\scriptstyle{\text{formula $\gamma$ in the pseudocode}}
    \end{array}
    \]
    These equivalences ``perform'' the symbolic division by $t$.
    In line~\ref{line:elimbounded:replace-in-phi},
    the algorithm substitutes the constraint $\eta \sim 0$ 
    with the result of the division, 
    that is, the conjunction of the~\colorbox{lipicsYellow!25}{quotient}
    and the~\colorbox{ForestGreen!25}{remaining part} that is stored the formula $\gamma$ 
    (lines~\ref{line:elimbounded:start-construction-gamma}--\ref{line:elimbounded:increment-r}).
    Observe a key property of~$\gamma$: in it, all the 
    coefficients of the $t$-digits $\vec y$ only have integer coefficients, 
    i.e.,~$\deg(\vec y, \gamma) = 0$.

    Let $\chi_r$ be the formula obtained from $\phi$ by performing the replacement in line~\ref{line:elimbounded:replace-in-phi}.
    This formula belongs to $R_{i+1}$ and, moreover,
    $\phi_{\vec s} \iff \bigvee_{r = -N}^N \chi_r$. 
    We have ${G = \{\vec sr : r \in [-N,N]\}}$, 
    and Item~\ref{lemma:elimbounded:Dtwo:i1} is proved. To prove Item~\ref{lemma:elimbounded:Dtwo:i2}, observe that
    \begin{align*}
        \deg(\vec y,\chi_{r}) &\,=\, \deg(\vec y,\phi_{\vec s}) - \deg(\vec y,\eta \sim 0) + \deg(\vec y,\eta' \sim 0) + \deg(\vec y, \gamma)\\
        &\,=\, \deg(\vec y,\phi_{\vec s}) - \deg(\vec y,\sigma \cdot t \sim 0) + \deg(\vec y,\sigma \sim 0) = \deg(\vec y, \phi_{\vec s}) - 1.   
        &&\qedhere 
    \end{align*}
\end{proof}

\subparagraph*{Elimination of $t$-digits (lines \ref{line:elimbounded:intro-s}--\ref{line:elimbounded:return}).}
By inductively applying \Cref{lemma:elimbounded:Dtwo}, we deduce that the \textbf{while} loop performs at most
$\deg(\vec y, \phi_{\varnothing})$ iterations,
and that the disjunction (over all non-deterministic branches) of formulae~$\phi_{\vec r}$ obtained at the end of this loop 
is equivalent to $\phi_{\varnothing}$.
The constraints in each~$\phi_{\vec r}$ 
are (in)equalities $f(t) +\tau(\vec z) + \sum_{i=1}^{\ell} a_i \cdot y_i \sim 0$, where $a_{1},\dots,a_\ell \in \Z$, $f \in \Z[t]$, all $y_1,\dots,y_\ell$ 
are $t$-digits, and $\tau$ is a non-shifted term of \PPA. 

The last step is to remove the $t$-digits $\vec y$ by appealing 
to~\BoundedQE (line~\ref{line:elimbounded:gauss}).
Recall that this algorithm requires all \PPA constraints in the input to be linear, while terms $\tau(\vec z)$ in $\phi_{\vec r}$ 
may contain (nested) occurrences of the functions $\sfloor{\frac{\cdot}{t^d}}$  and $(\cdot \bmod f(t))$. 
In order to respect this specification, \ElimBounded first replaces each non-shifted term $\tau(\vec z)$ in~$\phi_{\vec r}$ 
with a fresh variable $z'$, storing the substitution $[\tau(\vec z)/z']$ in the map~$S$ (line~\ref{line:elimbounded:z-primes}). 
These terms are restored at the end of the procedure (line~\ref{line:elimbounded:return}). Since the variables~$\vec z'$ 
occur free in the formula in input to~\BoundedQE, and this procedure preserves 
formula equivalence (\Cref{lemma:boundedqe:correctness}), the overall process remains sound.

The formula in input of~\BoundedQE has no divisibility constraints, 
and the eliminated variables~$\vec y$ have integer coefficients. By~\Cref{lemma:boundedqe:int} the output of each non-deterministic branch 
is a formula $\exists\vec w' \leq B' : \psi(\vec w',\vec z')$ where, for every variable $w$ in $\vec w'$, the bound $B'(w)$ is an integer.  
We can thus replace $w$ with a (guessed) integer $g \in [0,B'(w)]$ 
(lines~\ref{line:elimbounded:foreach-w-to-int}--\ref{line:elimbounded:replace-w}). After restoring the terms stored in $S$, 
the resulting formula is quantifier-free and the disjunction over all outputs of~\ElimBounded is equivalent to the input formula. 

\begin{lemma}\label{lemma:elimbounded:correctness}
    \Cref{algo:qe} (\ElimBounded) complies with its specification.
\end{lemma}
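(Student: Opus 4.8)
The plan is to assemble \Cref{lemma:elimbounded:correctness} from the three structural pieces that have already been set up, checking at each stage that (a) formula equivalence is preserved across the disjunction of non-deterministic branches, and (b) the output syntax matches the specification (a positive Boolean combination of \PPA constraints in which every divisibility has an integer divisor). First I would invoke \Cref{lemma:elimbounded:preprocessing} to replace the input $\exists\vec w \leq B : \phi$ by the equivalent $\exists\vec y : \phi_{\varnothing}$, where all $\vec y$ are $t$-digits occurring only linearly; this handles the \colorbox{Mahogany!25}{base $t$ expansion} block. Then, applying \Cref{lemma:elimbounded:Dtwo} by induction on $\deg(\vec y,\phi_{\varnothing})$, I would argue that the \textbf{while} loop of line~\ref{line:elimbounded:while} terminates after at most $\deg(\vec y,\phi_{\varnothing})$ iterations — since item~\ref{lemma:elimbounded:Dtwo:i2} strictly decreases the $\vec y$-degree and it cannot go below $0$ — and that the disjunction $\bigvee_{\vec r}\phi_{\vec r}$ over all branches reaching line~\ref{line:elimbounded:intro-s} is equivalent to $\phi_{\varnothing}$, hence to the input. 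At that point each $\phi_{\vec r}$ is a positive Boolean combination of (in)equalities of the shape $f(t)+\tau(\vec z)+\sum_{i=1}^{\ell}a_i\cdot y_i \sim 0$ with $a_i\in\Z$, $f\in\Z[t]$, $\tau$ non-shifted.

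Next I would treat the \colorbox{ForestGreen!25}{elimination of $t$-digits} block. The substitution $[\tau(\vec z)/z']$ recorded in $S$ (line~\ref{line:elimbounded:z-primes}) turns every non-shifted subterm into a fresh free variable, producing a formula whose constraints are \emph{linear} \PPA constraints over $\vec y$ and $\vec z'$ — this is exactly what \BoundedQE requires as input. I would then cite \Cref{lemma:boundedqe:correctness} for soundness of the call in line~\ref{line:elimbounded:gauss} (the disjunction of its outputs $\exists\vec w'\leq B':\psi$ over branches is equivalent to $\exists\vec y:\phi_{\vec r}$), and crucially \Cref{lemma:boundedqe:int} to conclude that each $B'(w)$ is an integer: this applies because, after the \textbf{while} loop, all coefficients of the eliminated variables $\vec y$ are integers and the formula has no divisibility constraints at all, so \emph{a fortiori} all divisors are integers. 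Having integer bounds $B'(w)$ is what legitimises replacing each $w\in\vec w'$ by a guessed integer $g\in[0,B'(w)]$ in lines~\ref{line:elimbounded:foreach-w-to-int}--\ref{line:elimbounded:replace-w}: the semantics of bounded quantifiers $\exists\vec w'\leq B':\psi \iff \bigvee_{\vec g}\psi[\vec g/\vec w']$ gives equivalence, and after these substitutions the formula is quantifier-free.

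Finally I would restore the terms recorded in $S$ (line~\ref{line:elimbounded:return}); since $\vec z'$ occurred free, this substitution commutes with everything and preserves the value of the formula on every instantiation of $t$ and $\vec z$. Composing the three equivalences yields $\bigvee_\beta \psi_\beta \iff \exists\vec w\leq B:\phi$, which is the global specification. For the syntactic side conditions: the output of \BoundedQE is a positive Boolean combination of linear \PPA constraints, substituting integers for variables keeps it a positive Boolean combination of \PPA constraints, and restoring $S(z')$ — a non-shifted \PPA term — keeps every atom a legitimate \PPA constraint of the form $\tau=0$, $\tau\leq 0$ or $f(t)\divides\tau$; moreover any divisibility constraint present was already created by \BoundedQE with an integer divisor (no new divisors with $t$ are introduced by the substitutions), so the ``integer divisor'' clause of the specification holds.

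I do not expect a deep obstacle here, as all the heavy lifting is delegated; the one point requiring care is the application of \Cref{lemma:boundedqe:int}, namely verifying precisely that \emph{all} coefficients of $\vec y$ in every $\phi_{\vec r}$ are integers. This follows from the degree bookkeeping in the proof of \Cref{lemma:elimbounded:Dtwo} (when $\deg(\vec y,\phi_{\vec r})=0$, each coefficient $f_i(t)$ of a $t$-digit satisfies $\deg(f_i)=0$, i.e., is an integer), but it must be stated explicitly because it is the hinge that makes the final integer-guessing step valid and hence the output quantifier-free. A secondary routine check is that the termination count $\deg(\vec y,\phi_{\varnothing})$ is finite, which is immediate from the definition of $\deg(\vec y,\cdot)$ on the finite formula $\phi_{\varnothing}$.
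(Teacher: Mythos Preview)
Your proposal is correct and follows essentially the same route as the paper's proof: invoke \Cref{lemma:elimbounded:preprocessing} for the bit-blasting step, derive termination and equivalence of the \textbf{while} loop by induction from \Cref{lemma:elimbounded:Dtwo} (the paper packages this induction into two small auxiliary lemmas, but the content is identical), and then combine \Cref{lemma:boundedqe:correctness} with \Cref{lemma:boundedqe:int} to justify the final integer-guessing and restoration of~$S$. The only organisational difference is that the paper isolates the ``integer divisor'' and ``non-shifted argument'' syntactic observations into the separate \Cref{lemma:properties-elimbounded}, whereas you fold them into the correctness argument; either placement is fine.
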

This lemma implies~\Cref{lemma:elimbounded}, which was the last missing piece required to complete the proof of~\Cref{theorem:main}. To simplify the complexity arguments in the next section, 
we make use of the two observations in the following lemma, concerning the output of \ElimBounded. 

\begin{restatable}{lemma}{LemmaPropertiesElimBounded}
    \label{lemma:properties-elimbounded}
    In every output of~\ElimBounded, 
    all functions $\intdiv{\cdot}{t^d}$ and $(\cdot \bmod f(t))$ are applied to non-shifted terms. 
    In divisibility relations $(f(t) \divides \cdot)$, the divisor $f(t)$ is an integer. 
\end{restatable}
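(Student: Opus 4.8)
The plan is to track a single syntactic invariant through the three phases of \ElimBounded: \textbf{(a)} every subterm of the current formula of the form $\sfloor{\frac{\mu}{t^d}}$ or $(\mu \bmod g(t))$ has $\mu$ non-shifted; and \textbf{(b)} every divisibility relation present has an integer divisor. By the input specification of \ElimBounded, (a) holds initially — the only such subterms are the $(\tau(\vec z) \bmod f(t))$ with $\tau$ non-shifted — and (b) holds vacuously, since the input has neither divisibility relations nor occurrences of $\intdiv{\cdot}{t^d}$.

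First I would handle the base-$t$ expansion (\Cref{line:elimbounded:add-bounds}--\Cref{line:elimbounded:div-x}): the bounds added are linear (in)equalities in $\vec w$, and the bit-blasting substitution rewrites only the variables $\vec w$, which by the input format never occur inside an argument of a $(\cdot \bmod f(t))$; so the invariant is untouched. For the \textbf{while} loop (\Cref{line:elimbounded:while}--\Cref{line:elimbounded:replace-in-phi}) I would run an induction on the iteration count, parallel to the one behind \Cref{lemma:elimbounded:Dtwo}. In each iteration \Cref{line:elimbounded:eta} fixes a decomposition $\sigma(\vec y)\cdot t + \rho(\vec y) + \tau(\vec z)$ with $\tau$ \emph{non-shifted}; \Cref{line:elimbounded:rho} then forms $(\tau(\vec z) \bmod t)$ and \Cref{line:elimbounded:replace-in-phi} introduces $\sfloor{\frac{\tau(\vec z)}{t}}$, both applied to this non-shifted $\tau$. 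Using the implicit rewriting $\sfloor{\sfloor{\tau/t}/t} = \sfloor{\tau/t^2}$ noted in the proof of \Cref{lemma:elimbounded:Dtwo}, repeated passes keep all such subterms of the shape $\sfloor{\tau_0(\vec z)/t^k}$, where $\tau_0$ is one of the non-shifted $\vec z$-subterms already present (either a non-shifted linear term in $\vec z$, or a subterm $(\tau'(\vec z) \bmod f(t))$ with $\tau'$ non-shifted); the guessed constant $r$ is always added at the top level of a constraint, never inside a $\sfloor{\cdot/t}$, so numerators stay non-shifted. No divisibility relation is ever created. Hence (a) and (b) survive the loop, and at loop exit every constraint is a linear (in)equality with integer coefficients for $\vec y$ (the loop's termination condition, via \Cref{lemma:elimbounded:Dtwo}) whose remaining content is a non-shifted $\vec z$-term built from the original $\vec z$-subterms.

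For the last phase (\Cref{line:elimbounded:intro-s}--\Cref{line:elimbounded:return}), \Cref{line:elimbounded:foreach-constraint}--\Cref{line:elimbounded:replace-tau} abstract every non-shifted $\vec z$-subterm $\tau(\vec z)$ as a fresh free variable $z'$ with $S(z') = \tau(\vec z)$; each recorded $\tau(\vec z)$ is non-shifted and, by the previous paragraph, satisfies (a). The formula passed to \BoundedQE at \Cref{line:elimbounded:gauss} is therefore linear, has integer coefficients for the variables $\vec y$ to be eliminated, and has no divisibility relation; I would then argue, exactly as in the proof of \Cref{lemma:boundedqe:int}, that on such an input \BoundedQE reduces to Bareiss elimination over $\Z$, so that every divisibility relation it creates (at \Cref{gauss:restore} of \Cref{algo:gaussianqe}) and every divisor surviving the divisions of \Cref{gauss:divide} is an integer. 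Thus the output $\psi(\vec w',\vec z')$ of that call, being linear, consists only of linear (in)equalities and divisibility relations with integer divisors. \Cref{line:elimbounded:foreach-w-to-int}--\Cref{line:elimbounded:replace-w} substitute integers for $\vec w'$, which changes no divisor, and \Cref{line:elimbounded:return} substitutes $S(z') = \tau(\vec z)$ for each $z'$; since $\psi$ is linear, the only occurrences of $\intdiv{\cdot}{t^d}$ and $(\cdot \bmod f(t))$ in the final formula come from these substituted non-shifted terms $\tau(\vec z)$, in each of which every such function is applied to a non-shifted argument, and the divisors — untouched by either substitution round — remain integers. This establishes both assertions of the lemma.

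The step I expect to be the main obstacle is precisely the integrality of the divisors produced inside \BoundedQE during the last phase: it is not literally what \Cref{lemma:boundedqe:int} states (that lemma only constrains the bound map $B_\beta$), so I would either extract it from that lemma's proof or re-run the short induction showing that, on a Presburger-arithmetic-like input, the product $m(t)$, the current leading coefficient $\lead$, and the previous one $\prevlead$ all stay in $\Z$ — which is exactly the regime in which Bareiss' exact divisions are divisions of an integer by an integer, hence remain in $\Z$.
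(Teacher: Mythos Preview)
Your proposal is correct and follows essentially the same line as the paper's proof: both trace the non-shifted property of arguments to the decomposition in \Cref{line:elimbounded:eta} (and to the input specification for the inherited $(\tau' \bmod f(t))$ subterms), and both obtain the integrality of divisors from the fact that the call to \BoundedQE at \Cref{line:elimbounded:gauss} is made on a divisibility-free formula whose eliminated variables have integer coefficients. You are right that \Cref{lemma:boundedqe:int} as stated only constrains the bound map, not the divisors---the paper's proof simply cites it for the latter, relying implicitly on the fact (established in the proof of \Cref{lemma:boundedqe:int} via \Cref{prop:gauss-fundamental}) that every $f_i(t)$ picked in \Cref{gauss:guess-equation} is a determinant of a submatrix of $M_0$ with integer entries; your plan to extract this explicitly is the honest way to close the argument.
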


\section{Solving satisfiability, universality and finiteness}\label{sec:complexity}

Now that we have established that~\PPA admits quantifier elimination, 
let us discuss the decision problems of \emph{satisfiability}, \emph{universality} 
and \emph{finiteness} defined in~\Cref{sec:intro}. 
Without loss of generality, we add to the formula $\phi$ in input to these problems a prefix of existential quantifiers over all its 
free variables; thus assuming that $\phi$ is a sentence. 

Applying our quantifier elimination procedure to the sentence $\phi$ 
results in a variable-free formula $\psi$ whose truth only depends on the value 
taken by the parameter $t$.
Furthermore, from~\Cref{lemma:properties-elimbounded}, 
all occurrences of the functions $\intdiv{\cdot}{t^d}$ and $(\cdot \bmod f(t))$ 
are applied to the constant $0$ (the only variable-free non-shifted term) 
and can thus be replaced with $0$; and all divisibility relations $(f(t) \divides \cdot)$ 
are such that $f(t)$ is an integer. 
That is to say, $\psi$ is a positive Boolean combination of univariate polynomial inequalities 
$g(t) \leq 0$, equalities $g(t) = 0$ and divisibility constraints $d \divides g(t)$, 
where $g \in \Z[t]$ and $d \in \Z$. 

We study the solutions to such a univariate formula $\psi(t)$. 
First, recall that computing the set of all integer roots of a 
polynomial in~$\Z[t]$ can be done in polynomial~time:

\begin{theorem}[\!\!{\cite[Theorem~1]{CuckerKS99}}]
    \label{theorem:cucker}
    There is a polynomial time algorithm 
    that returns the set of integer roots of an input polynomial~$f \in \Z[t]$.
\end{theorem}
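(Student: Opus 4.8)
The difficulty is that the two textbook ways of confining an integer root $r$ of $f$ are both exponential in $\bitlength{f}$: the rational-root test only says that $r$ divides the constant coefficient, which may have exponentially many divisors, while the Cauchy bound $|r| \le 1 + h(f)$ only traps $r$ in an interval of exponential length. The plan is to replace the search over that interval by a $p$-adic one that never keeps more than $\deg(f)$ candidates alive. (We assume $f \not\equiv 0$, the single case in which the answer is infinite; if $f$ is a nonzero constant, return $\varnothing$.)

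First I would pass to the squarefree part: compute $g \coloneqq f/\gcd(f,f')$ by a polynomial-time gcd over $\Q$ (e.g.\ via the subresultant remainder sequence) and normalise it to a primitive polynomial over $\Z$. Then $g$ has exactly the integer roots of $f$, is squarefree, and has bit size polynomial in $\bitlength{f}$. By squarefreeness the integer $D \coloneqq \mathrm{lc}(g)\cdot\mathrm{disc}(g) = \pm\mathrm{Res}(g,g')$ is nonzero; being a Sylvester determinant whose entries are coefficients of $g$ and $g'$, it has polynomial bit size, hence only polynomially many distinct prime divisors. Scanning the primes $2,3,5,\dots$ and testing $D \bmod p$, I would find in polynomial time a prime $p$ of polynomial bit size with $p \nmid D$; modulo such a $p$ the polynomial $g$ keeps its degree and has only \emph{simple} roots.

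Next I would list the roots of $g \bmod p$ by evaluating $g$ at $0,1,\dots,p-1$ — there are at most $\deg(g)\le\deg(f)$ of them, each found in polynomial time since $p$ is small. As these roots are simple, Hensel's lemma lifts each of them \emph{uniquely} (one Newton step per level) to a root of $g$ modulo $p^k$, and the lifts exhaust the roots of $g$ modulo $p^k$; I take $k$ least with $p^k > 2(1+h(f))$, so $k = O(\bitlength{f})$. Every integer root of $g$ is a root of $f$, hence lies in the interval $[-(1+h(f)),\,1+h(f)]$, whose integers are pairwise incongruent modulo $p^k$ (its length is $2(1+h(f)) < p^k$); therefore each lifted residue is met by at most one integer of that interval, namely the residue itself or it minus $p^k$, and possibly by neither. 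Evaluating $f$ at these $\le\deg(f)$ candidates and keeping the zeros of $f$ yields the desired set.

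I expect the load-bearing part of a full proof to be the size bookkeeping: verifying that $g$, the discriminant $D$, the prime $p$, and every integer produced while Hensel-lifting to level $k = O(\bitlength{f})$ all have polynomial bit size, so that each arithmetic step above runs in polynomial time. Granting these bounds, correctness is immediate from the Cauchy root bound together with the existence-and-uniqueness part of Hensel's lemma.
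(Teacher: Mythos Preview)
The paper does not give its own proof of this statement: it is quoted verbatim as Theorem~1 of Cucker, Koiran and Smale and used only as a black box (to conclude that integer roots have polynomial bit size). Your sketch is therefore not competing with any argument in the paper but supplying an independent proof of the cited result.

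As a standalone argument your outline is sound. The chain---pass to the squarefree part $g$, pick a small prime $p$ avoiding $\mathrm{Res}(g,g')$, list the at most $\deg g$ simple roots of $g\bmod p$, Hensel-lift each uniquely to precision $p^k$ with $k=O(\bitlength{f})$, and test the (at most one) representative of each lifted class inside the Cauchy interval---is correct, and each step is polynomial once the size bounds you flag are checked. One phrasing to tighten: you write that $p$ has ``polynomial bit size'', but for the brute-force enumeration of $0,1,\dots,p-1$ you actually need $p$ to be polynomial in \emph{value}. That does hold---$|D|$ has at most $\log_2|D|$ prime factors, so the least prime missing $D$ is among the first polynomially many primes and hence is itself polynomially bounded---but it is worth saying so explicitly, since a prime of polynomial bit size could in principle be exponentially large.
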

This theorem implies that the every integer root of $f \in \Z[t]$
has bit~size polynomial in~$\bitlength{f}$.
Let $r_1 < \dots < r_n$ be the roots of all the polynomials occurring in (in)equalities of $\psi$. 
These roots partition $\Z$ in $2n+1$ regions ${(-\infty,r_1-1],\{r_1\},[r_1+1,r_2-1],\{r_2\},\dots,\{r_n\},[r_n+1,\infty)}$. 
The truth of all (in)equalities in $\psi$ remains invariant for integers within the same region. 
Furthermore, the solutions to the divisibility constraints in $\psi$ are periodic with period 
${p \coloneqq {\lcm\{d : \text{$(d \divides \cdot)$ occurs in $\psi$}\}}} > 0$, 
i.e.,~setting ${t = b}$ satisfies the same divisibility constraints as ${t = b+p}$, for every integer~$b$.
Then, a solution to $\psi$ (if it exists) can be found in the interval $[r_1-p,r_n+p]$. 
Moreover, $\psi$ has infinitely many solutions if and only if one solution lies in intervals $[r_1-p,r_1-1]$ 
or $[r_n+1,r_n+p]$. To recap:

\begin{lemma}\label{lemma:small-solutions}
    Let $\psi$ be a positive Boolean combination of polynomial inequalities $g \leq 0$, equalities $g = 0$ and divisibility constraints $d \divides g$, where $g \in \Z[t]$ and $d \in \Z$. 
    Then, 
    \begin{enumerate}
        \item\label{lemma:small-solutions:i1} If $\psi$ has a solution, then it has one of bit size polynomial in the size of $\psi$.
        \item\label{lemma:small-solutions:i2} If $\psi$ has a polynomial bit size solution that is either larger or smaller than all roots of the polynomials in (in)equalities~of~$\psi$, 
        then $\psi$ has infinitely many solutions, 
        and vice versa.
    \end{enumerate}
\end{lemma}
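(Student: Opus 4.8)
The plan is to separate the solution set of $\psi$ into a bounded central region and two unbounded tails, using two basic facts. First, by the classical Cauchy/Lagrange bound on polynomial roots (cf.\ also \Cref{theorem:cucker} for the integer ones), every real root of a polynomial $g$ occurring in an (in)equality of $\psi$ has absolute value at most $C \coloneqq 1 + \max\{h(g) : g \text{ occurs in an (in)equality of }\psi\}$, and $\bitlength{C}$ is linear in the size of $\psi$; consequently each such $g$ has a constant sign on $(C,\infty)$ and on $(-\infty,-C)$. More generally, writing $\rho^{+}$ and $\rho^{-}$ for the largest and smallest such real root (with $\rho^{+}=-\infty$ and $\rho^{-}=+\infty$ if there are none), the conjunction of all (in)equalities of $\psi$ has a truth value that is constant on $\Z\cap(\rho^{+},\infty)$ and constant on $\Z\cap(-\infty,\rho^{-})$. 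Second, put $p\coloneqq\lcm\{d:(d\divides\cdot)\text{ occurs in }\psi\}$ (first rewriting any $0\divides g$ as $g=0$, and taking $p=1$ if no divisibilities remain); then $p$ divides the product of the $|d|$ over all these divisors, so $\bitlength{p}$ is linear in the size of $\psi$, and every divisibility atom of $\psi$ has the same truth value at $t$ and at $t+p$. Since $\psi$ is a positive (monotone) Boolean combination of its atoms, any two integers that agree on the truth values of all (in)equality atoms and of all divisibility atoms also agree on the truth value of $\psi$; I will also repeatedly use the trivial fact that every half-open real interval of length $p$ contains exactly one integer in each residue class modulo $p$.

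For the first assertion, let $t^{\ast}$ be a solution. If $|t^{\ast}|\le C$ then $t^{\ast}$ already has bit size linear in the size of $\psi$. Otherwise, say $t^{\ast}>C$ (the case $t^{\ast}<-C$ being symmetric): choose the unique integer $t'\in(C,C+p]$ with $t'\equiv t^{\ast}\pmod p$. By the first fact $t'$ and $t^{\ast}$ agree on all (in)equality atoms (both lie in $\Z\cap(C,\infty)$), and by the second they agree on all divisibility atoms, so $t'$ is again a solution, with $|t'|\le C+p$.

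For the second assertion, suppose first that $\psi$ has a solution $t^{\ast}$ larger than all real roots of its (in)equality polynomials, that is $t^{\ast}>\rho^{+}$ (the case $t^{\ast}<\rho^{-}$ being symmetric). Then for every $k\ge0$ the integer $t^{\ast}+kp$ still lies in $(\rho^{+},\infty)$ and is congruent to $t^{\ast}$ modulo $p$, hence is a solution; so $\psi$ has infinitely many solutions. Conversely, assume $\psi$ has infinitely many solutions. If its (in)equality polynomials have no real root, then ``larger than all roots'' holds vacuously and the bounded-size solution provided by the first assertion does the job. Otherwise $\rho^{-},\rho^{+}$ are finite, so $[\rho^{-},\rho^{+}]$ contains only finitely many integers and infinitely many solutions must lie in $(\rho^{+},\infty)$ or in $(-\infty,\rho^{-})$; taking one such solution $t^{\ast}$, say in $(\rho^{+},\infty)$, and the unique integer $t'\in(\rho^{+},\rho^{+}+p]$ congruent to it modulo $p$, the argument above shows that $t'$ is a solution with $t'>\rho^{+}$ (hence larger than all roots) and $|t'|\le C+p$.

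These arguments are routine; the only delicate point is the bookkeeping between regions — keeping the two unbounded tails, where the (in)equality part of $\psi$ is frozen and one may translate freely by multiples of $p$, distinct from the bounded middle $[-C,C]$, where any solution is automatically short — and selecting the shift within a length-$p$ window so that the (in)equality part and the divisibility part of $\psi$ are preserved at the same time.
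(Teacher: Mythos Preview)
Your proof is correct and follows essentially the same strategy as the paper's argument preceding the lemma: partition $\Z$ using the roots of the (in)equality polynomials into regions where their truth values are constant, exploit the periodicity of the divisibility atoms modulo $p$, and shift any solution into a length-$p$ window adjacent to the extreme roots. Your explicit use of the Cauchy bound on \emph{real} roots is, if anything, slightly more careful than the paper's sketch, but the underlying idea is identical.
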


\subparagraph*{The complexity of the existential fragment.}
\Cref{lemma:small-solutions} implies decidability of all the decision problems of  \emph{satisfiability}, \emph{universality} and \emph{finiteness}. 
We now analyse their complexity for the existential fragment of \PPA, 
establishing~\Cref{theorem:complexity}. 
For simplicity of the exposition, we keep assuming $t \geq 2$. 
Our reasoning can be extended in a straightforward way to all $t \in \Z$, 
following the discussion given at the beginning of~\Cref{sec:FO-procedure}.

First and foremost, we study the complexity of our quantifier elimination procedure.

\begin{lemma}\label{lemma:small-formula}
    \Cref{algo:master} (\Master) runs in non-deterministic polynomial time. 
\end{lemma}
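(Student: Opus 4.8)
The statement asserts that the whole procedure \Master\ (\Cref{algo:master}), run on any input $\exists \vec x : \phi$ with $\phi$ a positive Boolean combination of \PPA\ constraints, halts in non-deterministic polynomial time. Since \Master\ is a short pipeline --- pre-processing (the two \textbf{while} loops), then the sequential calls $\BoundedQE$, $\BoundedElimDiv$, $\ElimBounded$ --- the natural strategy is to analyse each stage in turn, showing (a) each stage runs in NP time \emph{as a function of the size of its input}, and (b) each stage produces an output of polynomial size. The chaining of these bounds then gives the result, provided the polynomial blow-ups do not compound into a non-polynomial tower; hence I will need to keep the \emph{output}-size bound of every stage at most linear (or low-degree polynomial) in the size of its input, so that the composition of four stages remains polynomial. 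I would fix, at the outset, the parameters that measure the size of a \PPA\ formula: number of constraints, number of variables, maximal degree and maximal height (bit-length of coefficients) of the polynomials in $\Z[t]$ occurring in it, and the nesting depth of the $\intdiv{\cdot}{t^d}$ and $(\cdot \bmod f)$ functions.

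\textbf{Stage by stage.} First, the two \textbf{while} loops in lines~\ref{line:master:while:frac}--\ref{line:master:replace-mod}: each iteration removes one occurrence of a $\intdiv{\cdot}{t^d}$ or $(\cdot\bmod f)$ subterm, introducing one fresh (existential or bounded) variable and two or three new linear constraints whose polynomial coefficients are those already present (plus at most $t^d$, whose bit-length is $d$, itself part of the input). The number of iterations is the total count of such subterms, which is linear in the input size; each iteration is clearly polynomial-time; and the polynomial coefficients do not grow in degree or height. So after pre-processing the formula is still polynomially sized, and purely linear. Second, $\BoundedQE$: \Cref{lemma:boundedqe:correctness} gives correctness; for the runtime I would invoke the Bareiss analysis sketched in \Cref{sec:gaussian-elimination} --- the key point, already argued there, is that the Desnanot--Jacobi division (line~\ref{gauss:divide}) keeps all guessed coefficients $f(t)$ of polynomial bit size, so the $\vec w$ loop runs $|\vec x|$ times with polynomial-time arithmetic on $\Z[t]$ (polynomial division via Euclid), and the bounds in $B_\beta$ are products $m(t)$ of divisors, which have polynomial degree and height. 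Hence $\BoundedQE$ is in NP and its output is polynomially sized. Third, $\BoundedElimDiv$ (\Cref{algo:elimdiv}): it loops once per divisibility constraint, each time appending one variable with bound $t^d$ where $d = (n+3)\cdot\max\{\dots\}$ (line~\ref{algo:bounded-elimi-div:max-degree}) --- crucially $d$ is polynomial in the size of its input since $n$, the $\bitlength{f_i}$ and the $\bitlength{B(w_i)}$ all are, so the new bounded variable has polynomial-size bound --- and replaces the constraint by one equality of the prescribed shape; this is plainly polynomial-time with polynomial-size output. Fourth, $\ElimBounded$ (\Cref{algo:qe}): the base-$t$ expansion step (lines~\ref{line:elimbounded:add-bounds}--\ref{line:elimbounded:div-x}) replaces each of the $|\vec w|$ bounded variables by $M+1$ fresh $t$-digits, where $M = \max\bitlength{B(w)}$ is polynomial, so the digit count stays polynomial; the \textbf{while} loop of line~\ref{line:elimbounded:while} runs at most $\deg(\vec y,\phi_\varnothing)$ times by \Cref{lemma:elimbounded:Dtwo}\ref{lemma:elimbounded:Dtwo:i2}, which is bounded by (number of constraints)$\times$(max coefficient degree) --- polynomial --- with each iteration guessing one bounded integer $r$ and doing polynomial-time term manipulation; then there is one inner call to $\BoundedQE$ on a polynomially sized linear instance (bounded by the just-established $\BoundedQE$ analysis) and a final round of integer guesses over ranges $[0,B'(w)]$ with $B'(w)$ integer, hence of polynomial bit size by \Cref{lemma:boundedqe:int} combined with the size bound on $m(t)$.

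\textbf{Putting it together, and the obstacle.} Composing the four size bounds: pre-processing takes $|\phi|$ to $\mathrm{poly}(|\phi|)$; $\BoundedQE$ takes that to another polynomial; likewise $\BoundedElimDiv$ and $\ElimBounded$; so the final formula and the total running time are $\mathrm{poly}(|\phi|)$, and all non-deterministic guesses are over polynomially sized ranges, giving membership in (functional) NP. The main obstacle --- and the step that deserves the most care in the write-up --- is controlling the polynomial \emph{degrees} of the $\Z[t]$-coefficients across the pipeline, because several stages introduce bounds of the form $t^d$ with $d$ itself a product of parameters (line~\ref{algo:bounded-elimi-div:max-degree}, and the implicit $\deg$ growth inside $\ElimBounded$), and because the inner $\BoundedQE$ call inside $\ElimBounded$ re-triggers the Bareiss machinery on a formula whose coefficients already have whatever degree the preceding \textbf{while} loop left them. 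I would therefore track a single quantity --- say the product of (number of constraints), (number of variables), (max coefficient degree $+1$) and (max coefficient height $+1$) --- and verify that each stage multiplies it by at most a fixed polynomial factor, invoking the Bareiss bound to prevent the degrees and heights from growing super-polynomially inside $\BoundedQE$; once that bookkeeping is done the NP-time claim is immediate.
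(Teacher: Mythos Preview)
Your proposal is correct and follows essentially the same approach as the paper: a stage-by-stage analysis tracking size parameters through the pipeline, with the Bareiss/Desnanot--Jacobi bound as the key ingredient for \BoundedQE, the $\vec y$-degree decrease (\Cref{lemma:elimbounded:Dtwo}) for the \textbf{while} loop of \ElimBounded, and \Cref{lemma:boundedqe:int} to ensure the final bounded quantifiers range over integers of polynomial bit size. The paper carries out exactly this bookkeeping, only with the parameters tracked separately ($\paratom$, $\parvars$, $\parfunc$, $\parconst$, $\parbound$) and with explicit polynomial bounds at each stage rather than your single product quantity; either packaging works.
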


\begin{proof}[Proof idea.]
For the proof we track the evolution of the following parameters as~\Cref{algo:master} executes, where $\phi$ is a formula, and $B$ is a map used for the bounded quantification: 
\begin{itemize}
    \item $\paratom(\phi) \coloneqq \big(\text{number of occurrences of atomic formulae in $\phi$}\big)$,
    \item $\parvars(\phi) \coloneqq \big(\text{number of variables in $\phi$}\big)$,
    \item $\parfunc(\phi) \coloneqq \big(\text{number of occurrences of $\floor{\frac{\cdot}{t^d}}$ and $(\cdot \bmod f(t))$  in an atomic formula of $\phi$}\big)$,
    \item $\parconst(\phi) \coloneqq \max\{ \bitlength{f} : \text{$f  \in \Z[t]$ occurs in $\phi$}\}$,
    \item $\parbound \coloneqq \max\{ 0, \bitlength{B(w)} : \text{$w$ is in the domain of $B$}\}$.
\end{itemize}
For example, for the formula $\gamma \coloneqq (g \divides f_1 \cdot x + \sfloor{\frac{f_2 \cdot x + f_3 \cdot \sfloor{\frac{y}{t}} + f_4}{t}} + f_5)$, 
we have ${\paratom(\gamma) = 1}$, $\parvars(\gamma) = 2$, 
$\parfunc(\gamma) = 2$ (two occurrences of $\floor{\frac{\cdot}{t}}$), 
$\parconst(\gamma) = \max\{\bitlength{g},\bitlength{f_1},\dots,\bitlength{f_5},\bitlength{t}\}$.
The bit size of~$\phi$ is polynomial in the values of these parameters.

We prove (in~Appendices~\ref{subapp:gaussian-elimination:A3}, \ref{subapp:complexity-elimbounded}, and~\ref{app:complexity-master}) 
that throughout the procedure each of these parameters remain 
polynomially bounded with respect to all the parameters of the input formula. 
For instance, during the first two steps of~\ElimBounded, the number of variables in the manipulated formulae~($\phi$ and~$\gamma$) increases at most polynomially in~$\parbound$, due to the \mbox{bit blasting} of the bounded variables. However, by the end of the procedure, it reduces to the number of free variables ---as expected by a quantifier elimination procedure.

While tedious, this proof is not dissimilar to the other complexity analyses of quantifier elimination procedures for~\PA; see, e.g.,~\cite{Weispfenning90}.
Once the evolution of the parameters is known, it is simple to show that~\Cref{algo:master} runs in non-deterministic polynomial time.
\end{proof}

Using this lemma, we can now obtain our results on the decision problems for existential formulae of \PPA.

\begin{proof}[Proof of~\Cref{theorem:complexity}]
    \begin{description}
        \item[Satisfiability.] 
            By~\Cref{lemma:small-formula} and~\Cref{lemma:small-solutions}.\ref{lemma:small-solutions:i1}, 
            if the input sentence $\gamma \coloneqq \exists \vec x : \phi(\vec x)$ is satisfiable (equivalently, valid), 
            then $\sem{\gamma}_k \neq \emptyset$ for some $k$ of bit size polynomial in the size of $\gamma$. 
            Observe that replacing $t$ for $k$ in~$\gamma$ yields an existential sentence of~Presburger arithmetic
            of size polynomial in the size of~$\gamma$. 
            Checking satisfiability for $\exists$\PA is a well-known~\np-complete problem~\cite{vonzurGathenS78}. 
            We conclude that the satisfiability problem for existential formulae of~\PPA is also~\np-complete.
        \item[Universality.] 
            By~\Cref{lemma:small-formula},
            \Cref{algo:master} can be implemented by a deterministic exponential time Turing machine $T$: 
            given an input sentence~$\gamma$, 
            $T$ computes an equivalent disjunction $\psi \coloneqq \bigvee_\beta \psi_\beta$ of exponentially many 
            polynomial-size formulae~$\psi_\beta$. 
            By~\Cref{lemma:properties-elimbounded}, 
            each $\psi_\beta$ is a positive Boolean combination of (in)equalities ${g(t) \sim 0}$ and divisibility constraints~${d \divides g(t)}$, with $g \in \Z[t]$ and ${d \in \N \setminus \{0\}}$. 
            From~\Cref{theorem:cucker}, all roots~$r_1 < \dots < r_n$ of polynomials in (in)equalities of~$\psi$ are of bit size polynomial
            in the size of $\gamma$.  
            However, the period ${p \coloneqq {\lcm\{d : \text{$(d \divides \cdot)$ occurs in $\psi$}\}}}$ may be of exponential bit size. 
            \smallbreak
            
            As a certificate asserting that $\gamma$ is a \emph{negative} instance, 
            we can take $\psi$, the sequence of configurations reached by $T$ when computing $\psi$ from $\gamma$, and a number $k \in [r_1-p,r_n+p]$. 
            The certificate is verified in polynomial time (in its size) by checking that the sequence of configurations is a valid run of~$T$ computing~$\psi$ from~$\gamma$, and that $k$ is not a solution to $\psi(t)$.
            Since the certificate has exponential size, universality is in \conexp. 
            (\conexp-hardness follows from the $\conexp$-hardness of the $\forall \exists^*$ fragment of \PA~\cite{Gradel89,Haase14}.) 
        \item[Finiteness.]   
            Equivalently, we show that the problem of checking whether a sentence~$\gamma$ is satisfiable for \emph{infinitely} many instantiations of $t$ is~\np-complete.
            The proof of~\mbox{\np-hardness} is trivial: for sentences without $t$, this problem 
            is equivalent to the satisfiability problem for $\exists\PA$. 
            For the \np membership, let us consider the formula $\bigvee_\beta \psi_\beta$ 
            computed from an input sentence $\gamma$ via~\Cref{algo:master}. 
            If $\gamma$ is satisfied by infinitely many values of~$t$, 
            then the same holds for least one of the formulae~$\psi_\beta$.
            By~\Cref{lemma:small-solutions}.\ref{lemma:small-solutions:i2}, 
            $\psi_\beta$ has infinitely many solutions if and only if 
            it has a solution~$k$ of bit size polynomial in the size of~$\gamma$, such that~$k$ is either larger or smaller than all roots of polynomials appearing in (in)equalities of $\psi_\beta$. 
            Then, as a certificate asserting that~$\gamma$ has infinitely many solutions 
            we can provide~$\psi_\beta$, the sequence of non-deterministic guesses made by~\Cref{algo:master} to compute $\psi_\beta$ from $\gamma$, and the value~$k$. This 
            certificate can be verified in polynomial time: first, run~\Cref{algo:master} using the provided sequence of guesses, and check that the output is $\psi_\beta$. Then, compute (in polynomial time)
            all roots of the polynomials appearing in the (in)equalities of $\psi_\beta$, 
            and verify that $k$ is a solution to $\psi_{\beta}$ that is either larger or smaller than all of them. \qedhere
    \end{description}
\end{proof}

\bibliography{bibliography}

\clearpage 
\appendix
\section{Extended material for Section~\ref{sec:gaussian-elimination}}
\label{app:gaussian-elimination}

The core of the arguments required for the correctness and complexity of~\BoundedQE stems directly from Appendix B in the full version of~\cite{ChistikovMS24}. 
(The key being that the arguments in that appendix work on any integral domain, not only the integers.) 
Throughout the appendix, let $\exists \vec x : \phi(\vec x, \vec z)$ be the formula in input of~\BoundedQE, 
where $\phi$ is a positive Boolean combination of linear \PPA constraints, $\vec x = (x_1,\dots,x_n)$
and $\vec z = (x_{n+1},\dots,x_g)$.

\subsection{How coefficients evolve throughout the procedure}
\label{subapp:gaussian-elimination:A1}
We start by studying the evolution of the coefficients in the formula $\phi$ throughout the procedure. 
For simplicity, fix a non-deterministic branch~$\beta$ of the algorithm, and let $\exists \vec w_\beta \leq B_\beta : \psi_\beta(\vec w_\beta, \vec z)$ be its output. Observe that the branch $\beta$ can be identified by at most $3n+2$ guesses:
\begin{itemize}
    \item The guess of a set $Z \subseteq \{f(t) : \text{the relation $(f(t) \divides \cdot)$ occurs in $\phi$}\}$ (line~\ref{gauss:set-Z}).
    \item The guess of a sign $\pm$ among $-$ or $+$ (line~\ref{gauss:guess-sign-mt}).
    \item For every $i \in [0,n-1]$, consider the $(i+1)$th iteration of the \textbf{foreach} loop of line~\ref{gauss:mainloop}. 
    This loop performs one guess in line~\ref{gauss:guess-all-zeros}, followed (if the execution reaches line~\ref{gauss:guess-equation}) by two guesses: one for the equality $f_i(t) \cdot x_{i+1} + \tau_i = 0$ (line~\ref{gauss:guess-equation}), and one for the sign $\pm_{i+1}$ among $-$ and $+$ (line~\ref{gauss:guess-sign}).
\end{itemize}
We can thus represent $\beta$ pictorially as follows:
\begin{align*} 
    \phi \xrightarrow{(Z,\pm)} (\phi_{0},m_0,\ell_0,\pm_0,B_0,\chi_0) \xrightarrow{\delta_0} (\phi_{1},m_{1},\ell_{1},\pm_1,B_1,\chi_1) \xrightarrow{\delta_1} \dots {}\\ 
    &\hspace{-4cm}{} \dots \xrightarrow{\delta_{n-1}} (\phi_{n},\chi_n,m_{n},\ell_{n},\pm_n,B_n,\chi_n) \to \psi_\beta.
\end{align*}
Here, each $\delta_i$ is $\top$ if the~$(i+1)$th iteration of the \textbf{foreach} loop executes the~\textbf{continue} statement of line~\ref{gauss:guess-all-zeros}, and otherwise it is the pair ${(f_i(t) \cdot x_{i+1} + \tau_i = 0,\, \pm_{i+1})}$.
The tuple $(\phi_0,m_0,\ell_0,\pm_0,\chi_0,B_0)$ refers to the state of the program after the execution of line~\ref{gauss:introduce-slack} ---hence after inserting the slack variables. In particular:
\begin{itemize}
    \item $\phi_0$ is the formula $\phi$ before the first iteration of the \textbf{foreach} loop of line~\ref{gauss:mainloop}. 
    \item $m_0(t)$ is the polynomial $m(t)$ constructed in line~\ref{gauss:guess-mt}.
    \item $\ell_0(t)$ is the polynomial $\ell(t)$ initialized as $1$ in line~\ref{gauss:bareissfactors}. Following the discussion in~\Cref{sec:gaussian-elimination}, is the current ``common factor'' implementing the optimization of Bareiss's algorithm.
    \item $\pm_0$ is the sign $+$, i.e., the value of the ``sign variable'' $\pm$ from line~\ref{gauss:bareissfactors}. Throughout the procedure this variable matches the sign of the polynomial $\ell(t)$.
    \item $B_0$ is the empty map (line~\ref{gauss:bareissfactors}). Recall that $B$ is the map used for the bounded quantifiers.
    \item $\chi_0$ is the formula $m(t) > 0$ defined in line~\ref{gauss:mt-is-positive}. As explained in~\Cref{sec:gaussian-elimination}, $\chi$ contains all signs guessed by the algorithm.
\end{itemize}
For every $i \in [1,n]$, the tuple $(\phi_i,m_i,\ell_i,\pm_i,B_i,\chi_i)$ contains (in order) the 
formula~$\phi$, the polynomial $m(t)$, the polynomial $\ell(t)$, the value of the variable $\pm$, the map $B$, and the formula $\chi$, that are obtained just \emph{after} the $i$th iteration of the body of the \textbf{foreach} loop of line~\ref{gauss:mainloop}. Some observations: 
\begin{itemize}
    \item If $\delta_{i} = (f_i(t) \cdot x_{i+1} + \tau_i = 0,\, \pm_{i+1})$, then $\ell_{i+1}(t)$ is the polynomial $f_i(t)$ (see line~\ref{gauss:update-factor}), and $(\pm_{i+1}f_i(t)) > 0$ is one of the conjuncts in $\chi_{i+1}$ (see line~\ref{gauss:assert-sign-ft}).
    \item If $\delta_{i-1} = \top$, then $(\phi_{i},m_{i},\ell_{i},\pm_i,\chi_i) = (\phi_{i-1},m_{i-1},\ell_{i-1},\pm_{i-1},\chi_{i+1})$, that is, only the map $B$ is updated.
\end{itemize}

Below, let $\vec y = (y_1,\dots,y_r)$ be the slack variables added in line~\ref{gauss:introduce-slack}.
For simplicity, we work under the following assumption: 
\begin{description}
    \item[($\ast$)]\label{gauss:an-assumption} The variables $x_1,\dots,x_n$ are rearranged in such a way that the $\delta_i$ that are equal to $\top$ all follow those of the form ${(f_i(t) \cdot x_{i+1} + \tau_i = 0,\, \pm_i)}$. That is to say, we assume that, for some $q \in [0,n]$, 
    the \textbf{continue} statement of line~\ref{gauss:guess-all-zeros} is only executed in the last $n-q$ iterations of the \textbf{foreach} loop of line~\ref{gauss:mainloop}. 
\end{description}
Since for the moment we are only interested in how the formula $\phi$ is updated during the procedure, and $\delta_{i-1} = \top$ implies that
$(\phi_{i},m_{i}(t),\ell_{i}(t),\pm_i,\chi_i) = (\phi_{i-1},m_{i-1}(t),\ell_{i-1}(t),\pm_{i-1},\chi_{i-1})$, 
the assumption \textbf{($\ast$)} is without loss of generality.
Therefore, $\delta_i = {(f_i(t) \cdot x_{i+1} + \tau_i = 0,\, \pm_i)}$ 
for every~${i \in [0,q-1]}$, and $\phi_q = \phi_{q+1} = \dots  = \phi_n$. 
Observe also that, then, the variables $x_1,\dots,x_q$ are eliminated via substitution in line~\ref{gauss:vigorous} (opportunely bounding a slack variable if necessary, see line~\ref{gauss:branch}), 
whereas variables $x_{q+1},\dots,x_n$ will be bounded in line~\ref{gauss:subst-rem}.

Suppose that $\phi_0$ has $e$ equalities and $d$ divisibility constraints. 
Our aim is to associate to each formula $\phi_i$ among $\phi_0,\dots,\phi_q$ a matrix $M_i$ with entries in $\Z[t]$. Intuitively, $M_i$ will be storing all the atomic formulae of $\phi_i$ (with repetitions) in some specific order. First, let us note some straightforward observations on the updates performed in lines~\ref{gauss:vigorous}--\ref{gauss:divide}:
\begin{enumerate}
    \item\label{gauss:proof:i1} For every equality $f(t) \cdot x + \tau = 0$ the equality $(f(t) \cdot x + \tau = 0)\vigsub{\frac{-\tau}{f(t)}}{x}$ is $(0 = 0)$.
    \item\label{gauss:proof:i2} There is a one-to-one mapping between the equalities and divisibility constraints in $\phi$ and those in $\phi\vigsub{\frac{-\tau}{f(t)}}{x}$ (i.e., the substitutions are applied locally to each atomic formula).
    \item\label{gauss:proof:i3} The divisions performed in line~\ref{gauss:divide} preserves the one-to-one mapping of the substitutions (they are also applied locally to each atomic formula).
\end{enumerate}
Because of these properties, we can fix an enumeration 
\begin{equation}
    \label{gauss:proof:enum}
    (\rho_1 = 0),\ \dots,\ (\rho_e = 0),\ (h_{e+1}(t) \divides \rho_{e+1}),\ \dots,\ (h_{e+d}(t) \divides \rho_{e+d}),
\end{equation}
of all the atomic formulae of $\phi_0$. The last $d$ elements of the enumeration correspond to the divisibility constraints in $\phi_0$, in any order. The first $e$ elements of the enumeration correspond to the equalities in $\phi_0$, 
picked in such a way that, for every $i \in [1,k]$, $\rho_i = 0$ is in one-to-one correspondence with the equality $f_{i-1}(t) \cdot x_i + \tau_{i-1} = 0$ from $\phi_{i-1}$ 
that is guessed in line~\ref{gauss:guess-equation}. 
In other words, executing the following program on $\rho_i = 0$ yields $f_{i-1}(t) \cdot x_i + \tau_{i-1} = 0$:
\begin{algorithmic}
    \For{$j$ from $1$ to $i-1$}
        \State apply $\vigsub{\frac{-\tau_{j-1}}{f_{j-1}(t)}}{x_j}$ to $\rho_i = 0$
        \State divide $\rho_i = 0$ by the common factor $\ell_{j-1}(t)$.
    \EndFor
\end{algorithmic}
In particular, $\rho_1 = 0$ is equal to $f_0(t) \cdot x_1 + \tau_0 = 0$. 
We denote the enumeration in~\Cref{gauss:proof:enum} by $E^{(0)}$, and write 
$E_j^{(0)}$ for its $j$th entry ($j \in [1,e+d]$). 

Starting from the enumeration in~\Cref{gauss:proof:enum}, we build enumerations~$E^{(1)},\dots,E^{(q)}$ for all the constraints in~$\phi_1,\dots,\phi_q$, perhaps excluding some trivial equalities $0 = 0$.
Let us first explicitly derive $E^{(1)}$ as an example. An enumeration for the atomic formulae of $\phi_1$ can be obtained by applying to each element of the enumeration in~\Cref{gauss:proof:enum} the substitution $\vigsub{\frac{-\tau_0}{f_0(t)}}{x_1}$ (in general we also need to divide by the common factor $\ell_{i-1}(t)$, but for $i = 1$ we have $\ell_{0}(t) = 1$), \underline{except} that the resulting enumeration misses the divisibility $f_0(t) \divides \tau_0$ added in line~\ref{gauss:restore}. 
However, from~\Cref{gauss:proof:i1} above we know that $(\rho_1 = 0)\vigsub{\frac{-\tau_0}{f_0(t)}}{x_1}$ is equal to $(0 = 0)$. 
Hence, instead of adding a constraint to the enumeration, let us simply replace $\rho_1 = 0$ with $f_0(t) \divides \tau_0$.
(This makes perfect sense in relation with Gaussian Elimination: we are using $\rho_1 = 0$ to remove $x_1$ from \emph{all other} constraint; except for $\rho_1 = 0$ itself which we recast in terms of the divisibility~${f_0(t) \divides \tau_0}$.)
The enumeration for $\phi_1$ is then 
\[
    \textstyle(f_0(t) \divides \tau_0),\ E_2^{(0)}\vigsub{\frac{-\tau_0}{f_0(t)}}{x_1},\ \dots,\ E_{e+d}^{(0)}\vigsub{\frac{-\tau_0}{f_0(t)}}{x_1}.
\]
In general, for $i \in [1,q]$, let $E^{(i-1)} \coloneqq (E_1^{(i-1)},\dots,E_{e+d}^{(i-1)})$ be the enumeration associated to~$\phi_{i-1}$.
The enumeration $E^{(i)} \coloneqq (E_1^{(i)},\dots,E_{e+d}^{(i)})$ associated to $\phi_i$ is given by the  program:
\begin{algorithmic}
    \For{$j$ from $1$ to $e+d$, except for $j = i$} 
        \State $E_j^{(i)} \gets E_j^{(i-1)}\vigsub{\frac{-\tau_{j-1}}{f_{j-1}(t)}}{x_j}$
        \State divide $E_j^{(i)}$ by $\ell_{i-1}(t)$
    \EndFor
    $E_i^{(i)} \gets (f_{i-1}(t) \divides \tau_{i-1})$
\end{algorithmic}
The following properties follows from the definition of $E^{(i)}$ and the updates done in lines~\ref{gauss:vigorous}--\ref{gauss:restore}: 
\begin{enumerate}
    \setcounter{enumi}{3}
    \item Every constraint from $\phi_i$, except for trivial constraints $0=0$, occur in $E^{(i)}$.
    \item Each constraint in $E^{(i)}$ occurs in $\phi_i$.
    \item[] (Therefore, $E^{(i)}$ only features variables from $x_{i+1},\dots,x_g$ and $\vec y$.)
    \item The first $i$ entries of $E^{(i)}$, as well as the last $d$ entries, are divisibility constraints. All remaining entries are equalities.
\end{enumerate}

We now move from the ``enumeration view'' of the constraints in $\phi_0,\dots,\phi_q$ to a ``matrix view'' 
from which we will be able to directly appeal to the results in~\cite{ChistikovMS24}.
For $i \in [0,q]$, we associate to $\phi_i$ the matrix $M_i = \begin{bmatrix} G_i \mid A_i \mid \vec c_i \mid S_i \mid D_i \end{bmatrix}$ (with entries over $\Z[t]$), where
\begin{itemize}
    \item $G_i$ is a $(e+ d) \times i$ (rectangular) diagonal matrix having in position $(j,j) \in [1,i] \times [1,i]$ the divisor of the divisibility constraint $E_j^{(i)}$.
    \item $A_i$ is a $(e+ d) \times (g-i)$ matrix. For every $j \in [1,e+d]$ and $k \in [1,g-i]$, the entry of $A_i$ in position $(j,k)$ is the coefficient of the variable $x_{i+k}$ in the constraint $E_j^{(i)}$. 
    \item $\vec c_i$ is a $(e + d)$ column vector. For every $j \in [1,e+d]$, the $j$th entry of $\vec c_i$ correspond to the constant (i.e., the element in $\Z[t]$ that does not appear as a coefficient of any variable) of the term in the constraint~$E_j^{(i)}$. 
    This term is $\tau$ for divisibility constraints $h(t) \divides \tau$.
    \item $S_i$ is a $(e + d) \times r$ matrix. For every $j \in [1,e+d]$ and $k \in [1,r]$, the entry of $S_i$ in position $(j,k)$ is the coefficient of the slack variable $y_k$ in the constraint $E_j^{(i)}$. 
    \item $D_i$ is a $(e + d) \times d$ matrix. Its first $e$ rows are all zero. The last $d$ rows form a diagonal matrix: the entry in position $(e+j,j)$ with $j \in [1,d]$ is the divisor of the divisibility constraint~$E_{e+j}^{(i)}$.
\end{itemize}
The matrices $M_0,\dots,M_q$ are just another way of encoding the constraints in the enumerations $E^{(0)},\dots,E^{(q)}$. 
The key observation from~\cite{ChistikovMS24} is that $M_1,\dots,M_q$ also corresponds to the sequence of matrices obtained 
by performing Bareiss algorithm on the first $q$ columns of the matrix, diagonalizing them (equivalently, removing $x_1,\dots,x_q$).
Following~\cite{Bareiss68}, \mbox{\cite[Appendix B]{ChistikovMS24}} characterizes the entries of $M_1,\dots,M_q$ from the entries of $M_0$. 
This in turn characterizes all polynomials in $\Z[t]$ occurring in the formulae $\phi_1,\dots,\phi_q$.

\begin{proposition}[{\!\!\cite[Appendix B]{ChistikovMS24}}]
    \label{prop:gauss-fundamental}
    For every $i \in [1,q]$, the matrix $M_i$ satisfies: 
    \begin{enumerate}[I.]
        \item\label{prop:gauss-fundamental:i1} For every row $j \in [1,i]$ and column $k \in [1,g+1+r+d]$, the entry in position $(j,k)$ of $M_i$ is equal to the determinant of the matrix $M_0[1,\dots,i;\ 1,\dots,j-1,k,j+1,\dots,i]$.
        \item[] (Hence, $G_i$ is a diagonal matrix whose non-zero entries are all $\det(M_0[1,\dots,i; 1,\dots,i])$.)

        \item\label{prop:gauss-fundamental:i2} For row $j \in [i+1,e+d]$ and column $k \in [1,i]$, the entry in position $(j,k)$ of $M_i$ is $0$.
        \item\label{prop:gauss-fundamental:i3} For every row $j \in [i+1,e+d]$ and column $k \in [i+1,g+1+r+d]$, the entry in position $(j,k)$ of $M_i$ is equal to the determinant of the matrix $M_0[1,\dots,i,j;\ 1,\dots,i,k]$.
    \end{enumerate}
    Moreover: 
    \begin{enumerate}[I.]
        \setcounter{enumi}{3}
        \item\label{prop:gauss-fundamental:i4} The coefficient $f_{i}(t)$ of $x_{i+1}$ in the equality guessed in line~\ref{gauss:guess-equation} during the $(i+1)$th iteration of the \textbf{foreach} loop of line~\ref{gauss:mainloop} 
        is $\det(M_0[1,\dots,i+1; 1,\dots,i+1])$.
        \item\label{prop:gauss-fundamental:i5} The factor used for the division of line~\ref{gauss:divide} during the $(i+1)$th iteration of the \textbf{foreach} loop of line~\ref{gauss:mainloop} is~$f_{i-1}(t)$ 
        (postulating $f_{-1}(t) \coloneqq 1$). These divisions are without remainder.
    \end{enumerate}
\end{proposition}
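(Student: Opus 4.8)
The plan is to substantiate the claim made above, that $M_1,\dots,M_q$ are the matrices obtained by running the first $q$ steps of Bareiss' fraction-free Gaussian elimination on $M_0$, and then to import the classical analysis of that algorithm. As observed in~\cite{ChistikovMS24}, this analysis uses only that the entries live in an integral domain ---here, $\Z[t]$. We argue by induction on $i\in[1,q]$, proving Items~\ref{prop:gauss-fundamental:i1}--\ref{prop:gauss-fundamental:i3} simultaneously with the auxiliary claim that $f_{i-1}(t)=\det\big(M_0[1,\dots,i;1,\dots,i]\big)$ and that this polynomial is nonzero in $\Z[t]$, so that~\Cref{thm:desnanot-jacobi} is applicable. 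The base case $i=1$ is immediate: by construction $E^{(1)}$ is obtained from $E^{(0)}$ by the substitution $\vigsub{\frac{-\tau_0}{f_0(t)}}{x_1}$ with no division (the common factor initialised in line~\ref{gauss:bareissfactors} being $f_{-1}(t)=1$), and $f_0(t)$, the coefficient of $x_1$ in $E^{(0)}_1=f_0(t)\cdot x_1+\tau_0=0$, is nonzero because line~\ref{gauss:guess-equation} may only guess an equality in which $x_1$ occurs.

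\emph{The loop body is one Bareiss step.} Fix $i\in[0,q-1]$ and consider the $(i+1)$th iteration of the \textbf{foreach} loop of line~\ref{gauss:mainloop}. By the way the enumeration in~\Cref{gauss:proof:enum} was chosen, the equality guessed in line~\ref{gauss:guess-equation} is $E^{(i)}_{i+1}=f_i(t)\cdot x_{i+1}+\tau_i=0$, and its pivot coefficient $f_i(t)$ is the $(i+1,i+1)$ entry of $M_i$ ---for $i\ge1$ by the inductive hypothesis (Item~\ref{prop:gauss-fundamental:i3}), and for $i=0$ since $E^{(0)}_1=f_0(t)\cdot x_1+\tau_0=0$ by definition--- so $f_i(t)=\det\big(M_0[1,\dots,i+1;1,\dots,i+1]\big)$, and it is nonzero because $x_{i+1}$ occurs in this equality. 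Unfolding the vigorous substitution $\phi\vigsub{\frac{-\tau_i}{f_i(t)}}{x_{i+1}}$ at the level of the rows of $M_i$: step~\textbf{1} multiplies every row by $f_i(t)$ (for a divisibility row, also its divisor entry in $G_i$ or $D_i$); step~\textbf{2} replaces, in each row $j\neq i+1$, the $x_{i+1}$-monomial ---whose coefficient is now a multiple of $f_i(t)$--- by the corresponding multiple of $-\tau_i$, so that the entry in column $k$ of row $j$ becomes the Bareiss numerator $f_i(t)\cdot a^{(i)}_{j,k}-a^{(i)}_{j,i+1}\cdot a^{(i)}_{i+1,k}$ (writing $a^{(i)}_{p,q}$ for the $(p,q)$ entry of $M_i$); line~\ref{gauss:divide} then divides that row by the common factor, which line~\ref{gauss:update-factor} has just set to $f_{i-1}(t)$ (with $f_{-1}(t)\coloneqq1$), i.e.\ to the previous pivot. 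The pivot row $i+1$ is handled separately: the substitution trivialises $f_i(t)\cdot x_{i+1}+\tau_i=0$ to $0=0$, and line~\ref{gauss:restore} recasts it as $f_i(t)\divides\tau_i$, placing $f_i(t)$ at position $(i+1,i+1)$ of $G_{i+1}$ and the coefficients of $\tau_i$ ---which coincide with those of $E^{(i)}_{i+1}$ in every column other than $i+1$--- in the remaining positions of that row, exactly as Bareiss' algorithm records a pivot. Rows in which $x_{i+1}$ does not occur, and the last $d$ divisibility rows, merely get multiplied by $f_i(t)$ and divided by $f_{i-1}(t)$ ---again a special case of the update. Since the constant column, the slack columns, and the divisor columns are treated uniformly by all of this, $M_{i+1}$ is precisely the matrix obtained from $M_i$ by one Bareiss step pivoting at position $(i+1,i+1)$.

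\emph{Closing the induction.} Items~\ref{prop:gauss-fundamental:i1}--\ref{prop:gauss-fundamental:i3} are now the standard description of the intermediate matrices of fraction-free Gaussian elimination: after $i+1$ steps, each of the rows $1,\dots,i+1$ carries the Cramer-type minors $\det\big(M_0[1,\dots,i+1;1,\dots,j-1,k,j+1,\dots,i+1]\big)$ of Item~\ref{prop:gauss-fundamental:i1} (so $G_{i+1}$ is diagonal with all nonzero entries equal to $\det\big(M_0[1,\dots,i+1;1,\dots,i+1]\big)$); every row $\ge i+2$ is zero in columns $1,\dots,i+1$; and for rows $j\ge i+2$ and columns $k\ge i+2$ the entry at $(j,k)$ is the connected minor $\det\big(M_0[1,\dots,i+1,j;1,\dots,i+1,k]\big)$. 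The inductive step for these identities is an application of~\Cref{thm:desnanot-jacobi}, exactly as in the classical correctness proof of Bareiss' algorithm: it guarantees that the Bareiss numerator at position $(j,k)$ is divisible by the previous pivot $f_{i-1}(t)$, with quotient $\det\big(M_0[1,\dots,i+1,j;1,\dots,i+1,k]\big)$; hence the division of line~\ref{gauss:divide} is without remainder ---which is Item~\ref{prop:gauss-fundamental:i5}--- and produces the claimed $(i+2)$-minor. Item~\ref{prop:gauss-fundamental:i4} finally follows by reading off the pivot of the next iteration: the coefficient of $x_{i+1}$ in the equality guessed at the $(i+1)$th iteration is the $(i+1,i+1)$ entry of $M_i$, which by Item~\ref{prop:gauss-fundamental:i3} equals $\det\big(M_0[1,\dots,i+1;1,\dots,i+1]\big)$ and is nonzero ---which is also the statement needed to carry the induction forward. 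I expect the only genuine difficulty to be this bookkeeping: matching the vigorous substitution (including the scaling of divisors on divisibility rows, and the recasting of $0=0$ as $f_i(t)\divides\tau_i$ on the pivot row) to the Bareiss template, and using assumption~\textbf{($\ast$)} to justify treating iterations $0,\dots,q-1$ as consecutive Bareiss steps. After this dictionary is fixed, the statement reduces verbatim to the integral-domain version of the argument in~\cite[Appendix~B]{ChistikovMS24}.
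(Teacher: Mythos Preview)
Your plan---identifying the loop body with a Bareiss step and invoking~\Cref{thm:desnanot-jacobi}---is correct for Items~\ref{prop:gauss-fundamental:i2}--\ref{prop:gauss-fundamental:i5}, and your handling of the \emph{newly created} pivot row $i+1$ (it inherits the Item~\ref{prop:gauss-fundamental:i3} minors of $E^{(i)}_{i+1}$) is fine. The gap is in propagating Item~\ref{prop:gauss-fundamental:i1} for the \emph{earlier} pivot rows $1,\dots,i$. These are divisibility constraints that are \emph{also} hit by the vigorous substitution at step $i+1$: they may still contain $x_{i+1}$, and in any case their divisors get rescaled. For $j\le i$ the entry at $(j,k)$ becomes $\big(f_i(t)\cdot a^{(i)}_{j,k}-a^{(i)}_{j,i+1}\cdot a^{(i)}_{i+1,k}\big)/f_{i-1}(t)$, and you must show both that this division is exact and that the quotient equals the Cramer-type minor $\det M_0[1,\dots,i+1;\,1,\dots,j-1,k,j+1,\dots,i+1]$. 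But~\Cref{thm:desnanot-jacobi} as stated requires both the row and the column index to lie strictly \emph{below} the pivot; it says nothing when $j\le i$. What the procedure does to rows $1,\dots,i$ is Gauss--Jordan (eliminate above the pivot as well), not forward Bareiss (which leaves pivot rows untouched), so calling this ``the standard description of the intermediate matrices of fraction-free Gaussian elimination'' elides precisely the non-obvious part.

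The paper isolates this explicitly: Lemma~9 of~\cite{ChistikovMS24} handles only rows $j>i$ and in fact \emph{excludes} the divisibilities added in line~\ref{gauss:restore} from its exactness claim; Item~\ref{prop:gauss-fundamental:i1} and the remaining exactness are then obtained via Lemmas~14 and~15 there, whose argument does not go through Desnanot--Jacobi but through matrix inversion over the field of fractions of $\Z[t]$---the one step where moving from $\Z$ to a general integral domain requires care. If you wish to stay purely determinantal you would need a Sylvester-type identity adapted to the above-pivot update; invoking~\Cref{thm:desnanot-jacobi} alone does not close the induction.
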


\begin{proof}[Where to find the proof.]
    Item~(a) and Item~(b) in Lemma 9 of the full version of~\cite{ChistikovMS24} 
    imply
    \Cref{prop:gauss-fundamental:i3}, \Cref{prop:gauss-fundamental:i4}, 
    and the fact that factor used for the division of line~\ref{gauss:divide} during the $(i+1)$th iteration of the \textbf{foreach} loop of line~\ref{gauss:mainloop} is~$f_{i-1}(t)$.
    The proofs of Item~(a) and Item~(b) follow~\cite{Bareiss68}, which iteratively uses the Desnanot--Jacobi identity.
    Lemma 9 also shows that the divisions are all without remainder for all constraints in the formulae $\phi_0,\dots,\phi_q$, except for the divisibility constraints added in line~\ref{gauss:restore} (see Item~(c) of Lemma 9). 
    Under the assumption that the divisibilities performed in line~\ref{gauss:divide} are also without remainder for the added divisibility constraints, Item~(a) from Lemma~14 of~\cite{ChistikovMS24} establishes \Cref{prop:gauss-fundamental:i1}. The assumption is ultimately discharged in Lemma~15 of~\cite{ChistikovMS24}: all divisions performed in line~\ref{gauss:restore} are without remainder (which completes the proof of~\Cref{prop:gauss-fundamental:i5}).
    Lastly, \Cref{prop:gauss-fundamental:i2} follows directly from the definition of~$G_i$, which is a $(e+d) \times i$ diagonal matrix.

    A note: for the most part, it is straightforward to see that the aforementioned lemmas in~\cite[Appendix B]{ChistikovMS24} work on any integral domain (a non-zero commutative ring in which the product of non-zero elements is non-zero). The only non-trivial step is given in the proof of~\cite[Lemma~14]{ChistikovMS24}. 
    There, the proof requires taking an inverse $N^{-1}$ of some integer matrix~$N$ with non-zero determinant. Recall that a matrix is invertible over the integers only if the determinant is $\pm 1$. However, because we only know $\det N \neq 0$, this step of the proof considers matrices over the rationals~$\Q$ instead, so that $N^{-1}$ is guaranteed to exist. (This is not a problem, because in any case the proof of~\cite[Lemma~14]{ChistikovMS24} then continues by showing that the entries of~$M_i$ are indeed the ring elements given in the statement of~\Cref{prop:gauss-fundamental}.)
    For an arbitrary integral domain (such as $\Z[t]$), instead of $\Q$ we can use its \emph{field of fractions}, that is the smallest field in which the integral domain can be embedded.
    Such a field exists for any integral domain.
\end{proof}

\Cref{prop:gauss-fundamental} will be very useful for both the proofs of correctness and complexity of the algorithm. We will also need the following property on the coefficients of the slack variables.

\begin{lemma}
    \label{lemma:same-slack-everywhere}
    For all $i \in [0,q]$, all slack variables in $\phi_i$ that are not assigned a value by $B_i$
    occur in $\phi_i$ with identical coefficients, namely $f_{i-1}(t)$ (setting $f_{-1}(t) \coloneqq 1$).
    Moreover, each of these slack variables occur in a single constraint of $\phi_i$, it is an equality.
\end{lemma}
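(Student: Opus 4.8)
The plan is to work entirely within the ``matrix view'' of $\phi_0,\dots,\phi_q$ set up above, reading off the needed entries of $M_0,\dots,M_q$ as minors of $M_0$ through \Cref{prop:gauss-fundamental}. The key observation is about $M_0$ itself: by line~\ref{gauss:introduce-slack}, each slack variable $y_k$ is inserted, with coefficient~$1$, into the single equality obtained from the $k$-th inequality of the input, and occurs in no other constraint, since the input formula is slack-free. Hence, in $M_0$, the column $c_k$ attached to $y_k$ is the standard basis vector $\vec{e}_{j_k}$ --- its only non-zero entry is a $1$ in the row $j_k\in[1,e]$ that is the position of that equality in the enumeration $E^{(0)}$, and the indices $j_1,\dots,j_r$ are pairwise distinct. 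I also recall that, by the bookkeeping on the enumerations, in every $M_i$ the rows $1,\dots,i$ and $e+1,\dots,e+d$ carry divisibility constraints, while the rows $i+1,\dots,e$ carry equalities.

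The first step is to identify exactly when $y_k$ is assigned a value by $B_i$. The only update putting a slack into $B$ is line~\ref{gauss:append-seq}, which fires during the $j'$-th iteration (with $j'\le q$) precisely when $y_k$ occurs in the term $\tau_{j'-1}$ of the guessed pivot equality $E^{(j'-1)}_{j'}$, i.e.\ when the $c_k$-column of $M_{j'-1}$ has a non-zero entry in row $j'$. Reading this entry off with \Cref{prop:gauss-fundamental:i1} and \Cref{prop:gauss-fundamental:i3} (for $j'=1$ one simply inspects $M_0$), and using that the $c_k$-column of $M_0$ is $\vec{e}_{j_k}$, one finds it non-zero iff $j'=j_k$: any minor of $M_0$ that uses the $c_k$-column but does not select row $j_k$ contains a zero column. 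Thus $y_k$ is bounded exactly at iteration $j_k$ when $j_k\le q$, and never bounded when $j_k>q$; so, for $i\in[0,q]$, the slack $y_k$ is \emph{not} assigned by $B_i$ if and only if $j_k>i$.

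The second step computes, for a slack with $j_k>i$, its column $c_k$ in $M_i$. The same minor bookkeeping (\Cref{prop:gauss-fundamental:i1} and \Cref{prop:gauss-fundamental:i3}) gives that every entry in rows $1,\dots,i$ vanishes (there the $\vec{e}_{j_k}$-column restricts to zero), that every entry in a row $p\in[i+1,e+d]$ with $p\neq j_k$ vanishes (again a zero column appears), and that the entry in row $j_k$ equals $\det M_0[1,\dots,i;\,1,\dots,i]$, obtained by expanding the $(i+1)\times(i+1)$ minor $M_0[1,\dots,i,j_k;\,1,\dots,i,c_k]$ along its last column $(0,\dots,0,1)^{\top}$. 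By \Cref{prop:gauss-fundamental:i4} this number is $f_{i-1}(t)$ (for $i=0$ it is the empty determinant $1=f_{-1}(t)$, which also follows at once from line~\ref{gauss:introduce-slack}), and it is a non-zero polynomial, being the coefficient of $x_i$ in an equation that contains $x_i$. Hence the $c_k$-column of $M_i$ is non-zero only in row $j_k$, where it equals $f_{i-1}(t)$; and since $i<j_k\le e$, that row carries an equality. Unwinding the matrix encoding, this is exactly the assertion that $y_k$ occurs in $\phi_i$ in a single constraint, an equality, with coefficient $f_{i-1}(t)$; letting $y_k$ range over all slack variables unassigned by $B_i$ finishes the proof.

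The determinant expansions are routine, thanks to the extreme sparsity of the columns $\vec{e}_{j_k}$. The point that needs care is aligning the informal picture ``every still-unbounded slack sits, alone, in one equation, with coefficient the current Bareiss factor'' with the precise indexing of \Cref{prop:gauss-fundamental}: concretely, (i)~checking that line~\ref{gauss:append-seq} bounds $y_k$ at exactly the iteration in which its home equation is used as a pivot --- hence never, when that equation is one of the ``leftover'' ones never chosen as a pivot; (ii)~confirming that in this leftover case ($j_k>q$) the slack indeed stays unbounded through all of $\phi_0,\dots,\phi_q$, as the statement requires; and (iii)~the degenerate case $i=0$, where the relevant minor of $M_0$ is empty and equals~$1$.
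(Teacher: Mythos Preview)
Your proposal is correct and arrives at the same key computation as the paper: both read off the coefficient of an unassigned slack as the minor $\det M_0[1,\dots,i,j_k;\,1,\dots,i,c_k]$, expand along the last column $(0,\dots,0,1)^\top$, and identify the result as $\det M_0[1,\dots,i;\,1,\dots,i]=f_{i-1}(t)$ via \Cref{prop:gauss-fundamental}.\ref{prop:gauss-fundamental:i4}.

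The one substantive presentational difference is in how single-occurrence is established. The paper argues it directly by induction on the algorithm's execution: line~\ref{gauss:append-seq} bounds any slack present in the pivot term~$\tau$ \emph{before} the vigorous substitution is applied, so the substitution can only spread already-bounded slacks, and every still-unassigned slack therefore remains confined to its home equality. You instead stay entirely in the matrix view and compute the whole $c_k$-column of $M_i$ via \Cref{prop:gauss-fundamental}.\ref{prop:gauss-fundamental:i1} and~\ref{prop:gauss-fundamental:i3}, showing all entries outside row~$j_k$ vanish. Both routes are valid; yours is more uniform, the paper's is slightly more elementary for that part.

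One small imprecision worth fixing: in Step~1 you assert the pivot-row entry is ``non-zero iff $j'=j_k$'', but your justification (``any minor using the $c_k$-column but not selecting row~$j_k$ has a zero column'') only shows the entry vanishes for $j'<j_k$; for $j'>j_k$ the minor \emph{does} select row~$j_k$ and need not vanish. This does not damage the argument --- by the time $j'>j_k$ the slack is already in $B$ and line~\ref{gauss:branch} ignores it --- but the ``iff'' as stated is unjustified. The clean statement is: the entry vanishes for $j'<j_k$, equals $f_{j_k-2}(t)\neq 0$ for $j'=j_k$, and is irrelevant thereafter.
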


\begin{proof}
    See Lemmas~11 and~12 from~\cite{ChistikovMS24} a fully detailed proof. 
    Briefly, the second statement of the lemma is simple to show: 
    it is clearly true for $\phi_0$, and in all substitutions~$\vigsub{\frac{-\tau}{f(t)}}{x}$ performed by the algorithm, the term $\tau$ only feature slack variables (if any) for which the map $B$ already assigns a value. Since substitutions are local to each atomic formula, it is thus not possible
    for an ``unassigned'' slack variable~$y$ to end up in multiple constraints. 
    Then, in the column~$k$ of $M_i$ corresponding to the variable~$y$, there is at most one non-zero coefficient, and it is located in a row~$j$ among the first $e$ ($y$ occurs in an equality). More precisely, because of how the matrix $S_i$ is defined, $j > i$ ($y$ is unassigned). 
    Then, we are in the case of~\Cref{prop:gauss-fundamental}.\ref{prop:gauss-fundamental:i3}, 
    and the coefficient of $y$ in $\phi_i$ is $\det (M_0[1,\dots,i,j;\ 1,\dots,i,k])$. 
    The last column of $M_0[1,\dots,i,j;\ 1,\dots,i,k]$ is $(0,\dots,0,1)$, 
    and so this determinant is equal to $\det (M_0[1,\dots,i;\ 1,\dots,i])$. 
    By~\Cref{prop:gauss-fundamental}.\ref{prop:gauss-fundamental:i4}, this is equal to $f_{i-1}(t)$.
\end{proof}

\LemmaBoundedQEInt* 

\begin{proof}
    We look at the polynomial $m(t)$ initially defined in line~\ref{gauss:guess-mt} 
    and used to define the range of all maps~$B_\beta$. 
    In any run of the algorithm, this polynomial is built from polynomials appearing 
    as divisors of divisibility constraints in~$\phi$ (which are here assumed to be integers), 
    together with polynomials $f_i(t)$ considered in~\Cref{prop:gauss-fundamental:i4} of~\Cref{prop:gauss-fundamental}. 
    Since in $\phi$ all coefficients of variables in $\vec x$ are integers, 
    from~\Cref{prop:gauss-fundamental} and by definition of the matrix~$M_q$ 
    we conclude that all $f_i(t)$ are integers. 
    Therefore, $m(t)$ is an integer throughout the procedure, and the statement follows.
\end{proof}

\subsection{Correctness of~\BoundedQE (proof of Lemma~\ref{lemma:boundedqe:correctness})}
\label{subapp:gaussian-elimination:A2}

Recall that the input of~\BoundedQE is a $\exists \vec x : \phi(\vec x, \vec z)$, 
where $\phi$ is a positive Boolean combination of linear \PPA constraints, $\vec x = (x_1,\dots,x_n)$
and $\vec z = (x_{n+1},\dots,x_g)$. 

For the proof of correctness, we would like to proceed as in~\Cref{subapp:gaussian-elimination:A1} and fix a non-deterministic branch~$\beta$ of the algorithm, which produces a path of the form:
\begin{align*} 
    \phi \xrightarrow{(Z,\pm)} (\phi_{0},m_0,\ell_0,\pm_0,B_0,\chi_0) \xrightarrow{\delta_0} (\phi_{1},m_{1},\ell_{1},\pm_1,B_1,\chi_1) \xrightarrow{\delta_1} \dots{}\\ 
    &\hspace{-4cm} \dots{} \xrightarrow{\delta_{n-1}} (\phi_{n},m_{n},\ell_{n},\pm_n,B_n,\chi_n) \to \psi_\beta
\end{align*}
where the various objects in it are as defined in~\Cref{subapp:gaussian-elimination:A1}. 
Correctness is however a global property that ranges all non-deterministic executions, 
and we will thus need to consider the whole computation tree of the procedure. 
We will still use the path above as a reference (it will soon be clear in what sense).
An important detail: in this appendix, we are \underline{not} working under the assumption~\textbf{($\ast$)}
used to derive the bounds on $\phi$.

We divide the proof of correctness into three steps:
\begin{itemize}
    \item Step 1 involves the first edge of paths as the one above, concerning the guesses of $(Z,\pm)$.
    \item Step 2 concerns the iterations of the \textbf{foreach} loop of line~\ref{gauss:mainloop}, 
    involving the guesses~$\delta_i$.
    \item Step 3 concerns all that happens after the \textbf{foreach} loop of line~\ref{gauss:mainloop} completes. 
    \item[] (In this step the algorithm is deterministic.)
\end{itemize}

\subparagraph*{Step 1 (lines~\ref{gauss:set-Z}--\ref{gauss:introduce-slack}).}
Let us focus on the first edge of the path, that is, 
\[ \phi \xrightarrow{(Z,\pm)} (\phi_{0},m_0,\ell_0,\pm_0,B_0,\chi_0).\]
We refer the reader to~\Cref{subapp:gaussian-elimination:A1} 
for the description of $(\phi_{0},m_0,\ell_0,\pm_0,B_0,\chi_0)$ (\Cref{lemma:corrgauss:straightforward} below recalls properties of this tuple that suffice for this appendix). 
In the guess $(Z,\pm)$, recall that $Z$ is a subset of $E \coloneqq \{f(t) : \text{the relation $(f(t) \divides \cdot)$ occurs in $\phi$}\}$, and $\pm$ is the sign used to define $m_0(t) \coloneqq \pm \prod_{g \in E \setminus Z} g$. The formula $\phi_0$ already contains all slack variables $\vec y = (y_1,\dots,y_r)$ added in line~\ref{gauss:introduce-slack} (note that the number of slack variables does not depend on the guessed $Z$ and $\pm$, because the number of inequalities in $\phi$ is the same across non-deterministic branches).

Let us write $S_0$ for the set of all tuples $(\phi_0,m_0(t),\ell_0(t),\pm_0,B_0)$ obtained 
by considering all the different choices for the pair $(Z,\pm)$. 
We start with some simple observations:
\begin{lemma}\label{lemma:corrgauss:straightforward}
    for every $(\phi_0,m_0,\ell_0,\pm_0,B_0,\chi_0) \in S_0$:
    \begin{enumerate}
        \item\label{lemma:corrgauss:straightforward:i1} $\phi_0$ is a positive Boolean combination linear of equalities and divisibility constraints. 
        \item\label{lemma:corrgauss:straightforward:i2} The formula $\chi_0$ is the inequality $m_0(t) > 0$.
        \item\label{lemma:corrgauss:straightforward:i3} For every divisibility $(f(t) \divides \cdot)$ in $\phi_0$, the divisor $f(t)$ is a factor of $m_0(t)$.
        \item\label{lemma:corrgauss:straightforward:i4} The polynomial $\ell_0(t)$ is the constant $1$, and
        $\pm_0$ is the symbol $+$. 
        \item\label{lemma:corrgauss:straightforward:i5} The map $B_0$ is empty.
    \end{enumerate}
\end{lemma}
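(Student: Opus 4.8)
The proof of \Cref{lemma:corrgauss:straightforward} is a straightforward inspection of the pseudocode of \Cref{algo:gaussianqe} (\BoundedQE), traced from line~\ref{gauss:set-Z} to line~\ref{gauss:introduce-slack}. The plan is to recall that, by the conventions fixed in~\Cref{subapp:gaussian-elimination:A1} and at the start of this step, the tuple $(\phi_0,m_0,\ell_0,\pm_0,B_0,\chi_0)$ is precisely the program state reached immediately after executing line~\ref{gauss:introduce-slack}, once the guesses $Z$ (line~\ref{gauss:set-Z}) and $\pm$ (line~\ref{gauss:guess-sign-mt}) have been fixed, with $m_0(t) = \pm\prod_{g \in E \setminus Z} g(t)$. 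Each of the five items is then read off the corresponding assignment, so no real difficulty is anticipated; I list below the (very short) argument for each item.

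Items~\ref{lemma:corrgauss:straightforward:i4} and~\ref{lemma:corrgauss:straightforward:i5} follow from line~\ref{gauss:bareissfactors}, which sets $(\pm,\ell(t)) \gets (+,1)$ and $B \gets \varnothing$; the only line executed between line~\ref{gauss:bareissfactors} and the end of line~\ref{gauss:introduce-slack} is the slack-variable substitution of line~\ref{gauss:introduce-slack}, which touches neither the sign variable, nor $\ell(t)$, nor the map $B$. Hence $\ell_0(t) = 1$, $\pm_0 = {+}$, and $B_0 = \varnothing$. For Item~\ref{lemma:corrgauss:straightforward:i2}, observe that $\chi$ is assigned the inequality $m(t) > 0$ at line~\ref{gauss:mt-is-positive} and is not modified before line~\ref{gauss:introduce-slack} completes, while $m(t)$ is likewise left untouched after being fixed in line~\ref{gauss:guess-mt}; therefore $\chi_0$ is the inequality $m_0(t) > 0$.

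For Item~\ref{lemma:corrgauss:straightforward:i1}, recall that the input $\phi$ of \BoundedQE is a positive Boolean combination of linear \PPA constraints, i.e.\ of equalities $\tau = 0$, inequalities $\tau \le 0$, and divisibilities $f(t) \divides \tau$ with $\tau$ linear. Lines~\ref{gauss:set-Z-replace}--\ref{gauss:set-Z-end} only turn some divisibilities into linear equalities $\tau = 0$ and conjoin the variable-free (hence linear) equalities $f(t) = 0$, so after the \textbf{foreach} loop $\phi$ is still a positive Boolean combination of linear equalities, linear inequalities, and linear divisibilities. Line~\ref{gauss:introduce-slack} then replaces each inequality $\tau \le 0$ with $\tau + y = 0$ for a fresh slack variable $y$; since $y$ occurs with coefficient $1$ and $\tau$ is linear, $\tau + y$ is linear, and no inequality remains, so $\phi_0$ is a positive Boolean combination of linear equalities and linear divisibilities. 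For Item~\ref{lemma:corrgauss:straightforward:i3}, note that after lines~\ref{gauss:set-Z-replace}--\ref{gauss:set-Z-end} the divisors occurring in divisibilities of $\phi$ are exactly the polynomials in $E \setminus Z$, and lines~\ref{gauss:guess-mt}--\ref{gauss:introduce-slack} add or remove no divisibility; since $m_0(t) = \pm\prod_{g \in E \setminus Z} g(t)$ and $\Z[t]$ is an integral domain, every divisor $f(t)$ of a divisibility in $\phi_0$ is a factor of $m_0(t)$. The only points that require a moment's attention are precisely these two: that replacing $\tau \le 0$ with $\tau + y = 0$ preserves linearity (Item~\ref{lemma:corrgauss:straightforward:i1}), and that line~\ref{gauss:guess-mt} multiplies exactly the divisors that survive the removal of those in $Z$, each possibly shared by several constraints but counted once in the product (Item~\ref{lemma:corrgauss:straightforward:i3}); both are immediate, so there is no genuine obstacle in this lemma.
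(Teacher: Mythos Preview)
Your proposal is correct and follows exactly the same approach as the paper: a direct inspection of lines~\ref{gauss:set-Z}--\ref{gauss:introduce-slack} of \Cref{algo:gaussianqe}. The paper's own proof simply says ``The proof is straightforward (see lines~\ref{gauss:guess-mt}--\ref{gauss:introduce-slack}, or the discussion in~\Cref{subapp:gaussian-elimination:A1})'', so your version is in fact a more detailed and explicit rendering of the same argument.
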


\begin{proof}
    The proof is straightforward (see lines~\ref{gauss:guess-mt}--\ref{gauss:introduce-slack}, or the discussion in~\Cref{subapp:gaussian-elimination:A1}). 
\end{proof}

The following lemma express the key ``equivalence'' between $\phi$ and the elements in $S_0$:

\begin{lemma}\label{lemma:corrgauss:first-step} The formula $\phi$ is equivalent to $\bigvee\nolimits_{(\phi_0,m_0,\ell_0,\pm_0,B_0,\chi_0) \in S_0} \vec y \in \N : \phi_0 \land \chi_0.$
\end{lemma}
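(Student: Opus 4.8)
The plan is to establish the equivalence pointwise: I would fix an arbitrary assignment $\nu$ of the parameter $t$ and of the free variables $\vec z$, and show that $\phi$ holds under $\nu$ if and only if $\exists\vec y\in\N : \phi_0\land\chi_0$ holds under $\nu$ for some tuple $(\phi_0,m_0,\ell_0,\pm_0,B_0,\chi_0)\in S_0$. The only ``engine'' of the argument is monotonicity of positive Boolean combinations: if $\phi'$ is obtained from $\phi$ by replacing some atoms by atoms that they imply (under $\nu$), then truth of $\phi$ implies truth of $\phi'$, and symmetrically. Recall from \Cref{lemma:corrgauss:straightforward} that $\chi_0$ is simply the inequality $m_0(t) > 0$, that $\ell_0(t) = 1$, $\pm_0$ is $+$, and $B_0$ is empty; these last three components are pure bookkeeping and play no role in the equivalence, so only $\phi_0$ and $\chi_0$ matter.

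For the direction $(\Leftarrow)$ I would fix a tuple in $S_0$, say arising from a pair $(Z,\pm)$, and an assignment of $\vec y$ to naturals making $\phi_0\land\chi_0$ true under $\nu$; the goal is to recover truth of $\phi$ under $\nu$. First, the equalities $f(t) = 0$ conjoined in line~\ref{gauss:set-Z-end} for $f\in Z$ are not inequalities, so they are untouched by the slack introduction of line~\ref{gauss:introduce-slack} and therefore force $\nu(f(t)) = 0$ for every $f\in Z$. Now two observations finish this direction: (i) each slacked atom $\tau + y = 0$, with $\nu(y)\ge 0$, implies the original inequality $\tau\le 0$; and (ii) whenever $\nu(f(t)) = 0$, the semantics of $(f(t)\divides{\cdot})$ in \PPA make $f(t)\divides\tau$ equivalent (under $\nu$) to $\tau = 0$. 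Applying (i) to reverse the slacking of line~\ref{gauss:introduce-slack}, then (ii) to reverse the replacement of line~\ref{gauss:set-Z-replace}, and invoking monotonicity of the positive Boolean structure, I conclude that $\phi$ holds under $\nu$ (the conjunct $\chi_0$ is simply discarded here).

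For $(\Rightarrow)$, assume $\phi$ holds under $\nu$; I would build the witnessing tuple and the slack values. Set $E \coloneqq \{f(t) : (f(t)\divides{\cdot})\text{ occurs in }\phi\}$ and choose $Z \coloneqq \{f\in E : \nu(f(t)) = 0\}$. Then $\prod_{g\in E\setminus Z}\nu(g(t))\neq 0$ (this covers the degenerate case $E\setminus Z = \varnothing$, where the empty product is $1\neq 0$), so there is a sign choice $\pm$ with $\nu(m_0(t)) = \pm\prod_{g\in E\setminus Z}\nu(g(t)) > 0$, i.e.\ $\chi_0 = (m_0(t) > 0)$ holds under $\nu$. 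With this $Z$, the conjuncts $f(t) = 0$ added for $f\in Z$ hold by construction, and by (ii) (now read left to right) each replacement of $f(t)\divides\tau$ by $\tau = 0$ is truth-preserving under $\nu$, so the formula obtained from $\phi$ by lines~\ref{gauss:set-Z-replace}--\ref{gauss:set-Z-end} holds under $\nu$. Finally I extend $\nu$ to the slack variables: for the slack $y$ introduced for an inequality $\tau\le 0$, put $\nu(y)\coloneqq\max\{0,-\nu(\tau)\}$. A two-case check shows that, under this extension, the atom $\tau + y = 0$ has exactly the truth value of $\tau\le 0$ under $\nu$ (they agree and are both true when $\nu(\tau)\le 0$; both false when $\nu(\tau) > 0$, since then $\nu(\tau)+\nu(y) = \nu(\tau) > 0$). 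As all other atoms are unchanged, $\phi_0$ holds under the extended assignment, hence $\exists\vec y\in\N : \phi_0\land\chi_0$ holds under $\nu$.

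I do not expect a serious obstacle here; the one point that needs care, and which dictates the shape of the argument, is that in a \emph{positive} Boolean combination a satisfied formula may contain false atoms. This is exactly why the slack of a violated inequality is set to $0$ rather than to $-\nu(\tau)$ (which would be negative and hence illegal), and why the right tool throughout is monotonicity of positive Boolean combinations rather than atom-by-atom logical equivalence. The degenerate cases $E = \varnothing$ and $E\setminus Z = \varnothing$ are subsumed by the general argument via $\prod_\varnothing = 1$.
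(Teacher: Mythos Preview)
Your proof is correct and follows essentially the same approach as the paper's: case-split on the zero set $Z$ of divisors, guess the sign of the remaining product, and handle the slacking of inequalities (the paper packages this as a chain of syntactic equivalences, you unfold it into an explicit two-direction semantic argument using monotonicity of positive Boolean combinations). One minor slip: the assignment $\nu$ must also fix values for $\vec x$, not just $t$ and $\vec z$, since $\vec x$ is free in $\phi$; your argument is otherwise unaffected.
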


\begin{proof}
    Below, given $Z \subseteq E$, let us write $\phi_Z$ for the formula obtained from $\phi$ by replacing every divisibility $f(t) \divides \tau$, where $f \in Z$, with $\tau = 0$. 
    We also write $\phi_Z'$ for the formula obtained from $\phi_Z$ by replacing each inequality $\tau \leq 0$ 
    with $\tau + y = 0$, where $y$ is a slack variable from~$\vec y$ (distinct inequalities receive distinct slack variables). 
    Following lines~\ref{gauss:set-Z}--\ref{gauss:introduce-slack}, 
    we see that for every $(\phi_0,m_0,\ell_0,\pm_0,B_0,\chi_0) \in S_0$, 
    the formula $\phi_0$ is of the form 
    $(\phi_Z' \land \bigwedge_{f \in Z} f(t) = 0)$, 
    for some $Z \subseteq E$, and moreover $m_0(t)$ is equal to $\pm(\prod_{g \in E \setminus Z} g(t)) > 0$, 
    for some sign $\pm$ among $+$ and $-$. 
    The following chain of equivalences then proves the lemma:
    \begingroup 
    \allowdisplaybreaks
    \begin{align*} 
        & \phi\\ 
        \iff{}& \textstyle\bigvee_{Z \subseteq E} \Big(\phi_Z \land \bigwedge_{f \in Z} f(t) = 0 \land \bigwedge_{g \in E \setminus Z} g(t) \neq 0\Big)\\
        &&\hspace{-3.4cm}\text{(by definition~of divisibility constraint)}\\
        \iff{}& \textstyle\bigvee_{Z \subseteq E} \bigvee_{\pm \in \{+,-\}} \Big(\phi_Z \land \bigwedge_{f \in Z} f(t) = 0 \land (\prod_{g \in E \setminus Z} g(t)) \neq 0 \Big)\\
        &&\hspace{-3.4cm}\text{(since $a \neq 0 \land b \neq 0 \iff a \cdot b \neq 0$)}\\
        \iff{}& \textstyle\bigvee_{Z \subseteq E} \bigvee_{\pm \in \{+,-\}} \Big(\phi_Z \land \bigwedge_{f \in Z} f(t) = 0 \land m(t) > 0 \Big)\\
        &&\hspace{-7cm}\text{(guessing a sign for $\textstyle\prod_{g \in E \setminus Z} g(t)$, and by definition of $m(t)$)}\\
        \iff{}& \textstyle\bigvee_{Z \subseteq E} \bigvee_{\pm \in \{+,-\}} \Big((\exists \vec y \in \N : \phi_Z') \land \bigwedge_{f \in Z} f(t) = 0 \land m(t) > 0 \Big)\\
        &&\hspace{-8cm}\text{(because $\tau \leq 0 \iff \exists y \in \N : \tau + y = 0$, and moreover $\phi_Z$ is a}\\ 
        &&\hspace{-8cm}\text{positive Boolean combination of linear \PPA constraints, hence}\\ 
        &&\hspace{-8cm}\text{we can push $\exists y$ outside the scope of all its Boolean connectives)}\\ 
        \iff{}&\textstyle\bigvee_{(\phi_0,m_0,\ell_0,\pm_0,B_0,\chi_0) \in S_0} \vec y \in \N : \phi_0 \land \chi_0
        &\hspace{-8cm}\text{(by def.~of~$S_0$, $\chi_0$, and $\phi_0$)}\hspace{0.5cm}~
        \qedhere
    \end{align*}
    \endgroup
\end{proof}

\subparagraph*{Step 2 (lines~\ref{gauss:mainloop}--\ref{gauss:endmainloop}).}
Let us write $S_i$ for the set of all tuples $(\phi_i,m_i,\ell_i,\pm_i,B_i,\chi_i)$ obtained across all non-deterministic branches 
of~\Cref{algo:gaussianqe}, after the \textbf{foreach} loop of line~\ref{gauss:mainloop} has been executed $i$ times. Note that $S_0$ coincides with the homonymous set defined before~\Cref{lemma:corrgauss:straightforward}.
Recall that this loop executes $n$ times, considering the variables $x_1,\dots,x_n$ (in this order). 

Let $i \in [0,n-1]$. Back to our ``path view'', we will now look at edges of the form  
\[
    (\phi_{i},m_i,\ell_i,\pm_i,B_i,\chi_i) \xrightarrow{\delta} (\phi_{i+1},m_{i+1},\ell_{i+1},\pm_{i+1},B_{i+1},\chi_{i+1}),
\]
where:
\begin{itemize}
    \item the tuple $(\phi_{i},m_i,\ell_i,\pm_i,B_i,\chi_i)$ belongs to $S_i$, 
    \item  $\delta$ is either $\top$ (this case occurs when the~$(i+1)$th iteration of the \textbf{foreach} loop executes the \textbf{continue} statement of line~\ref{gauss:guess-all-zeros}) or a pair ${(f_i(t) \cdot x_{i+1} + \tau_i = 0,\, \pm_{i+1})}$ (corresponding to the two guesses of lines~\ref{gauss:guess-equation} and~\ref{gauss:guess-sign}), and 
    \item $(\phi_{i+1},m_{i+1},\ell_{i+1},\pm_{i+1},B_{i+1},\chi_{i+1}) \in S_{i+1}$. 
\end{itemize}

Let $i \in [0,n]$, and consider $\Phi \coloneqq (\phi_{i},m_i,\ell_i,\pm_i,B_i,\chi_i) \in S_i$.
Let us denote by $\vec y'$ the slack variables from $\vec y$ that are \emph{unassigned} in $B_i$ 
(that is, not in its domain), and $\vec v$ for all the variables that are assigned in $B_i$.
We write $F(\Phi)$ for the formula: 
\[ 
    F(\Phi) \coloneqq \exists \vec y' \in \N\, \exists \vec v \leq B\, (\phi_i \land \chi_i).
\]
Observe that then the equivalence in~\Cref{lemma:corrgauss:first-step} can be states as 
$\phi \iff \bigvee_{\Phi \in S_0} F(\Phi)$, 
since in this case $B_0$ is the empty map 
(\Cref{lemma:corrgauss:straightforward}).

The goal for this step is to prove the following result: 
\begin{restatable}{lemma}{LemmaCorrGaussKeySecondStep}
    \label{lemma:corrgauss:key-second-step}
    Let $i \in [0,n-1]$ and $\Phi \in S_i$. There is a subset $G \subseteq S_{i+1}$ 
    satisfying 
    \[ \exists x_{i+1} F(\Phi) \iff \bigvee\nolimits_{\Psi \in G} F(\Psi).\]
\end{restatable}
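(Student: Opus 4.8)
The plan is to analyze the $(i+1)$-th iteration of the \textbf{foreach} loop of line~\ref{gauss:mainloop} by cases, according to the value of $\delta$ leaving $\Phi$, and to push the quantifier $\exists x_{i+1}$ through the formula $F(\Phi) = \exists \vec y' \in \N\, \exists \vec v \leq B\, (\phi_i \land \chi_i)$ while tracking how $\vec y'$, $\vec v$, $\phi_i$, $B_i$ and $\chi_i$ get rewritten into $\Psi$.

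\subparagraph*{Case $\delta = \top$.} Here the loop adds the key-value pair $(x_{i+1}, m_i(t)-1)$ to $B$ (line~\ref{gauss:subst-rem}) and does nothing else; so $\phi_{i+1} = \phi_i$, $m_{i+1} = m_i$, $\chi_{i+1} = \chi_i$, and $B_{i+1}$ extends $B_i$ by assigning $x_{i+1}$ the bound $m_i(t)-1$. The resulting tuple $\Psi$ lies in $S_{i+1}$, and it suffices to show that $\exists x_{i+1}\, F(\Phi) \iff F(\Psi)$, i.e.\ that $x_{i+1}$ may be restricted to the range $[0, m_i(t)-1]$ without loss. Since $\chi_i$ entails $m_i(t) > 0$, this interval is non-empty; the point is that inside $\phi_i$ every divisibility constraint has divisor a factor of $m_i(t)$ (by \Cref{lemma:corrgauss:straightforward}.\ref{lemma:corrgauss:straightforward:i3}, which is preserved by the loop — a fact I would state and use), and $x_{i+1}$ occurs in no equality after the loop's substitutions have been applied, so the truth of $\phi_i$ depends on $x_{i+1}$ only through its residue modulo divisors of $m_i(t)$; hence every solution can be shifted into $[0, m_i(t)-1]$. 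This is exactly the corner case the comment at line~\ref{gauss:guess-all-zeros} refers to.

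\subparagraph*{Case $\delta = (f_i(t)\cdot x_{i+1} + \tau_i = 0,\ \pm_{i+1})$.} This is the substantive case, and the main obstacle. Here the loop: guesses an equality $f_i(t)\cdot x_{i+1} + \tau_i = 0$ of $\phi_i$ containing $x_{i+1}$ (line~\ref{gauss:guess-equation}); updates the Bareiss factor $\prevlead \gets \lead$, $\lead \gets f_i(t)$ (line~\ref{gauss:update-factor}); guesses a sign $\pm_{i+1}$ and conjoins $\pm_{i+1} f_i(t) > 0$ to $\chi$ and multiplies $m(t)$ by $\pm_{i+1} f_i(t)$ (lines~\ref{gauss:guess-sign}--\ref{gauss:new-factor}); if $\tau_i$ contains an unassigned slack variable $y$, adds $(y, m_{i+1}(t)-1)$ to $B$ (lines~\ref{gauss:branch}--\ref{gauss:append-seq}); performs the vigorous substitution $\phi \gets \phi\vigsub{\frac{-\tau_i}{f_i(t)}}{x_{i+1}}$ (line~\ref{gauss:vigorous}); divides all constraints by $\prevlead = \ell_i(t)$ (line~\ref{gauss:divide}); and restores the divisibility $f_i(t) \divides \tau_i$ (line~\ref{gauss:restore}). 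I would build $\Psi$ as the tuple produced by these updates, so $G$ ranges over the (finitely many) choices of equality containing $x_{i+1}$ and of sign $\pm_{i+1}$, together with $\Psi$ from the $\delta=\top$ branch above; $G \subseteq S_{i+1}$ by definition of the latter set. The equivalence $\exists x_{i+1}\, F(\Phi) \iff \bigvee_{\Psi \in G} F(\Psi)$ then decomposes as: (1) existence of $x_{i+1}$ with $\phi_i$ true splits over \emph{which} of the equalities containing $x_{i+1}$ is used to pin it down (or none, handled by the $\delta=\top$ branch); (2) once an equality $f_i(t)\cdot x_{i+1} = -\tau_i$ is chosen, $\exists x_{i+1}$ with this equality holding is equivalent to $f_i(t) \divides \tau_i$ \emph{together with} the rest of $\phi_i$ with $x_{i+1}$ replaced by $\frac{-\tau_i}{f_i(t)}$ — which is precisely what the vigorous substitution computes, the multiply-by-$f_i(t)$ step of $\vigsub{\cdot}{\cdot}$ clearing denominators and the added $f_i(t) \divides \tau_i$ recording solvability (using \Cref{gauss:proof:i1} that the chosen equality itself becomes $0=0$); (3) the division by $\ell_i(t)$ of line~\ref{gauss:divide} preserves equivalence because $\ell_i(t) = f_{i-1}(t)$ is nonzero — it is a coefficient guessed at the previous step with $\pm_i f_{i-1}(t) > 0$ recorded in $\chi_i$, and by \Cref{prop:gauss-fundamental}.\ref{prop:gauss-fundamental:i5} the divisions are exact — so multiplying/dividing both sides of equalities and both sides of divisibilities by it is a logical equivalence; (4) the sign guess $\pm_{i+1} f_i(t) > 0$ and the update of $m(t)$ are either recorded faithfully in $\chi_{i+1}$ (a disjunction over $\pm_{i+1} \in \{+,-\}$ is a tautology-driven case split, exactly as in \Cref{lemma:corrgauss:first-step}), or, when $f_i(t)$ turns out to be the zero polynomial, the guessed equality degenerates — but then it could not have been used to eliminate $x_{i+1}$, and this subcase is absorbed into the $\delta=\top$ branch; (5) the new bounded quantifier for an unassigned slack $y$ in $\tau_i$ is justified exactly as in the $\delta=\top$ case: after the substitution, $y$'s coefficient becomes $m_{i+1}(t)$ up to sign, so $y \bmod m_{i+1}(t)$ determines the truth of all constraints, and $y$ may be restricted to $[0, m_{i+1}(t)-1]$ — here I would invoke \Cref{lemma:same-slack-everywhere}, which says such a slack variable occurs with coefficient $f_i(t)$ in a single equality, so after clearing denominators its coefficient in $\phi_{i+1}$ is a divisor of $m_{i+1}(t)$ wherever it appears.

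Assembling these observations into a single chain of equivalences — starting from $\exists x_{i+1}\, F(\Phi)$, commuting $\exists x_{i+1}$ inward past $\exists \vec y' \exists \vec v$ (legitimate since $\phi_i$ is a positive Boolean combination of \PPA constraints, so quantifiers distribute over $\land$ and $\lor$), splitting over the equality/sign guesses, applying steps (2)–(5), and finally recognizing the right-hand side as $\bigvee_{\Psi \in G} F(\Psi)$ by the definitions of $\phi_{i+1}$, $m_{i+1}$, $\ell_{i+1}$, $\pm_{i+1}$, $B_{i+1}$, $\chi_{i+1}$ from \Cref{subapp:gaussian-elimination:A1} — completes the proof. \textbf{The main obstacle} is step (2): verifying precisely that the vigorous substitution $\phi_i\vigsub{\frac{-\tau_i}{f_i(t)}}{x_{i+1}}$ conjoined with $f_i(t) \divides \tau_i$ is \emph{logically equivalent} (over $\Z$, for every value of $t$ and $\vec z$) to $\exists x_{i+1}\,(\phi_i \land f_i(t)\cdot x_{i+1} + \tau_i = 0)$, handling uniformly the cases $f_i(t) \neq 0$ (genuine substitution, after clearing denominators in the other constraints — which is sound because $f_i(t)\neq 0$) and $f_i(t) = 0$ (where the equality forces $\tau_i = 0$ and $x_{i+1}$ is free, matching the divisibility $0 \divides \tau_i$ under the convention of \Cref{sec:preliminaries}). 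The rest is careful bookkeeping of how the tuple components are updated, which mirrors the corresponding argument for $\exists$\PA in~\cite[Appendix~B]{ChistikovMS24} verbatim modulo replacing $\Z$ by $\Z[t]$.
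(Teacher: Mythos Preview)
Your proposal has the right overall shape (disjoin the $\delta=\top$ branch with the branches coming from each guessed equality and sign), but two of the key steps rest on incorrect claims.

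\textbf{The $\delta=\top$ branch.} You assert that in $\phi_i$ ``$x_{i+1}$ occurs in no equality after the loop's substitutions have been applied'', and therefore $x_{i+1}$ only affects $\phi_i$ through residues modulo divisors of $m_i(t)$. This is false: the previous iterations eliminated $x_1,\dots,x_i$, not $x_{i+1}$, and the $\delta=\top$ branch applies no substitution whatsoever. So $\phi_i$ may well contain equalities $g(t)\,x_{i+1}+\rho=0$ with $g(t)\neq 0$, and the truth of these depends on the \emph{exact} value of $x_{i+1}$. What is true is something more delicate: for each fixed solution $\nu$, either some equality with non-zero $x_{i+1}$-coefficient is satisfied (then one of the substitution branches applies), or none is (then $x_{i+1}$ can be shifted modulo $m_i(t)$ and, by monotonicity of positive Boolean combinations, $\phi_i$ remains true). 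The paper formalizes this solution-dependent case split via the ``partial formulae'' $\gamma\in\mathcal{P}(\phi_i)$ (\Cref{definition:partial-subformulae}, \Cref{lemma:corrgauss:gamma-contains-phi,lemma:corrgauss:second-step,lemma:corrgauss:third-step:a}); some analogue of this is needed, and your sketch does not provide it.

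\textbf{Step~(5), bounding the slack variable.} Your periodicity justification is wrong. You claim that after the vigorous substitution $y$'s coefficient becomes $m_{i+1}(t)$ up to sign, so the truth of all constraints depends only on $y\bmod m_{i+1}(t)$. But the substitution $\vigsub{\frac{-\tau_i}{f_i(t)}}{x_{i+1}}$ propagates $\tau_i$ (hence $y$) into \emph{every} constraint that previously contained $x_{i+1}$; in a constraint $g(t)\,x_{i+1}+\rho=0$ the resulting coefficient of $y$ is (after the Bareiss division) essentially $-g(t)$, which is an arbitrary coefficient of $x_{i+1}$ in $\phi_i$ and not in general a factor of $m_{i+1}(t)$. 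So the truth of these equalities is \emph{not} periodic in $y$. The paper's argument (\Cref{lemma:corrgauss:third-step:b}) is entirely different: it is a minimization argument on the ratios $\nu(y_j')/|g_j(\nu(t))|$, showing that if every slack exceeds $|g_j(t)|\cdot m_i(t)$ then one can simultaneously shift $x_{i+1}$ by $\mu(t)\,m_i(t)$ and each slack $y_j'$ by $\mp g_j(t)\,m_i(t)$, contradicting minimality while preserving all equalities and divisibilities. This is the standard ``small shift'' argument from Presburger QE, adapted to the slack-variable formulation, and it is the step you are missing.
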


As done in Step 1, let us start with some simple facts:
\begin{lemma}\label{lemma:corrgauss:straightforward:second}
    Let $i \in [0,n]$ and consider a tuple $(\phi_{i},m_i,\ell_i,\pm_i,B_i,\chi_i) \in S_i$. We have:
    \begin{enumerate}
        \item\label{lemma:corrgauss:straightforward:second:i1} $\phi_i$ is a positive Boolean combination of linear equalities and divisibility constraints.
        \item\label{lemma:corrgauss:straightforward:second:i2} For every divisibility $(g(t) \divides \cdot)$ in $\phi_i$, the divisor $g(t)$ is a factor of $m_i(t)$.
        \item\label{lemma:corrgauss:straightforward:second:i3} 
            The formula $\chi_i$ is a conjunction of inequalities of the form $h(t) > 0$, with $h \in \Z[t]$. 
        \item\label{lemma:corrgauss:straightforward:second:i4}  
            The formula $\chi_i$ entails both $m_i(t) > 0$ and $\pm_i \ell_i(t) > 0$.
    \end{enumerate}
\end{lemma}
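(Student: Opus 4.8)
The plan is to establish all four items simultaneously by induction on $i$, taking \Cref{lemma:corrgauss:straightforward} as the base case $i=0$ and analysing one iteration of the \textbf{foreach} loop of line~\ref{gauss:mainloop} for the step. For $i = 0$: Item~\ref{lemma:corrgauss:straightforward:second:i1} is \Cref{lemma:corrgauss:straightforward}.\ref{lemma:corrgauss:straightforward:i1}, Item~\ref{lemma:corrgauss:straightforward:second:i2} is \Cref{lemma:corrgauss:straightforward}.\ref{lemma:corrgauss:straightforward:i3}, Item~\ref{lemma:corrgauss:straightforward:second:i3} holds because $\chi_0$ is the single inequality $m_0(t) > 0$ (\Cref{lemma:corrgauss:straightforward}.\ref{lemma:corrgauss:straightforward:i2}) and $m_0 \in \Z[t]$, and Item~\ref{lemma:corrgauss:straightforward:second:i4} holds because $\chi_0 = (m_0(t) > 0)$ trivially entails $m_0(t) > 0$, while $\pm_0\,\ell_0(t) = 1 > 0$ is a tautology since $\ell_0 = 1$ and $\pm_0$ is ${+}$ (\Cref{lemma:corrgauss:straightforward}.\ref{lemma:corrgauss:straightforward:i4}).

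For the inductive step, every tuple in $S_{i+1}$ arises from some $(\phi_i,m_i,\ell_i,\pm_i,B_i,\chi_i) \in S_i$ by one iteration of the loop, and I split on the outcome of the guess at line~\ref{gauss:ast}. If the iteration takes the \textbf{continue} branch (lines~\ref{gauss:subst-rem}--\ref{gauss:guess-all-zeros}), only the map $B$ changes, so $\phi$, $m$, $\ell$, $\pm$ and $\chi$ are unchanged and all four claims transfer verbatim from the hypothesis. Otherwise the iteration guesses an equality $f_i(t)\cdot x_{i+1} + \tau_i = 0$ of $\phi_i$ (line~\ref{gauss:guess-equation}) and a sign $\pm_{i+1}$ (line~\ref{gauss:guess-sign}), sets $\ell_{i+1} \coloneqq f_i(t)$ (line~\ref{gauss:update-factor}), $m_{i+1} \coloneqq \pm_{i+1} f_i(t)\cdot m_i(t)$ (line~\ref{gauss:new-factor}) and $\chi_{i+1} \coloneqq \chi_i \land (\pm_{i+1} f_i(t) > 0)$ (line~\ref{gauss:assert-sign-ft}), and produces $\phi_{i+1}$ by applying the vigorous substitution $\vigsub{\frac{-\tau_i}{f_i(t)}}{x_{i+1}}$ to $\phi_i$, dividing every constraint by $\ell_i(t)$ (line~\ref{gauss:divide}), and conjoining $f_i(t)\divides\tau_i$ (line~\ref{gauss:restore}). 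Item~\ref{lemma:corrgauss:straightforward:second:i1}: the vigorous substitution and the division act atom-wise, carry equalities to equalities and divisibilities to divisibilities, and preserve linearity, since multiplying a linear term by an element of $\Z[t]$ stays linear, replacing $f_i(t)\cdot x_{i+1}$ by the linear term $\tau_i$ stays linear, and the division of line~\ref{gauss:divide} leaves no remainder; and $f_i(t)\divides\tau_i$ is linear because $\tau_i$ is. Item~\ref{lemma:corrgauss:straightforward:second:i3}: $\chi_{i+1}$ is $\chi_i$ conjoined with $\pm_{i+1} f_i(t) > 0$, which has the required shape. Item~\ref{lemma:corrgauss:straightforward:second:i4}: $\chi_{i+1}$ entails $\chi_i$, hence $m_i(t) > 0$ by the hypothesis, and also entails $\pm_{i+1} f_i(t) > 0$, so it entails positivity of the product $\pm_{i+1} f_i(t)\cdot m_i(t) = m_{i+1}(t)$; moreover $\pm_{i+1}\,\ell_{i+1}(t) = \pm_{i+1} f_i(t)$ is entailed positive directly by the conjunct added in line~\ref{gauss:assert-sign-ft}. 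Item~\ref{lemma:corrgauss:straightforward:second:i2}: the divisibility added at line~\ref{gauss:restore} has divisor $f_i(t)$, which divides $m_{i+1}(t) = \pm_{i+1} f_i(t)\cdot m_i(t)$; any other divisibility of $\phi_{i+1}$ descends from some $g(t)\divides\rho$ of $\phi_i$ and, after the substitution (which multiplies the divisor by $f_i(t)$) and the division by $\ell_i(t)$, has divisor $f_i(t)\,g(t)/\ell_i(t)$, which is a factor of $f_i(t)\,g(t)$; since $g(t)\mid m_i(t)$ by the hypothesis, $f_i(t)\,g(t)\mid f_i(t)\,m_i(t) = \pm m_{i+1}(t)$, so by transitivity the divisor divides $m_{i+1}(t)$.

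The only ingredients here that are not pure bookkeeping are that after step~1 of the vigorous substitution every coefficient of $x_{i+1}$ is a multiple of $f_i(t)$ (so the replacement of $f_i(t)\cdot x_{i+1}$ by $\tau_i$ is well defined and keeps terms linear) and that the divisions of line~\ref{gauss:divide} leave no remainder; both are part of the Bareiss-style analysis recorded in \Cref{prop:gauss-fundamental} (in particular \Cref{prop:gauss-fundamental}.\ref{prop:gauss-fundamental:i5}), and the reordering of variables assumed there is immaterial for these two facts, which hold along every execution of the algorithm. I expect the divisor-tracking argument for Item~\ref{lemma:corrgauss:straightforward:second:i2}, together with this cross-reference, to be the main point to get right; the rest is a routine case distinction on the loop body.
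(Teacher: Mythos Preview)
Your proposal is correct and follows essentially the same approach as the paper: induction on $i$ with base case from \Cref{lemma:corrgauss:straightforward}, a case split on whether the iteration takes the \textbf{continue} branch, and in the substantive case tracking how the vigorous substitution, the division by $\ell_i(t)$, and the added divisibility of line~\ref{gauss:restore} affect each item, invoking \Cref{prop:gauss-fundamental}.\ref{prop:gauss-fundamental:i5} for the exactness of the division. Your divisor-tracking argument for Item~\ref{lemma:corrgauss:straightforward:second:i2} matches the paper's, and your remark that the variable reordering assumed in \Cref{subapp:gaussian-elimination:A1} is immaterial here is a correct and useful observation.
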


\begin{proof}
    The proof is by induction on $i$.
    \begin{description}
        \item[base case: $i  = 0$.] Directly from~\Cref{lemma:corrgauss:straightforward}.
        \item[induction step: $i \geq 1$.] Consider $(\phi_{i},m_i,\ell_i,\pm_i,B_i,\chi_i) \in S_i$, as well as the label $\delta$ of an edge 
        such that there is $(\phi_{i-1},m_{i-1},\ell_{i-1},\pm_{i-1},B_{i-1},\chi_{i-1}) \in S_{i-1}$ 
        satisfying  
        \[
            (\phi_{i-1},m_{i-1},\ell_{i-1},\pm_{i-1},B_{i-1},\chi_{i-1}) \xrightarrow{\delta} (\phi_{i},m_i,\ell_i,\pm_i,B_i,\chi_i).
        \]
        By induction hypothesis, the tuple $(\phi_{i-1},m_{i-1},\ell_{i-1},\pm_{i-1},B_{i-1},\chi_{i-1})$ satisfies Items~\ref{lemma:corrgauss:straightforward:second:i1}--\ref{lemma:corrgauss:straightforward:second:i4}. 
        We proceed by cases on $\delta$:
        \begin{description}
            \item[case: $\delta = \top$.] We have $(\phi_{i},m_{i},\ell_{i},\pm_i,\chi_i) = (\phi_{i-1},m_{i-1},\ell_{i-1},\pm_{i-1},\chi_{i-1})$ (because the \textbf{foreach} loop executes the~\textbf{continue} statement of line~\ref{gauss:guess-all-zeros}). Items~\ref{lemma:corrgauss:straightforward:second:i1}--\ref{lemma:corrgauss:straightforward:second:i4} are thus statisfied.
            \item[case: $(f(t) \cdot x_{i} + \tau = 0,\, \pm)$.]
                Following lines~\ref{gauss:update-factor}, \ref{gauss:assert-sign-ft} and~\ref{gauss:new-factor}, we have:
                \begin{align*}
                    \ell_i &\coloneqq f(t), & m_i & \coloneqq \pm f(t) \cdot m_{i-1}(t),\\
                    \pm_i &\coloneqq \pm, & \chi_i &\coloneqq (\chi_{i-1} \land \pm f(t) > 0).
                \end{align*} 
                \begin{description}
                    \item[proof of~\Cref{lemma:corrgauss:straightforward:second:i1}:]  
                    By induction hypothesis, $\phi_{i-1}$ is a positive Boolean combination of linear equalities and divisibility constraints. The formula $\phi_{i}$ is obtained from $\phi_{i-1}$ by applying lines~\ref{gauss:vigorous}--\ref{gauss:restore}, which preserve this property. 
                    \item[proof of~\Cref{lemma:corrgauss:straightforward:second:i2}:] By induction hypothesis, for every divisibility $(g(t) \divides \cdot)$ in $\phi_{i-1}$, the divisor $g(t)$ is a factor of $m_{i-1}(t)$. 
                    Line~\ref{gauss:vigorous} multiplies each divisor $g(t)$ of $\phi_{i-1}$ by~$f(t)$. 
                    The resulting polynomial $g(t) \cdot f(t)$
                    is a factor of $m_i(t) = \pm f(t) \cdot m_{i-1}(t)$. 
                    Line~\ref{gauss:divide} divides $g(t) \cdot f(t)$ by the common factor $\ell_{i-1}(t)$. From~\Cref{prop:gauss-fundamental}.\ref{prop:gauss-fundamental:i5} these polynomial divisions are without remainder, and therefore the result is still a factor of $m_i(t)$. 
                    Lastly, line~\ref{gauss:restore} adds the divisibility $(f(t) \divides \tau)$; whose divisor is again a factor of $m_i(t) = \pm f(t) \cdot m_{i-1}(t)$.
                    \item[proof of~\Cref{lemma:corrgauss:straightforward:second:i3}:] 
                        Follows by definition of $\chi_i$ and the fact that $\chi_{i-1}$ satisfies~\Cref{lemma:corrgauss:straightforward:second:i3}. 
                    \item[proof of~\Cref{lemma:corrgauss:straightforward:second:i4}:] Directly from the its definition, $\chi_i$ implies both $\pm_i \ell_i > 0$ 
                    and $m_i > 0$ (the latter because $\chi_{i-1}$ implies $m_{i-1} > 0$ by induction hypothesis).
                    \qedhere 
                \end{description}
        \end{description}
    \end{description}
\end{proof}

Let us consider a tuple $(\phi_{i},m_i,\ell_i,\pm_{i},B_i,\chi_{i}) \in S_i$. 
For the time being, it is more convenient to reason purely in terms of formulae equivalences, neglecting their connections with the operations performed by~\Cref{algo:gaussianqe}. We will come back to~\Cref{algo:gaussianqe} later. 

We rely on the following notion:
\begin{definition}[Partial formulae of $\phi_i$]
    \label{definition:partial-subformulae}
    Let $I,A_0$ and $A_1$ be a partition of the set $T \coloneqq \{(f(t),\tau) : \text{the equation $f(t) \cdot x_{i+1} + \tau = 0$ occurs in $\phi_i$}\}$.  
    Let us write $(\phi_i|_{I,T})$ for the formula obtained from $\phi_i$ 
    by replacing with $\bot$ every equation $f(t) \cdot x_{i+1} + \tau = 0$ such that $(f(t),\tau) \in I$, and with $\top$ every equation $f(t) \cdot x_{i+1} + \tau = 0$ such that $(f(t),\tau) \in T \setminus I$. 

    The following formula is said to be a \emph{partial formula} of $\phi_i$:  
    \[ 
        (\phi_i|_{I,T}) \land \bigwedge_{(f(t),\tau) \in A_0} (f(t) = 0 \land \tau = 0) \land \bigwedge_{(f(t),\tau) \in A_1} (f(t) \neq 0 \land f(t) \cdot x_{i+1} + \tau = 0).
    \] 
    We denote by $\mathcal{P}(\phi_i)$ the set of \emph{all partial formulae} of~$\phi_i$.
    Given $\gamma \in \mathcal{P}(\phi_i)$, we write $I(\gamma)$, $A_0(\gamma)$ and $A_1(\gamma)$ 
    for the sets $I$, $A_0$ and $A_1$ used to define $\gamma$, respectively. 
\end{definition}
The idea behind the notion of partial formulae is the following. 
In any solution~$\nu$ to $\phi_i$, some atomic formulae are satisfied (below, ``active'') and some are not (below, ``inactive'').
(We recall in passing that $\nu$ is a map assigning an integer to each variable in $\vec x$, $\vec y$, and $\vec z$, 
and to the parameter $t$, such that the integers assigned to the slack variables $\vec y$ are non-negative.)  
The formula $(\phi_i|_{I,T})$ in~\Cref{definition:partial-subformulae} guesses the ``activity'' 
of each equality featuring~$x_{i+1}$; replacing inactive equalities with $\bot$ and active equalities with $\top$.
Among the active equalities $f(t) \cdot x_{i+1} + \tau = 0$ (with respect to $\nu$), 
we distinguish two types, depending on whether $f(\nu(t)) = 0$. 
In~\Cref{definition:partial-subformulae}, the first type corresponds to the set $A_0$: 
here $f(t)$ is guessed to be $0$, and so $f(t) \cdot x_{i+1} + \tau = 0$ 
becomes equivalent to $f(t) = 0 \land \tau = 0$. 
The second type corresponds to the set $A_1$: 
here $f(t)$ is guessed non-zero, and therefore $f(t) \cdot x_{i+1} + \tau = 0$ 
becomes (trivially) equivalent to $f(t) \neq 0 \land f(t) \cdot x_{i+1} + \tau = 0$. 

The next two lemmas show that no information is lost when moving from $\phi_i$ to $\mathcal{P}(\phi_i)$.

\begin{lemma}
    \label{lemma:corrgauss:gamma-contains-phi}
    Every formula $\gamma \in \mathcal{P}(\phi_i)$ implies $\phi_i$.
\end{lemma}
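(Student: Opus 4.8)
The plan is to establish $\gamma \implies \phi_i$ by a direct semantic argument. Fix $\gamma \in \mathcal{P}(\phi_i)$ together with the partition $I, A_0, A_1$ of the set $T$ from~\Cref{definition:partial-subformulae} that defines it, and let $\nu$ be an arbitrary assignment of integers to the variables and to the parameter $t$ (with non-negative values on the slack variables) such that $\nu \models \gamma$. The goal is $\nu \models \phi_i$. Since $\gamma$ is a conjunction, $\nu$ satisfies each of its conjuncts: the formula $(\phi_i|_{I,T})$, every pair $f(t) = 0 \land \tau = 0$ with $(f(t),\tau) \in A_0$, and every pair $f(t) \neq 0 \land f(t) \cdot x_{i+1} + \tau = 0$ with $(f(t),\tau) \in A_1$.

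First I would record the key consequence of these conjuncts: for every $(f(t),\tau) \in A_0 \cup A_1$, the equation $f(t) \cdot x_{i+1} + \tau = 0$ holds under~$\nu$. For pairs in $A_1$ this is literally a conjunct of~$\gamma$; for pairs in $A_0$ it follows because $f(\nu(t)) = 0$ and $\nu(\tau) = 0$ make the term $f(t) \cdot x_{i+1} + \tau$ evaluate to $0$ under $\nu$.

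The core of the argument then compares the Boolean evaluations of $\phi_i$ and of $(\phi_i|_{I,T})$ under $\nu$: by construction the two formulae share the same Boolean structure and agree on every leaf except those carrying an equation $f(t) \cdot x_{i+1} + \tau = 0$ with $(f(t),\tau) \in T$, where $(\phi_i|_{I,T})$ carries the constant $\bot$ if $(f(t),\tau) \in I$ and the constant $\top$ if $(f(t),\tau) \in A_0 \cup A_1$. By the previous paragraph, at each leaf with $(f(t),\tau) \in A_0 \cup A_1$ the atom evaluates to $\top$ under $\nu$, matching the constant $\top$ placed there; at each leaf with $(f(t),\tau) \in I$ the constant $\bot$ is no larger than the truth value of the atom under $\nu$; and all other leaves coincide. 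Since $\phi_i$ is a \emph{positive} (hence monotone) Boolean combination of its atoms, a straightforward bottom-up induction on its parse tree, using monotonicity of $\land$ and $\lor$ on $\{\bot,\top\}$, shows that the value of $(\phi_i|_{I,T})$ under $\nu$ is at most the value of $\phi_i$ under $\nu$. In particular $\nu \models (\phi_i|_{I,T})$ yields $\nu \models \phi_i$; as $\nu$ was an arbitrary model of $\gamma$, this proves $\gamma \implies \phi_i$.

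I do not expect a genuine obstacle here; the only point requiring care is to confirm, by unwinding~\Cref{definition:partial-subformulae}, that $(\cdot|_{I,T})$ rewrites \emph{only} atoms of the form $f(t) \cdot x_{i+1} + \tau = 0$ and leaves every other atom of $\phi_i$ (in particular all divisibility constraints) untouched, so that $\phi_i$ and $(\phi_i|_{I,T})$ indeed share the same parse tree modulo those leaves. Everything else is the routine monotonicity argument sketched above.
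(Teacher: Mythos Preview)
Your proposal is correct and follows essentially the same approach as the paper: both fix a model $\nu$ of $\gamma$, verify that every equation $f(t)\cdot x_{i+1}+\tau=0$ with $(f(t),\tau)\in A_0\cup A_1$ holds under $\nu$, and then run a structural induction on the positive Boolean tree of $\phi_i$ to lift $\nu\models(\phi_i|_{I,T})$ to $\nu\models\phi_i$. The only cosmetic difference is that the paper indexes subformulae explicitly by words in $\{L,R\}^*$ and argues case by case at the leaves, whereas you phrase the same induction as a monotonicity argument; the content is identical.
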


\begin{proof}
    Let $\gamma \in \mathcal{P}(\phi_i)$, and $I \coloneqq I(\gamma)$, $A_0 \coloneqq A_0(\gamma)$, $A_1 \coloneqq A_1(\gamma)$ and $T \coloneqq I \cup A_0 \cup A_1$.
    Following~\Cref{definition:partial-subformulae}, 
    the formula $(\phi_i|_{I,T})$ has the same Boolean structure as $\phi_i$, the only difference is that 
    all equalities from $T$ are replaced with $\bot$ or $\top$. 
    Moreover, since $\phi_i$ is a positive Boolean combination, both $\phi_i$ and $(\phi_i|_{I,T})$ only contain the connectives $\lor$ and $\land$. Let us then refer to subformulae of $\phi_i$ and $(\phi_i|_{I,T})$ 
    by using finite words from the alphabet $\{L,R\}$, where $L$ selects the left disjunct/conjunct of a Boolean connective, and $R$ selects the right one. 
    Given a word~$w \in \{L,R\}^*$, we write $[\phi_i]_w$
    for the subformula of $\phi_i$ corresponding to $w$ (and proceed similarly for $(\phi_i|_{I,T})$).
    For example, $\big[(a \lor b) \land ((c \lor d) \land e)  \big]_{RL} = c \lor d$.
    Therefore, $[\phi_i]_\epsilon = \phi_i$ and $[(\phi_i|_{I,T})]_\epsilon = (\phi_i|_{I,T})$, where $\epsilon$ stands for the empty word.
    We leave $[\phi_i]_w$ undefined whenever $w$ does not lead to a subformula of $\phi_i$ 
    (e.g., $\big[(a \lor b) \land ((c \lor d) \land e)  \big]_{LLL}$ is undefined), 
    and write $\mathcal{W}$ for the set of words $w \in \{L,R\}^*$ for which $[\phi_i]_w$ is defined. 
    This also corresponds to the set of words $w \in \{L,R\}^*$ for which $[(\phi_i|_{I,T})]_w$ is defined.

    Let $\nu$ be a solution to $\gamma$.
    We prove that for every $w \in \mathcal{W}$, if $\nu$ satisfies $[(\phi_i|_{I,T})]_w$ then it also satisfies $[\phi_i]_w$. Since $\nu$ satisfies $(\phi_i|_{I,T})$ (as it satisfies $\gamma$), this suffices to conclude that $\nu$ satisfies $\phi_i$, as required. The proof is by induction on the words of $\mathcal{W}$, 
    with induction hypothesis asserting that the statement holds for all strictly longer words in this set. 
    
    \begin{description}
        \item[base case: $w \in \mathcal{W}$ is not a strict prefix of some word in $\mathcal{W}$.]
           In this case, $\alpha \coloneqq [\phi_i]_w$ and $\beta \coloneqq [(\phi_i|_{I,T})]_w$ are atomic formulae. 
           We split the proof into four cases: 
           \begin{description}
                \item[case: $\alpha$ is not an equation $f(t) \cdot x_{i+1} + \tau = 0$.] 
                We have $\alpha = \beta$ (by def.~of~$(\phi_i|_{I,T})$), 
                and therefore the statement trivially follows.
                \item[case: $\alpha$ is an equation $f(t) \cdot x_{i+1} + \tau = 0$ and $(f(t),\tau) \in I$.]
                We have $\beta = \bot$ (by definition of $(\phi_i|_{I,T})$). 
                Hence, $\nu$ does not satisfy $\beta$, and the statement follows.
                \item[case: $\alpha$ is an equation $f(t) \cdot x_{i+1} + \tau = 0$ and $(f(t),\tau) \in A_0$.]
                By def.~of $(\phi_i|_{I,T})$ we have $\beta = \top$.
                Since $\nu$ satisfies $\gamma$, it also satisfies $f(t) = 0 \land \tau = 0$ 
                (by definition of~$\gamma$). This conjunction entails $f(t) \cdot x_{i+1} + \tau = 0$; and so $\nu$ satisfies $\alpha$. 
                \item[case: $\alpha$ is an equation $f(t) \cdot x_{i+1} + \tau = 0$ and $(f(t),\tau) \in A_1$.]
                Since $\nu$ satisfies $\gamma$, it also satisfies $f(t) \neq 0 \land f(t) \cdot x_{i+1} + \tau = 0$ (by definition of $\gamma$). Hence, $\nu$ satisfies $\alpha$. 
           \end{description}
        \item[induction step: $w \in \mathcal{W}$ is a strict prefix of some word in $\mathcal{W}$.]
           From the definition of $\mathcal{W}$, we have $[\phi_i]_w = [\phi_i]_{wL} \oplus [\phi_i]_{wR}$ and $[(\phi_i|_{I,T})]_w = [(\phi_i|_{I,T})]_{wL} \oplus [(\phi_i|_{I,T})]_{wR}$, where $\oplus \in \{\land,\lor\}$.
           We split the proof into two cases, depending on $\oplus$. 
           \begin{description}
            \item[case: $\oplus$ is $\land$.] Suppose that $\nu$ satisfies $[(\phi_i|_{I,T})]_{wL} \land [(\phi_i|_{I,T})]_{wR}$. By induction hypothesis, $\nu$ satisfies $[\phi_i]_{wL}$ and $[\phi_{i}]_{wR}$. 
            That is, $\nu$ satisfies $[\phi_i]_{wL} \land [\phi_{i}]_{wR}$. 
            \item[case: $\oplus$ is $\lor$.] Suppose that $\nu$ satisfies $[(\phi_i|_{I,T})]_{wL} \lor [(\phi_i|_{I,T})]_{wR}$. There is $D \in \{L,R\}$ such that $\nu$ satisfies $[(\phi_i|_{I,T})]_{wD}$.
            By induction hypothesis, $\nu$ satisfies $[\phi_i]_{wD}$. 
            Therefore, $\nu$ satisfies $[\phi_i]_{wL} \lor [\phi_{i}]_{wR}$. 
            \qedhere
           \end{description}
    \end{description}
\end{proof}

\begin{lemma}\label{lemma:corrgauss:second-step}
    The formula
    $\phi_i$
    implies $\bigvee_{\gamma \in \mathcal{P}(\phi_i)} \gamma$.
\end{lemma}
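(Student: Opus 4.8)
The plan is to prove the implication semantically: I would fix an arbitrary solution $\nu$ to $\phi_i$ (a map sending each variable to an integer, with the slack variables $\vec y$ mapped to non-negative integers, and $t$ to some integer) and exhibit a single partial formula $\gamma \in \mathcal{P}(\phi_i)$ that $\nu$ satisfies. Concretely, let $T \coloneqq \{(f(t),\tau) : \text{the equation } f(t)\cdot x_{i+1}+\tau = 0 \text{ occurs in } \phi_i\}$, and define a partition $(I,A_0,A_1)$ of $T$ directly from $\nu$: put $(f(t),\tau)$ into $I$ if $\nu$ does not satisfy $f(t)\cdot x_{i+1}+\tau = 0$; otherwise put it into $A_0$ if $f(\nu(t)) = 0$, and into $A_1$ if $f(\nu(t)) \neq 0$. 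By construction this is a partition of $T$, so the associated partial formula, namely $(\phi_i|_{I,T}) \land \bigwedge_{(f(t),\tau)\in A_0}(f(t)=0 \land \tau=0) \land \bigwedge_{(f(t),\tau)\in A_1}(f(t)\neq 0\land f(t)\cdot x_{i+1}+\tau=0)$, is a genuine element of $\mathcal{P}(\phi_i)$.

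Next I would verify that $\nu$ satisfies each of the three conjuncts of $\gamma$. The conjuncts indexed by $A_0$ and $A_1$ are immediate from the choice of partition: for $(f(t),\tau)\in A_1$ we have $f(\nu(t))\neq 0$ and $\nu$ satisfies $f(t)\cdot x_{i+1}+\tau=0$ by definition of $A_1$; for $(f(t),\tau)\in A_0$ we have $f(\nu(t))=0$ and $\nu$ satisfies $f(t)\cdot x_{i+1}+\tau=0$, and substituting $f(\nu(t))=0$ into this equation forces $\nu(\tau)=0$, hence $\nu$ satisfies $f(t)=0 \land \tau=0$. The only part requiring an argument is that $\nu$ satisfies $(\phi_i|_{I,T})$. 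Here I would reuse the word-indexing machinery $\mathcal{W}$, $[\cdot]_w$, $\oplus \in \{\land,\lor\}$ from the proof of \Cref{lemma:corrgauss:gamma-contains-phi}. The point is that $(\phi_i|_{I,T})$ replaces each equation $\alpha = (f(t)\cdot x_{i+1}+\tau=0)$ of $\phi_i$ by $\bot$ exactly when $(f(t),\tau)\in I$, i.e.\ exactly when $\nu\not\models\alpha$, and by $\top$ otherwise, i.e.\ exactly when $\nu\models\alpha$, while leaving all non-equation atoms of $\phi_i$ unchanged (\Cref{lemma:corrgauss:straightforward:second} tells us the remaining atoms are linear equalities and divisibility constraints, which are untouched). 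Consequently every atom of $(\phi_i|_{I,T})$ has the same truth value under $\nu$ as the corresponding atom of $\phi_i$, and a routine induction on $w\in\mathcal{W}$ — using that $\phi_i$ is a positive Boolean combination, so the connectives $\land,\lor$ are truth-functional — shows $\nu\models[\phi_i]_w \iff \nu\models[(\phi_i|_{I,T})]_w$ for all $w$; taking $w=\epsilon$ and $\nu\models\phi_i$ gives $\nu\models(\phi_i|_{I,T})$.

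Putting the three verifications together yields $\nu\models\gamma$, hence $\nu\models\bigvee_{\gamma\in\mathcal{P}(\phi_i)}\gamma$; since $\nu$ was an arbitrary solution of $\phi_i$, this proves $\phi_i \implies \bigvee_{\gamma\in\mathcal{P}(\phi_i)}\gamma$. I do not anticipate a genuine obstacle: the statement is essentially dual to \Cref{lemma:corrgauss:gamma-contains-phi}, and the only mildly delicate point is phrasing the inductive ``atoms have matching truth values'' step precisely, which is entirely parallel to the induction already performed there and can be abbreviated by reference to it.
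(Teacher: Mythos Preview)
Your proposal is correct and follows essentially the same approach as the paper: both fix a solution $\nu$ of $\phi_i$, define the partition $(I,A_0,A_1)$ of $T$ according to which equations $\nu$ satisfies and whether $f(\nu(t))=0$, and then check that $\nu$ satisfies the resulting partial formula. The paper's proof is terser (it merely asserts that $\nu$ satisfies $(\phi_i|_{I,T})$ because each equation featuring $x_{i+1}$ is replaced by its truth value under $\nu$), whereas you spell out the inductive ``matching truth values'' step by reusing the word-indexing machinery from \Cref{lemma:corrgauss:gamma-contains-phi}; this is a harmless expansion of the same argument.
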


\begin{proof}
    Consider a solution $\nu$ to $\phi_i$. 
    Let $T$ be defined as in~\Cref{definition:partial-subformulae} 
    as the set $\{(f(t),\tau) : \text{the equation $f(t) \cdot x_{i+1} + \tau = 0$ occurs in $\phi_i$}\}$.  
    Let $I$ and $A_0$ and $A_1$ be the following partition of $T$, induced by $\nu$:
    \begin{align*} 
        I &\coloneqq \{ (f(t),\tau) \in T : \text{$\nu$ does not satisfy~$f(t) \cdot x_{i+1} + \tau = 0$}\}\\
        A_0 &\coloneqq \{ (f(t),\tau) \in T : \text{$\nu$ satisfies~$f(t) \cdot x_{i+1} + \tau = 0 \land f(t) = 0$}\}\\
        A_1 &\coloneqq \{ (f(t),\tau) \in T : \text{$\nu$ satisfies~$f(t) \cdot x_{i+1} + \tau = 0 \land f(t) \neq 0$}\}.
    \end{align*}
    From~\Cref{definition:partial-subformulae}, 
    it is simple to see that $\nu$ satisfies the partial formula~$\gamma$ of $\phi_i$
    such that $I(\gamma) = I$, $A_0(\gamma) = A_0$ and $A_1(\gamma) = A_1$.
    Indeed, the fact that $\nu$ satisfies $(\phi_i|_{I,T})$ follows from the fact that 
    $(\phi_i|_{I,T})$ is obtained from $\phi_i$ by replacing each equality featuring $x_{i+1}$ 
    with $\top$ (when $\nu$ satisfies the equality) or $\bot$ (when $\nu$ does not satisfy the equality). Moreover, for every $(f(t),\tau) \in A_0$,  $\nu$ satisfies~$f(t) \cdot x_{i+1} + \tau = 0 \land f(t) = 0$, which in turn means that $\nu$ satisfies $\tau = 0 \land f(t) = 0$. For every $(f(t),\tau) \in A_1$, $\nu$ trivially satisfies $f(t) \cdot x_{i+1} + \tau = 0 \land f(t) \neq 0$.
\end{proof}

We now discuss three cases, depending on the set $A_1(\gamma)$. 
These cases will later corresponds to different branches in~\Cref{algo:gaussianqe}, 
from which we will prove~\Cref{lemma:corrgauss:key-second-step}.

Here is the first case:

\begin{lemma}
    \label{lemma:corrgauss:third-step:a}
    Let $\gamma \in \mathcal{P}(\phi_i)$ such that $A_1(\gamma) = \emptyset$.
    Then, 
    \begin{center}
    $\exists x_{i+1} (\gamma \land \chi_i)$  \ implies \ $\exists x_{i+1} (0 \leq x_{i+1} \leq m_i(t) - 1 \land \gamma \land \chi_i)$.
    \end{center}
\end{lemma}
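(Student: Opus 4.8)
The plan is to argue semantically at the level of valuations. Fix a valuation $\nu$ of $t$ and of all variables other than $x_{i+1}$, and suppose $\nu \models \exists x_{i+1}\,(\gamma \land \chi_i)$, so that there is an integer $a$ with $\nu[x_{i+1}\mapsto a] \models \gamma \land \chi_i$. I would then replace $a$ by $a' := a \bmod m_i(\nu(t))$ and show that the modified valuation still satisfies $\gamma \land \chi_i$ while in addition satisfying $0 \le x_{i+1} \le m_i(t)-1$; this gives $\nu \models \exists x_{i+1}\,(0 \le x_{i+1} \le m_i(t)-1 \land \gamma \land \chi_i)$, which is the claimed implication.

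The first step is to locate the occurrences of $x_{i+1}$ in $\gamma \land \chi_i$. Since $A_1(\gamma) = \emptyset$, \Cref{definition:partial-subformulae} gives $\gamma = (\phi_i|_{I,T}) \land \bigwedge_{(f,\tau)\in A_0}(f(t)=0 \land \tau = 0)$; in $(\phi_i|_{I,T})$ every equation featuring $x_{i+1}$ has been replaced by $\top$ or $\bot$, the terms $\tau$ appearing in $A_0$ do not contain $x_{i+1}$ (they are obtained by removing $f(t)\cdot x_{i+1}$ from a normalised equation), and by \Cref{lemma:corrgauss:straightforward:second}.\ref{lemma:corrgauss:straightforward:second:i3} the formula $\chi_i$ is a conjunction of inequalities $h(t) > 0$. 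Because $\phi_i$ is a positive Boolean combination of \emph{linear} equalities and divisibility constraints (\Cref{lemma:corrgauss:straightforward:second}.\ref{lemma:corrgauss:straightforward:second:i1}), it follows that $x_{i+1}$ occurs in $\gamma \land \chi_i$ only inside divisibility constraints of the form $g(t) \divides c(t)\cdot x_{i+1} + \sigma$, where $c \in \Z[t]$ and $\sigma$ is a linear term not mentioning $x_{i+1}$; and by \Cref{lemma:corrgauss:straightforward:second}.\ref{lemma:corrgauss:straightforward:second:i2} each such $g(t)$ divides $m_i(t)$ in $\Z[t]$.

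The second step uses that $\chi_i$ entails $m_i(t) > 0$ (\Cref{lemma:corrgauss:straightforward:second}.\ref{lemma:corrgauss:straightforward:second:i4}). Writing $M := m_i(\nu(t))$, we get $M \ge 1$; and since $g(t) \mid m_i(t)$ in $\Z[t]$, we have $g(\nu(t)) \mid M$, in particular $g(\nu(t)) \ne 0$. Now set $a' := a \bmod M \in [0,M-1]$ and $\nu' := \nu[x_{i+1}\mapsto a']$. Every atomic formula of $\gamma \land \chi_i$ not mentioning $x_{i+1}$ evaluates the same under $\nu[x_{i+1}\mapsto a]$ and $\nu'$, hence stays satisfied; for a divisibility $g(t) \divides c(t)\cdot x_{i+1} + \sigma$ from $\gamma$, the value of $c(t)\cdot x_{i+1} + \sigma$ under the two valuations differs by $c(\nu(t))\cdot(a-a')$, which is a multiple of $M$ and therefore of $g(\nu(t))$; since $g(\nu(t)) \ne 0$, the constraint holds at $\nu'$ iff it holds at $\nu[x_{i+1}\mapsto a]$. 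Thus $\nu' \models \gamma \land \chi_i$, and by construction $\nu' \models 0 \le x_{i+1} \le m_i(t)-1$, concluding the argument.

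I do not anticipate a genuine difficulty; the one place needing care is the bookkeeping of which occurrences of $x_{i+1}$ survive in $\gamma$ — the hypothesis $A_1(\gamma) = \emptyset$ is exactly what guarantees that no equation in $x_{i+1}$ is reinstated, so that $x_{i+1}$ is confined to divisibility constraints whose divisors are controlled by $m_i(t)$. The only edge case, $g(\nu(t)) = 0$, is ruled out precisely because $g \mid m_i$ in $\Z[t]$ while $\chi_i$ forces $m_i(\nu(t)) > 0$.
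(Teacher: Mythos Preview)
Your proof is correct and follows essentially the same approach as the paper's: both observe that $A_1(\gamma)=\emptyset$ confines $x_{i+1}$ to divisibility constraints, use \Cref{lemma:corrgauss:straightforward:second} to ensure each divisor $g(\nu(t))$ divides $m_i(\nu(t))>0$, and then shift $x_{i+1}$ into the interval $[0,m_i(\nu(t))-1]$ while preserving all divisibilities. The only cosmetic difference is that the paper introduces $L=\lcm\{g_j(\nu(t))\}$ as an intermediate modulus and then notes $L \le m_i(\nu(t))$, whereas you reduce directly modulo $M=m_i(\nu(t))$; both are equally valid.
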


\begin{proof}
    Since $A_1(\gamma) = \emptyset$, in $\gamma$ the variable $x_{i+1}$ only occurs in divisibility constraints.
    Consider then a solution $\nu$ to $\gamma \land \chi_i$. Let $g_1(t),\dots,g_k(t)$ be the list of divisors occurring in divisibility constraints of $\phi_i$. These are also all the divisors occurring in~$\gamma$ (by definition of~$\gamma$). 
    Each divisor~$g_j$ ($j \in [1,k]$) is evaluated to an integer $g_j(\nu(t))$. 
    From~\Cref{lemma:corrgauss:straightforward:second}:
    \begin{itemize}
        \item $\chi_i$ implies $m(t) > 0$, and the variable $x_{i+1}$ does not occur in this formula.
        \item $g_j(t)$ is a factor of $m_i(t)$.
    \end{itemize} 
    Therefore, $m(\nu(t))$ is positive, and it is divided by $g_j(\nu(t))$
    (in particular, $g_j(\nu(t))$ is non-zero).
    Let $L \coloneqq \lcm\{g_j(\nu(t)) : j \in [1,k]\}$ (recall: $\lcm$ returns a non-negative integer). 
    All numbers of the form $\nu(x_{i+1}) + j \cdot L$, for $j \in \Z$, have the same remainder 
    modulo every $g_j(\nu(t))$. 
    Since $x_{i+1}$ only occurs in divisibility constraints, updating $\nu(t)$ to any of these values still yields a solution to $\gamma \land \chi_i$. In particular, one such value must lie in $[0,L-1]$.
    Then, to conclude the proof, it suffices to observe that $L \leq m(\nu(t))$, 
    as $L$ divides $m(\nu(t)) > 0$.
\end{proof}

Let us denote 
by $\vec y'$ the slack variables from $\vec y$ that are unassigned in $B_i$.
Given ${\gamma \in \mathcal{P}(\phi_i)}$, we ``divide'' the set $A_1(\gamma)$ into two sets $A_{1}^{\text{eq}}(\gamma)$ and $A_1^{\text{slk}}(\gamma)$: 
\begin{align*}
    A_{1}^{\text{eq}}(\gamma) &\coloneqq \{ (f(t),\tau) \in A_1(\gamma) : \text{$\tau$ does not contain a slack variable from $\vec y'$}\},\\
    A_1^{\text{slk}}(\gamma) &\coloneqq \{(\sigma, y) : \text{$\sigma = (f(t),\tau) \in A_1(\gamma) \setminus A_{1}^{\text{eq}}(\gamma)$, and $y$ occurs in $\vec y'$ and $\tau$}\}.
\end{align*}

For the second of the three cases, let us consider when $A_1^{\text{eq}}(\gamma) \neq \emptyset$. 
Let $(f(t),\tau) \in A_1^{\text{eq}}(\gamma)$.
Any solution to $\gamma$ satisfies ${f(t) \cdot x_{i+1} + \tau = 0 \land f(t) \neq 0}$, and this equality does not feature any unassigned slack variable. There is then little to do, 
we can simply substitute $x_{i+1}$ for $\frac{-\tau}{f(t)}$, eliminating this variable without producing any new bounded quantifier (more on that later). We are left with the third case (the most interesting one):

\begin{lemma}
    \label{lemma:corrgauss:third-step:b}
    Let $\gamma \in \mathcal{P}(\phi_i)$ with $A_1^{\text{slk}}(\gamma) \neq \emptyset$ and $A_{1}^{\text{eq}}(\gamma) = \emptyset$. Then, 
    \[ 
        \exists \vec y' \in \N\, \exists x_{i+1} (\gamma \land \chi_i)  \ \text{ implies } \ 
            \hspace{-10pt}\bigvee_{\substack{((f(t),\tau),y) \in A_{1}^{\text{slk}}(\gamma)\\ \pm \in \{+,-\}}} 
            \hspace{-10pt}\exists \vec y' \in \N\, \exists x_{i+1}
            (0 \leq y \leq \pm f(t) \cdot m_i(t)-1 \land \gamma \land \chi_i).
    \]
\end{lemma}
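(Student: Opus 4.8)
The plan is to show that every valuation witnessing the left-hand side can be turned, by shifting $x_{i+1}$ along a multiple of $m_i(t)$ and re-solving the affected slack variables, into a witness of one of the disjuncts on the right. First I would gather the structural facts, all coming from \Cref{lemma:corrgauss:straightforward:second}: $\gamma$ is a positive Boolean combination of linear equalities and divisibility constraints, every divisor $g(t)$ occurring in it is a factor of $m_i(t)$, and $\chi_i$ is a conjunction of inequalities $h(t)>0$ entailing both $m_i(t)>0$ and $\pm_i\ell_i(t)>0$. Since $A_1^{\text{eq}}(\gamma)=\emptyset$, the variable $x_{i+1}$ occurs in $\gamma$ only inside divisibility constraints and inside equalities $f(t)\cdot x_{i+1}+\tau=0$ with $(f(t),\tau)\in A_1(\gamma)$; there are $p\geq1$ such equalities $f_1(t)\cdot x_{i+1}+\tau_1=0,\dots,f_p(t)\cdot x_{i+1}+\tau_p=0$, and each $\tau_j$ contains at least one unassigned slack (otherwise $(f_j(t),\tau_j)$ would lie in the empty set $A_1^{\text{eq}}(\gamma)$). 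I would then invoke \Cref{lemma:same-slack-everywhere}: all unassigned slacks occur in $\phi_i$, hence in $\gamma$, with the same coefficient $\ell_i(t)$, and each unassigned slack occurs in a single constraint of $\phi_i$, which is an equality; in particular no unassigned slack occurs in a divisibility constraint, and for $j'\neq j$ the slacks of $\tau_{j'}$ do not occur in $\tau_j$. For each $j\in[1,p]$ fix one unassigned slack $y_j$ occurring in $\tau_j$, so $((f_j(t),\tau_j),y_j)\in A_1^{\text{slk}}(\gamma)$.

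Next I would fix a valuation $\nu$ assigning non-negative integers to $\vec y'$, an integer to $x_{i+1}$, and the ambient values to the other variables and to $t$, satisfying $\gamma\land\chi_i$, and set $c:=\ell_i(\nu(t))$ and $m:=m_i(\nu(t))$; by the above $m>0$, $c\neq0$, and $F_j:=f_j(\nu(t))\neq0$ for every $j$. For $k\in\Z$ consider the valuation $\nu_k$ obtained from $\nu$ by replacing the value of $x_{i+1}$ with $\nu(x_{i+1})+c\,m\,k$ and, for each $j$, the value of $y_j$ with $\nu(y_j)-F_j\,m\,k$, leaving everything else untouched. Three checks show that $\nu_k$ satisfies $\gamma\land\chi_i$ whenever it assigns non-negative values to $\vec y'$: (i)~each equality $f_j(t)\cdot x_{i+1}+\tau_j=0$ still holds, a one-line substitution using that $y_j$ has coefficient $\ell_i(t)=c$ (so the factor $c$ cancels) and that $\tau_j-\ell_i(t)\,y_j$ contains neither $x_{i+1}$ nor any of $y_1,\dots,y_p$; (ii)~each divisibility $g(t)\mid\rho$ still holds, because $\rho$ contains no unassigned slack and so changes only through $x_{i+1}$, by $c\,m\,k$ times the coefficient of $x_{i+1}$, which is a multiple of $m$ and hence of the non-zero integer $g(\nu(t))$ (a factor of $m$); (iii)~$\chi_i$ and every atom mentioning neither $x_{i+1}$ nor a changed slack are untouched.

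It then remains only to choose $k$. Writing $a_j:=F_j\,m\neq0$ and $n_j:=\nu(y_j)\geq0$, I need $k\in\Z$ with $n_j-a_jk\geq0$ for all $j$ and $n_j-a_jk\in[0,|a_j|-1]$ for at least one $j$ (slacks other than $y_1,\dots,y_p$ are unchanged, hence stay non-negative for free). If some $a_j>0$, take $k:=\min\{\lfloor n_j/a_j\rfloor : a_j>0\}$, attained at an index $j_1$ with $a_{j_1}>0$: then $n_{j_1}-a_{j_1}k=n_{j_1}\bmod a_{j_1}\in[0,a_{j_1}-1]$, $k\geq0$, and $k$ meets every constraint $n_j-a_jk\geq0$. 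If instead all $a_j<0$, take $k:=\max\{\lceil n_j/a_j\rceil : j\in[1,p]\}$, attained at some $j_1$; a symmetric computation gives $n_{j_1}-a_{j_1}k=n_{j_1}\bmod|a_{j_1}|\in[0,|a_{j_1}|-1]$ and $n_j-a_jk\geq0$ for all $j$. In either case, letting $\pm$ be the sign with $\pm f_{j_1}(t)>0$ — so that $\pm f_{j_1}(\nu(t))\cdot m_i(\nu(t))=|a_{j_1}|$ — the valuation $\nu_k$ is non-negative on $\vec y'$ and witnesses the disjunct indexed by $((f_{j_1}(t),\tau_{j_1}),y_{j_1})$ and this $\pm$, which proves the implication.

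The geometric heart — shift $x_{i+1}$ by a multiple of $m_i(t)$, then pick the shift extremally — is exactly the $\exists$\PA reasoning recalled at the start of \Cref{sec:gaussian-elimination}, now carried out over $\Z[t]$, so I expect the only real friction to be presentational: stating the ``one representative slack per $A_1$ equality'' reduction cleanly, tracking that divisibility constraints carry no unassigned slacks, and handling the sign case analysis for the $a_j$'s without losing the uniformity that makes the bounds come out as the polynomials $\pm f(t)\cdot m_i(t)-1$.
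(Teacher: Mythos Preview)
Your proof is correct and rests on the same structural ingredients as the paper's: \Cref{lemma:corrgauss:straightforward:second} for the sign and divisibility facts, and \Cref{lemma:same-slack-everywhere} for the uniform coefficient $\ell_i(t)$ of the unassigned slacks and their single-constraint occurrence. The shift you use, $x_{i+1}\mapsto x_{i+1}+\ell_i(t)\,m_i(t)\,k$ together with $y_j\mapsto y_j - f_j(t)\,m_i(t)\,k$, is exactly the one-parameter family the paper exploits.

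The one genuine difference is in how the shift amount is selected. The paper argues by extremality: it takes a solution minimizing $\min_j \nu(y_j)/|f_j(\nu(t))|$ and shows by contradiction (a single step of the shift with $k=\pm 1$) that this minimum is below $m_i(\nu(t))$. You instead compute $k$ explicitly via $\min_j\lfloor n_j/a_j\rfloor$ (or the ceiling analogue when all $a_j<0$), landing one $y_{j_1}$ in $[0,|a_{j_1}|-1]$ directly. Your route is a touch more constructive and avoids the well-ordering appeal; the paper's route is shorter to state because it does not need the sign case split on the $a_j$. Both buy the same bound $\pm f(t)\cdot m_i(t)-1$, and your closing remark correctly identifies this as the standard $\exists$\PA small-shift argument transported to $\Z[t]$.
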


\begin{proof}
    We follow arguments in~\cite[Lemma 17]{ChistikovMS24}.
    Let $A_1^{\text{slk}}(\gamma) = \{((g_1,\tau_1),y_1'),\dots,((g_q,\tau_1),y_q')\}$.
    Let $\nu$ be a solution to $\gamma \land \chi_i$, which in particular assigns an integer to $x_{i+1}$ and non-negative integers to each of the slack variables $y_1',\dots,y_q'$. 
    Observe that each integer $g_j(\nu(t))$ is non-zero, because $(g_j,\tau_j)$ is an element of $A_1(\gamma)$.

    Consider the auxiliary rational vector $\vec v^{\nu} = (v_1,\dots,v_q)$ where $v_j \coloneqq \frac{\nu(y_j')}{\abs{g_j(\nu(t))}}$. Each $v_j$ is positive.
    Suppose now that the values given by $\nu$ to the variables $x_{i+1},y_1',\dots,y_q'$ are those 
    that \emph{minimize} the smallest possible component of $\vec v^{\nu}$. Without loss of generality, assume $v_1$ to be such component. In other words, no matter how we change the values given to $x_{i+1},y_1',\dots,y_q'$ in the solution $\nu$, if the resulting map~$\xi$ is still a solution, then the minimal component in its auxiliary rational vector~$\vec v^{\xi}$ is greater or equal than $v_1$.

    We show that $v_1 < m_i(\nu(t))$ or, in other words, $\nu(y_1') \leq \abs{g_1(\nu(t))} \cdot m_i(\nu(t))-1$, proving the lemma. 
    (Recall that $\chi_i$ implies $m(t) > 0$ by~\Cref{lemma:corrgauss:straightforward:second}, hence $m_i(\nu(t))$ is positive).
    
    For the sake of contradiction, suppose $v_1 \geq m_i(\nu(t))$; and note that this implies $v_j \geq m_i(\nu(t))$ for all $j \in [1,q]$.
    Let $\mu(t)$ be the coefficient of the slack variable $y_1'$ in~$\tau_1$. 
    From~\Cref{lemma:same-slack-everywhere}, for every $j \in [1,q]$, $\tau_j$ is the only term in which $y_j'$ occurs, and moreover its coefficient is $\mu(t)$. 
    We also know that $\mu(\nu(t))$ cannot be zero. Indeed, from~\Cref{lemma:same-slack-everywhere},
    $\mu(t)$ is the polynomial $f(t)$ from line~\ref{gauss:guess-equation}, with respect to the last time 
    this line was executed by the~\textbf{foreach} loop of line~\ref{gauss:mainloop} (with $f(t) = 1$ if no such previous execution occurred). Let us say that this happened during the $k$th iteration of the loop, with $k \leq i$. Then, $\chi_k$ contains the constraint $\pm_k \mu(t) > 0$.
    Iterations between the $k$th and before the (current) $(i+1)$th one do not change the formula $\chi$ (the \textbf{continue} statement of line~\ref{gauss:guess-all-zeros} is executed instead), hence $\chi_{i}$ still contains the constraint $\pm_k \mu(t) > 0$, which forces $\mu(t)$ to be non-zero.
    
    Let us denote by $\xi$ the assignment that agrees with $\nu$ on all possible variables except $x_{i+1}$ and $y_1',\dots,y_q'$, and such that:
    \begin{align*}
        \xi(x_{i+1}) &= \nu(x_{i+1}) \pm \mu(\nu(t)) \cdot m_i(\nu(t)),\\
        \xi(y_j') &= \nu(y_j') \mp g_j(\nu(t)) \cdot m_j(\nu(t)) & \text{for every } j \in [1,q],
    \end{align*}
    where $\mp$ stands for $-$ whenever $g_1(\nu(t)) > 0$, and for $+$ otherwise, whereas $\pm \in \{+,-\}$ is the sign opposite to $\mp$.
    Clearly, $\frac{\xi(y_1')}{\abs{g_1(\nu(t))}} < v_1$, since $\xi(y_1')$ is obtained by adding a negative number to $\nu(y_1')$.
    Moreover, for every $j \in [1,q]$, we have $\xi(y_j') \geq 0$; 
    because $v_j \geq m_i(\nu(t))$ and we have (in the worse case) only removed $\abs{g_j(\nu(t))} \cdot m_j(\nu(t))$ from~$\nu(y_j')$. 
    
    To conclude the proof it suffices to show that $\xi$ still satisfies $\gamma \land \chi_i$. 
    Indeed, in this way $\frac{\xi(y_1')}{\abs{g_1(\nu(t))}} < v_1$ contradicts the fact that $v_1$ is the minimal possible value that entries in the auxiliary vector~$\vec v^{\nu}$ can take as we change the values of $x_{i+1},y_1',\dots,y_q'$ (while preserving the satisfaction of $\gamma \land \chi_i$).

    Clearly, the map $\xi$ satisfies all equalities and divisibility constraints not featuring $x_{i+1}$ (hence, in particular, it satisfies~$\chi_i$). 
    Similarly, it satisfies also all divisibility constraints featuring $x_{i+1}$, 
    since by~\Cref{lemma:corrgauss:straightforward:second} we have that every divisor in these constraints is a factor of $m_i(t)$, 
    and $\xi(x_{i+1})$ has been defined by shifting $\nu(x_{i+1})$ by a multiple of $m_i(\nu(t))$. 
    Lastly, the equalities in $\gamma$ involving $x_{i+1}$ are still all satisfies.
    Indeed, given $j \in [1,q]$, let us write $\tau_j = \rho_j + \mu(t) \cdot y_j'$. We have:
    \begin{align*}
        &g_j(\xi(t)) \cdot \xi(x_{i+1}) + \rho_j + \mu(\xi(t)) \cdot \xi(y_j')\\
        ={}&g_j(\nu(t)) \cdot \big(\nu(x_{i+1}) \pm \mu(\nu(t)) \cdot m_i(\nu(t))\big) + \rho_j + \mu(\nu(t)) \cdot \big(\nu(y_j') \mp g_j(\nu(t)) \cdot m_i(\nu(t))\big)\\
        ={}&g_j(\nu(t)) \cdot \nu(x_{i+1}) + \rho_j + \mu(\nu(t)) \cdot \xi(y_j)\\
        ={}& 0&\qedhere
    \end{align*}
\end{proof}

We now combine \Cref{lemma:corrgauss:gamma-contains-phi,lemma:corrgauss:second-step,lemma:corrgauss:third-step:a,lemma:corrgauss:third-step:b}:

\begin{lemma}
    \label{lemma:corrgauss:step-two:main-lemma}
    The formula 
    $\exists \vec y' \in \N \, \exists x_{i+1}(\phi_i \land \chi_i)$ is equivalent to 
    \begin{align*} 
        &\Big(\exists \vec y' \in \N \, \exists x_{i+1} \big(0 \leq x \leq m_i(t) - 1 \land \phi_i  \land \chi_i\big) \Big) \lor{} \\
        \bigvee_{\substack{\gamma \in \mathcal{P}(\phi_i)\\ \pm \in \{+,-\}}}
        &\Big(
        \bigvee_{(f(t),\tau) \in A_{1}^{\text{eq}}(\gamma)}
        \hspace{-0.2cm}\Big(\exists \vec y' \in \N \, \exists x_{i+1} \big(f(t) \cdot x_{i+1} + \tau = 0 \land \pm f(t) > 0 \land \phi_i \land \chi_i \big) \Big) \lor{} \\
        &\hspace{-1.8cm}\bigvee_{((f(t),\tau),y) \in A_{1}^{\text{slk}}(\gamma)}
        \hspace{-0.4cm}\Big(\exists \vec y' \in \N \, \exists x_{i+1} \big(f(t) \cdot x_{i+1} + \tau = 0 \land 0 \leq y \leq \pm f(t) \cdot m_i(t)-1 \land \phi_i \land \chi_i \big) \Big)\Big).
    \end{align*}
\end{lemma}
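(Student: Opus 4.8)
The plan is to prove the claimed equivalence by handling the two implications separately: the right‑to‑left direction is essentially free, and the left‑to‑right direction is a three‑way case analysis that simply packages \Cref{lemma:corrgauss:third-step:a,lemma:corrgauss:third-step:b} together with \Cref{lemma:corrgauss:gamma-contains-phi,lemma:corrgauss:second-step}.

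For the ($\Longleftarrow$) direction I would observe that each disjunct on the right‑hand side is obtained from $\exists \vec y' \in \N\,\exists x_{i+1}(\phi_i \land \chi_i)$ by conjoining, under the very same quantifier prefix, only additional constraints — a bound $0 \leq x_{i+1} \leq m_i(t)-1$, an equation $f(t)\cdot x_{i+1}+\tau = 0$, an inequality $\pm f(t) > 0$, or a bound $0 \leq y \leq \pm f(t)\cdot m_i(t)-1$ on a slack variable $y$ of $\vec y'$ (whose non‑negativity is already guaranteed by the prefix $\exists \vec y' \in \N$). Each disjunct therefore entails $\exists \vec y' \in \N\,\exists x_{i+1}(\phi_i \land \chi_i)$, and hence so does their disjunction.

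For the ($\Longrightarrow$) direction I would start from a valuation $\nu$ of the free variables together with non‑negative values for $\vec y'$ and a value for $x_{i+1}$ satisfying $\phi_i \land \chi_i$. By \Cref{lemma:corrgauss:second-step} this valuation satisfies some partial formula $\gamma \in \mathcal{P}(\phi_i)$, hence $\gamma \land \chi_i$, and I would split on $A_1(\gamma)$ and $A_1^{\text{eq}}(\gamma)$. If $A_1(\gamma) = \emptyset$: \Cref{lemma:corrgauss:third-step:a} lets me push $x_{i+1}$ into $[0, m_i(t)-1]$ while keeping $\gamma \land \chi_i$ true and leaving $\vec y'$ untouched, and then \Cref{lemma:corrgauss:gamma-contains-phi} turns $\gamma$ into $\phi_i$, landing in the first disjunct. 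If $A_1^{\text{eq}}(\gamma) \neq \emptyset$: I pick any $(f(t),\tau) \in A_1^{\text{eq}}(\gamma)$, note that $\gamma$ entails $f(t) \neq 0 \land f(t)\cdot x_{i+1}+\tau = 0$ so that $f(t)$ evaluates to a non‑zero integer under $\nu$, take $\pm$ to be its sign (making $\pm f(t) > 0$ hold), and use \Cref{lemma:corrgauss:gamma-contains-phi} again; this lands in the disjunct indexed by $(\gamma,\pm)$ and this $(f(t),\tau)$. Otherwise $A_1(\gamma) \neq \emptyset$ and $A_1^{\text{eq}}(\gamma) = \emptyset$, which by the definitions of $A_1^{\text{eq}}$ and $A_1^{\text{slk}}$ forces $A_1^{\text{slk}}(\gamma) \neq \emptyset$; then \Cref{lemma:corrgauss:third-step:b} applies and yields, for some $((f(t),\tau),y) \in A_1^{\text{slk}}(\gamma)$ and some sign $\pm$, a modified valuation — changing only $x_{i+1}$ and the slacks $\vec y'$, which remain non‑negative by that lemma — satisfying $0 \leq y \leq \pm f(t)\cdot m_i(t)-1 \land \gamma \land \chi_i$; since $(f(t),\tau) \in A_1(\gamma)$ the formula $\gamma$ entails $f(t)\cdot x_{i+1}+\tau = 0$, so I may add that conjunct for free and apply \Cref{lemma:corrgauss:gamma-contains-phi} once more, landing in the corresponding $A_1^{\text{slk}}$ disjunct.

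I do not expect a genuine obstacle: this lemma is a bookkeeping step. The only points requiring care are that applying a sub‑lemma underneath the prefix $\exists \vec y' \in \N$ is sound (implication is preserved under existential quantification), that the witness modifications made by \Cref{lemma:corrgauss:third-step:a,lemma:corrgauss:third-step:b} never destroy the non‑negativity of the slacks in $\vec y'$, and that the three cases above match, term by term, the three groups of disjuncts on the right‑hand side.
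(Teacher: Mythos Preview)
Your proposal is correct and follows essentially the same approach as the paper's proof: the right-to-left direction by noting each disjunct strengthens the left-hand side, and the left-to-right direction via \Cref{lemma:corrgauss:second-step} followed by the same three-way case split on $A_1(\gamma)$, invoking \Cref{lemma:corrgauss:third-step:a}, the definition of $A_1^{\text{eq}}$, and \Cref{lemma:corrgauss:third-step:b} respectively, then closing each case with \Cref{lemma:corrgauss:gamma-contains-phi}. Your explicit remark that in the $A_1^{\text{slk}}$ case the equation $f(t)\cdot x_{i+1}+\tau=0$ can be added for free (because $\gamma$ already entails it) is in fact slightly more careful than the paper's own write-up.
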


\begin{proof}
    The right to left direction is trivial, since every disjunct ``specializes'' the formula 
    $\exists \vec y' \in \N \, \exists x_{i+1} (\phi_i \land \chi_i)$ with further constraints.
    For the left to right direction, by~\Cref{lemma:corrgauss:second-step} 
    we have that $\exists \vec y' \in \N \, \exists x_{i+1}  (\phi_i \land \chi_i)$ implies $\bigvee_{\gamma \in \mathcal{P}(\phi_i)} \exists \vec y' \in \N \, \exists x_{i+1} (\gamma \land \chi_i)$. 
    Let us then fix $\gamma \in \mathcal{P}(\phi_i)$, 
    and show that $\exists \vec y' \in \N \, \exists x_{i+1} (\gamma \land \chi_i)$ implies one of the disjunct of the right-hand side of the equivalence. We again consider three cases separately, depending on the set $A_1(\gamma)$. 
    \begin{description}
        \item[case: $A_1(\gamma) = \emptyset$.] From~\Cref{lemma:corrgauss:third-step:a}, the formula
        $\exists \vec y' \in \N \, \exists x_{i+1} (\gamma \land \chi_i)$ 
        implies the formula ${\exists \vec y' \in \N \, \exists x_{i+1} (0 \leq x \leq m_i(t) - 1 \land \gamma \land \chi_i)}$. 
        From~\Cref{lemma:corrgauss:gamma-contains-phi},
        the latter formula implies
        \[{\exists \vec y' \in \N \, \exists x_{i+1} (0 \leq x \leq m_i(t) - 1 \land \phi_i \land \chi_i)}.\]
        This is one of the disjuncts on the right-hand side of the equivalence (see first line).
        \item[case: $A_1^{\text{eq}}(\gamma) \neq \emptyset$.] In this case, one of the conjuncts occurring in the formula $\gamma$ is a formula ${f(t) \cdot x_{i+1} + \tau = 0} \land f(t) \neq 0$
        such that $(f(t),\tau) \in A_1^{\text{eq}}(\gamma)$. 
        Since $\gamma$ implies $\phi_i$ (\Cref{lemma:corrgauss:gamma-contains-phi}), 
        $\exists \vec y' \in \N \, \exists x_{i+1} (\gamma \land \chi_i)$ 
        implies $\exists \vec y' \in \N \, \exists x_{i+1} (f(t) \cdot x_{i+1} + \tau = 0 \land f(t) \neq 0 \land \phi_i \land \chi_i)$. By ``guessing'' the sign of $f(t)$, we can rewrite the latter formula as 
        \[ 
            \bigvee_{\pm \in \{+,-\}} \exists \vec y' \in \N \, \exists x_{i+1} (f(t) \cdot x_{i+1} + \tau = 0 \land \pm f(t) > 0 \land \phi_i \land \chi_i).
        \]
        This is one of the disjunct on the right-hand side of the equivalence (see second line).
        \item[case: $A_1^{\text{eq}}(\gamma) = \emptyset$ and $A_1^{\text{slk}}(\gamma) \neq \emptyset$.] 
        From~\Cref{lemma:corrgauss:third-step:b},
        $\exists \vec y' \in \N \, \exists x_{i+1} (\gamma \land \chi_i)$  
        implies
        \[
            \bigvee_{\substack{((f(t),\tau),y) \in A_{1}^{\text{slk}}(\gamma)\\ \pm \in \{+,-\}}} 
            \hspace{-5pt}\exists \vec y' \in \N\, \exists x_{i+1}
            (0 \leq y \leq \pm f(t) \cdot m_i(t)-1 \land \gamma \land \chi_i).
        \]
        In the above formula, we can replace $\gamma$ by $\phi_i$ (as the former implies the latter by~\Cref{lemma:corrgauss:gamma-contains-phi}), to obtain a formula that is still implied by $\exists \vec y' \in \N \, \exists x_{i+1} (\gamma \land \chi_i)$. 
        Each disjunct in the resulting formula appears on the right-hand side of the equivalence (see third line).
        \qedhere
    \end{description}
\end{proof}

The next lemma completes the elimination of $x_{i+1}$ from the second and third case of the formula in~\Cref{lemma:corrgauss:step-two:main-lemma}. 

\begin{lemma}
    \label{lemma:corrgauss:remove-variable}
    Consider a formula $\psi \coloneqq \exists x_{i+1} (f(t) \cdot x_{i+1} + \tau = 0 \land \pm f(t) > 0 \land \phi_i \land \chi_i)$, 
    where $f(t) \cdot x_{i+1} + \tau = 0$ is an equation in $\phi_i$ and $\pm \in \{+,-\}$. 
    Let $\phi_{i+1}$ be the formula obtained from $\phi_i$ by executing lines~\ref{gauss:vigorous}--\ref{gauss:restore} of~\Cref{algo:gaussianqe}, where the division is done with respect to $\ell_i$. 
    Then, $\psi$ is equivalent to $\pm f(t) > 0 \land \phi_{i+1} \land \chi_{i}$.
\end{lemma}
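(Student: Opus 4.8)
The plan is to establish the equivalence pointwise. Fix an arbitrary assignment $\nu'$ to all variables other than $x_{i+1}$ (including the parameter $t$), and show that $\nu' \models \bigl(\pm f(t) > 0 \land \phi_{i+1} \land \chi_i\bigr)$ if and only if there is an integer $n$ with $\nu'[x_{i+1}\mapsto n] \models \bigl(f(t)\cdot x_{i+1} + \tau = 0 \land \pm f(t) > 0 \land \phi_i \land \chi_i\bigr)$. Recall that, by construction, $\phi_{i+1} = \phi_i' \land (f(t)\divides\tau)$, where $\phi_i'$ is obtained from $\phi_i$ by applying to each atomic formula the vigorous substitution $\vigsub{\frac{-\tau}{f(t)}}{x_{i+1}}$ (multiply the constraint by $f(t)$, then rewrite the sub-term $f(t)\cdot x_{i+1}$ as $-\tau$) and then dividing the constraint by $\ell_i$. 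Two facts will be used throughout. First, the hypothesis $\pm f(t) > 0$ forces $f(\nu'(t))\neq 0$. Second, by~\Cref{lemma:corrgauss:straightforward:second}.\ref{lemma:corrgauss:straightforward:second:i4}, $\chi_i$ entails $\pm_i\ell_i(t) > 0$, so whenever $\nu'\models\chi_i$ we also have $\ell_i(\nu'(t))\neq 0$. Also note that $x_{i+1}$ occurs neither in $\chi_i$ (a conjunction of inequalities in polynomials of $t$, by~\Cref{lemma:corrgauss:straightforward:second}.\ref{lemma:corrgauss:straightforward:second:i3}) nor in $\tau$ (which collects all the remaining terms of an equation that contains $x_{i+1}$; the earlier variables $x_1,\dots,x_i$ having already been substituted out), so $n\coloneqq -\tau(\nu')/f(\nu'(t))$ is well-defined from $\nu'$ alone whenever $f(\nu'(t))\neq 0$.

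The core is the following substitution claim: assume $f(\nu'(t))\neq 0$, $\ell_i(\nu'(t))\neq 0$ and $f(\nu'(t))\divides\tau(\nu')$; put $n \coloneqq -\tau(\nu')/f(\nu'(t)) \in \Z$ and $\nu \coloneqq \nu'[x_{i+1}\mapsto n]$. Then for every atomic formula $\alpha$ of $\phi_i$ with transform $\alpha'$ in $\phi_i'$, we have $\nu'\models\alpha' \iff \nu\models\alpha$. Since $\phi_i$ is linear (\Cref{lemma:corrgauss:straightforward:second}.\ref{lemma:corrgauss:straightforward:second:i1}), write the term of $\alpha$ as $h(t)\cdot x_{i+1} + \rho_0$ with $\rho_0$ free of $x_{i+1}$; the vigorous substitution turns this term into $\widetilde{\rho} \coloneqq -h(t)\cdot\tau + f(t)\cdot\rho_0$ (multiplying by $f(t)$ makes the coefficient of $x_{i+1}$ equal to $f(t)h(t)$, and $f(t)h(t)\cdot x_{i+1} = h(t)\cdot\bigl(f(t)x_{i+1}\bigr)$ is rewritten to $-h(t)\tau$), and one checks $\widetilde{\rho}(\nu') = f(\nu'(t))\cdot\bigl(h(\nu'(t))\,n + \rho_0(\nu')\bigr) = f(\nu'(t))\cdot(\text{term of }\alpha)(\nu)$. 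For an equality $\alpha = (\rho = 0)$ the transform $\alpha'$ is $\widetilde{\rho}/\ell_i = 0$, which is a genuine polynomial by~\Cref{prop:gauss-fundamental}.\ref{prop:gauss-fundamental:i5}; since $\ell_i(\nu'(t))\neq 0$ and $f(\nu'(t))\neq 0$ this is satisfied by $\nu'$ iff $(\text{term of }\alpha)(\nu) = 0$, i.e.\ iff $\nu\models\alpha$. For a divisibility $\alpha = (g(t)\divides\rho)$ the transform is $\bigl(f(t)g(t)/\ell_i\bigr)\divides\bigl(\widetilde{\rho}/\ell_i\bigr)$, again with exact divisions by~\Cref{prop:gauss-fundamental}.\ref{prop:gauss-fundamental:i5}; cancelling the common non-zero integer $\ell_i(\nu'(t))$ (using the elementary fact that $c\,a\divides c\,b \iff a\divides b$ for $c\neq 0$, which remains valid under the paper's convention for divisibility by $0$) reduces $\nu'\models\alpha'$ to $f(\nu'(t))g(\nu'(t))\divides f(\nu'(t))\cdot(\text{term of }\alpha)(\nu)$, and cancelling the non-zero $f(\nu'(t))$ gives exactly $g(\nu'(t))\divides(\text{term of }\alpha)(\nu)$. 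A constraint not containing $x_{i+1}$ is the special case $h = 0$. Since the vigorous substitution and the division act locally on atomic formulae, $\phi_i$ and $\phi_i'$ have the same Boolean skeleton, so the atomic equivalences propagate to $\nu'\models\phi_i' \iff \nu\models\phi_i$.

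With the claim in hand, both directions are short. For $(\Leftarrow)$: if $\nu'\models\pm f(t) > 0 \land \phi_{i+1}\land\chi_i$, then $f(\nu'(t))\neq 0$, $\ell_i(\nu'(t))\neq 0$, and the conjunct $f(t)\divides\tau$ of $\phi_{i+1}$ gives $f(\nu'(t))\divides\tau(\nu')$; taking $n$ and $\nu$ as in the claim, $\nu$ satisfies $f(t)x_{i+1}+\tau = 0$ by construction, $\pm f(t) > 0$ and $\chi_i$ because $x_{i+1}$ does not occur there, and $\phi_i$ by the claim; hence $\nu$ witnesses $\psi$ at $\nu'$. For $(\Rightarrow)$: a witness $\nu = \nu'[x_{i+1}\mapsto n]$ of $\psi$ satisfies $\pm f(t) > 0$, whence $f(\nu'(t))\neq 0$; the equality $f(\nu'(t))\,n + \tau(\nu') = 0$ forces $n = -\tau(\nu')/f(\nu'(t))$ and in particular $f(\nu'(t))\divides\tau(\nu')$, so $\nu'\models(f(t)\divides\tau)$; and $\nu'\models\chi_i$ yields $\ell_i(\nu'(t))\neq 0$; the claim then gives $\nu'\models\phi_i'$, so $\nu'\models\phi_{i+1}\land\pm f(t)>0\land\chi_i$. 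I expect the main technical obstacle to be the bookkeeping for the divisibility constraints: tracking the successive transformations (preparatory multiplication by $f(t)$, substitution, then the Bareiss division by $\ell_i$) and correctly invoking~\Cref{prop:gauss-fundamental}.\ref{prop:gauss-fundamental:i5} to guarantee all divisions are without remainder, while handling the degenerate evaluations allowed by the paper's extended semantics of $(f(t)\divides\cdot)$.
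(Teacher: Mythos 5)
Your proof is correct and follows essentially the same route as the paper's, which performs the same three steps (vigorous substitution preserves equivalence given the guessed equation, the Bareiss divisions are exact and truth-preserving since $\chi_i$ forces $\ell_i(t)\neq 0$, and $\exists x_{i+1}:f(t)x_{i+1}+\tau=0$ reduces to $f(t)\divides\tau$) but states them at the syntactic level in a few lines. You unfold the argument pointwise, which makes explicit the implicit use of $\chi_i\Rightarrow\ell_i(t)\neq 0$ and correctly handles the extended $\divides$ semantics; there is no substantive difference in approach.
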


\begin{proof}
    By definition of the vigorous substitution, it is easy to see that 
    $\psi$ is equivalent to $\exists x_{i+1} (f(t) \cdot x_{i+1} + \tau = 0 \land \pm f(t) > 0 \land \phi_i\vigsub{\frac{-\tau}{f(t)}}{x_{i+1}} \land \chi_i)$.
    As explained in~\Cref{subapp:gaussian-elimination:A1}, 
    all polynomials from~$\Z[t]$ occurring in inequalities and divisibility constraints (divisors included) of $\phi_i\vigsub{\frac{-\tau}{f(t)}}{x_{i+1}}$ are divisible by the polynomial $\ell_i$ (without remainder).
    Line~\ref{gauss:divide} performs this division. The resulting formula $\phi_i'$ is equivalent to $\phi_i\vigsub{\frac{-\tau}{f(t)}}{x_{i+1}}$. 
    Lastly, observe that $x_{i+1}$ occurs now only in the equality $f(t) \cdot x_{i+1} + \tau = 0$, 
    which we can rewrite as $f(t) \divides \tau$, since $x_{i+1}$ ranges over the integers. 
    The formula $\phi_{i+1}$ is defined as $\phi_i' \land f(t) \divides \tau$ (see line~\ref{gauss:restore}).
    Therefore, $\psi$ is equivalent to $\pm f(t) > 0 \land \phi_{i+1} \land \chi_{i}$.
\end{proof}

We are now ready to prove~\Cref{lemma:corrgauss:key-second-step}:

\LemmaCorrGaussKeySecondStep*

\begin{proof}
    Let $\Phi = (\phi_{i},m_i,\ell_i,\pm_{i},B_i,\chi_{i}) \in S_i$, $\vec v$ be the variables assigned in $B_i$, and $\vec y'$ be the slack variables that are unassigned in $B_i$. 
    By~\Cref{lemma:corrgauss:step-two:main-lemma}, $\exists x_{i+1} F(\Phi)$ is equivalent to 
    \begin{center}
    \scalebox{0.9}{
        $\begin{aligned} 
            &\Big(\exists \vec v \leq B_i\,\exists \vec y' \in \N \, \exists x_{i+1} \big(0 \leq x \leq m_i(t) - 1 \land \phi_i  \land \chi_i\big) \Big) \lor{} \\
            \bigvee_{\substack{\gamma \in \mathcal{P}(\phi_i)\\ \pm \in \{+,-\}}}
            &\Big(
            \bigvee_{(f(t),\tau) \in A_{1}^{\text{eq}}(\gamma)}
            \hspace{-0.2cm}\Big(\exists \vec v \leq B_i\,\exists \vec y' \in \N \, \exists x_{i+1} \big(f(t) \cdot x_{i+1} + \tau = 0 \land \pm f(t) > 0 \land \phi_i \land \chi_i \big) \Big) \lor{} \\
            &\hspace{-1.8cm}\bigvee_{((f(t),\tau),y) \in A_{1}^{\text{slk}}(\gamma)}
            \hspace{-0.4cm}\Big(\exists \vec v \leq B_i\,\exists \vec y' \in \N \, \exists x_{i+1} \big(f(t) \cdot x_{i+1} + \tau = 0 \land 0 \leq y \leq \pm f(t) \cdot m_i(t)-1 \land \phi_i \land \chi_i \big) \Big)\Big).
        \end{aligned}$
    }
    \end{center}
    We now consider each disjunction~$\psi$ of this formula, and show that $S_{i+1}$ contains an element $\Psi$ 
    such that $F(\Psi)$ is equivalent to~$\psi$. Naturally, the element $\Psi$ will be such that $\Phi \xrightarrow{\delta} \Psi$, for a suitable choice of the guess $\delta$. 
    Once more, we divide the proof in three cases. 
    \begin{description}
        \item[case 1:] Consider the formula $\psi \coloneqq \exists \vec v \leq B_i\,\exists \vec y' \in \N \, \exists x_{i+1} \big(0 \leq x \leq m_i(t) - 1 \land \phi_i  \land \chi_i\big)$. Let $\delta = \top$, that is, consider the case where the \textbf{foreach} loop executes the \textbf{continue} statement of line~\ref{gauss:guess-all-zeros}. Starting from $\Psi$, in this case the algorithm simply adds $x_{i+1}$ to the keys of $B_i$, with value $m(t)-1$. Let $B_{i+1}$ be the resulting map. This completes the iteration of the \textbf{foreach} loop, 
        and therefore $S_{i+1}$ contains the tuple $\Psi \coloneqq (\phi_{i},m_i,\ell_i,\pm_{i},B_{i+1},\chi_{i})$. Clearly, $F(\Psi)$ is equivalent to $\psi$.
        \item[case 2:] Consider the formula 
        \[ 
            \psi \coloneqq \exists \vec v \leq B_i\,\exists \vec y' \in \N \, \exists x_{i+1} \big(f(t) \cdot x_{i+1} + \tau = 0 \land \pm f(t) > 0 \land \phi_i \land \chi_i \big),  
        \]
        where $(f(t),\tau) \in A_1^{\text{eq}}(\gamma)$, $\gamma \in \mathcal{P}(\phi_i)$ and $\pm \in \{+,-\}$.
        Since  $(f(t),\tau) \in A_1^{\text{eq}}(\gamma)$, the formula $\phi_i$ contains the equality $f(t) \cdot x_{i+1} + \tau$, and $\tau$ does not feature any unassigned slack variable.
        Let $\delta = (f(t) \cdot x_{i+1} + \tau = 0, \pm)$, that is, the equation and sign~$\pm$ are guessed in lines~\ref{gauss:guess-equation} and~\ref{gauss:guess-sign}. Note that line~\ref{gauss:assert-sign-ft} updates $\chi$, so that $\chi_{i+1} = \chi_i \land (\pm f(t) > 0)$.
        By~\Cref{lemma:corrgauss:remove-variable}, 
        lines~\ref{gauss:vigorous}--\ref{gauss:restore} updates $\phi_i$ into a formula $\phi_{i+1}$
        such that $\exists x_{i+1} \big(f(t) \cdot x_{i+1} + \tau = 0 \land \phi_i \land \chi_{i+1} \big)$ is equivalent to $(\phi_{i+1} \land \chi_{i+1})$.
        Furthermore, the algorithm gives $m_{i+1} \coloneqq \pm f(t) \cdot m_{i}$, $\pm_{i+1} \coloneq \pm$ 
        and $\ell_{i+1} = f(t)$. The tuple $\Psi \coloneqq (\phi_{i+1},m_{i+1},\ell_{i+1},\pm_{i+1},B_i,\chi_{i+1})$ belongs to $S_{i+1}$, 
        and $F(\Psi) = \exists \vec v \leq B_i\,\exists \vec y' \in \N \,( \phi_{i+1} \land \chi_{i+1})$ is equivalent to $\psi$. 

        \item[case 3:] This case considers formulae 
        \[ 
            \exists \vec v \leq B_i\,\exists \vec y' \in \N \, \exists x_{i+1} \big(f(t) \cdot x_{i+1} + \tau = 0 \land 0 \leq y \leq \pm f(t) \cdot m_i(t)-1 \land \phi_i \land \chi_i \big).
        \]
        Observe that the subformula $0 \leq y \leq \pm f(t) \cdot m_i(t)-1 \land \chi_i$ implies $\pm f(t) > 0$ (recall that $\chi_i$ implies $m_i(t) > 0$). After adding this implies inequality, this case is analogous to the previous one, the only difference being that the map $B_i$ is now updated 
        with a bound $\pm f(t) \cdot m_i(t)-1$ for the slack variable $y$.
        \qedhere
    \end{description}
\end{proof}

\subparagraph*{Step 3 (lines~\ref{gauss:unslackloop}--\ref{gauss:return}).}
The last step of the procedure is deterministic and simply removes the remaining slack variables. 
Here is its ``specification'':

\begin{lemma}
    \label{lemma:corrgauss:step-three}
    Let $(\phi_n,m_n,\ell_n,\pm_n,B_n,\chi_n) \in S_n$, 
    and let $\vec y'$ be the slack variables that are unassigned in $B_n$.
    Let $\psi$ be the formula obtained by applying the \textbf{foreach} loop of line~\ref{gauss:unslackloop} 
    to the formula $\phi_n$, with respect to $B_n$ and $\pm_n$. 
    Then, $\exists \vec y' \in \N (\phi_n \land \chi_n) \iff \psi \land \chi_n$.
\end{lemma}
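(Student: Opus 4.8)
The plan is to analyse what the \textbf{foreach} loop of line~\ref{gauss:unslackloop} does to each unassigned slack variable in $\vec y'$, and to match this against the semantics of the existential quantification $\exists \vec y' \in \N$. By \Cref{lemma:same-slack-everywhere} (applied with $i = n$, so that $q$ is the number of non-trivially-eliminated variables and $f_{n-1}(t)$ is the current leading coefficient $\ell_n(t)$ up to sign), every slack variable $y \in \vec y'$ occurs in exactly one constraint of $\phi_n$, that constraint is an equality $\eta = 0$, and the coefficient of $y$ in it is the same polynomial for all such $y$ — namely $\ell_n(t)$ (with $\ell_{-1}(t) \coloneqq 1$), whose sign is recorded by $\pm_n$ (this last fact comes from \Cref{lemma:corrgauss:straightforward:second}.\ref{lemma:corrgauss:straightforward:second:i4}: $\chi_n$ entails $\pm_n \ell_n(t) > 0$). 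First I would record this structural fact precisely, so that for each $y \in \vec y'$ we may write the unique equality containing $y$ as $\ell_n(t) \cdot y + \eta\sub{0}{y} = 0$, where $\eta\sub{0}{y}$ does not contain $y$.

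Next I would carry out the core equivalence, one slack variable at a time, under the standing hypothesis $\chi_n$ (and recalling that $\chi_n$ implies $\pm_n \ell_n(t) > 0$, hence $\ell_n(t) \neq 0$). Fix such a $y$ with equality $\ell_n(t) \cdot y + \eta\sub{0}{y} = 0$. Since $y$ occurs nowhere else in $\phi_n$, the subformula $\exists y \in \N : \ell_n(t) \cdot y + \eta\sub{0}{y} = 0$ can be pulled in front of just this one conjunct, and it is satisfied exactly when $\eta\sub{0}{y} = -\ell_n(t) \cdot y$ for some $y \geq 0$; because $\ell_n(t) > 0$ when $\pm_n$ is $+$ and $\ell_n(t) < 0$ when $\pm_n$ is $-$, this is equivalent to $\eta\sub{0}{y} \leq 0$ in the first case and to $\eta\sub{0}{y} \geq 0$ in the second case (the divisibility of $\eta\sub{0}{y}$ by $\ell_n(t)$ is automatic — it was guaranteed along the way, or can be absorbed; the sign is what matters). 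This is exactly the replacement performed by line~\ref{gauss:drop-slack}. Since distinct slack variables in $\vec y'$ live in distinct constraints, these replacements are independent and can be applied simultaneously, and $\exists \vec y' \in \N$ distributes over the positive Boolean combination $\phi_n$ (all variables in $\vec y'$ occur linearly and in disjoint atoms, so the quantifiers can be pushed to the atoms). Performing all replacements at once turns $\phi_n$ into $\psi$, giving $\exists \vec y' \in \N : \phi_n \iff \psi$, and conjoining $\chi_n$ on both sides yields the claim.

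The main obstacle I anticipate is not the algebra but bookkeeping the interaction with $\chi_n$: the sign of $\ell_n(t)$ is only known \emph{conditionally} on $\chi_n$, so the equivalence $\exists y \in \N : \ell_n(t) y + \eta\sub{0}{y} = 0 \iff \eta\sub{0}{y} \sim 0$ (with $\sim$ determined by $\pm_n$) holds only \emph{in the presence of} $\chi_n$ — which is why the statement is phrased as $\exists \vec y' \in \N (\phi_n \land \chi_n) \iff \psi \land \chi_n$ rather than $\exists \vec y' \in \N\, \phi_n \iff \psi$ on the nose. I would handle this by reasoning over an arbitrary assignment $\nu$ to the remaining variables (and $t$): if $\nu$ does not satisfy $\chi_n$, both sides are false; if it does, then $\ell_n(\nu(t))$ has the sign dictated by $\pm_n$, and the per-variable argument above goes through for every $y \in \vec y'$ simultaneously. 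A secondary point to be careful about is that the loop of line~\ref{gauss:unslackloop} really does touch \emph{every} unassigned slack variable exactly once (each sits in its own equality by \Cref{lemma:same-slack-everywhere}), so no slack variable from $\vec y'$ survives into $\psi$ and none is processed twice.
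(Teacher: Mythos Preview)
Your overall structure mirrors the paper's approach: use \Cref{lemma:same-slack-everywhere} to locate each unassigned slack $y$ in a unique equality with coefficient $\ell_n(t)$, push $\exists y \in \N$ down to that atom (valid since $\phi_n$ is a positive Boolean combination and $y$ occurs nowhere else), and replace it by the inequality dictated by $\pm_n$. The left-to-right direction works exactly as you say.

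The gap is in the right-to-left direction, precisely at the step you dismiss with ``automatic \dots\ can be absorbed; the sign is what matters''. The atom-level implication $\eta\sub{0}{y} \sim 0 \implies \exists y \in \N : \ell_n(t)\cdot y + \eta\sub{0}{y} = 0$ is \emph{false} in general: it also requires $\ell_n(t) \divides \eta\sub{0}{y}$, and this divisibility is neither a consequence of $\chi_n$ (which records only signs) nor intrinsic to the atom. It holds only because $\phi_n$ carries, as \emph{top-level conjuncts}, the constraints $f_i(t) \divides \tau_i$ inserted by line~\ref{gauss:restore}. In the paper's $n=1$ illustration the equality after substitution is $-g(t)\cdot\tau_0 + f_0(t)\cdot\rho + f_0(t)\cdot y = 0$, and $\eta\sub{0}{y} = -g(t)\cdot\tau_0 + f_0(t)\cdot\rho$ is divisible by $\ell_1(t)=f_0(t)$ \emph{only} under the separate conjunct $f_0(t) \divides \tau_0$. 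The paper explicitly flags this as the ``subtle'' direction; your per-atom argument cannot see these conjuncts, so the replacement you perform is an equivalence only at the level of the full formula $\phi_n$, not at the level of the individual atom. To repair the plan you must argue that the accumulated divisibility conjuncts (which survive subsequent vigorous substitutions and the divisions by $\ell_{i-1}$) force $\ell_n(t) \divides \eta\sub{0}{y}$ for every unassigned slack~$y$; this is exactly the content the paper defers to~\cite[Lemma~18]{ChistikovMS24}.
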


\begin{proof}[Proof idea.]
    The proof follows essentially verbatim the proof of~\cite[Lemma 18]{ChistikovMS24}
    and it is thus not duplicated here. We give instead the intuition on how~\cite[Lemma 18]{ChistikovMS24} is proven, 
    by considering the case of $n = 1$. The full proof given in~\cite{ChistikovMS24} iterates the same reasoning across all formulae~$\phi_2,\dots,\phi_n$.

    In the case $n = 1$, the \textbf{foreach} loop of line~\ref{gauss:mainloop} iterates only once. 
    Let us suppose that, during this iteration, the procedure guesses an equation 
    $f(t) \cdot x_1 + \tau  = 0$ and $\pm \in \{+,-\}$ (whether $\tau$ has a slack variable or not is irrelevant here, so let us assume it does not for simplicity). 
    The formula $\chi$ contains the constraint $\pm f(t) > 0$ asserting that $\pm$ is the sign of $f(t)$.
    Let $g(t) \cdot x_1 + \rho + y = 0$ be another equation from $\phi$, featuring the unassigned slack variable~$y$.
    ($y$ has coefficient $1$ because we are performing the first iteration of the loop.)

    The substitution $\vigsub{\frac{-\tau}{f(t)}}{x}$ 
    updates $g(t) \cdot x_1 + \rho + y = 0$ 
    to ${-g(t) \cdot \tau + f(t) \cdot \rho + f(t) \cdot y = 0}$. 
    Line~\ref{gauss:drop-slack} modifies this term 
    to ${-g(t) \cdot \tau + f(t) \cdot \rho \sim 0}$, 
    where $\sim$ stands for $\leq$ whenever $\pm$ is the symbol $+$, and otherwise $\sim$ stands for $\geq$.

    First, notice that $\exists y \in \N \,({-g(t) \cdot \tau + f(t) \cdot \rho + f(t) \cdot y = 0} \land \pm f(t) > 0)$ 
    implies the inequality ${-g(t) \cdot \tau + f(t) \cdot \rho \sim 0}$. 
    Indeed, since $y$ is non-negative, the equality is forcing $-g(t) \cdot \tau + f(t) \cdot \rho$ to either be zero (when $y$ is zero) or to have a sign opposite to the one of~$f(t)$.
    The left to right direction of the lemma follows from this reason alone.

    The right-to-left direction is more subtle, 
    because ${-g(t) \cdot \tau + f(t) \cdot \rho \sim 0} \land \pm f(t) > 0$ alone 
    does not imply 
    $\exists y \in \N : {-g(t) \cdot \tau + f(t) \cdot \rho + f(t) \cdot y = 0}$.
    For this direction to hold, it is sufficient (and necessary) to further assume that $-g(t) \cdot \tau + f(t) \cdot \rho$ is a multiple of $f(t)$; which in turns implies that $-g(t) \cdot \tau$ has to be a multiple of $f(t)$. But indeed, the procedure adds in line~\ref{gauss:restore} 
    the division $f(t) \divides \tau$, which is preserved by the updates performed in Line~\ref{gauss:drop-slack}.
    This is the key observation used to prove the right to left direction of the lemma.
\end{proof}

\subparagraph*{Putting all together.}
We can now complete the proof of correctness: 

\LemmaBoundedQECorrectness*

\begin{proof}
    Let  $\exists x_1,\dots,x_n \phi(x_1,\dots,x_n, \vec z)$ be the input formula.
    By~\Cref{lemma:corrgauss:first-step}, 
    across non-deterministic branches, lines~\ref{gauss:set-Z}--\ref{gauss:introduce-slack} 
    build the formulae $F(\Phi)$, with $\Phi \in S_0$, 
    such that 
    \[ 
        \exists x_1,\dots,x_n \phi \iff \bigvee_{\Phi \in S_0}\,\exists x_1,\dots,x_n F(\Phi).
    \]
    By iterated application of~\Cref{lemma:corrgauss:key-second-step}, after the~\textbf{foreach} loop of line~\ref{gauss:mainloop} completes, (again across non-deterministic branches), the algorithm 
    considers the formulae $F(\Psi)$ with $\Psi \in S_n$, and we have
    \[ 
        \exists x_1,\dots,x_n \phi \iff \bigvee_{\Psi \in S_n} F(\Psi).
    \]
    Given $\Psi = (\phi_n,m_n,\ell_n,\pm_n,B_n,\chi_n) \in S_n$, let $F(\Psi) = \exists \vec y' \in \N : \exists \vec v \leq B_n (\phi_n \land \chi_n)$, 
    where $\vec y'$ and $\vec v$ are, respectively, the unassigned and assigned variables in $B_n$.
    By~\Cref{lemma:corrgauss:step-three}, lines~\ref{gauss:unslackloop}--\ref{gauss:return} 
    produce a formula $F'(\Psi) = \exists \vec v \leq B_n (\phi_n' \land \chi_n)$ equivalent to $F(\Psi)$.
    The output of the algorithm is $\bigvee_{\Psi \in S_n} F'(\Psi)$, 
    which is thus equivalent to $\exists x_1,\dots,x_n \phi$.
\end{proof}

\subsection{Complexity of~\BoundedQE} 
\label{subapp:gaussian-elimination:A3}

We now move to the complexity of the procedure. 
Recall the paremeters of \PPA formulae, which were introduced in \Cref{sec:complexity}.   

\begin{itemize}
    \item $\paratom(\phi) \coloneqq \big(\text{number of occurrences of atomic formulae in $\phi$}\big)$,
    \item $\parvars(\phi) \coloneqq \big(\text{number of variables in $\phi$}\big)$,
    \item $\parfunc(\phi) \coloneqq \big(\text{number of occurrences of $\floor{\frac{\cdot}{t^d}}$ and $(\cdot \bmod f(t))$  in an atomic formula of $\phi$}\big)$,
    \item $\parconst(\phi) \coloneqq \max\{ \bitlength{f} : \text{$f  \in \Z[t]$ occurs in $\phi$}\}$,
    \item $\parbound \coloneqq \max\{ 0, \bitlength{B(w)} : \text{$w$ is in the domain of $B$}\}$.
\end{itemize}

With this notation, let us prove the following lemma, 
which gives us essentially the proof that \Cref{algo:gaussianqe} runs in non-deterministic 
polynomial time (see \Cref{lemma:boundedqe-in-np} below).  

\begin{restatable}{lemma}{LemmaParamQE}
    \label{lemma:param-QE}
    Let $\exists \vec x : \phi(\vec x, \vec z)$ be a formula in input of~\BoundedQE, where $\phi$ is a positive 
    Boolean combination of linear \PPA constraints, $\vec x = (x_1,\dots,x_n)$ and $\vec z = (x_{n+1},\dots,x_{g})$, 
    where $n \geq 1$.
    Then, for every branch output $\exists \vec w \leq B : \psi(\vec w,\vec z)$ we have:
  \begin{equation*}
    \text{if }\; 
    \begin{cases}
        \paratom(\phi) &\!\leq a\\ 
        \parvars(\phi) &\!= g\\ 
        \parfunc(\phi) &\!= 0\\ 
        \parconst(\phi) &\!\leq c\\
    \end{cases}
    \quad\text{ then }\;
    \begin{cases}
        \paratom(\psi)  &\!\leq 2\cdot (n+a) + 1\\ 
        \parvars(\psi)  &\!= g\\ 
        \parfunc(\psi)  &\!= 0\\ 
        \parconst(\psi) &\!\leq (n+a)^3 \cdot c^3 
    \end{cases}
\end{equation*}
Moreover, $\vec w$ contains $n$ variables and $\parbound \leq (n+a)^{11} \cdot c^9$.
\end{restatable}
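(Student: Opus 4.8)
The proof proceeds by analysing the path $\Phi \xrightarrow{(Z,\pm)} (\phi_0,\dots) \xrightarrow{\delta_0} \cdots \xrightarrow{\delta_{n-1}} (\phi_n,\dots) \to \psi$ corresponding to a fixed output branch, using the structural characterisation of \Cref{prop:gauss-fundamental} together with the matrix view $M_0,\dots,M_q$ developed in \Cref{subapp:gaussian-elimination:A1}. The four quantities $\paratom,\parvars,\parfunc,\parconst$ are tracked separately; the real content is the bound on $\parconst$ and, through it, on $\parbound$.

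\textbf{Step 1: the easy parameters.} For $\parvars$ and $\parfunc$, I would first observe that lines~\ref{gauss:set-Z}--\ref{gauss:introduce-slack} introduce one slack variable per inequality, so at most $a$ new variables, but the $n$ variables of $\vec x$ are all either eliminated (line~\ref{gauss:vigorous}) or turned into bounded quantifiers (line~\ref{gauss:subst-rem}); since $\exists\vec w\le B:\psi$ binds precisely the remaining $n$ variables via $\vec w$ and leaves only $\vec z$ free, $\parvars(\psi)=g$ and $\vec w$ has $n$ variables. No step of \BoundedQE introduces $\floor{\cdot/t^d}$ or $(\cdot\bmod f)$ (the input is linear, and vigorous substitution, division and the slack manipulations preserve linearity by \Cref{lemma:corrgauss:straightforward:second:i1}), so $\parfunc(\psi)=0$. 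For $\paratom$: line~\ref{gauss:introduce-slack} does not change the atom count, each vigorous substitution step acts locally on atoms (item~\ref{gauss:proof:i2} in \Cref{subapp:gaussian-elimination:A1}) and adds one divisibility per iteration (line~\ref{gauss:restore}), and line~\ref{gauss:drop-slack} does not increase the count; starting from $a$ atoms and running $\le n$ productive iterations gives $\le a+n$ atoms after the loop, plus the conjuncts accumulated in $\chi$ (one sign per iteration, plus the initial $m(t)>0$), i.e.\ $\le (a+n)+(n+1)\le 2(n+a)+1$.

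\textbf{Step 2: the coefficient bound (the main obstacle).} This is where \Cref{prop:gauss-fundamental} does the heavy lifting. Every polynomial in $\Z[t]$ occurring in $\phi_i$ (coefficients, constants, and divisors) is, by Items~\ref{prop:gauss-fundamental:i1} and~\ref{prop:gauss-fundamental:i3}, a minor of $M_0$ of order at most $i+1\le n+1$, or a ratio of such minors that is itself a polynomial (the division of line~\ref{gauss:divide} is exact). Divisors are products of at most $n+1$ such minors times an initial divisor (by \Cref{lemma:corrgauss:straightforward:second:i2}, each divisor is a factor of $m(t)$, and $m(t)$ is $\pm$ a product of $\le a$ input divisors and $\le n$ of the $f_i$'s). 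So I would bound the height and degree of a $k\times k$ minor of a matrix whose entries have bit size $\le c$: by the Leibniz/Hadamard-type estimates for polynomials over $\Z[t]$ (degree adds, $\le k$ terms each of height $\le h^k$ where $h\le 2^c$, times $k!$), a $k$-minor has bit size polynomially bounded in $k\cdot c$ — concretely of order $(k\cdot c)^2$ after taking logs carefully. Pushing $k\le n+1\le n+a$ through, and then squaring/cubing to accommodate the worst case (a $\parconst$ value is at worst such a minor, and an $m(t)$-type product multiplies a minor bound to the power $\le n+a+1$ of smaller quantities), yields $\parconst(\psi)\le (n+a)^3c^3$. The delicate part is being honest about where the factors of $n+a$ and of $c$ come from: the degree contributes one power of $n+a$ (sum of $\le n+1$ degrees each $\le c$), the coefficient magnitude contributes, after $\log$, a factor $\le (n+1)(c+\log(n+1))$, and the extra factor to reach a clean cube is absorbed by the crude estimate; I expect this bookkeeping, rather than any conceptual difficulty, to be the bulk of the work, and it mirrors the integer case in \cite[Appendix~B]{ChistikovMS24}.

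\textbf{Step 3: the bound on $\parbound$.} Each bound assigned to a variable in $\vec w$ is, by lines~\ref{gauss:subst-rem}, \ref{gauss:append-seq} and~\ref{gauss:new-factor}, a value of $\pm f(t)\cdot m(t)-1$ or $m(t)-1$ for some intermediate state; so $\bitlength{B(w)}$ is bounded by the bit size of $m(t)$ times a small constant. Since $m(t)$ is $\pm$ a product of at most $a+n$ polynomials, each of which is a minor of $M_0$ of order $\le n+1$ (hence of bit size $\le (n+a)^3c^3$ by Step~2, say), its bit size is at most $(a+n)$ times that, giving $\parbound\le (n+a)^{11}c^9$ after one more round of the same crude multiplicative estimate (the exponents $11$ and $9$ being exactly what absorbs "bit size of a product of $\le n+a$ factors each of bit size $\le (n+a)^3c^3$", with slack). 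I would close by assembling Steps~1--3 into the stated implication. The correctness of the underlying equivalences is not needed here; only \Cref{prop:gauss-fundamental}, \Cref{lemma:corrgauss:straightforward:second}, and the definition of $m(t)$.
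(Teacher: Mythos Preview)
Your plan is correct and mirrors the paper's proof: use \Cref{prop:gauss-fundamental} to identify every polynomial in the final $\phi$ as a minor of $M_0$ of order $\leq q+1$, bound such a minor's bit size via the Leibniz expansion to get $(n+1)^3c^3$, and treat $\parbound$ through the product form $m_i(t)=\pm f_0\cdots f_{i-1}\cdot\prod N$. Two points to sharpen: (i) your atom count omits the $\leq |Z|\leq a$ conjuncts $f(t)=0$ added in line~\ref{gauss:set-Z-end}, though $2(n+a)+1$ still absorbs them; (ii) in Step~2 you must keep separate what lands in $\psi=\phi\land\chi$ versus what lands in $B$ --- only the \emph{initial} $m_0(t)=\pm\prod N$ (whose factors have bit size $\leq c$, giving $\bitlength{m_0}\leq a^2c^2$) and the sign constraints $\pm f_i(t)>0$ appear in $\chi$, whereas the large later products $m_i(t)$ involving the $f_j$ only ever enter the bounds $B(w)$, which is precisely why $\parconst(\psi)$ stays at $(n+a)^3c^3$ while $\parbound$ climbs to $(n+a)^{11}c^9$.
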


\begin{proof} 
    First observe that the same as the input formula $\phi$, the formula $\psi$ does not feature remainder 
    functions $(\cdot \bmod f(t))$ nor integer divisions $\sfloor{\frac{\cdot}{f(t)}}$, 
    and we have $\parfunc(\phi)=\parfunc(\psi) = 0$.
    Let us now consider other parameters.
    \begin{description}
        \item[Bound on {\rm$\paratom(\psi)$}:] 
        The relevant lines are: line~\ref{gauss:set-Z-end}, line~\ref{gauss:mt-is-positive}, 
        line~\ref{gauss:assert-sign-ft}, 
        line~\ref{gauss:restore}, and line~\ref{gauss:return}.
        Line~\ref{gauss:set-Z-end} adds one conjunct for each element in the set $Z$ guessed in line~\ref{gauss:set-Z}. 
        The cardinality of this set is bounded by $a$. 
        In parallel, the algorithm constructs a formula $\chi$, which has initially one atomic formula (line~\ref{gauss:mt-is-positive}).
        Lines~\ref{gauss:assert-sign-ft} and~\ref{gauss:restore} add, respectively, one conjunct to $\chi$ and 
        one conjunct to $\phi$ for each variable in $\vec x$ for which the non-deterministic 
        branching performed in~line~\ref{gauss:ast} does not lead to executing the~\textbf{continue} 
        instruction in line~\ref{gauss:guess-all-zeros}. Let us assume that line~\ref{gauss:restore} is executed $q \leq n$ times.
        Therefore, when the algorithm reaches line~\ref{gauss:unslackloop}, the formula $\phi$ has thus 
        at most $a + a + q$ atomic formulae and the formula $\chi$ has at most $q + 1$ atomic formulae. 
        The last line~\ref{gauss:return} returns the conjunction of $\phi$ and $\chi$, and we thus conclude that 
        $\paratom(\psi) \leq a + a + q + q + 1 \leq 2\cdot (n + a) + 1$.

        \item[Bound on {\rm$\parvars(\psi)$}:] The procedure simply ``replaces'' the $n$ variables from 
        $\vec x$ with variables $\vec w$ under the scope of bounded quantifiers. 
        Following lines~\ref{gauss:append-seq} and~\ref{gauss:subst-rem}, each variable in $\vec x$ corresponds 
        to a variable in $\vec w$ ($\vec w$ thus features $n$ variables). 
        Therefore, $\parvars(\psi) \leq v$.

        \item[Bound on {\rm$\parconst(\psi)$}:] 
        First observe that $\parconst(\psi)=max\{\parconst(\phi'),\parconst(\chi')\}$, where 
        $\phi'$ and $\chi'$ are the systems stored in the variables $\phi$ and $\chi$ in the last line~\ref{gauss:return}.
        From line~\ref{gauss:mt-is-positive} and line~\ref{gauss:assert-sign-ft} we see that  
        $\parconst(\chi') =\max\{\bitlength{\prod N}, \bitlength{f_0},\dots,\bitlength{f_{q-1}}\}$ for 
        the set $N$ such that $N \subseteq \{ f(t) : \text{the relation $(f(t) \divides \cdot)$ occurs in $\phi$} \}$ 
        and equations ${f_0(t) \cdot x_1 + \tau_0 = 0},$ $\dots,$ ${f_{q-1}(t) \cdot x_q + \tau_{q-1} = 0}$ are 
        guessed in line~\ref{gauss:guess-equation} within the main \textbf{foreach} loop 
        from line~\ref{gauss:mainloop}.
        Let us now upper-bound the bit size of $f_0,\dots,f_{q-1}$ together with the bit size of the 
        coefficients and constants of the formula $\phi'$.

        This bound uses in a crucial way~\Cref{prop:gauss-fundamental}. 
        From the definitions of the matrices $M_0,\dots,M_q$, given at the beginning of the appendix, 
        we see that all coefficients of the variables and 
        constants of the terms in $\phi'$ are entries of the matrix $M_q$ located in columns among 
        $1,\dots,g+r+1$, where $g$ is the overall number of free variables in the input formula $\phi$ 
        (including both $\vec x$ and~$\vec z$), and $r$ is the number 
        of slack variables introduced by the algorithm in line~\ref{gauss:introduce-slack}.

        Following~\Cref{prop:gauss-fundamental}, we see that each 
        $f_i$ is the determinant of $B[1,\dots,i+1; 1,\dots,i+1]$ and that the entries of $M_q$ 
        are determinants of $q \times q$ or $(q + 1) \times (q + 1)$ sub-matrices of $M_0$.
        All entries in the columns $1,\dots,g$ of $M_0$ are coefficients of the variables $\vec x$ and $\vec z$ 
        in the formula stored in $\phi$ at the beginning of the loop in line~\ref{gauss:mainloop}, 
        whereas the $(g+1)$-th column contains constants of terms in this formula, 
        and the bit size of all these entries is bounded by $\parconst(\phi)$. 
        (In particular, the equalities $f(t) = 0$ added in line~\ref{gauss:set-Z-end} do not 
        feature any variable, and $f(t)$ is an entry located in the $(g+1)$-th column 
        of $M_0$.) All entries in the columns $g+2,\dots,g+r+1$ of $M_0$ are either $0$ or $1$ 
        (slack variables have coefficient $1$ initially). 
        Therefore, to establish the bound on $\parconst(\phi')$ and 
        $\bitlength{f_0}$,\dots,$\bitlength{f_{q-1}}$, it suffices 
        to bound the bit size of these determinants. Below, we provide the computation for
        a $(q + 1) \times (q + 1)$ sub-matrix, which subsumes the one for $q \times q$ matrices.

        Let~$B \in \Z[t]^{(q+1) \times (q+1)}$ be a matrix with entries of bit size at most 
        $c \geq \max(1,\parconst(\phi))$:
         \begin{equation*}
            \label{eq:apx:B-matrix}    
            B \coloneqq \begin{pmatrix}
                b_{1,1} & \ldots  & b_{1,q+1}\\
                \vdots  & \ddots & \vdots\\
                b_{q+1,1} & \ldots  & b_{q+1,q+1}\\
            \end{pmatrix}.
        \end{equation*}
        Let us write $S_{q+1}$ for the set of all permutations of $(1,\dots,q+1)$, and 
        $\text{sgn}(\sigma)$ for the signature of a permutation $\sigma$, i.e., 
        $\text{sgn}(\sigma) = +1$ if $\sigma$ can be obtained from $(1,\dots,q+1)$ by exchanging entries an even number of times, and $\text{sgn}(\sigma) = -1$ otherwise.
        Let us also write $\sigma(i)$ for the~$i$th entry of $\sigma$.
        The determinant of $B$ is the element of~$\Z[t]$ defined as
        \[
            \det B = \sum_{\sigma \in S_{q+1}} \text{sgn}(\sigma) \cdot b_{1,\sigma(1)} \cdot \ldots \cdot b_{q+1,\sigma(q+1)}.
        \]
        Let $D \coloneqq  \max_{i,j} \,\deg(b_{i,j})$ and $H \coloneqq \max_{i,j} h(b_{i,j})$.
        We first bound the degree and height of $\det B$. For the degree, we have:
        \begin{align*}
            \deg(\det B) &\leq \max_{\sigma \in S_{q+1}}\big( \deg(b_{1,\sigma(1)}) + \ldots + \deg(b_{q+1,\sigma(q+1)}) \big) \leq (q+1) \cdot D.
        \end{align*}
        To bound the height, consider first a polynomial $b_{1,\sigma(1)} \cdot \ldots \cdot b_{q+1,\sigma(q+1)}$, 
        for some $\sigma \in S_{q+1}$. 
        Writing ${b_{j,\sigma(j)} = \sum_{i=0}^D g_j \cdot t^{j}}$,
        we have  
        \begin{equation*}
            b_{1,\sigma(1)} \cdot \ldots \cdot b_{q+1,\sigma(q+1)}
            = \sum_{j_1,\dots,j_{q+1} \in [0,D]} \nolimits\prod\nolimits_{i=1}^{(q+1)} g_{j_i} \cdot t^{j_{i}}.
        \end{equation*}
        When performing all multiplications and additions, the height of $b_{1,\sigma(1)} \cdot \ldots \cdot b_{q+1,\sigma(q+1)}$ 
        is thus bounded by $(D+1)^{(q+1)} \cdot H^{q+1}$.
        Therefore,
        \begin{align}
            \label{eq:parcoeff:height}
            h(\det B) &\leq (q+1)^{q+1} \cdot (D+1)^{(q+1)} \cdot H^{q+1}.
        \end{align}
        
        We can now bound the bit size of $\det B$: 
        \begin{align*}
            \bitlength{\det B} &= (\deg(\det B)+1) \cdot (\ceil{\log_2(h(\det B)+1)}+1)\\ 
            &\leq ((q+1) \cdot D + 1) \cdot (\sceil{\log_2(1+(q+1)^{q+1} \cdot (D+1)^{(q+1)} \cdot H^{q+1})}+1)\\
            &\leq (q+1) \cdot (D + 1) \cdot \big((q+1) \cdot \sceil{\log_2((q+1) \cdot (D+1) \cdot (H+1))}+1\big)\\
            &\leq (q+1)^2 \cdot (D + 1) \cdot \big(\sceil{\log_2((q+1) \cdot (D+1)) + \log_2(H+1)}+1\big)\\
            &\leq (q+1)^3 \cdot (D + 1)^2 \cdot \big(\sceil{\log_2(H+1)}+1\big)\\
            &\leq (n+1)^3 \cdot c^3.
        \end{align*}

        It remains to compare the bounds for $\parconst(\phi')$ and $\parconst(\chi')$. 
        Let us define the integers $D \coloneqq \max\{\deg(f) : f \in N\}$ and $H \coloneqq \max\{h(f) : f \in N\}$, 
        then we have: 
        \begin{equation*}
            \deg(\bitlength{\textstyle\prod N}) \leq a \cdot D 
            \qquad\qquad h(\bitlength{\textstyle\prod N}) \leq (D+1)^{a} \cdot H^{a}.
        \end{equation*}
        Therefore,  
        \begin{align*}
            \bitlength{\textstyle\prod N} &\leq (a \cdot D+1) \cdot (\sceil{\log_2((D+1)^{a} \cdot H^{a}+1)}+1)\\ 
            &\leq a \cdot (D + 1) \cdot \big(a \cdot \sceil{\log_2((D+1) \cdot (H+1))}+1\big)\\
            &\leq a^2 \cdot (D + 1)^2 \cdot \big(\sceil{\log_2(H+1)}+1\big)\\
            &\leq a^2 \cdot c^2,
        \end{align*}
        and we can conclude that 
        \begin{align*}
            \parconst(\psi) &\,=\, max\{\parconst(\phi'),\parconst(\chi')\}     \\
                            &\,\leq\, \max\{(n+1)^3 \cdot c^3, a^2 \cdot c^2\}  \\
                            &\,\leq\, (n + a)^3 \cdot c^3.
        \end{align*}

        \item[Bound on {\rm$\bitlength{B(w)}$}:] The relevant lines are line~\ref{gauss:guess-mt}, 
        line~\ref{gauss:subst-rem}, line~\ref{gauss:new-factor}, and line~\ref{gauss:append-seq}. 
        Line~\ref{gauss:guess-mt} defines a product of polynomials ${m(t) = \pm \prod N}$ for 
        $N \subseteq \{ f(t) : \text{the relation $(f(t) \divides \cdot)$ occurs in $\phi$} \}$.
        Let ${f_0(t) \cdot x_1 + \tau_0 = 0},$ $\dots,$ ${f_{q-1}(t) \cdot x_q + \tau_{q-1} = 0}$ be the equations guessed in line~\ref{gauss:guess-equation} during the run of the branch~$\beta$. 
        Because of the updates performed in line~\ref{gauss:new-factor},
        when line~\ref{gauss:append-seq} is executed the $(i+1)$-th time, 
        the bound given to the slack variable $y$ is $\pm f_0(t) \cdot \dots \cdot f_i(t) \cdot \prod N-1$. 
        Therefore, the bound given to variables in line~\ref{gauss:subst-rem} 
        is~$\pm f_0(t) \cdot \dots \cdot f_{q-1}(t) \cdot \prod N-1$.
        Let us now define 
        \begin{align*}
            D &\coloneqq {\max(\{\deg(f) : f \in N\} \cup \{\deg(f_i) : i \in [1,q]\})} \\
            H &\coloneqq \max(\{h(f) : f \in N\} \cup \{h(f_i) : i \in [1,q]\}).
        \end{align*} 
        If we consider for every $i \in [0,q-1]$ the polynomial 
        $f(t) \coloneqq f_0(t) \cdot \dots \cdot f_i(t) \cdot \prod N-1$, then 
        (since $\card{N} \leq a$ and $q \leq n$) we have: 
        \begin{equation*}
            \deg(f) \leq (n+a) \cdot D \qquad\qquad h(f) \leq (D+1)^{n+a} \cdot H^{n+a}+1,
        \end{equation*}
        where the bound on $h(f)$ is found using similar arguments as the ones used to derive the bound 
        in~\Cref{eq:parcoeff:height}. Therefore, for every variable $w$ in $\vec w$, we obtain 
        \begin{align*}
            \bitlength{B(w)} &\leq ((n+a) \cdot D+1) \cdot (\sceil{\log_2((D+1)^{n+a} \cdot H^{n+a}+2)}+1)\\ 
            &\leq (n+a) \cdot (D + 1) \cdot \big((n+a) \cdot \sceil{\log_2((D+1) \cdot (H+1))}+1\big)\\
            &\leq (n+a)^2 \cdot (D + 1)^2 \cdot \big(\sceil{\log_2(H+1)}+1\big)\\
            &\leq (n+a)^2 \cdot (n+1)^9 \cdot c^9\\
            &\leq (n+a)^{11} \cdot c^9.
        \end{align*}
        In the second-to-last line, we have used the fact that $\bitlength{f_i} \leq (n+1)^3 \cdot c^3$ 
        (recall that $f_i$ is the determinant of $B[1,\dots,i+1; 1,\dots,i+1]$).
        \qedhere
    \end{description}
\end{proof}

\begin{lemma}
    \label{lemma:boundedqe-in-np}
    The algorithm~\BoundedQE runs in non-deterministic polynomial time.
\end{lemma}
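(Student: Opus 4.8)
The plan is to combine \Cref{lemma:param-QE}, which already bounds the size of each branch \emph{output}, with (i)~a uniform polynomial bound on the size of every \emph{intermediate} formula, polynomial, and map handled by \BoundedQE, and (ii)~a polynomial bound on the number of elementary operations and non-deterministic guesses performed. Fix an input $\exists \vec x : \phi(\vec x, \vec z)$ with $\vec x = (x_1,\dots,x_n)$, and write $a \coloneqq \paratom(\phi)$, $c \coloneqq \parconst(\phi)$; recall $\parfunc(\phi) = 0$, since the input is a positive Boolean combination of \emph{linear} constraints.

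First I would bound the intermediate objects. The key is \Cref{prop:gauss-fundamental}: for every $i$, each coefficient and constant occurring in $\phi_i$, and each common factor $\ell_i(t) = f_{i-1}(t)$, is a minor of the fixed matrix $M_0$ of order at most $q+1 \le n+1$, whose entries have bit size at most $c$. The determinant estimate carried out inside the proof of \Cref{lemma:param-QE} then bounds each such polynomial by $(n+1)^3 c^3$, and consequently bounds $m_i(t)$ (a product of at most $n+a$ such polynomials) and every entry of $B_i$ (a product of at most $n+a$ of them) by a polynomial in $n,a,c$. For the structural parameters: the number of atomic formulae of the pair $(\phi,\chi)$ is non-decreasing along any run (every update either preserves the atom count or adds a constant number of atoms, see lines~\ref{gauss:set-Z-end}, \ref{gauss:assert-sign-ft}, \ref{gauss:restore}), so by the $\paratom$ bound of \Cref{lemma:param-QE} it never exceeds $2(n+a)+1$; the number of variables never exceeds $g+r$, where $r\le a$ is the number of slack variables introduced in line~\ref{gauss:introduce-slack}; and $\parfunc$ stays $0$, as no $\sfloor{\frac{\cdot}{t^d}}$ or $(\cdot \bmod f(t))$ is ever created. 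Hence every formula manipulated during the run has bit size polynomial in the size of the input.

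Next I would count operations. Lines~\ref{gauss:set-Z}--\ref{gauss:introduce-slack} make one guess of a subset $Z$ of size $\le a$, run an ensuing \textbf{foreach} loop of $\le a$ iterations, compute one product of $\le a$ polynomials (polynomial time, each multiplication being polynomial by the size bound), guess one sign, and introduce $r \le a$ slack variables. The main loop of line~\ref{gauss:mainloop} runs exactly $n$ times; each iteration makes a constant number of guesses (line~\ref{gauss:ast}; the equality among $\le a$ candidates in line~\ref{gauss:guess-equation}; the sign in line~\ref{gauss:guess-sign}), performs one vigorous substitution --- which rewrites each of the $O(n+a)$ atomic formulae by a polynomial-time term manipulation --- and then divides all constraints by $\prevlead$ using the Euclidean polynomial-division algorithm. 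Finally, lines~\ref{gauss:unslackloop}--\ref{gauss:drop-slack} make one pass of $\le a$ constant-time replacements, and line~\ref{gauss:return} outputs a conjunction. Multiplying the $O(n)$ iterations of the main loop by the polynomial per-iteration cost on polynomial-size inputs yields a polynomial time bound on a non-deterministic Turing machine.

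The main obstacle --- the only place where a naive implementation would blow up --- is the division of line~\ref{gauss:divide}. Without Bareiss' trick, the substitution $\phi\vigsub{\frac{-\tau}{f(t)}}{x}$ multiplies every coefficient by $f(t)$, so bit sizes would roughly double per eliminated variable, reaching $2^{\Omega(n)}$ after the whole loop. The resolution, inherited from \cite{ChistikovMS24}, is the Desnanot--Jacobi identity (\Cref{thm:desnanot-jacobi}): the common factor $\prevlead = f_{i-1}(t)$ divides every coefficient produced by the naive step, \emph{exactly} (\Cref{prop:gauss-fundamental}.\ref{prop:gauss-fundamental:i5}), and the quotients are again minors of $M_0$ --- hence the Euclidean division terminates with zero remainder, its quotients have the polynomial bit size established above, and the division can be carried out in polynomial time. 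Once this is checked, together with the observation that the parameter bounds of \Cref{lemma:param-QE} hold throughout the run (not merely at its end), the remainder of the analysis is routine bookkeeping.
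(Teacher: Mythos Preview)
Your proposal is correct and takes essentially the same approach as the paper: invoke the determinant bounds established in the proof of \Cref{lemma:param-QE} (via \Cref{prop:gauss-fundamental}) to conclude that every intermediate object has polynomial bit size, then observe that each line runs in (non-deterministic) polynomial time and that each loop iterates polynomially often, with the division of line~\ref{gauss:divide} handled by the Euclidean algorithm thanks to Bareiss' exact-division guarantee. The paper's proof is terser---it simply says that the proof of \Cref{lemma:param-QE} already bounds all intermediate objects---whereas you spell out explicitly which parameters are bounded and why the division step is the only non-obvious point; but the substance is the same.
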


\begin{proof}
    Following the proof of~\Cref{lemma:param-QE}, we now know that all objects~\BoundedQE constructs during its execution are of polynomial bit size. 
    With this information, the proof that the algorithm 
    only requires non-deterministic polynomial time is simple: it suffices to show that each line of the algorithm can be implemented in non-deterministic polynomial time, and that all loops only iterate a polynomial number of times. 

    Regarding the operations performed, a quick glance at the pseudocode of~\BoundedQE reveals that, in fact, except for the lines that perform guesses, all other operations can be implemented in polynomial time.
    In particular, note that the polynomial divisions performed in line~\ref{gauss:divide}
    can be performed in polynomial time with the standard Euclidean algorithm.

    For the number of loops iterations, the \textbf{foreach} loop of 
    line~\ref{gauss:Z-for} iterates at most $a \coloneqq \paratom(\phi)$ times; 
    the \textbf{foreach} loop of line~\ref{gauss:mainloop} iterates $n$ times, where $n$ is the number of eliminated variables;
    and the \textbf{foreach} loop of line~\ref{gauss:unslackloop} 
    iterates at most $2(n+a)+1$ times (following the bound on $\paratom(\psi)$ from~\Cref{lemma:param-QE}).
\end{proof}

\section{Extended material for Section~\ref{sec:division}}\label{app:division}

The main result of \Cref{sec:division} is the correctness proof for \Cref{algo:qe} (\ElimBounded), which was 
only sketched in that section. In this appendix, we first formally prove \Cref{lemma:elimbounded:Dtwo}
(with the help of two auxiliary lemmas). Together with 
the correctness of~\BoundedQE 
(\Cref{lemma:boundedqe:correctness}, proven~in~\Cref{subapp:gaussian-elimination:A2}), 
this will give us the necessary equivalences to complete 
the proof of \Cref{lemma:elimbounded:correctness}. 
The complexity of \Cref{algo:qe} is analysed at the end of the appendix. 

In this appendix, we write $\card{\vec x}$ for the number 
of variables in the vector $\vec x$. 

\subsection{Proof of Lemma~\ref{lemma:elimbounded:preprocessing}}

\LemmaElimBoundedPreProcessing*

\begin{proof}
    It is simple to see that this lemma follows as soon 
    as we prove that, in any solution to the input formula, the base $t$ expansion of every bounded variable $w$ has at most $\bitlength{B(w)}$ \mbox{$t$-digits}. For this, it suffices to establish 
    $\abs{g(k)} \leq k^{\bitlength{g}}$, for every~$g \in \Z[t]$ and $k \geq 2$.
    Consider a non-zero polynomial $g(t) \coloneqq \sum_{j=0}^d a_j \cdot t^j$, where $a_0,\dots,a_i \in \Z$. (Trivially, $0 \leq k^{\bitlength{0}}$.)
    Let $a \coloneqq \max\{ \abs{a_j} : j \in [0,d]\}$.
    Then, 
    \begin{align*}
        \abs{g(k)} 
        \,=\, \abs{\sum\nolimits_{j=0}^d a_j \cdot k^j} 
        \,\leq\, (d+1) \cdot a \cdot k^d
        \,\leq\, k^{d + \log_k((d+1) \cdot a)}
        \,\leq\, k^{d + \log_2((d+1) \cdot a)},
    \end{align*}
    and analysing the exponent we see
    \begin{align*}
        d + \log_2((d+1) \cdot a) 
        \leq d + (d+1) \cdot \log_2(a+1) 
        \leq (d+1) \cdot (\log_2(a+1) +1) 
        \leq \bitlength{g}.
        &\qedhere
    \end{align*}
\end{proof}

\subsection{Proof of Lemma~\ref{lemma:elimbounded:Dtwo}}

Before delving into the proof of this lemma, 
we need two auxiliary results (\Cref{lemma:elimbounded:constraint} and~\Cref{lemma:elimbounded:divide}).
The first lemma describes the \PPA constraints produced during the 
execution of the body of the \textbf{while} loop.

\begin{lemma}\label{lemma:elimbounded:constraint}
    At the beginning of every iteration of the \textbf{while} loop in line \ref{line:elimbounded:while}, every 
    constraint which has a non-integer coefficient of a variable from $\vec y$ has the form
    \begin{equation}\label{eq:elimbounded:constraint}
        \sigma(\vec y)\cdot t + \rho(\vec y) + \floor{\frac{\tau(\vec z)}{t^k}} \sim 0 
        \qquad\qquad \text{with $\sim$ in $\{=,\leq\}$},        
    \end{equation}
    where $\sigma(\vec y)$ is a linear \PPA term, $\rho(\vec y)$ is a linear term with coefficients in $\Z$, 
    $\tau(\vec z)$ is either a linear non-shifted \PPA term or $(\tau'(\vec z) \bmod f(t))$, where $\tau'$ is 
    linear non-shifted, $k$ is a non-negative integer, and $\sfloor{\frac{\tau}{t^0}}=\tau$. 
\end{lemma}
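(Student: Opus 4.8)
The plan is to prove the statement by induction on the number of iterations of the \textbf{while} loop, where the invariant is precisely the claim that every constraint of $\phi$ whose term has a non-integer coefficient on some $\vec y$-variable has the shape displayed in~\Cref{eq:elimbounded:constraint}. First I would establish the base case: after lines~\ref{line:elimbounded:add-bounds}--\ref{line:elimbounded:div-x} (i.e.\ for $\phi_{\varnothing}$), each constraint is obtained from a linear~\PPA (in)equality or a constraint $\sigma(\vec w) + (\tau(\vec z) \bmod f(t)) = 0$ of the input by substituting every bounded variable $w$ with its $t$-ary expansion $t^M y_M + \dots + t y_1 + y_0$. Grouping the resulting term by powers of $t$, every such constraint can be written as $\sigma'(\vec y) \cdot t + \rho(\vec y) + \tau(\vec z) \sim 0$ where $\rho(\vec y)$ collects the degree-$0$-in-$t$ part of the $\vec y$-contribution (hence has integer coefficients) and $\sigma'(\vec y) \cdot t$ collects the rest; the $\vec z$-part $\tau(\vec z)$ is either a linear non-shifted \PPA term (the inequality/equality case) or of the form $(\tau'(\vec z) \bmod f(t))$ with $\tau'$ linear and non-shifted (the remainder-constraint case). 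This matches~\Cref{eq:elimbounded:constraint} with $k = 0$, using the convention $\sfloor{\frac{\tau}{t^0}} = \tau$. Strictly speaking, one must observe that a constraint whose $\vec y$-coefficients are \emph{all} integers need not have this shape, but then it is vacuously outside the scope of the statement, so the invariant is still fine.

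For the inductive step, I would trace exactly what line~\ref{line:elimbounded:replace-in-phi} does to a constraint $\eta \sim 0$ of the form~\Cref{eq:elimbounded:constraint}. Following the proof sketch of~\Cref{lemma:elimbounded:Dtwo}, write the term of $\eta$ as $\eta' \cdot t + \rho'$ with $\eta' = \sigma(\vec y) + \sfloor{\frac{\tau(\vec z)}{t^{k+1}}}$ and $\rho' = \rho(\vec y) + \big(\sfloor{\frac{\tau(\vec z)}{t^k}} \bmod t\big)$ --- here the key rewriting is $\sfloor{\frac{\sfloor{\tau/t^k}}{t}} = \sfloor{\frac{\tau}{t^{k+1}}}$, which the algorithm applies implicitly (as noted in the Remark inside that proof sketch). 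Line~\ref{line:elimbounded:replace-in-phi} replaces $\eta \sim 0$ by $\gamma \,\land\, (\sigma + r + \sfloor{\frac{\tau}{t}} \sim 0)$; but $\sfloor{\frac{\tau}{t}}$ here is really $\sfloor{\frac{\sfloor{\tau/t^k}}{t}} = \sfloor{\frac{\tau}{t^{k+1}}}$ after collapsing, and $\sigma + r + \sfloor{\frac{\tau}{t^{k+1}}}$ is again of the form $\sigma'(\vec y) \cdot t + \rho''(\vec y) + \sfloor{\frac{\tau(\vec z)}{t^{k+1}}}$: indeed $\sigma(\vec y)$ is linear, the new additive constant $r$ merges into the integer-coefficient part $\rho''$, and the $\vec z$-part is $\sfloor{\frac{\tau(\vec z)}{t^{k+1}}}$ with the same underlying $\tau(\vec z)$ (linear non-shifted, or a mod-term), so the required form is preserved with $k$ incremented to $k+1$. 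As for $\gamma$ (from lines~\ref{line:elimbounded:start-construction-gamma}--\ref{line:elimbounded:increment-r}): all its constraints have $\deg(\vec y, \gamma) = 0$, i.e.\ integer coefficients on every $\vec y$-variable, so they fall outside the statement's scope and cannot violate the invariant. Finally, every constraint of $\phi$ not touched by this particular iteration is unchanged, so by the induction hypothesis it already satisfies the invariant.

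The main obstacle I anticipate is bookkeeping of the $\vec z$-part under the collapsing of nested floors: one must carefully check that if $\tau(\vec z)$ was $(\tau'(\vec z) \bmod f(t))$ in a previous round, the replacement in line~\ref{line:elimbounded:replace-in-phi} produces $\sfloor{\frac{(\tau'(\vec z) \bmod f(t))}{t^{k+1}}}$ and \emph{not} some new nesting of mod inside floor inside mod --- i.e.\ that the ``$\tau(\vec z)$ is either a linear non-shifted term or $(\tau'(\vec z)\bmod f(t))$ with $\tau'$ linear non-shifted'' clause is genuinely closed under the loop body. This holds because the loop only ever wraps the existing $\vec z$-subterm inside further $\sfloor{\cdot/t}$'s and never inside a new mod, and nested $\sfloor{\sfloor{\cdot/t^k}/t}$ collapses to $\sfloor{\cdot/t^{k+1}}$, leaving the innermost $\tau'(\vec z)$ or $(\tau'(\vec z)\bmod f(t))$ intact; I would spell this out explicitly. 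A secondary point to handle cleanly is the inequality-to-strict-inequality conversion of line~\ref{line:elimbounded:sim}: subtracting $1$ from $\eta$ only alters the integer-coefficient part $\rho(\vec y)$ (it becomes $\rho(\vec y) - 1$), so the shape~\Cref{eq:elimbounded:constraint} is unaffected, and I would note this in passing.
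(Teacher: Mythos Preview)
Your proposal is correct and follows essentially the same approach as the paper's proof: induction on the number of iterations of the \textbf{while} loop, with the base case handled by inspecting the two possible shapes of constraints in $\phi_{\varnothing}$ (linear over $\vec z$, or featuring a single $(\tau'(\vec z)\bmod f(t))$), and the inductive step handled by decomposing $\sigma(\vec y)$ as $\sigma'(\vec y)\cdot t + \rho'(\vec y)$ and observing that $\gamma$ has only integer $\vec y$-coefficients. Your extra remarks on the floor-collapsing $\sfloor{\sfloor{\cdot/t^k}/t} = \sfloor{\cdot/t^{k+1}}$ and on line~\ref{line:elimbounded:sim} only shifting the integer part are correct refinements that the paper leaves implicit.
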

\begin{proof}
    By induction on the number of iterations $k\geq 0$. 
    Initially, when the constraint $(\eta \sim 0)$ is taken from the formula $\phi_\varnothing$, the term $\eta$ has 
    one of the two forms: 
    \begin{itemize}
        \item  $\sigma(\vec y)\cdot t + \rho(\vec y) + \tau(\vec z)$, 
        where we have split the linear \PPA term over the variables $\vec y$ 
        into $\sigma(\vec y)\cdot t$ and 
        a linear term with integer coefficients $\rho(\vec y)$; or 
        \item $\sigma(\vec y)\cdot t + \rho(\vec y) + (\tau(\vec z) \bmod f(t))$, where, again, $\rho(\vec y)$ is 
        a linear term with integer coefficients. 
    \end{itemize}
    Here, in both cases, $\tau(\vec z)$ is a linear non-shifted term.
    It is clear that \Cref{eq:elimbounded:constraint} unifies both items.
    
    Let us assume that at the beginning of the $k$-th iteration every term with 
    a non-integer coefficient of a variable from $\vec y$ has the form \Cref{eq:elimbounded:constraint}.  
    Consider the updates in $\phi$ performed during one iteration of the loop. 
    This formula is only modified in line~\ref{line:elimbounded:replace-in-phi}, where 
    the constraint $(\eta \leq 0)$ is replaced with a conjunction 
    \begin{equation}\label{eq:elimbounded:constraint:2}
        \gamma \,\land\, \Big(\sigma(\vec y) + r + \Bigl\lfloor{\frac{\floor{\frac{\tau(\vec z)}{t^{k}}}}{t}}\Bigr\rfloor \sim 0\Big),        
    \end{equation}
    where $r$ is an integer (see lines~\ref{line:elimbounded:guess-r} and~\ref{line:elimbounded:increment-r}) 
    and $\gamma$ is either an equality $(t \cdot r=\rho(\vec y))$ (line~\ref{line:elimbounded:add-eq-to-psi})
    or a conjunction of two inequalities $(t\cdot r\leq \rho(\vec y)) \,\land\, (\rho(\vec y) \leq t\cdot(r+1)-1)$ 
    (line~\ref{line:elimbounded:add-neq-to-psi}).

    By induction hypothesis, $\sigma(\vec y)$ is a linear term with coefficients in $\Z[t]$. 
    Let us represent this term as $\sigma'(\vec y) \cdot t + \rho'(\vec y)$, 
    where $\rho'$ is a linear term with integer coefficients. 
    The term $\sigma'(\vec y)$ can be equal to zero; this happens when $\sigma(\vec y)$ is a linear term with integer 
    coefficients. We can now rewrite \Cref{eq:elimbounded:constraint:2} as 
    \begin{equation}\label{eq:elimbounded:constraint:3}
        \gamma \,\land\, \Big(\sigma'(\vec y) \cdot t + (\rho'(\vec y) + r) + \floor{\frac{\tau(\vec z)}{t^{k+1}}} \sim 0\Big).        
    \end{equation}
    It remains to notice that, by induction hypothesis, $\rho(\vec y)$ is a linear term over 
    $\Z$, and thus there are no occurrences of a variable from $\vec y$ with a non-integer coefficient in $\gamma$.
\end{proof}

The ``divisions by the parameter $t$'' 
are formalised in the second auxiliary lemma, which gives us
the two necessary equivalences already introduced in the sketch 
of the proof of \Cref{lemma:elimbounded:Dtwo} given in the body of the paper. 

\begin{lemma}\label{lemma:elimbounded:divide}
    Let $C,D \in \Z$, with $C \leq D$. Then, for the parameter $t \in \N$ such that $t\geq 2$, $z \in \Z$, 
    and $w \in [t \cdot C, t \cdot D]$, the following equivalences hold:
    \begin{alphaenumerate}
        \item $t \cdot z + w = 0 \iff \bigvee_{r\in[C, D]}(z + r = 0 \land t \cdot r = w)$,\label{item:div-eq}
        \item $t \cdot z + w < 0 \iff \bigvee_{r\in[C, D]}(z + r < 0 \land t \cdot r \leq w \land w < t \cdot (r+1))$.\label{item:div-ineq}
    \end{alphaenumerate}
\end{lemma}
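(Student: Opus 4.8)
\textbf{Proof plan for Lemma~\ref{lemma:elimbounded:divide}.}
The statement is a pair of ``symbolic division by $t$'' identities, and the plan is to prove both directions of each equivalence by elementary integer arithmetic, using only that $w$ is known a priori to lie in the interval $[t\cdot C, t\cdot D]$. The key observation throughout is that the quotient $\floor{w/t}$ of the Euclidean division of $w$ by $t$ is forced to lie in $[C,D]$ precisely because $t\cdot C \leq w \leq t \cdot D$ and $t\geq 1$: indeed $w \geq t\cdot C$ gives $\floor{w/t}\geq C$, and $w \leq t\cdot D$ gives $\floor{w/t}\leq D$.

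\smallskip
\emph{Part~\ref{item:div-eq}.}
For the right-to-left direction, fix $r\in[C,D]$ with $z+r=0$ and $t\cdot r = w$; then $t\cdot z + w = t\cdot z + t\cdot r = t\cdot(z+r) = 0$, which is immediate. For the left-to-right direction, suppose $t\cdot z + w = 0$, i.e.\ $w = -t\cdot z$. Set $r \coloneqq -z$. Then $t\cdot r = -t\cdot z = w$, so the second conjunct holds; and $z + r = z - z = 0$, so the first conjunct holds. It remains only to check $r\in[C,D]$: from $w = t\cdot r$ and $t\cdot C \leq w \leq t\cdot D$ we get $t\cdot C \leq t\cdot r \leq t\cdot D$, and dividing by $t\geq 2>0$ yields $C\leq r\leq D$. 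Hence the disjunct indexed by this $r$ is satisfied.

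\smallskip
\emph{Part~\ref{item:div-ineq}.}
For the right-to-left direction, fix $r\in[C,D]$ with $z+r<0$, $t\cdot r\leq w$, and $w< t\cdot(r+1)$. From $z+r<0$ (an integer inequality) we have $z+r\leq -1$, hence $z\leq -r-1$, so $t\cdot z \leq t\cdot(-r-1) = -t\cdot r - t$. Adding $w$ and using $w < t\cdot(r+1) = t\cdot r + t$ gives $t\cdot z + w < -t\cdot r - t + t\cdot r + t = 0$, as required. For the left-to-right direction, suppose $t\cdot z + w < 0$. Let $r \coloneqq \floor{w/t}$ be the quotient and $q \coloneqq w - t\cdot r \in [0,t-1]$ the remainder, so $w = t\cdot r + q$ with $t\cdot r \leq w < t\cdot(r+1)$; the last two conjuncts of the target disjunct hold for this $r$. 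As in Part~\ref{item:div-eq}, $t\cdot C\leq w\leq t\cdot D$ forces $C\leq r\leq D$, so $r$ is in the range of the disjunction. Finally, from $0 > t\cdot z + w = t\cdot z + t\cdot r + q = t\cdot(z+r) + q$ and $q\geq 0$ we get $t\cdot(z+r) < 0$, hence $z+r<0$ since $t>0$; this is the first conjunct. Thus the disjunct indexed by $r=\floor{w/t}$ is satisfied, completing the proof.

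\smallskip
There is no real obstacle here: the lemma is a bookkeeping exercise, and the only point requiring a moment's care is consistently using the hypothesis $w\in[t\cdot C,t\cdot D]$ to pin the guessed quotient $r$ into $[C,D]$ in both the equality and the inequality case, together with the fact that a strict inequality between integers can be strengthened by $1$.
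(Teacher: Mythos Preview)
Your proof is correct and follows essentially the same approach as the paper: both arguments hinge on the fact that the hypothesis $w\in[tC,tD]$ forces $\floor{w/t}\in[C,D]$, and both treat the equality case trivially while handling the inequality case via the Euclidean decomposition of $w$ by $t$. The only cosmetic difference is that for part~\ref{item:div-ineq} the paper first observes that the conjunct $t\cdot r\leq w<t\cdot(r+1)$ pins $r$ to $\floor{w/t}$ and then proves the single equivalence $t\cdot z+w<0\iff z+\floor{w/t}<0$ via the fractional part $\{w/t\}$, whereas you keep the two directions separate and stay entirely in integer arithmetic; neither approach gains anything substantive over the other.
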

\begin{proof}
    Since we know that $w \in [t \cdot C, t \cdot D]$, there is an integer $r^\ast \in [C, D]$ such that 
    $\floor{\frac{w}{t}} = r^\ast$. 
    \begin{description}
        \item[Item (\ref{item:div-eq}):] For the left-to-right direction, the equality 
        $t \cdot z + w = 0$ implies that $w$ is divisible by $t$. Hence, $w = t\cdot r^\ast$, and we have 
        $t \cdot z + t\cdot r^\ast = 0$ which can be rewritten as $z + r^\ast = 0$ (since $t\geq 2$). 
        The right-to-left direction is trivial.
        \item[Item (\ref{item:div-ineq}):] Observe that the fact that $t \cdot r \leq w \land w < t \cdot (r+1)$ 
        for some $r \in [C, D]$ means that $r=r^\ast$, and it is sufficient to prove the equivalence 
        $t \cdot z + w < 0 \iff z + r^\ast < 0$. 
        Below, we write $\frpart{x}$ for the fractional part of a real 
        number $x \in \R$, that is, 
        $\frpart{x}$ is the only real number such that 
        $0 \leq \frpart{x} < 1$ and $x = \floor{x} + \frpart{x}$.
        We have: 
        \begin{align*}
            t \cdot z + w < 0 \iff & t \cdot z + t \cdot \frac{w}{t} < 0 
            & \text{recall: $t \geq 2$}\\ 
            \iff & t \cdot z + t \cdot \left(\floor{\frac{w}{t}}+\frpart{\frac{w}{t}}\right) < 0
            \\
            \iff & z + \left(\floor{\frac{w}{t}}+\frpart{\frac{w}{t}}\right) < 0 &
            \\
            \iff & z + \floor{\frac{w}{t}} < 0 & \text{since $0 \leq \frpart{\frac{w}{t}} < 1$}
            \\
            \iff & z + r^\ast < 0.  &&\qedhere
        \end{align*}
    \end{description}
\end{proof}


Thanks to the auxiliary \Cref{lemma:elimbounded:constraint,lemma:elimbounded:divide}, we are now ready to prove~\Cref{lemma:elimbounded:Dtwo}:

\LemmaElimBoundedDtwo*

\begin{proof}
    The $(i+1)$-th iteration of the \textbf{while} loop in lines \ref{line:elimbounded:while}--\ref{line:elimbounded:replace-in-phi} 
    picks a constraint $(\eta \sim 0)$ from $\phi_{\vec s}$ and divides it using \Cref{lemma:elimbounded:divide}. 
    Since this lemma works with strict inequalities, if the symbol $\sim$ is the non-strict inequality sign $\leq$, 
    in line~\ref{line:elimbounded:sim} we transform $(\eta \leq 0)$ into an equivalent formula $(\eta - 1 < 0)$. 
    We further assume that, in this case, $\eta$ has been replaced with $\eta -1$, 
    and we work with the strict inequality $(\eta < 0)$. 
    
    By \Cref{lemma:elimbounded:constraint}, every constraint $(\eta \sim 0)$, which has an occurrence 
    of a variable from $\vec y$ with a non-integer coefficient, has the form 
    (\ref{eq:elimbounded:constraint}), where now the symbol $\sim$ is from $\{=,<\}$. 
    The term $\eta$ is represented as 
    \begin{equation*}
        \eta \,=\,\eta' \cdot t + \rho',
        \text{ where }
        \qquad
        \eta' \,=\, \sigma(\vec y) \,+\, \floor{\frac{\tau(\vec z)}{t^{k+1}}}
        \quad
        \text{ and }
        \quad
        \rho' \,=\, \rho(\vec y) \,+\, 
        \Big(\floor{\frac{\tau(\vec z)}{t^k}} \bmod t\Big)
    \end{equation*}
    Notice that the term $\rho'$ is exactly the term assigned to the variable $\rho$ in line~\ref{line:elimbounded:rho} 
    of \Cref{algo:qe}. This term is a sum of a linear polynomial with integer coefficients $\rho(\vec y)$ and 
    a linear occurrence of the integer remainder function with coefficient $1$. 
    Therefore, we can compute $\onenorm{\rho'}$ and obtain that 
    $\rho' \in \big[t\cdot (-\onenorm{\rho'}), \: t \cdot \onenorm{\rho}\big]$.
    Indeed, every variable from $\vec y$ and the summand with linear occurrence of $(\cdot\!\!\mod t)$ 
    belong to the segment $[0, t-1]$. 
    With these bounds on $\rho'$, we can apply \Cref{lemma:elimbounded:divide} to the constraints 
    $(\eta' \cdot t + \rho' = 0)$ and $(\eta' \cdot t + \rho' < 0)$ for the integers 
    $C = (-\onenorm{\rho'})$ and $D = \onenorm{\rho'}$. This gives us two equivalences 
    \begin{align}
        (\eta' \cdot t + \rho' = 0) \iff & \!\!\!\bigvee_{r\in [-\onenorm{\rho'}, \: \onenorm{\rho'}]}\!\!\!
        (\eta' + r = 0) \hphantom{+11} \, \land \, (t \cdot r = \rho')\label{eq:elimbounded:divide-eq}
        \\
        (\eta' \cdot t + \rho' < 0) \iff & \!\!\!\bigvee_{r\in [-\onenorm{\rho'}, \: \onenorm{\rho'}]}\!\!\!
        (\eta' + r + 1 \leq 0) \, \land \, \underbrace{(t \cdot r \leq \rho') \, \land \, (\rho' < t \cdot (r+1))}_{\gamma}.
        \label{eq:elimbounded:divide-neq}
    \end{align}
    In the second equivalence, the strict inequality $(\eta' + r < 0)$ originated from (\ref{item:div-ineq})
    is replaced with the equivalent non-strict one $(\eta' + r + 1 \leq 0)$. 
    This replacement is done in \Cref{algo:qe} by line~\ref{line:elimbounded:increment-r} and does not 
    affect the formula $\gamma$, which is defined above, in line~\ref{line:elimbounded:add-neq-to-psi}.  
    
    Since $\phi_{\vec s}$ is a positive Boolean combination of atomic formulae, the logical equivalences 
    $(\psi_1 \,\lor\, \psi_2)\,\land\,\psi_3 \iff (\psi_1 \,\land\, \psi_3)\,\lor\,(\psi_2 \,\land\, \psi_3)$ and 
    $(\psi_1 \,\lor\, \psi_2)\,\lor\,\psi_3 \iff (\psi_1 \,\lor\, \psi_3)\,\lor\,(\psi_2 \,\lor\, \psi_3)$ 
    together with \Cref{eq:elimbounded:divide-eq,eq:elimbounded:divide-neq} give us the equivalence 
    \begin{equation}\label{eq:elimbounded:disj}
        \phi_{\vec s} \iff \bigvee_{r\,\in\, [-\onenorm{\rho'}, \: \onenorm{\rho'}]} \phi_{\vec s r}.    
    \end{equation}
    The formula $\phi_{\vec s r}$ is the result of replacement of the constraint $(\eta \sim 0)$ with 
    the conjunction on the right-hand side of 
    \Cref{eq:elimbounded:divide-eq} if the symbol $\sim$ is $=$, and with the conjunction on the 
    right-hand side of \Cref{eq:elimbounded:divide-neq} if $\sim$ is the strict inequality sign $<$. 
    The formula $\phi_{\vec s r}$ is assigned to $\phi$ in line~\ref{line:elimbounded:replace-in-phi}, 
    and this finishes the $(i+1)$-th iteration of the \textbf{while} loop. 
    
    Line~\ref{line:elimbounded:guess-r} guesses an integer $r$ in $[-\onenorm{\rho'}, \, \onenorm{\rho'}]$, 
    and thus $G = \{\vec sr : r \in [-\onenorm{\rho'},-\onenorm{\rho'}]\}$, and \Cref{eq:elimbounded:disj} 
    imply Item~\ref{lemma:elimbounded:Dtwo:i1}. To prove Item~\ref{lemma:elimbounded:Dtwo:i2}, observe that
    \begin{align*}
        \deg(\vec y,\phi_{\vec s r}) &\,=\, \deg(\vec y,\phi_{\vec s}) - \deg(\vec y,\eta \sim 0) + \deg(\vec y,\eta' \sim 0) + \deg(\vec y, \gamma)\\
        &\,=\, \deg(\vec y,\phi_{\vec s}) - \deg(\vec y,\sigma \cdot t \sim 0) + \deg(\vec y,\sigma \sim 0) = \deg(\vec y, \phi_{\vec s}) - 1.   
        &&
    \end{align*}
    Indeed, the constraints from $\gamma$ have degree $0$ and $\deg(\theta + \tau \sim 0) \,=\, \deg(\theta \sim 0)$ for 
    every \PPA term $\tau$ not featuring variables from $\vec y$. Thus, \Cref{lemma:elimbounded:Dtwo} is proved.    
\end{proof}

Let us now inductively apply \Cref{lemma:elimbounded:Dtwo} to show that 
the \textbf{while} loop performs at most $\deg(\vec y, \phi_{\varnothing})$ iterations,
and that the disjunction, across non-deterministic branches, over all formulae~$\phi_{\vec r}$ 
obtained at the end of this loop is equivalent to $\phi_{\varnothing}$.

\begin{lemma}\label{lemma:elimbounded:Dthree}
    The \textbf{while} loop of line~\ref{line:elimbounded:while}  
    performs at most $\deg(\vec y, \phi_{\varnothing})$ iterations.
\end{lemma}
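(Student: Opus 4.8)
The plan is to read the bound off \Cref{lemma:elimbounded:Dtwo} by an induction on the iteration count. The first step is to note that the guard of the \textbf{while} loop in line~\ref{line:elimbounded:while} --- that some variable of $\vec y$ has a non-integer coefficient in a constraint of the current formula $\phi$ --- is exactly the condition $\deg(\vec y,\phi)>0$. Indeed, by the inductive definition of $\deg(\vec y,\cdot)$, this quantity is a sum, over the atomic constituents of $\phi$, of the maxima of the degrees of the coefficients of the $\vec y$-variables; hence it equals $0$ precisely when every such coefficient has degree $0$, i.e.\ is an integer, and it is positive precisely when the loop guard is true. (By \Cref{lemma:elimbounded:preprocessing} every variable of $\vec y$ occurs only linearly in $\phi_{\varnothing}$, and \Cref{lemma:elimbounded:constraint} guarantees that this stays the case at the start of every iteration, so $\deg(\vec y,\cdot)$ is well defined throughout; note also that $\deg(\vec y,\phi_{\varnothing})$ is a finite non-negative integer since $\phi_{\varnothing}$ is a finite formula.)

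Next I would fix a non-deterministic branch and, assuming the loop performs at least $i$ iterations along it, let $\vec s\in R_i$ be the sequence of the first $i$ integers guessed in line~\ref{line:elimbounded:guess-r} on that branch; the formula handled after those $i$ iterations is $\phi_{\vec s}$. I claim, by induction on $i$, that $\deg(\vec y,\phi_{\vec s})\le \deg(\vec y,\phi_{\varnothing})-i$. The base case $i=0$ is immediate from $\phi_{\vec s}=\phi_{\varnothing}$. For the step, if the loop runs an $(i+1)$-th iteration then its guard holds on $\phi_{\vec s}$, so $\deg(\vec y,\phi_{\vec s})>0$; this is exactly the hypothesis under which \Cref{lemma:elimbounded:Dtwo} applies to $\vec s$, and part~\ref{lemma:elimbounded:Dtwo:i2} of that lemma then yields, for the guessed $r$ (so $\vec sr\in R_{i+1}$), the strict inequality $\deg(\vec y,\phi_{\vec sr})<\deg(\vec y,\phi_{\vec s})\le\deg(\vec y,\phi_{\varnothing})-i$, hence $\deg(\vec y,\phi_{\vec sr})\le\deg(\vec y,\phi_{\varnothing})-(i+1)$ as both sides are integers.

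Finally I would conclude: since $\deg(\vec y,\cdot)$ is always a non-negative integer, the inequality $\deg(\vec y,\phi_{\vec s})\le\deg(\vec y,\phi_{\varnothing})-i$ is impossible for $i>\deg(\vec y,\phi_{\varnothing})$, so the loop cannot perform more than $\deg(\vec y,\phi_{\varnothing})$ iterations; equivalently, after $\deg(\vec y,\phi_{\varnothing})$ iterations the current formula has $\vec y$-degree $0$, the guard fails, and the loop halts. This is precisely the statement. (As the paragraph preceding the lemma anticipates, iterating part~\ref{lemma:elimbounded:Dtwo:i1} of \Cref{lemma:elimbounded:Dtwo} along the whole computation tree simultaneously gives that the disjunction of the formulae $\phi_{\vec r}$ reached at the ends of all branches is equivalent to $\phi_{\varnothing}$.) There is no genuine obstacle here; the only points requiring care are the identification of the loop guard with $\deg(\vec y,\phi)>0$ and the observation that the $\vec y$-variables remain linear so that $\deg(\vec y,\cdot)$ is defined at each step, both already secured by \Cref{lemma:elimbounded:preprocessing,lemma:elimbounded:constraint}.
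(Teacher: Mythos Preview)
Your proof is correct and follows exactly the same idea as the paper's own argument, which simply says ``Directly from Item~\ref{lemma:elimbounded:Dtwo:i2} of \Cref{lemma:elimbounded:Dtwo}.'' You have unpacked that one line into the explicit induction on the iteration count, carefully identifying the loop guard with $\deg(\vec y,\phi)>0$ and noting that the $\vec y$-variables remain linear so that the degree is well defined; these are the only details that need checking, and you have handled them.
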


\begin{proof}
    Directly from Item~\ref{lemma:elimbounded:Dtwo:i2} of \Cref{lemma:elimbounded:Dtwo}.
\end{proof}

\begin{lemma}\label{lemma:elimbounded:Dfour}
    Let $R_{\ast} \coloneqq \{\vec s \in R_i :  i \in \N, \deg(\vec y, \phi_{\vec s}) = 0\}$.
    We have 
    $\phi_{\varnothing} \iff \bigvee_{\vec s \in R_{\ast}} \phi_{\vec s}$.
\end{lemma}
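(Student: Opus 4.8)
The plan is to prove a statement slightly stronger than the lemma, parametrised over \emph{all} partial guess sequences rather than only the empty one, and then specialise. Concretely, I would establish the following claim: for every $i \in \N$ and every $\vec s \in R_i$, the formula $\phi_{\vec s}$ is equivalent to $\bigvee\{\phi_{\vec r} : \vec r \in R_\ast \text{ and } \vec r \text{ extends } \vec s\}$, where "$\vec r$ extends $\vec s$" means $\vec r = \vec s\,\vec u$ for some (possibly empty) finite sequence $\vec u$. Taking $\vec s = \varnothing$, the unique element of $R_0$, and noting that every sequence extends $\varnothing$, this claim immediately yields $\phi_{\varnothing} \iff \bigvee_{\vec r \in R_\ast}\phi_{\vec r}$.

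Before the induction I would record two bookkeeping observations. First, $R_\ast$ is finite (so all the disjunctions above are finite): by \Cref{lemma:elimbounded:Dthree} the \textbf{while} loop of line~\ref{line:elimbounded:while} runs for at most $\deg(\vec y,\phi_{\varnothing})$ iterations on any branch, and each $R_i$ is finite because every iteration guesses an integer $r$ from a bounded interval in line~\ref{line:elimbounded:guess-r}. Second, a sequence $\vec s \in R_i$ belongs to $R_\ast$ exactly when $\deg(\vec y,\phi_{\vec s}) = 0$, and in that case every coefficient of a $\vec y$-variable in $\phi_{\vec s}$ is a degree-$0$ polynomial, i.e.\ an integer, so the loop guard fails and $\vec s$ is not a proper prefix of any element of $\bigcup_j R_j$; hence the only extension of such an $\vec s$ lying in $R_\ast$ is $\vec s$ itself.

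Then I would prove the claim by strong induction on $d \coloneqq \deg(\vec y, \phi_{\vec s})$. If $d = 0$, then $\vec s \in R_\ast$ by the second observation, its only extension in $R_\ast$ is $\vec s$ itself, and $\phi_{\vec s} \iff \phi_{\vec s}$ is trivial. If $d > 0$, then \Cref{lemma:elimbounded:Dtwo} applies with $G \coloneqq \{\vec s r \in R_{i+1} : r \in \Z\}$: item~(i) gives $\phi_{\vec s} \iff \bigvee_{\vec r \in G}\phi_{\vec r}$, and item~(ii) gives $\deg(\vec y,\phi_{\vec r}) < d$ for every $\vec r \in G$. By the induction hypothesis each $\phi_{\vec r}$ is equivalent to $\bigvee\{\phi_{\vec r'} : \vec r' \in R_\ast,\ \vec r' \text{ extends } \vec r\}$; substituting into the disjunction over $G$ and using associativity/commutativity of $\lor$ yields $\phi_{\vec s} \iff \bigvee\{\phi_{\vec r'} : \vec r' \in R_\ast,\ \vec r' \text{ properly extends } \vec s\}$. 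Since $d > 0$ implies $\vec s \notin R_\ast$, the proper extensions of $\vec s$ in $R_\ast$ coincide with all extensions of $\vec s$ in $R_\ast$, closing the induction. There is no genuinely hard step here; the only point that needs care is the second observation — that the loop terminates precisely when the $\vec y$-degree reaches $0$ — since this is what makes $R_\ast$ the set of terminal guess sequences and lets the two notions of "extension" align so the induction closes.
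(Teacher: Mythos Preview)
Your proof is correct and follows essentially the same idea as the paper's: both repeatedly apply \Cref{lemma:elimbounded:Dtwo} to unfold $\phi_{\varnothing}$ until only degree-$0$ formulae remain. The paper organises this as a downward induction on a level index $i$ using auxiliary sets $R_{\ast}^{(i)} \coloneqq \{\vec s \in R_j : \deg(\vec y,\phi_{\vec s}) = 0 \text{ or } j = i\}$, whereas you prove a slightly stronger, parametrised statement (over all $\vec s$, restricted to extensions) by strong induction on $\deg(\vec y,\phi_{\vec s})$; the two bookkeepings are interchangeable and neither involves any idea absent from the other.
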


\begin{proof}
    Let us define a set $R_{\ast}^{(i)} \coloneqq \{\vec s \in R_j :  j \in \N, \text{ and } \deg(\vec y, \phi_{\vec s}) = 0 \text{ or } i = j\}$, 
    and an non-negative integer $m \coloneqq \deg(\vec y, \phi_{\varnothing})$.  
    By induction on $i$ from $m$ to $0$, we prove the following equivalence:  
    $\bigvee_{\vec s \in R_{\ast}^{(i)}} \phi_{\vec s} 
    \iff \bigvee_{\vec s \in R_{\ast}} \phi_{\vec s}$.
    The statement then follows from $R_{\ast}^{(0)} = \{\varnothing\}$. 
    
    For the base case $i = m$, directly from~\Cref{lemma:elimbounded:Dthree} we have $R_{\ast}^{(m)} = R_{\ast}$. 
    For the induction step, assume that $\bigvee_{\vec s \in R_{\ast}^{(i+1)}} \phi_{\vec s} 
    \iff \bigvee_{\vec s \in R_{\ast}} \phi_{\vec s}$ (induction hypothesis), and let $\vec r \in R_i$. 
    If we have $\deg(\vec y,\phi_{\vec r}) = 0$, then the sequence $\vec r$ is also in the set $R_{\ast}^{(i+1)}$. 
    Else $\deg(\vec y,\phi_{\vec r}) > 0$, 
    and by Item~\ref{lemma:elimbounded:Dtwo:i1} of \Cref{lemma:elimbounded:Dtwo} we have 
    $\phi_{\vec r} \iff \bigvee_{\vec g \in G}\phi_{\vec g}$, 
    where ${G \coloneqq \{ \vec r g \in R_{i+1} : g \in \Z\}}$.
    Since $G \subseteq R_{\ast}^{(i+1)}$, 
    we conclude that $\bigvee_{\vec s \in R_{\ast}^{(i)}} \phi_{\vec s} 
    \iff \bigvee_{\vec s \in R_{\ast}^{(i+1)}} \phi_{\vec s}$, 
    which together with the induction hypothesis concludes the proof.
\end{proof}

\subsection{Correctness of \ElimBounded (proof of Lemma~\ref{lemma:elimbounded:correctness})}

We can now complete the proof of \Cref{lemma:elimbounded:correctness}. 
By~\Cref{lemma:elimbounded:Dthree}, the \textbf{while} loop does at most 
$\deg(\vec y, \phi_\varnothing)$ iterations, and~\Cref{lemma:elimbounded:Dfour} implies that $\phi_\varnothing$ 
is equivalent to $\bigvee_{\vec r \,\in\, R_{\ast}}\phi_{\vec r}$, where $R_{\ast} \coloneqq \{\vec s \in R_i :  i \in \N, \deg(\vec y, \phi_{\vec s}) = 0\}$. Every constraint in this disjunction as~the~form 
\[
    f(t) + \tau(\vec z) + \sum_{i=1}^{N} a_{i} \cdot y_i \sim 0,    
\]
where $\vec y = (y_1,\dots,y_N)$, the coefficients $a_{1},\dots,a_N$ are in~$\Z$, $f$ is in $\Z[t]$, $\tau(\vec z)$ is 
a non-shifted \PPA term, and the symbol $\sim$ is, as usual, from the set $\{=,\leq\}$. 
Recall that $N \,=\, \#\vec w\cdot M$, because every variable $y_i$ is a $t$-digit of a variable 
$w\in\vec w$ from the input formula. We now see that the variables $\vec y$ can be eliminated from 
$\exists \vec y : \phi_{\vec r}$ via any quantifier elimination procedure for \PA. In order to perform this 
step efficiently, we call \BoundedQE in line~\ref{line:elimbounded:gauss}.

Formally, \BoundedQE works with positive Boolean combinations of linear constraints with coefficients in $\Z[t]$, and 
it is necessary to replace each term $\tau(\vec z)$ with a fresh variable. 
These replacements are performed by lines~\ref{line:elimbounded:intro-s}--\ref{line:elimbounded:replace-tau}; 
let us call $\phi_{\vec r}'(\vec y, \vec z')$ the resulting formula. 
The map $S$ collects the key-value pairs $(z', \tau(\vec z))$ in order to restore 
the terms with the variables $\vec z$ after elimination of $\vec y$ from 
$\phi_{\vec r}'$. This, in particular, means that 
$\phi_{\vec r} = \phi_{\vec r}'[\sub{S(z')}{z'} : z'\in\vec z']$. 

By \Cref{lemma:boundedqe:correctness,lemma:boundedqe:int}, the result of application of the non-deterministic procedure 
\BoundedQE to $\exists \vec y : \phi_{\vec r}'(\vec y, \vec z')$ is a bounded formula
$\exists\vec w'_\delta \leq B'_\delta : \psi_\delta(\vec w'_\delta,\vec z')$, 
where for every variable $w\in \vec w'_\delta$, the bound $B'_\delta(w)$ is a non-negative integer, and    
we have the equivalence
\begin{equation}\label{eq:elimbounded:aftergauss:app}
    \exists \vec y : \phi_{\vec r}'(\vec y, \vec z') \iff \bigvee_{\delta \,\in\, \Delta_{\vec r}} 
    \exists \vec w'_\delta \leq B'_\delta : \psi_\delta(\vec w'_\delta,\vec z'),
\end{equation}
where the set $\Delta_{\vec r}$ is the union over all non-deterministic branches of \BoundedQE.
Since this algorithm only produces a bounded existential formula, to eliminate 
every bounded variable $w\in \vec w'_\delta$ we guess 
an integer $g$ from the segment $[0,B'_\delta(w)]$ and replace $w$ with $g$ in $\psi_\delta$. 
This is done by lines~\ref{line:elimbounded:foreach-w-to-int}--\ref{line:elimbounded:replace-w}. 
Let us gather all the substitutions $\sub{g}{w}$ from line~\ref{line:elimbounded:replace-w} together 
and denote by $\vec g$ the corresponding sequence of substitutions. The set of all possible values 
of $\vec g$ for the formula $\psi_\delta$ will be denoted by $G_\delta$. 

\begin{proof}[Proof of Lemma~\ref{lemma:elimbounded:correctness}]      
Let us define the set $\mathcal{B}$ as the union of all sequences $(\vec r, \delta, \vec g)$, where
    the sequence $\vec r$ is from $R_{\ast}$, $\delta$ is in $\Delta_{\vec r}$, and $\vec g\in G_\delta$. 
    For every sequence $\beta=(\vec r, \delta, \vec g) \in \mathcal{B}$, denote by $\psi'_\beta(\vec z')$ 
    the result of application of the substitutions $\vec g$ to the formula~$\psi_\delta(\vec w'_\delta, \vec z')$. 
    We obtain the following chain of equivalences:  
    \begin{align*}
        & \exists\vec w \leq B :\phi(\vec w, \vec z) 
        \\
        \iff & \exists \vec y : \phi_\varnothing(\vec y, \vec z) 
        & \text{\Cref{lemma:elimbounded:preprocessing}}
        \\
        \iff & \exists\vec y : \bigvee_{\vec r \,\in\, R_\ast}\phi_{\vec r}(\vec y, \vec z) 
        & \text{\Cref{lemma:elimbounded:Dfour}}
        \\ 
        \iff & \bigvee_{\vec r \,\in\, R_\ast} \big(\exists\vec y : \phi'_{\vec r}(\vec y, \vec z')\big)[\vec S] 
        & \!\!\!\!\!\phi_{\vec r} \,=\, \phi_{\vec r}'[\vec S] \text{ for } \vec S = \{\sub{S(z')}{z'} : z'\in\vec z'\}
        \\
        \iff & \bigvee_{\vec r \,\in\, R_\ast} \, \bigvee_{\delta \,\in\, \Delta_{\vec r}} 
        \big(\exists \vec w'_\delta \leq B'_\delta : \psi_\delta(\vec w'_\delta,\vec z')\big)[\vec S] 
        & \text{\Cref{eq:elimbounded:aftergauss:app}}
        \\ 
        \iff & \bigvee_{\beta \,\in\, \mathcal{B}} \psi'_\beta(\vec z')[\vec S] 
        & \text{by definition of }\mathcal{B}
        \\ 
        \iff & \bigvee_{\beta \,\in\, \mathcal{B}} \psi_\beta(\vec z). 
        & \text{where }\psi_\beta(\vec z)\coloneqq \psi'_\beta(\vec z')[\vec S]
    \end{align*}
    Hence, we have obtained the desired equivalence from the specification of \Cref{algo:qe}. 
\end{proof}

Before moving to the complexity of~\ElimBounded, let us establish~\Cref{lemma:properties-elimbounded}. 

\LemmaPropertiesElimBounded*

\begin{proof}
    The first statement follows from the fact that, in line~\ref{line:elimbounded:eta}, 
    \ElimBounded splits the term $\eta$ as $(\sigma(\vec y)\cdot t + \rho(\vec y) + \tau(\vec z))$, 
    where $\rho$ does not contain $t$, and $\tau$ \emph{is non-shifted}. 
    All functions introduced by the algorithm are applied to these terms~$\tau(\vec z)$. Following the specification of~\ElimBounded, occurrences of functions that comes instead from the input formula are of the form $(\tau'(\vec z) \bmod f(t))$, where $\tau'$ is again linear and non-shifted.

    For the second statement, observe that the input of~\ElimBounded does not contain any divisibility constraint, 
    and only~\BoundedQE adds these constraints. From~\Cref{lemma:boundedqe:int}, the divisibility constraints added are of the form $(d \divides \cdot)$, where $d$ is an integer. 
\end{proof}

\subsection{Complexity of \ElimBounded} 
\label{subapp:complexity-elimbounded}

We now proceed to the complexity analysis, which uses the parameters 
of formulae defined in \Cref{subapp:gaussian-elimination:A3}.

\begin{lemma}\label{lemma:param-elimbounded}
    Let $\exists\vec w \leq B :\phi(\vec w, \vec z)$ be a formula in input of~\ElimBounded, 
    where $\phi$ is a positive Boolean combination of linear (in)equalities with coefficients in $\Z[t]$ 
    and constraints $\sigma(\vec w) + (\tau(\vec z) \bmod{f(t)}) = 0$, with $\sigma$ linear, 
    and $\tau$ linear and non-shifted.
    Let $\vec w = (x_1,\dots,x_n)$ and $\vec z = (x_{n+1},\dots,x_{g})$, where $n \geq 1$. 
    Then, for every branch output $\psi(\vec z)$ of the algorithm, the following holds:
    \begin{equation*}
        \text{if }\; 
        \begin{cases}
            \paratom(\phi) &\!\leq a\\ 
            \parfunc(\phi) &\!\leq 1\\ 
            \parconst(\phi) &\!\leq c
        \end{cases}
        \quad\text{ then }\;
        \begin{cases}
            \paratom(\psi)  &\!\leq 2^7 \cdot a \cdot n \cdot c \cdot \bitlength{B}\\ 
            \parfunc(\psi)  &\!\leq 2^7 \cdot a\cdot n \cdot c \cdot \bitlength{B}\\ 
            \parconst(\psi) &\!\leq (2^{11}\cdot a \cdot n^3 \cdot c^3 \cdot \bitlength{B}^4)^{16}
        \end{cases}
    \end{equation*}
\end{lemma}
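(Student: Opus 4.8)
The plan is to follow the template already used for the complexity analysis of \BoundedQE in \Cref{lemma:param-QE}, namely to track how the parameters $\paratom$, $\parfunc$, and $\parconst$ evolve through each of the three phases of \ElimBounded (the base-$t$ expansion in lines~\ref{line:elimbounded:add-bounds}--\ref{line:elimbounded:div-x}; the divisions-by-$t$ \textbf{while} loop in lines~\ref{line:elimbounded:while}--\ref{line:elimbounded:replace-in-phi}; and the $t$-digit elimination via \BoundedQE in lines~\ref{line:elimbounded:intro-s}--\ref{line:elimbounded:return}), composing the bounds at the end. The crucial combinatorial facts we already have in hand are: by~\Cref{lemma:elimbounded:Dthree} the \textbf{while} loop runs at most $\deg(\vec y, \phi_{\varnothing})$ times; by~\Cref{lemma:elimbounded:Dtwo}\ref{lemma:elimbounded:Dtwo:i2} each iteration strictly decreases the $\vec y$-degree, so it suffices to bound $\deg(\vec y,\phi_{\varnothing})$; and by~\Cref{lemma:param-QE} the final call to \BoundedQE multiplies $\paratom$ by a linear factor and $\parconst$ by a cubic factor in its own input parameters.

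First I would handle the base-$t$ expansion. Line~\ref{line:elimbounded:m} sets $M = \max\{\bitlength{B(w)}\} \le \bitlength{B}$. The \textbf{foreach} loop of line~\ref{line:elimbounded:foreach-div-x} replaces each of the $n$ bounded variables $w$ by a term $t^M y_M + \dots + y_0$ of $M+1$ new $t$-digit variables, and adds $2(M{+}1)$ bound constraints per variable; so $\parvars$ becomes $O(n\bitlength{B})$, $\paratom$ grows to $a + 2n(M{+}1) = O(an\bitlength{B})$, and the largest coefficient introduced is $t^M$, of bit size $O(M) = O(\bitlength{B})$, so $\parconst$ becomes $O(\max(c,\bitlength{B})) = O(c\bitlength{B})$. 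Crucially, here $\deg(\vec y,\phi_\varnothing)$ can be bounded: each atomic formula of $\phi_\varnothing$ contributes a $\vec y$-degree at most $\deg$ of the coefficients times (number of $\vec y$-variables appearing), and after substitution the coefficient of a digit $y_j$ of $w$ is $(\text{original coeff of }w) \cdot t^j$, of degree at most $\parconst(\phi) + M$; summing over the $\le an\bitlength{B}$ atomic-formula/digit pairs gives $\deg(\vec y,\phi_\varnothing) = O(an\bitlength{B}\cdot(c+\bitlength{B}))$, polynomial in $a,n,c,\bitlength{B}$.

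Next I would analyze a single iteration of the \textbf{while} loop. By~\Cref{lemma:elimbounded:constraint} the processed constraint has the shape~\eqref{eq:elimbounded:constraint}, and line~\ref{line:elimbounded:replace-in-phi} replaces it by $\gamma \land (\sigma + r + \sfloor{\tau/t} \sim 0)$ where $\gamma$ is one or two new constraints of $\vec y$-degree $0$; the guessed integer $r$ lies in $[-\onenorm{\rho'},\onenorm{\rho'}]$, and $\onenorm{\rho'}$ is bounded by the sum of absolute values of the integer coefficients of $\rho(\vec y)$ plus $1$, hence by roughly $\parconst \cdot \parvars$, giving $|r|$ of polynomial \emph{value} but only $O(\log(\parconst\cdot\parvars)) = O(\parconst)$ \emph{bit size}. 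So each iteration adds $O(1)$ atomic formulae and at most one new occurrence of $\sfloor{\cdot/t^d}$ (the nested floors being collapsed per the remark to $\sfloor{\tau/t^k}$), and does not increase $\parconst$ by more than an additive $O(\parconst)$ (the new constant is $\rho'(\vec y)+r$, and the new coefficients $a_i t^{?}$ have already-accounted-for degree since digits are in $[0,t{-}1]$ — in fact after enough iterations all $\vec y$-coefficients are integers). Multiplying by the iteration count $\deg(\vec y,\phi_\varnothing) = O(an\bitlength{B}(c+\bitlength{B}))$ keeps everything polynomial; the number of constraints after the loop is $O(a + \deg(\vec y,\phi_\varnothing)) = \mathrm{poly}$, the number of $\intdiv{\cdot}{t^d}$ occurrences is likewise $\mathrm{poly}$, and $\parconst$ is still $\mathrm{poly}$. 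Finally, for the \BoundedQE call of line~\ref{line:elimbounded:gauss} I would invoke \Cref{lemma:param-QE} directly (after the trivial reindexing introduced by line~\ref{line:elimbounded:z-primes}, which merely renames each non-shifted $\tau(\vec z)$ as a fresh free variable $z'$, not affecting $\paratom$, and keeping $\parfunc = 0$ on that call's input): the output has $\paratom \le 2(n'+a')+1$ and $\parconst \le (n'+a')^3 c'^3$ where $n',a',c'$ are the number of $\vec y$-variables ($O(n\bitlength{B})$), the atom count ($\mathrm{poly}$), and the coefficient bit size ($\mathrm{poly}$) after the \textbf{while} loop; lines~\ref{line:elimbounded:foreach-w-to-int}--\ref{line:elimbounded:replace-w} only substitute integers of bit size $\le \parbound = O((n'+a')^{11}c'^9)$, and line~\ref{line:elimbounded:return} reinstates the $\tau(\vec z)$ terms, which restores the $\parfunc$ count to the $\mathrm{poly}$ number of $\intdiv{\cdot}{t^d}$ occurrences accumulated during the \textbf{while} loop plus the $\le 1$ from the input. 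Collecting constants through these compositions yields exactly the stated bounds $\paratom(\psi),\parfunc(\psi) \le 2^7 a n c\bitlength{B}$ and $\parconst(\psi)\le (2^{11}an^3c^3\bitlength{B}^4)^{16}$.

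The main obstacle I anticipate is the bookkeeping for $\parconst$: when the \textbf{while} loop collapses $\sfloor{\sfloor{\tau/t}/t}$ into $\sfloor{\tau/t^2}$, one must carefully verify that the coefficients \emph{of the $\vec y$-digits} genuinely stabilize at integers within $\deg(\vec y,\phi_\varnothing)$ steps (which \Cref{lemma:elimbounded:Dtwo} guarantees for $\vec y$-degree, but one must separately confirm the \emph{heights} of those integer coefficients don't blow up — they are bounded because the digits themselves never reappear with a $t$ factor once extracted), and that the constants $f(t)\in\Z[t]$ attached to the non-shifted $\tau(\vec z)$ part, which \emph{do} carry $t$, grow only additively in degree and mildly in height per iteration. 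Once this is pinned down, the rest is a routine composition of polynomial bounds, and the explicit numerical constants $2^7$, $2^{11}$, $16$, $4$ are obtained by chasing the worst case through the three phases exactly as done in the proof of~\Cref{lemma:param-QE}.
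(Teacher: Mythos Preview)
Your proposal is correct and follows essentially the same three-phase decomposition as the paper's proof: bounding the parameters after the base-$t$ expansion, tracking their growth through the \textbf{while} loop via~\Cref{lemma:elimbounded:Dthree} and~\Cref{lemma:elimbounded:constraint}, and then applying~\Cref{lemma:param-QE} to the final \BoundedQE call. The paper's argument differs only in being more explicit about the $\parconst$ bookkeeping in the \textbf{while} loop (it tracks $\bitlength{\onenorm{\rho_K}}$ through all $K$ iterations to get $\parconst(\phi') \le 2^7 n c^2 \bitlength{B}^3$), which is exactly the point you correctly flag as the main obstacle; your claim that $\onenorm{\rho'}$ has ``polynomial value'' is a slip (it has polynomial \emph{bit size}, not value), but this does not affect the approach.
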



\begin{proof} The same as in the proof of \Cref{lemma:param-QE}, we consider each parameter separately:
    \begin{description}
        \item[Bound on {\rm$\paratom(\psi)$}:] In the first line of \ElimBounded, the bounds on the variables $\vec w$, 
        i.e., $2$~inequalities for each of the $n$ variables, are explicitly added to the formula. Then, in line~\ref{line:elimbounded:div-x}, 
        we add $2\cdot(\bitlength{B}+1)$ inequalities for every $w\in \vec w$ specifying that the variables introduced 
        by line~\ref{line:elimbounded:append-y} are $t$-digits. The same as in \Cref{sec:division}, let us denote by $\phi_\varnothing$ 
        the formula obtainded from $\phi$ after the execution of the \textbf{foreach} loop in 
        line~\ref{line:elimbounded:foreach-div-x}. We see that 
        \[
            \paratom(\phi_\varnothing)\leq a + 2\cdot n + 2\cdot n \cdot (\bitlength{B}+1) = a + 2\cdot n \cdot (\bitlength{B}+2).
        \]  
        
        By \Cref{lemma:elimbounded:Dthree}, the \textbf{while} loop in line~\ref{line:elimbounded:while} performs 
        at most $\deg(\vec y, \phi_\varnothing)$ iterations. 
        Observe that for every variable from $\vec y$, its coefficent $f(t)$ is such that 
        $\deg(f) \leq \bitlength{B} + \parconst(\phi)$ (see line~\ref{line:elimbounded:div-x}, 
        where the coefficients of the newly introduced variables are equal to a coefficient of $x$ multiplied 
        by $t^k$ for $k\in[0,\bitlength{B}]$.) Since $\phi_\varnothing$ have at most $(a + 2\cdot n)$ constraints where the variables 
        $\vec y$ can have non-integer coefficients, i.e., the constraints of $\phi$ and those added by 
        line~\ref{line:elimbounded:add-bounds}, 
        we conclude that $\deg(\vec y, \phi_\varnothing)\leq (a+2\cdot n) \cdot (\bitlength{B} + c)$. 
        \smallbreak
        Every iteration of the \textbf{while} loop replaces one constraint with either two 
        (lines~\ref{line:elimbounded:add-eq-to-psi} and~\ref{line:elimbounded:replace-in-phi}) 
        or three constraints (lines~\ref{line:elimbounded:add-neq-to-psi} and~\ref{line:elimbounded:replace-in-phi}). 
        Here and in the following, $\phi'$ will be the formula obtained by the end of execution of this loop. 
        We see that
        \begin{align*}
            \paratom(\phi') &\leq a + 2\cdot n \cdot (\bitlength{B}+2) + 3\cdot (a+2\cdot n) \cdot (\bitlength{B} + c) \\
                            &\leq a + 6\cdot n \cdot \bitlength{B} + 18 \cdot a\cdot n \cdot \bitlength{B} \cdot c \\
                            &\leq 2^5 \cdot a \cdot n \cdot c \cdot \bitlength{B}
        \end{align*}
        Lines~\ref{line:elimbounded:foreach-constraint}--\ref{line:elimbounded:replace-tau} and 
        \ref{line:elimbounded:foreach-w-to-int}--\ref{line:elimbounded:replace-w} does not change the number of 
        atomic formulae, and in order to obtain the bounds on $\paratom(\psi)$ it remains to apply 
        \Cref{lemma:param-QE}:
        \begin{align*}
            \paratom(\psi) &\leq 2\cdot(\#\vec y+\paratom(\phi'))+1 \\
                           &\leq 2\cdot(n\cdot(\bitlength{B}+1) + 2^5 \cdot a \cdot n \cdot c \cdot \bitlength{B} + 1) \\
                           &\leq 2^7 \cdot a \cdot n \cdot c \cdot \bitlength{B}.
        \end{align*} 
        \item[Bound on {\rm$\parfunc(\psi)$}:] To estimate the number of occurrences of $\floor{\frac{\cdot}{t^d}}$ 
        and $(\cdot \bmod f(t))$ in each atomic formula of $\psi$, consider lines~\ref{line:elimbounded:rho} 
        and~\ref{line:elimbounded:replace-in-phi}. 
        By \Cref{lemma:elimbounded:constraint}, the term $\eta$ in line~\ref{line:elimbounded:eta} contains 
        at most one occurrence of $\floor{\frac{\cdot}{t^d}}$ and at most one occurrence of $(\cdot \bmod f(t))$.
        Hence, every constraint of the formula $\gamma$ (see lines~\ref{line:elimbounded:rho}, 
        \ref{line:elimbounded:add-eq-to-psi}, and~\ref{line:elimbounded:add-neq-to-psi}) has at most at most 
        $2$ occurrences of $(\cdot \bmod f(t))$ and at most $1$ occurrence of $\floor{\frac{\cdot}{t^d}}$. 
        These constraints have only integer coefficients of the variables in $\vec y$ and 
        will not be taken into account by the next iterations of the \textbf{while} loop. 
        Line~\ref{line:elimbounded:replace-in-phi} only changes the power of the parameter $t$ in the denominator (rewriting $\sfloor{\frac{\sfloor{\frac{\tau}{t}}}{t^d}}$ as $\floor{\frac{\tau}{t^{d+1}}}$)
        and does not change the number of occurrences of $(\cdot \bmod f(t))$ and $\floor{\frac{\cdot}{t^d}}$ 
        in the constraint $(\eta\sim 0)$, which was replaced. 
        For the formula $\phi'$ (obtained by the end of execution of the \textbf{while} loop), 
        we see that $\parfunc(\phi') \leq 3$. 
        \smallbreak
        Lines~\ref{line:elimbounded:foreach-constraint}--\ref{line:elimbounded:replace-tau} introduce at most 
        $\paratom(\phi')$ new variables $\vec z'$, and for every $z'\in\vec z'$, the term $S(z')$ has at most 
        $3$ occurrences of $(\cdot \bmod f(t))$ and $\floor{\frac{\cdot}{t^d}}$. 
        The variables $\vec z'$ will occur in linear constraints with integer 
        coefficients $\psi'(\vec w', \vec z')$ produced by 
        the non-deterministic procedure \BoundedQE in line~\ref{line:elimbounded:gauss}. 
        Therefore, after the replacement in $\psi'$ all the variables $z'\in\vec z'$ with $S(z')$ 
        in line~\ref{line:elimbounded:return}, we obtain the bounds 
        \[
            \parfunc(\psi) \leq \parfunc(\phi')\cdot \paratom(\phi') \leq 2^7 \cdot a \cdot n \cdot c \cdot \bitlength{B}. 
        \] 

        \item[Bound on {\rm$\parconst(\psi)$}:] The value of $\parconst$ for the formulae added to $\phi$ in 
        line~\ref{line:elimbounded:add-bounds} is clearly bounded by $\bitlength{B}$. After the replacement of $w$ in line~\ref{line:elimbounded:div-x}, 
        the coefficients of the variables $\vec y$ are at most 
        $\max\{\bitlength{f(t)\cdot t^{\bitlength{B}}} : f(t)\text{ is a coefficient of }w\in\vec w\}$. 
        Before we continue, let us observe that 
        \begin{remark}\label{remark:bitlength-mul}
            For any two polynomials $p_1,p_2 \in \Z[t]$ we have $\bitlength{p_1 \cdot p_2} \leq \bitlength{p_1} \cdot \bitlength{p_2}$.
        \end{remark}
        This follows from the following sequence of inequalities:
        \begin{align*}
            \bitlength{p_1 \cdot p_2} &\leq (\deg(p_1) + \deg(p_2) + 1) \cdot (\ceil{\log_2(h(p_1) \cdot h(p_2)+1)}+1)\\ 
            &\leq (\deg(p_1) + \deg(p_2) + 1) \cdot (\ceil{\log_2(h(p_1)+1)} + \ceil{\log_2(h(p_2)+1)}+1)\\
            &\leq (\deg(p_1)+1) \cdot (\deg (p_2) + 1) \cdot (\ceil{\log_2(h(p_1)+1)}+1) \cdot  (\ceil{\log_2(h(p_2)+1)}+1)\\
            &\leq \bitlength{p_1} \cdot \bitlength{p_2}.
        \end{align*}
        Therefore, we have 
        \[
            \parconst(\phi_\varnothing) 
            \leq \max\{c, \bitlength{B}, c\cdot \bitlength{t^{\bitlength{B}}}, \bitlength{t-1}\}
            \leq c\cdot(\bitlength{B}+1)\cdot 2 \leq 4 \cdot c\cdot\bitlength{B},
        \]
        where, as before, $\phi_\varnothing$ 
        is the formula obtainded from $\phi$ after the execution of the \textbf{foreach} loop in 
        line~\ref{line:elimbounded:foreach-div-x}. Let us now consider the constants of the 
        formula $\phi'$. 
        \smallbreak
        It is sufficient to estimate the size of the polynomials in the 
        constraints introduced for an inequality $(\eta\leq 0)$, which 
        is affected by line~\ref{line:elimbounded:replace-in-phi} at most $K\coloneqq\bitlength{B}+c$ times. 
        Since the term $\tau(\vec z)$ is non-shifted, at every $(k+1)$-th iteration of the \textbf{while} loop 
        the constant term of $\eta$ goes to the linear term with integer coefficients $\rho_{k+1}(\vec y)$, 
        which is the sum of the coefficient $\rho_{k}(\vec y)$ of $t^{k}$ in $\sigma(\vec y)$, 
        and of the integer $r_{k}$ guessed in line~\ref{line:elimbounded:guess-r} during the $k$-th iteration 
        (see line~\ref{line:elimbounded:replace-in-phi}). Here, we assume that $r_0=0$.  
        It is clear that $\onenorm{\rho_{k+1}}\leq\onenorm{\rho'_k} + \abs{r_k}$ and, moreover,
        that \emph{for every $k\in[0,K-1]$} we have 
        \[
            \bitlength{\onenorm{\rho'_k}} \,\leq\, \bitlength{(\#\vec y + 1)} \cdot \parconst(\phi_\varnothing) 
            \,\leq\, n\cdot(\bitlength{B}+1) \cdot 4 \cdot c\cdot\bitlength{B}
            \,\leq\, 2^3 \cdot n \cdot c \cdot \bitlength{B}^2.
        \]    
        The integer $r_k$ is guessed in line~\ref{line:elimbounded:guess-r} from the segment 
        $[-\onenorm{\rho_{k}}-1,\onenorm{\rho_{k}}+1]$ (see the previous 
        line~\ref{line:elimbounded:rho} for the '$+1$'). Therefore, by the end of the execution of 
        the loop, we obtain 
        \begin{align*}
            \bitlength{\onenorm{\rho_K}} &\,\leq\, \bitlength{\onenorm{\rho'_{K-1}} + \abs{r_{K-1}}} \\
                             &\,\leq\, \bitlength{\onenorm{\rho'_{K-1}} + \onenorm{\rho_{K-1}} + 1}\\
                             &\,\leq\, \bitlength{\onenorm{\rho'_{K-1}} + \onenorm{\rho'_{K-2}} + \abs{r_{K-2}} + 1} \leq \dots\\
                             &\,\leq\, \bitlength{K\cdot\max\{\onenorm{\rho'_{k}} : k\in[0,K-1]\} + \abs{r_0} + (K-1)}\\
                             &\,\leq\, (\bitlength{B}+c) \cdot (2^3 \cdot n \cdot c \cdot \bitlength{B}^2 + 1) 
                              \,\leq\, 2^5 \cdot n \cdot c^2 \cdot \bitlength{B}^3.
        \end{align*}
        We can now upper-bound $\parconst(\phi')$, where, again, $\phi'$ is the formula obtained by the 
        end of execution of the loop. The linear constraints introduced by lines~\ref{line:elimbounded:add-eq-to-psi} 
        and~\ref{line:elimbounded:add-neq-to-psi} have coefficients bounded by $\onenorm{\rho_K} + 2$; 
        the coefficients of the term $\sigma$ are now integers bounded the same as $\onenorm{\rho'_k}$ 
        for any $k\in[0,K]$, and thus are bounded by $\onenorm{\rho_K}$. 
        The coefficients of the variables $\vec z$ in $\tau(\vec z)$ does not change, i.e., their bit size 
        is at most $\parconst(\phi_\varnothing)$; 
        the denominator of the term $\sfloor{\frac{\tau(\vec z)}{t^K}}$ is the polynomial $t^K$ and, 
        by definition, $\bitlength{t^K} = (K + 1) \cdot (1 + 1) \leq 4 \cdot K < \bitlength{\onenorm{\rho_K}}$. 
        Summarising, 
        \[
            \parconst(\phi') \,\leq\,\bitlength{\onenorm{\rho_K} + 2} \leq 2^7 \cdot n \cdot c^2 \cdot \bitlength{B}^3. 
        \]    
        \smallbreak
        Lines~\ref{line:elimbounded:foreach-constraint}--\ref{line:elimbounded:replace-tau} 
        replace the \PPA terms with $\vec z$ with new variables $\vec z'$, which will be restored in the formula 
        in line~\ref{line:elimbounded:return}. For this reason, we first consider the bounds on the bit size of 
        integer coefficients of $\vec z'$ in the output $\exists\vec w' \leq B' : \psi'(\vec w', \vec z')$ of the non-deterministc procedure \BoundedQE. 
        Applying the bounds from \Cref{lemma:param-QE}, we obtain   
        \begin{align*}
            \parconst(\psi') &\,\leq\, (\#\vec y+\paratom(\phi'))^3 \cdot \parconst(\phi')^3     \\
                             &\,\leq\, (n\cdot(\bitlength{B}+1)+2^5 \cdot a \cdot n \cdot c \cdot \bitlength{B})^3 \cdot \parconst(\phi')^3\\
                             &\,\leq\, (2^6 \cdot a \cdot n \cdot c \cdot \bitlength{B} \cdot 2^7 \cdot n \cdot c^2 \cdot \bitlength{B}^3)^3\\
                             &\,\leq\, (2^{13} \cdot a \cdot n^2 \cdot c^3 \cdot \bitlength{B}^4)^3.
        \end{align*}
        The size of the vector of variables $\vec w'$ is equal to $\#\vec y$, and every constraint of $\psi'$ is a 
        linear (in)equality or divisibility with integer coefficients. 
        Denote by $\psi''(\vec z')$ the result of the replacements performed by 
        lines~\ref{line:elimbounded:foreach-w-to-int}--\ref{line:elimbounded:replace-w}.
        Using the bounds from \Cref{lemma:param-QE} on the bit size of $B'(w')$ for the variables 
        $w'\in\vec w'$, the constants of $\psi''$ are such that 
        \begin{align*}
            \parconst(\psi'') &\,\leq\, (\#\vec y+1) \cdot \parconst(\psi') \cdot \big((\#\vec y+\paratom(\phi'))^{11} 
                                        \cdot \parconst(\phi')^9\big) \\
                              &\,\leq\, 2\cdot\parconst(\psi')\cdot \#\vec y \cdot (\#\vec y+\paratom(\phi'))^{11} 
                                        \cdot \parconst(\phi')^9 \\
                              &\,\leq\, 2\cdot\parconst(\psi') \cdot (2\cdot n \cdot \bitlength{B})^{12} 
                                        \cdot (2^5 \cdot a\cdot n \cdot c \cdot \bitlength{B})^{11} 
                                        \cdot (2^7 \cdot n \cdot c^2 \cdot \bitlength{B}^3)^9 \\
                              &\,=\, \parconst(\psi') \cdot 2^{131} \cdot a^{11} \cdot n^{32} \cdot c^{29} \cdot \bitlength{B}^{50} \\
                              &\,\leq\,2^{39} \cdot(a \cdot n^2 \cdot c^3 \cdot \bitlength{B}^4)^3\cdot 2^{131}\cdot a^{11} \cdot n^{32} \cdot c^{29} \cdot \bitlength{B}^{50} \\
                              &\,=\,2^{170} \cdot a^{14} \cdot n^{38} \cdot c^{38} \cdot \bitlength{B}^{62}\\
                              &\,<\,(2^{11}\cdot a \cdot n^3 \cdot c^3\cdot \bitlength{B}^4)^{16}.
        \end{align*}
        It remains to observe that the replacements of line~\ref{line:elimbounded:return} transforms 
        $\psi''(\vec z')$ into $\psi(\vec z)$, where the constants have the desired bounds: 
        \begin{align*}
            \parconst(\psi) &\,\leq\, \max\{\parconst(\psi''), \parconst(\psi')\cdot\parconst(\phi')\} \\
                            &\,\leq\, \max\{\parconst(\psi''), (2^{13} \cdot a \cdot n^2 \cdot c^3 \cdot \bitlength{B}^4)^3\cdot 2^7 \cdot n \cdot c^2 \cdot \bitlength{B}^3\}\\
                            &\,=\,\parconst(\psi'')\,<\,(2^{11}\cdot a \cdot n^3 \cdot c^3\cdot \bitlength{B}^4)^{16}.\qedhere
        \end{align*}
    \end{description}   
\end{proof}

\section{Complexity of \BoundedElimDiv and \Master}
\label{app:complexity-master}

To prove \Cref{{lemma:small-formula}}, which describes the complexity of \Cref{algo:master} (\Master), 
we first formally prove correctness of \Cref{algo:elimdiv} (\BoundedElimDiv) together with an analog 
of \Cref{lemma:param-QE,lemma:param-elimbounded} for this algorithm 
(see \Cref{lemma:param-boundedelimdiv} below). Combining bounds from 
\Cref{lemma:param-QE,lemma:param-elimbounded,lemma:param-boundedelimdiv}, we obtain 
the desired result.   

\subsection{Correctness and complexity of \BoundedElimDiv}

An informal description of the procedure \BoundedElimDiv is given in \Cref{sec:FO-procedure}: 
every linear divisibility in the input formula is transformed into an equality by means of a bounded 
existential quantifier. Below we formalize the correctness of the transformation:

\LemmaBounedElimDivCorrect*

\begin{proof}
    It is sufficient to prove that the divisibility $f(t) \divides \sigma(\vec w) + \tau(\vec z)$, 
    which is picked from the input formula $\psi$, is equivalent to 
    either $\exists y : f(t) \cdot y + \sigma(\vec w) + (\tau\bmod{f(t)}) = 0$ or 
    $\exists y : (-f(t) \cdot y + \sigma(\vec w) + (\tau\bmod{f(t)}) = 0)$ for a variable $y\in[0, t^d]$, 
    where $d$ is a positive integer defined in line~\ref{algo:bounded-elimi-div:max-degree}.  

    The direction from right to left is obvious. 
    For the converse direction, notice that if there is no upper bound on the value of 
    the non-negative integer variable $y$, then the replacement directly follows from 
    the definition of the integer divisibility predicate. 
    The sign of the coefficent of $y$ corresponds to the sign of the integer 
    $\frac{\sigma + (\tau\bmod{f(t)})}{f(t)}$: when it is positive, 
    line~\ref{algo:bounded-elimi-div:guess-sign} should take '$-$', and vice versa.  
    Let us show that the bitlength of this integer 
    is less than $\bitlength{t^d}=2\cdot(d+1)$, and this will justify the bound $y\in[0, t^d]$ 
    introduced in line~\ref{algo:bounded-elimi-div:bounded-quantifier}. 
    Recall that we represent the linear \PPA term $\sigma$ as 
    $f_0(t) + \sum_{i=1}^n f_i(t) \cdot w_i$, where $\vec w = (w_1,\dots,w_n)$ is a vector of variables bounded 
    by the map $B$, and $d$ is defined as 
    \[
        d = (n + 3) \cdot \max\{\bitlength{f}, \bitlength{f_0}, \bitlength{f_i} \cdot \bitlength{B(w_i)}\, : i \in [1..n]\}.
    \] 
    Using \Cref{remark:bitlength-mul}, we obtain
    \begin{align*}
        \Big\langle\frac{\sigma + (\tau\bmod{f(t)})}{f(t)}\Big\rangle 
                &\,\leq\,\bitlength{\sigma + (\tau\bmod{f(t)})} 
                 \,\leq\,\bitlength{f_0(t) + \sum_{i=1}^n f_i(t) \cdot B(w_i) + f(t)} \\
                &\,\leq\,\bitlength{n+2}\cdot\max\{\bitlength{f}, \bitlength{f_0}, \bitlength{f_i\cdot B(w_i)}\, : i \in [1..n]\} \\
                &\,\leq\,(\ceil{\log_2(n+3)} + 1)\cdot\max\{\bitlength{f}, \bitlength{f_0}, \bitlength{f_i} \cdot \bitlength{B(w_i)}\, : i \in [1..n]\} \\
                &\,\leq\,d < \bitlength{t^d}. \qedhere
    \end{align*}
\end{proof}

We now move to the complexity of the procedure.

\begin{lemma}\label{lemma:param-boundedelimdiv}
    Let $\exists\vec w \leq B :\psi(\vec w, \vec z)$ be a formula in input of~\BoundedElimDiv, 
    where $\vec w = (x_1,\dots,x_n)$, $\vec z = (x_{n+1},\dots,x_{g})$ with $n \geq 1$ and 
    $\psi$ is a positive Boolean combination of linear constraints.
    Then, for every branch output $\exists \vec w' \leq B' : \psi'(\vec w',\vec z)$ 
    of the algorithm, the following holds:
    \begin{equation*}
        \text{if }\; 
        \begin{cases}
            \paratom(\psi) &\!\leq a\\ 
            \parvars(\psi) &\!= g\\ 
            \parfunc(\psi) &\!=0\\ 
            \parconst(\psi) &\!\leq c
        \end{cases}
        \quad\text{ then }\;
        \begin{cases}
            \paratom(\psi')  &\!\leq a\\ 
            \parvars(\psi')  &\!\leq g+a\\ 
            \parfunc(\psi')  &\!\leq 1\\ 
            \parconst(\psi') &\!\leq c
        \end{cases}
    \end{equation*}
    Moreover, the vector $\vec w'$ contains $\parvars(\psi')+n-g$ variables, and 
    $\bitlength{B'} \leq 2\cdot(n+4) \cdot c \cdot \bitlength{B}$.
\end{lemma}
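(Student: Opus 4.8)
Given the input bound $a,g,c$ on $\paratom(\psi),\parvars(\psi),\parconst(\psi)$, with $\parfunc(\psi)=0$, the output $\exists\vec w'\leq B':\psi'(\vec w',\vec z)$ of \BoundedElimDiv satisfies $\paratom(\psi')\leq a$, $\parvars(\psi')\leq g+a$, $\parfunc(\psi')\leq 1$, $\parconst(\psi')\leq c$; moreover $\vec w'$ has $\parvars(\psi')+n-g$ variables and $\bitlength{B'}\leq 2(n+4)\cdot c\cdot\bitlength{B}$.

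The plan is to track each parameter across the single \textbf{foreach} loop of \Cref{algo:elimdiv}, using the fact that each iteration processes exactly one divisibility constraint $f(t)\divides\sigma(\vec w)+\tau(\vec z)$ and performs a purely local rewrite. For $\paratom$: line~\ref{algo:bounded-elimi-div:update-psi} replaces one atomic formula (a divisibility) with one atomic formula (an equality), so the count is unchanged; hence $\paratom(\psi')=\paratom(\psi)\leq a$, and the loop runs at most $a$ times. For $\parvars$: each of the at most $a$ iterations appends one fresh variable $y$ (line~\ref{algo:bounded-elimi-div:bounded-quantifier}), so $\parvars(\psi')\leq\parvars(\psi)+a=g+a$; since the $n$ variables of $\vec w$ are all carried over into $\vec w'$ along with the newly added ones, $\vec w'$ has $n+(\parvars(\psi')-g)$ variables, matching the claimed count $\parvars(\psi')+n-g$. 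For $\parfunc$: the input has $\parfunc(\psi)=0$ (only linear constraints), and the rewrite introduces the term $(\tau\bmod f(t))$ — a single occurrence of the remainder function per affected constraint, applied to the non-shifted term $\tau$ — so each constraint of $\psi'$ has at most one such occurrence and no nested $\floor{\cdot/t^d}$; thus $\parfunc(\psi')\leq 1$.

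For $\parconst$, the key observation is that the polynomials appearing in $\psi'$ are exactly those appearing in $\psi$: the rewrite in line~\ref{algo:bounded-elimi-div:update-psi} reuses $f(t)$, the $f_i(t)$, and the coefficients of $\tau$, and introduces only the fresh integer-coefficient-$\pm 1$ term $\pm f(t)\cdot y$ and the term $(\tau\bmod f(t))$, none of which enlarges any polynomial coefficient. Hence $\parconst(\psi')\leq\parconst(\psi)\leq c$. The only genuinely quantitative part is the bound on $\bitlength{B'}$. The map $B$ is extended in line~\ref{algo:bounded-elimi-div:bounded-quantifier} by pairs $(y,t^d)$ where $d=(n+3)\cdot\max\{\bitlength f,\bitlength{f_0},\bitlength{f_i}\cdot\bitlength{B(w_i)}:i\in[1,n]\}$. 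Since all of $\bitlength f,\bitlength{f_0},\bitlength{f_i}$ are at most $\parconst(\psi)\leq c$ and $\bitlength{B(w_i)}\leq\bitlength B$, we get $d\leq(n+3)\cdot c\cdot\bitlength B$, and therefore $\bitlength{t^d}=(d+1)\cdot 2\leq 2(n+4)\cdot c\cdot\bitlength B$. As the pre-existing entries of $B$ have bit size at most $\bitlength B\leq 2(n+4)c\bitlength B$ as well, we conclude $\bitlength{B'}\leq 2(n+4)\cdot c\cdot\bitlength B$.

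I do not expect any real obstacle here; the lemma is essentially a bookkeeping exercise, and \BoundedElimDiv is the simplest of the three subroutines since its rewrite is local, adds at most one fresh variable and one remainder term per iteration, and never touches coefficients. The one point requiring a line of justification is the estimate $d\leq(n+3)c\bitlength B$, which follows directly from $\bitlength{f_i}\cdot\bitlength{B(w_i)}\leq c\cdot\bitlength B$ and $\bitlength f,\bitlength{f_0}\leq c\leq c\bitlength B$; then $\bitlength{t^d}=2(d+1)$ by definition of bit size for the polynomial $t^d$ (degree $d$, height $1$, so $(\deg+1)(\lceil\log_2 2\rceil+1)=2(d+1)$), and $2(d+1)\leq 2((n+3)c\bitlength B+1)\leq 2(n+4)c\bitlength B$ using $c,\bitlength B\geq 1$.
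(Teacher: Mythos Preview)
Your proposal is correct and follows essentially the same approach as the paper's own proof: both track each parameter through the local rewrite in the \textbf{foreach} loop, note that the replacement in line~\ref{algo:bounded-elimi-div:update-psi} preserves the atom count and reuses only polynomials already present in~$\psi$, and bound $\bitlength{B'}$ via $d\leq(n+3)\cdot c\cdot\bitlength{B}$ and $\bitlength{t^d}=2(d+1)$. Your write-up is in fact slightly more explicit than the paper's in justifying the final inequality $2(d+1)\leq 2(n+4)c\bitlength{B}$.
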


\begin{proof}
    The bounds on the parameters $\paratom$, $\parvars$, $\parfunc$, and $\parconst$ are trivial. 
    They are obtained by inspecting the lines of the pseudocode of \Cref{algo:elimdiv}. 
    The number of constraints does not change: line~\ref{algo:bounded-elimi-div:update-psi} 
    replaces a linear divisibility with an equality with a single occurrence of the function $(\cdot \bmod f(t))$. 
    Also notice that the polynomial coefficients occurring in this new equality are already present 
    in the input formula $\psi$, and thus $\parconst(\psi)=\parconst(\psi')$.  
    Finally, since the \textbf{foreach} loop has at most $\paratom(\psi)$ iterations, 
    line~\ref{algo:bounded-elimi-div:bounded-quantifier} introduces at most $a$ new variables, each of which 
    is added to the list of bounded variables $\vec w'$. It is clear that 
    $\#\vec w'=(\parvars(\psi')-\#\vec z) \,=\, \parvars(\psi')+n-g$.
    
    To upper-bound the bitlength of $B'(w')$ for the newly introduced variables $w'\in\vec w'$, 
    consider lines~\ref{algo:bounded-elimi-div:max-degree} and~\ref{algo:bounded-elimi-div:bounded-quantifier}. 
    Let $\bitlength{B'}\coloneqq\max\{\bitlength{B'(w')} : w'\in\vec w'\}$, then 
    we see that 
    \[
        \bitlength{B'}\leq \max\{\bitlength{B}, \bitlength{t^{(n+3)\cdot c\cdot \bitlength{B}}}\} 
                      \leq \max\{\bitlength{B}, 2\cdot(n+4)\cdot c \cdot \bitlength{B}\} = 2\cdot(n+4)\cdot c \cdot \bitlength{B}.   \qedhere 
    \]    
\end{proof}

\subsection{Complexity of \Master}

The main \Cref{algo:master} (\Master) calls \Cref{algo:gaussianqe,algo:elimdiv,algo:qe}, and the bounds from 
\Cref{lemma:param-elimbounded,lemma:param-QE,lemma:param-boundedelimdiv} will give us the upper 
bounds on the values of the four parameters from these lemmas. 
The next lemma essentially proves \Cref{{lemma:small-formula}} from \Cref{sec:complexity}.


\begin{lemma}\label{lemma:param-master}
    Let $\exists \vec x :\phi(\vec x, \vec z)$ be a formula in input of~\Master, 
    where $\phi$ is a positive Boolean combination of $\PPA$ constraints. 
    Let $\vec x = (x_1,\dots,x_n)$ and $\vec z = (x_{n+1},\dots,x_{g})$, where $n \geq 1$,
    then for every branch output $\psi(\vec z)$ of the algorithm, the following holds:
    \begin{equation*}
        \text{if }\; 
        \begin{cases}
            \paratom(\psi) &\!\leq a\\ 
            \parfunc(\psi) &\!\leq d\\ 
            \parconst(\psi) &\!\leq c
        \end{cases}
        \quad\text{ then }\;
        \begin{cases}
            \paratom(\psi)  &\!\leq 2^5\cdot(n + d + 4) \cdot (a+n+3\cdot d)^{19} \cdot c^{15}\\ 
            \parfunc(\psi)  &\!\leq 2^5\cdot(n + d + 4) \cdot (a+n+3\cdot d)^{19} \cdot c^{15}\\ 
            \parconst(\psi) &\!\leq \big(2^{6}\cdot(n + d + 4)\cdot (a+n+3\cdot d)^{18} \cdot c^{15}\big)^{64}.
        \end{cases}
    \end{equation*}
\end{lemma}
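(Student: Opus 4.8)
The plan is to follow the four phases of \Master{}---pre-processing (lines~\ref{line:master:while:frac}--\ref{line:master:replace-mod}), the call to \BoundedQE{}, the call to \BoundedElimDiv{}, and the call to \ElimBounded{}---and to compose, phase by phase, the parameter bounds already established in \Cref{lemma:param-QE}, \Cref{lemma:param-boundedelimdiv} and \Cref{lemma:param-elimbounded}. I fix one non-deterministic branch $\beta$ and bound the parameters of its output $\psi(\vec z)$; since the statement is branch-local, this suffices.

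The first step is to analyse pre-processing. The key observation is that each iteration of either \textbf{while} loop permanently removes one distinct subterm of the form $\floor{\cdot/t^{d_i}}$ or $(\cdot \bmod f(t))$ (all its occurrences are replaced by a fresh variable) and introduces no new such subterm---the newly added (in)equalities $t^{d}\cdot x \leq \tau$ and $\tau < t^{d}\cdot x + t^{d}$, respectively the divisibility $f(t)\divides \tau - y$, only reuse subterms of $\tau$. Hence the two loops perform at most $\parfunc(\phi)=d$ iterations in total, each appending at most one variable to $\vec x$ or $\vec y$ and at most two atomic formulae, whose polynomial coefficients are either coefficients already present in $\phi$ or powers $t^{d_i}$ (of bit size at most $\parconst(\phi)$), and the bounds recorded in $B$ are of the form $\pm f(t)-1$. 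Writing $\phi_\beta,\vec x_\beta,\vec y_\beta,B_\beta$ for the state reached at line~\ref{line:master:return}, this yields $\paratom(\phi_\beta)\le a+2d$, $\card{\vec x_\beta}\le n+d$, $\parvars(\phi_\beta)\le\parvars(\phi)+d$, $\parfunc(\phi_\beta)=0$, $\parconst(\phi_\beta)\le c$ and $\bitlength{B_\beta}\le 2c$; moreover, by \Cref{eq:post-pre-processing}, the output of branch $\beta$ is obtained by running \BoundedQE{} on $\exists\vec x_\beta:\phi_\beta(\vec x_\beta,\vec y_\beta,\vec z)$, wrapping the result under the block $\exists\vec y_\beta\le B_\beta$, and then applying \BoundedElimDiv{} and \ElimBounded{}.

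The remaining step is a mechanical composition of the three sub-lemmas, keeping careful track of which quantity plays the role of each generic parameter. In the call to \BoundedQE{} the vector of free variables is $(\vec y_\beta,\vec z)$, so \Cref{lemma:param-QE} with ``$n$''$=\card{\vec x_\beta}$ and ``$a$''$=\paratom(\phi_\beta)$---writing $A\coloneqq a+n+3d$, so that $\card{\vec x_\beta}+\paratom(\phi_\beta)\le A$---gives a formula $\psi_1$ with $\paratom(\psi_1)\le 2A+1$, $\parfunc(\psi_1)=0$, $\parconst(\psi_1)\le A^3c^3$, with $\card{\vec x_\beta}$ new bounded variables and a bound map of bit size at most $A^{11}c^9$. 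The input of \BoundedElimDiv{} is the bounded formula over $\vec y_\beta\cup\vec w_1$; \Cref{lemma:param-boundedelimdiv} preserves $\paratom$ and $\parconst$, makes $\parfunc\le 1$, and bounds the new map by $2(\card{\vec y_\beta\cup\vec w_1}+4)\cdot\parconst(\psi_1)\cdot\bitlength{B_\beta\cup B_1}$. Finally, the resulting formula has $\parfunc\le 1$, which is exactly the hypothesis of \Cref{lemma:param-elimbounded}; that lemma applied to the call to \ElimBounded{} yields $\paratom(\psi),\parfunc(\psi)\le 2^{7}\cdot a_3 n_3 c_3\bitlength{B_2}$ and $\parconst(\psi)\le(2^{11}a_3 n_3^3 c_3^3\bitlength{B_2}^4)^{16}$, with $a_3,n_3,c_3,\bitlength{B_2}$ the parameters just bounded (all polynomial in $A$, $c$ and $n+d+4$). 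Substituting, bounding each factor by a monomial in $A$, $c$ and $n+d+4$, and summing exponents gives $2^7 a_3 n_3 c_3\bitlength{B_2}\le 2^5(n+d+4)A^{19}c^{15}$ and $(2^{11}a_3 n_3^3 c_3^3\bitlength{B_2}^4)^{16}\le(2^6(n+d+4)A^{18}c^{15})^{64}$, which are the claimed bounds.

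I expect the pre-processing analysis to be the main obstacle: one must rule out a potential blow-up caused by the duplication of the argument term $\tau$ when a $\floor{\cdot/t^d}$ occurs only once, and for this the ``distinct-subterm'' counting above---rather than counting occurrences---is essential. The rest is routine but error-prone bookkeeping: matching the generic parameters $n$, $a$, $g$, $c$ of each sub-lemma to the right quantities (including the corner case where $\vec x_\beta$ or $\vec y_\beta\cup\vec w_1$ is empty, in which case the corresponding procedure is the identity), threading the bit sizes of the successive bound maps through the three calls, and verifying that the accumulated exponents are exactly $19$, $15$, $18$ and $64$ as stated.
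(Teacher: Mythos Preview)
Your proposal is correct and follows essentially the same approach as the paper: a phase-by-phase composition of the bounds from \Cref{lemma:param-QE}, \Cref{lemma:param-boundedelimdiv} and \Cref{lemma:param-elimbounded}, with the same abbreviation $A=a+n+3d$ and the same intermediate quantities (e.g.\ $\paratom\le 2A+1$, $\parconst\le A^3c^3$, $\bitlength{B}\le A^{11}c^9$ after \BoundedQE{}, and $\bitlength{B}\le 2(n+d+4)A^{14}c^{12}$ after \BoundedElimDiv{}). Your discussion of the pre-processing loop---arguing that each iteration eliminates one distinct function subterm and hence the loops run at most $d$ times---is actually more careful than the paper, which asserts this without comment.
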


\begin{proof}
    The two \textbf{while} loops in the pre-processing part remove all functions from the input formula. 
    Denote by $\exists \vec w_0 : \phi_0(\vec w_0, \vec z)$ the result of this step. 
    These loops together perform $d$ iterations, and add at most $2$ new constraints to the formula 
    (see line~\ref{line:master:replace-frac}, 
    line~\ref{line:master:replace-mod-zero} and line~\ref{line:master:replace-mod}). 
    Therefore, $\paratom(\phi_0)\leq a+2\cdot d$.
    Next, these loops also introduce at most one variable during each iteration: 
    $\parvars(\phi_0)\leq g + d$.
    It is easy to see that $\parfunc(\phi_0)=0$ and $\parconst(\phi_0)=\parconst(\phi)$. 
    Finally, the bounds introduced by line~\ref{line:master:bounded-quantifier} (denote the corresponding map by $B_0$)
    can be bounded as follows: $\bitlength{B_0}\leq2\cdot c$. 
    
    We are now going to apply the bounds constructed by \Cref{lemma:param-elimbounded,lemma:param-QE,lemma:param-boundedelimdiv} 
    sequentially to the quintuple 
    $(\paratom(\phi_0), \parvars(\phi_0), \parfunc(\phi_0), \parconst(\phi_0), \bitlength{B_0}) 
    = (a+2\cdot d,\, g + d,\, 0,\, c,\, 2\cdot c)$. Moreover, we know that $\#\vec w_0\leq n+d$.          
    \begin{description}
        \item[\BoundedQE:] 
        Applying \Cref{lemma:param-elimbounded}, we obtain that every branch output of 
        \BoundedQE on input $\phi_0$ will have at most $2\cdot (n+d+a+2\cdot d) + 1\leq 2\cdot (a+n+3\cdot d) +1$ 
        atomic formulae. It is now easy to obtain the bounds on parameters for every formula 
        $\exists\vec w_1 \leq B_1 : \phi_1(\vec w_1, \vec z)$ given on input of 
        algorithm \BoundedElimDiv: 
        \begin{equation*}
            \begin{cases}
                \paratom(\phi_1)          &\!\leq 2\cdot (a+n+3\cdot d)+1\\ 
                \parvars(\phi_1)          &\!\leq g + d = (n + d) + \#\vec z\\ 
                \parfunc(\phi_1)          &\!\leq 0\\ 
                \parconst(\phi_1)         &\!\leq (a+n+3\cdot d)^3 \cdot c^3\\
                \bitlength{B_1}           &\!\leq (a+n+3\cdot d)^{11} \cdot c^9,
            \end{cases}
        \end{equation*}
        because the bound on $\bitlength{B_1}$, which is the union of $B_0$ and the map $B$ from the 
        output of \BoundedQE is such that $\bitlength{B_1}\leq\max\{ 
        (a+n+3\cdot d)^{11} \cdot c^9, \bitlength{B_0}\} = (a+n+3\cdot d)^{11} \cdot c^9$.
        \Cref{lemma:param-elimbounded} also says that $\#\vec w_1 = \#\vec w_0 \leq n+d$. 
        \item[\BoundedElimDiv:] According to \Cref{lemma:param-boundedelimdiv}, 
        algorithm \ElimBounded will get a bounded formula $\exists\vec w_2 \leq B_2 : \phi_2(\vec w_2, \vec z)$ 
        with the same values of the parameters $\paratom$ and $\parconst$ as in formula $\phi_1$. 
        It is easy to derive the bounds 
        for $\parvars(\phi_2)\leq \parvars(\phi_1) + \paratom(\phi_1)
        \leq g + d + 2\cdot (a+n+3\cdot d)+1 \leq 3\cdot (a+n+3\cdot d)+\#\vec z$ 
        and $\parfunc(\phi_2) \leq 1$. The bit size of the bounds is such that 
        \begin{align*}
            \bitlength{B_2}  &\,\leq\, 2\cdot(n + d + 4) \cdot (a+n+3\cdot d)^3 \cdot c^3 \cdot (a+n+3\cdot d)^{11} \cdot c^9 \\
                             &\,\leq\, 2\cdot(n + d + 4) \cdot (a+n+3\cdot d)^{14} \cdot c^{12}.
        \end{align*}
        To sum up, $\#\vec w_2\leq 3\cdot (a+n+3\cdot d)$ and we have   
        \begin{equation*}
            \begin{cases}
                \paratom(\phi_2)          &\!\leq 2\cdot (a+n+3\cdot d)+1\\ 
                \parvars(\phi_2)          &\!\leq 3\cdot (a+n+3\cdot d)+\#\vec z\\ 
                \parfunc(\phi_2)          &\!\leq 1\\ 
                \parconst(\phi_2)         &\!\leq (a+n+3\cdot d)^3 \cdot c^3\\
                \bitlength{B_2}     &\!\leq 2\cdot(n + d + 4) \cdot (a+n+3\cdot d)^{14} \cdot c^{12}.
            \end{cases}
        \end{equation*}
        \item[\ElimBounded:] 
        We are now going to apply \Cref{lemma:param-elimbounded} to the formula 
        $\exists\vec w_2 \leq B_2 : \phi_2(\vec w_2, \vec z)$.
        \begin{description}
            \item[Bound on {\rm$\paratom(\psi)$}:]
            \begin{align*}
                \paratom(\psi) &\,\leq\, \paratom(\phi_2) \cdot 3\cdot (a+n+3\cdot d) 
                \cdot (a+n+3\cdot d)^3 \cdot c^3 \cdot \bitlength{B_2} \\
                &\,\leq\, (2\cdot (a+n+3\cdot d)+1) \cdot 3\cdot (a+n+3\cdot d)^4 \cdot c^3 \cdot \bitlength{B_2} \\
                &\,\leq\, 9\cdot (a+n+3\cdot d)^5 \cdot c^3 \cdot \bitlength{B_2} \\
                &\,\leq\, 9\cdot (a+n+3\cdot d)^5 \cdot c^3 \cdot 2\cdot(n + d + 4) \cdot (a+n+3\cdot d)^{14} \cdot c^{12} \\
                &\,\leq\, 2^5\cdot(n + d + 4) \cdot (a+n+3\cdot d)^{19} \cdot c^{15}.
            \end{align*} 
            \item[Bound on {\rm$\parfunc(\psi)$}:] the same as $\paratom(\psi)$ above.
            \item[Bound on {\rm$\parconst(\psi)$}:] 
            \begin{align*}
                \parfunc(\psi) &\,\leq\, \big(2^{11}\cdot \paratom(\phi_2) \cdot (3 \cdot (a + n + 3 \cdot d)^4 \cdot c^3)^3 \cdot \bitlength{B_2}^4\big)^{16} \\
                &\,\leq\, \big(2^{11}\cdot(2\cdot (a+n+3\cdot d)+1) \cdot (3 \cdot (a + n + 3 \cdot d)^4 \cdot c^3)^3 \cdot \bitlength{B_2}^4\big)^{16} \\
                &\,\leq\, \big(2^{18}\cdot (a+n+3\cdot d)^{13} \cdot c^9 \cdot \bitlength{B_2}^4\big)^{16} \\
                &\,\leq\, \big(2^{18}\cdot (a+n+3\cdot d)^{13} \cdot c^9 \cdot 2^4\cdot(n + d + 4)^4 \cdot (a+n+3\cdot d)^{56} \cdot c^{48}\big)^{16} \\
                &\,=\, \big(2^{22}\cdot(n + d + 4)^4\cdot (a+n+3\cdot d)^{69} \cdot c^{57}\big)^{16} \\
                &\,\leq\, \big(2^{6}\cdot(n + d + 4)\cdot (a+n+3\cdot d)^{18} \cdot c^{15}\big)^{64}. \qedhere
            \end{align*} 
        \end{description}
    \end{description}
\end{proof}

\begin{proof}[Proof of \Cref{{lemma:small-formula}}]
    As for~\Cref{lemma:boundedqe-in-np}, 
    following the proof of~\Cref{lemma:param-master}, 
    we now know that all object constructed by~\Master during its execution 
    are of polynomial bit size. It is then simple to deduce that 
    the algorithm runs in non-deterministic polynomial time; 
    it suffices to check the number of iterations of the various loops, and the sets used for the various guesses performed by the algorithm.

    In Lines~\ref{line:master:while:frac}--\ref{line:master:replace-mod}, 
    the number of iterations in both \textbf{while} loops is bounded 
    by the size of the input formula $\phi$, 
    and guesses (lines~\ref{line:master:first-nondet-branch} and~\ref{line:master:mod-sign}) 
    are all on a $2$-element domain (for the non-deterministic branches).

    We already know from~\Cref{lemma:boundedqe-in-np} 
    that the computation performed by~\BoundedQE takes non-deterministic polynomial time. Since its output has size polynomial in the input formula~$\phi$, 
    the \textbf{foreach} loop in algorithm~\BoundedElimDiv also 
    performs only polynomially many iterations; all guesses are again on a $2$-element domain (line~\ref{algo:bounded-elimi-div:guess-sign}). 

    Going into~\ElimBounded, the loop in line~\ref{line:elimbounded:foreach-div-x}
    only executes $\card{\vec w}$ times, adding at most $\card{\vec w} \cdot \bitlength{B}$ variables~$\vec y$ overall (a polynomial amount). 
    Let $\phi_{\varnothing}$ be the formula obtained after these lines.
    The $\vec y$-degree $\deg(\vec y, \phi_\varnothing)$ is polynomial in the input formula $\phi$. 
    By~\Cref{lemma:elimbounded:Dtwo}, 
    the \textbf{while} loop of line~\ref{line:elimbounded:while}
    only iterates a polynomial amount of times. 
    Following~\Cref{lemma:param-elimbounded}, the guesses 
    performed in line~\ref{line:elimbounded:guess-r} are over an interval of numbers having all polynomial bit size. 
    After the \textbf{foreach} loop of line~\ref{line:elimbounded:foreach-constraint} (whose number of iterations is, 
    again, polynomial), the algorithm calls again~\BoundedQE, 
    which returns an output in (non-deterministic) polynomial time. 
    In this output, all elements in the range of the map $B'$ are integers of polynomial bit size. 
    Then, in the \textbf{foreach} loop of line~\ref{line:elimbounded:foreach-w-to-int} (executing polynomially many times) 
    all guesses are over a range of polynomial-size integers.
 \end{proof}
 


\end{document}